\newcommand*{\xzw}[1]{}
\newcommand*{\BN}{BN}
\newcommand*{\termEmph}[1]{\emph{#1}}
\newcommand*{\Tcal}{\mathcal{T}}
\newcommand*{\PS}{\ensuremath{\mathbf{P}}}
\newcommand*{\PPS}{\ensuremath{\!\mathscr{P}\!}}
\newcommand*{\SL}{\mathscr{L}}
\def\maketag@@@#1{\hbox{\m@th\normalfont\scriptsize#1}}
\newcommand*{\ruleTagText}[1]{{\text{#1}}}
\newcommand*{\leVAL}[1]{\ensuremath{#1}}
\newcommand*{\leVAR}[1]{\ensuremath{#1}}
\newcommand*{\leOP}[2]{\ensuremath{{#1}~\textbf{op}~{#2}}}
\newcommand*{\lcASS}[2]{\ensuremath{{#1}:={#2}}}
\newcommand*{\lcIF}[3]{\ensuremath{{\textbf{if}~{#1}~\textbf{then}~{#2}~\textbf{else}~{#3}}}}
\newcommand*{\lcSEQ}[2]{\ensuremath{{#1};{#2}}}
\newcommand*{\lcWHILE}[2]{\ensuremath{\textbf{while}~{#1}~\textbf{do}~{#2}}}
\newcommand*{\lcCALL}[3]{\ensuremath{{#1}:=\textbf{call}~{#2}({#3})}}
\newcommand*{\lcCP}[3]{\ensuremath{{\textbf{test}({#1})~{#2} ~\textbf{else}~ {#3}}}}
\newcommand*{\lcLETVAR}[3]{\ensuremath{{\textbf{letvar}~ {#1} = {#2} ~\textbf{in}~ {#3}}}}
\newcommand*{\FD}{\ensuremath{\mathit{FD}}}
\newcommand*{\FT}{\ensuremath{\mathit{FT}}}
\newcommand*{\lpRES}{\ensuremath{r}}
\newcommand*{\INITVAL}{0}
\newcommand*{\lpINIT}[2]{\ensuremath{{\textbf{init}~ {#1}=\INITVAL ~\textbf{in}~\{{#2};\textbf{return}~{#1}\}}}}
\newcommand*{\lpPROG}[4]{\ensuremath{{#1}(#2)\big\{{\lpINIT{#4}{#3}}\big\}}}
\newcommand*{\lpFUN}[3]{\ensuremath{\lpPROG{#1}{#2}{#3}{\lpRES}}}
\newcommand*{\app}[2]{\ensuremath{\pi_{{#2}}({#1})}}
\newcommand*{\omerge}[3]{\ensuremath{{#2} \triangleright_{#1} {#3}}}
\newcommand*{\cPair}[2]{\ensuremath{(#1,#2)}}
\newcommand*{\cRawCond}[3]{\ensuremath{\big(#1,#2\leq#3\big)}}
\newcommand*{\cCond}[4]{\ensuremath{\big(\cPair{#1}{#2}\leq\cPair{#3}{#4}\big)}}
\newcommand*{\rank}[1]{{\textbf{r}}(#1)}
\newcommand*{\tlength}[1]{\ensuremath{len(#1)}}
\newlist{ProofEnumDesc}{description}{2}
\setlist[ProofEnumDesc]{style=sameline,nosep}
\newcommand*{\trace}{\ensuremath{\Lambda}}
\newcommand*{\tplus}[1]{\ensuremath{\oplus #1}}
\newcommand*{\tminus}[1]{\ensuremath{\ominus #1}}
\newcommand*{\tplusminus}[1]{\ensuremath{\circledcirc #1}}
\newcommand*{\tplusminuss}[1]{\ensuremath{\circledast #1}}
\newcommand*{\issatisfied}[1]{\ensuremath{\Delta(#1)}}
\newcommand*{\dnf}[1]{\ensuremath{dnf(#1)}}
\begin{document}

\title{A Permission-Dependent Type System for Secure Information Flow Analysis} 

\author{Hongxu Chen}
\affiliation{
  \institution{Nanyang Technological University}           
}

\author{Alwen Tiu}
\affiliation{
  \position{Position1}
  \institution{Nanyang Technological University}           
  \and
  \position{Position2}
  \institution{The Australian National University}
}
\author{Zhiwu Xu}
\affiliation{
  \institution{Shenzhen University}            
}

\author{Yang Liu}
\affiliation{
  \institution{Nanyang Technological University}            
}

\begin{abstract}
We introduce a novel type system for enforcing secure information flow
in an imperative language. Our work is motivated by the problem of
statically checking potential information leakage in Android
applications. To this end, we design a lightweight type 
system featuring Android permission model, where the permissions
are statically assigned to applications and are used to enforce
access control in the applications. We take inspiration
from a type system by Banerjee and Naumann (BN) to allow security
types to be dependent on the permissions of the applications. A novel
feature of our type system is a typing rule for conditional branching
induced by permission testing, which introduces a merging operator on
security types, allowing more precise security policies to be enforced.
The soundness of our type system is proved with respect to a notion
of noninterference. In addition, a type inference algorithm is presented
for the underlying security type system, by reducing the inference problem to a
constraint solving problem in the lattice of security types.
\end{abstract}

\maketitle

\section{Introduction}\label{sec:intro}

This work is motivated by the problem of securing information flow in
Android applications (\emph{apps} for short). We follow
the language-based security approach whereby
information flow is enforced through type systems~\cite{Denning:1976cl,Denning:1977hwa,Volpano:1996,Sabelfeld:2003}.
In particular, we propose a design of a type system that 
guarantees noninterference property~\cite{Volpano:1996}, i.e., typable
programs are noninterferent. As shown in~\cite{Goguen:1982ta}, noninterference provides
a general and natural way to model information flow security. 
noninterference
The type-based approach to noninterference typically requires assigning 
security labels to program variables and security policies to functions or procedures. Such policies
are typically encoded as types, and typeability of the program implies that the runtime behaviour of
the program complies with the stated policies (types). 
Security labels form a lattice structure with an underlying partial order $\leq$,
e.g., a lattice with two elements ``high'' ($H$) and ``low'' ($L$) with $L \leq H$.
Typing rules can then be designed to prevent both explicit flow from $H$ to $L$, 
and implicit flow through conditionals (e.g., if-then-else statement). 
To prevent an explicit flow, the typing rule for an assignment statement
such as \texttt{x := e} would require that $l(e) \leq l(x)$ where $l(.)$
denotes the security level of an expression. To prevent an implicit flow through conditionals, e.g., 
\texttt{if (y = 0) then x:= 0 else x := 1},
most type systems for noninterference require that
the condition $(y=0)$ and the assignments in both branches are given the same
security level. For example, if $y$ is of type $H$ and $x$ is of type $L$, the statement would not be typable. 

\subsection{Motivating Example}
In designing an information flow type system for Android, we encounter 
a common pattern of conditionals that would not be typable using the conventional type system.
Consider the following pseudo-code:
\begin{lstlisting}
String getContactNo(String name) {
    String number; 
    if(checkPermission(READ_CONTACT))
       number = ... ; 
    else number = ""; 
    return number;
}
\end{lstlisting}
Such a code fragment could be part of a phone dialer app or a contact provider, where the function \texttt{getContactNo} provides
a public interface to query the phone number associated with a name. The (implicit) security policy in this
context is that contact information (the phone number) can only be released if the calling app possesses
the required permission (READ\_CONTACT). 
The latter is enforced using the \texttt{checkPermission} API in Android. 
Suppose phone numbers are labelled with $H$, and the empty string is labelled with $L$.
If the interface is invoked by an app that has the required permission, 
the phone number ($H$) is returned; otherwise the empty string ($L$) 
is returned. In either case, there is no leakage of information: in the former case, the calling app
can access the phone number directly anyway; and in the latter case, no phone numbers are returned. 
So by this informal reasoning, the function complies with the implicit security policy and it 
should be safe to be called in any context, whether or not the calling app has the required permission. 
However, in the traditional (non-value dependent) typing rule for the if-then-else construct, one would need to 
assign \emph{the same} security level to both branches, which would result in the return value of the function to be
assigned the security level $H$. As a consequence, if this function is called from an app with no permission,
and if the return value of the function is assigned to a variable with security level $L$, it would be considered
a potential direct flow, even though there is actually no information being leaked. 
To cater for such a scenario, we need to make the security type of \texttt{getContactNo}
 depend on the \emph{permissions} possessed by the caller. 

Banerjee and Naumann~\cite{Banerjee:2005ht} proposed a type system (which we shall refer to as the {\BN} type system) 
that incorporates permissions into  function types. 
Their type system was designed for an access control mechanism different from Android's permission based mechanism, 
but the basic principles are still applicable. 
In their system, permissions are associated with a Java class and need to be \emph{explicitly enabled} for them to have any effect.
Depending on permissions of
the calling app, a function such as \texttt{getContactNo} can have a collection of types. In {\BN} type system, the types of a function take
the form $(l_1,\dots, l_n)\xrightarrow{~P~}l$
where $l_1,\dots,l_n$ denote the security levels of the input
to the function, $l$ denotes the security level of the output
and $P$ denotes a set of permissions \emph{that are not enabled} by the caller.
The underlying idea in their type system is that permissions are 
treated as \emph{guards} to sensitive values, thus conservatively,
one would type the return value of \texttt{getContactNo} as $L$ only if
one knows that the permission READ\_CONTACT is \emph{not enabled}.
The following represents a collection of types for \texttt{getContactNo} 
in {\BN} system
\begin{equation*}
getContactNo : L~\xrightarrow{~P~}~L
\qquad
getContactNo : L~\xrightarrow{~\emptyset~}~H\\
\end{equation*}
where $P = \{ \mathrm{READ\_CONTACT} \}.$
When typing a function call to \texttt{getContactNo} by 
an app without permissions,
the first type of \texttt{getContactNo} will be used; otherwise
the second type will be used. 

In {\BN} type system, the typing judgment is parameterized
by a permission set $P$ containing the permissions that are
\emph{not enabled}. Their language has a command
called \emph{test} which checks the presence of 
a permission set. That is, ``\lcCP{P}{c_1}{c_2}''
means that if permissions in set $P$ are \emph{all enabled}, then 
the command behaves like $c_1$; otherwise it behaves like $c_2$. 
Roughly, the typing rules for the \texttt{test} command (in a much
simplified form) are:
\[
\inference[R1]{
Q \cap P = \emptyset & Q |- c_1 : \tau & Q \vdash {c_2} : \tau
}
{
Q \vdash \lcCP{P}{c_1}{c_2} : \tau
}
\qquad
\inference[R2]{
Q \cap P \not = \emptyset & Q \vdash c_2 : \tau
}
{
Q \vdash \lcCP{P}{c_1}{c_2} : \tau
}
\]
where $Q$ refers to the permissions that are \emph{not enabled}. 
When $Q \cap P \not = \emptyset$, that means 
at least one of the permissions in $P$ is \emph{not enabled}, thus
one can determine statically that only the \emph{else} branch is relevant. This case is
reflected in the typing rule R2.
When $Q \cap P = \emptyset$, it could be that all permissions in $P$
are enabled, \emph{or} it could mean that some permissions in $P$ are not granted to the class.
So in this case, one cannot determine statically which branch of the \texttt{test} will be
taken at runtime. The typing rule R1 therefore conservatively considers typing both branches.

When adapting {\BN} type system to Android, 
we found that in some scenarios R1
is too strong, especially when it is desired that the \emph{absence} of some permissions 
leads to the release of sensitive values. Consider for example an application
that provides location tracking information related to a certain
{\em advertising ID}. The latter provides a unique ID for the purpose of anonymizing mobile users
to be used for advertising (instead of relying on hardware device IDs such
as IMEI numbers). If one can correlate an advertising
ID with a unique hardware ID, it will defeat the purpose of
the anonymizing service provided by the advertising ID. To prevent
that, the following function~\texttt{getInfo} returns the location information for an advertising ID
only if the caller \emph{does not} have access to device ID.
To simplify discussion, let us assume that the permissions to access 
IMEI and location information are denoted by $p$ and $q$, respectively.
\begin{lstlisting}
String getInfo() {
   String r = "";
   test(p) {
     test(q) r = loc; else r = "";
   } else  {
     test(q) r = id++loc; else r = "";
   }
   return r;             
}
\end{lstlisting}
Here \texttt{id} denotes a unique \emph{advertising ID} (rather than IMEI) generated and stored by the app for
the purpose of anonymizing user tracking; \texttt{loc} denotes location
information.
The function first tests whether the caller has access to IMEI number.
If it does, and if it has access to location, then only the location information is returned.
If the caller has no access to IMEI number, but can access location information,
then the combination of advertising id and location \texttt{id++loc} is returned.
In all the other cases, a empty string is returned.
Let us consider a lattice with four elements ordered as:
$L \leq l_1, l_2 \leq H$, where $l_1$ and $l_2$ are incomparable.
Suppose we specify that empty string (``'') is of type $L$, \texttt{loc} is of type $l_1$,
\texttt{id} is of type $l_2$, and the aggregate \texttt{id++loc} is of type $H.$
Consider the case where the caller has both permissions $p$ and $q$ and both are enabled initially. 
When applying {\BN} type system, the desired type of \texttt{getInfo} in this case is  $()~{\xrightarrow{~\emptyset~}}~l_1$.
This means that the type of \texttt{r} has to be at most $l_1$. 
Since no permissions are \emph{not explicitly enabled}, only \ruleTagText{R1} is applicable
to type this program. This, however, will force both branches of \textbf{test}(p)
to have the same type (i.e., security level). As a result, \texttt{r} 
has to be typed as $H$ so that all four assignments in the program
can be typed. 

The issue with the example above is that the stated security policy is non-monotonic in the permission set of 
the calling app. That is, an app with more permissions does not necessarily have access to information with higher
level of security compared to an app with fewer permissions.  The fact that the {\BN} type system
cannot precisely capture the desired policy stated above appears to be a design decision
on their part; we quote from \cite{Banerjee:2005ht}:
\begin{quote}
"On the other hand, one can envision a kind of dual to the design pattern we consider: 
A method may release H information just if a certain permission is absent. 
Our analysis does not handle this pattern and we are unaware of motivating examples; permission checks usually guard sensitive actions such as the release of H information. A tempting idea is to downgrade or declassify information in the presence of certain permissions. But declassification violates noninterference and it is an open problem what is a good information flow property in the presence of declassification"
\end{quote}
As we have seen in the above example, non-monotonic policies can arise quite naturally in the setting of mobile
applications. In general, non-monotonic policies may be required to solve the {\em aggregation problem}
studied in the information flow theory~\cite{Landauer93}, where several pieces of low security level information may
be pooled together to learn information at a higher security level. 

We adopt the basic idea from {\BN} type system
for relating permissions and security types, to design a more precise type system for information flow
under an access control model inspired by Android framework. 
Our type system solves the problem of typing non-monotonic policies such as the one described above, without resorting
to downgrading or declassifying information as suggested in \cite{Banerjee:2005ht}. 
This is done technically via a new operator on security types,
which we call \emph{merging}, to keep information related to both
branches of a test. This in turn requires a significant revision
of the soundness proof for our type system. Additionally, there is a significant
difference in the permission model used in BN type system, where permissions
may be propagated across method invocations, and the permission model we use
here, which is based on Android permission mechanism. Essentially, permissions are relevant
in Android only during inter-apps or inter-components calls, which we model here as
remote procedure calls. More importantly, permissions are not tracked along the call
chains. As we shall see in Section~\ref{sec:type_system}, this may create a potential attack via ``parameter laundering'' and
the design of the typing rule for remote procedure call needs special attention.

\subsection{Contributions}

The contributions of our work are three-fold.
\begin{enumerate}
	\item We develop a lightweight type system in which security types are functions from permissions to a lattice of security levels, and prove its soundness with respect to a notion of noninterference (Sec.~\ref{sec:type_system}). It allows us to precisely model information flow that depends on security policies, including those that are non-monotonic in terms of permission set containment.
	\item We present a novel approach to model the permission checking mechanism in Android (Sec.~\ref{sec:type_system} \& Sec.~\ref{sec:constraint_gen}) which requires a different approach in typing (remote) function calls. 
	\item We propose a decidable inference algorithm by reducing type inference to a constraint solving problem and applying an efficient algorithm to solve it (Sec.~\ref{sec:constraint_gen}), which is useful in recovering precise security types (such as the \texttt{getInfo} case above).
\end{enumerate}

Due to space constraints, detailed proofs of some lemmas and theorems are omitted, but they are available in the
supplementary material provided along with this submission.

 \section{A Secure Information Flow Type System}\label{sec:type_system}

In this section, we present a simple imperative language featuring
function calls and permission checks. We first discuss informally a permission-based access control model we consider, which is an abstraction of the permission mechanism used in Android. 
We then give the operational semantics of a simple imperative language that includes permission checking 
constructs based on the abstract permission model described in Section~\ref{sec:permission}. 
In Section~\ref{sec:types} we provide the type system for our
imperative language and prove its soundness with respect to a notion of noninterference. 
The detailed definitions and proofs of this section can be found at Appx.~\ref{sec:app-soundness}.

\subsection{A model of permission-based access control}
\label{sec:permission}

The Android framework has a large and complicated code base, which we will not be able to model completely.
Instead we focus on the permission model used in inter process or inter component communication in Android.
Such permissions are used to regulate access to protected resources, such as device id, location information, contact information, etc.
As mentioned in Sec.~\ref{sec:intro}, the motivation
 is to design an information flow type system incorporating
some access control mechanisms in Android, where access to 
data and system services may be associated with permissions. 

An app specifies the permissions it needs at installation time, via a
meta-file called the manifest file.  
In recent versions of Android, since version 6.0, some of these permissions need to be confirmed at runtime. But
at no point a permission request is allowed if it is not already specified in the manifest.
For now, we assume a permission enforcement mechanism that applies to Android prior to version 6.0, so it does not
account for permission granting at runtime. This will have a consequence in terms of noninterference for
non-monotonic policies (i.e, those policies that link the absence of certain permissions to  ``high'' security levels, such
as the example we presented in the introduction).
We shall come back to this point later in Section~\ref{sec:conclusion}. 
 
An Android app may provide services to other apps,
or other components within itself. Such a service provider
may impose permissions on other apps who want to 
access its services.  Communication between apps are implemented
through the Binder IPC (interprocess communications) mechanism~\cite{Android-Binder-IPC}. 
In our model, a program can be seen as a very much simplified version of a service application in Android, 
and the main intention here is to show how one can reason about information flow in
such a service provider when access control is imposed on the calling application. 
In the following we shall not model explicitly the IPC mechanism of Android,
but will instead abstract it into a (remote) function call. 
Note that this abstraction is practical since it can be achieved by conventional data and control flow analysis, together with the modeling of Android IPC specific APIs. And this has been implemented by existing static analysis frameworks like FlowDroid~\cite{Arzt:2014:FPC:2666356.2594299}, Amandroid~\cite{Wei:2014:APG:2660267.2660357}, IccTA\cite{Li:2015:IDI:2818754.2818791}, etc~\footnote{In fact, we are also implementing a permission-dependent information flow analysis tool on top of Amandroid. The fundamental idea is similar to the one mentioned here, however the focus is improving the precision of information leakage \emph{detection} rather than noninterference \emph{certification}.}.

Android framework does not track the IPC call chains between apps and permissions of an app are not
propagated when a remote procedure is called.  An app A calling another app B does not grant B the permissions assigned to A.
This is different from the BN type system where permissions can potentially propagate along the call stacks.
Note however that B can potentially have more permissions than A, leading to a potential privilege escalation,
which is a known weakness in Android permission system~\cite{Chin:2011wa}.  Another consequence of this lack of transitivity
is that in designing the type system, one must be careful to avoid what we call a "parameter laundering" attack (see Section~\ref{sec:types}).

Access to resources in Android are guarded by a set of permissions. In most of the cases, at most one permission is needed
for each access (for system resources), but individual apps may implement custom permissions and may require the presence or absence of one or more permissions. 
A (system) resource can be accessed through different API calls. We assume that these API calls are implemented
correctly and the permissions required are enforced consistently (note that this is not necessarily the case with
Android, as the study in \cite{ShaoNDSS16} has shown). Some API calls may require less permissions than others though they
allow access to the same resource. In principle, our typing system can be applied to type check the entire
framework, but this is not practical. In practice, one needs to make simplifying assumptions about the
enforcement of the permissions in the framework. 

\subsection{An Imperative Language with Permission Checks}\label{sec:language}

We do not model directly the Android (mostly Java) source code but a much simplified language, which is a variant 
of the imperative language considered in \cite{Volpano:1996}, extended with functions and an operator for permission checks. 

We model an \emph{app}  as a collection of \emph{functions} (\emph{services}), together with a statically assigned permission set;  a \emph{system}, denoted as $\mathcal{S}$, consists of a set of apps. We use capital letters $A,B,\ldots$ to denote apps. A function $f$ defined in an app $A$ is abbreviated to $A.f$, and may be called independently of other functions in the same app. 
The intention is that a function models an application component (i.e., \emph{Activity}, \emph{Service}, \emph{BroadCastReceiver}, or \emph{ContentProvider}) in Android, which may be called from within the same app or from other apps. We assume that at any given time only one function is executed, so we do not model concurrent executions of apps. Additionally, functions in a system are assumed to be not (mutually) recursive, so in a given system, there is a finite chain of function calls from any given function. Each app is assigned a static set of permissions, drawn from a finite set of permissions $\PS$. The powerset of $\PS$ is written as $\PPS$.

For simplicity, we consider only programs manipulating \emph{integers}, 
so the expressions in our language 
all have the integer type. Boolean values are encoded as zero (false) and nonzero (true).
The grammar for expressions is thus as follows:
\begin{equation*}
e ::= n \mid \leVAR{x} \mid \leOP{e}{e}
\end{equation*}
where $n$ denotes an integer literal, $\leVAR{x}$ denotes a variable, and $\leOP{e}{e}$ denotes a binary operation.
The commands of the language are given in the following grammar:
\begin{equation*}
\begin{array}{ll}
  c\;::=\; &\lcASS{x}{e} \mid \lcIF{e}{c}{c} \mid  \lcWHILE{e}{c} \mid \lcSEQ{c}{c}  \\
  & \mid \lcLETVAR{x}{e}{c} \mid \lcCALL{x}{A.f}{\overline{e}} \mid \lcCP{p}{c}{c} 
\end{array}
\end{equation*}
The first four constructs are respectively assignment, conditional, while-loop and sequential composition.
The statement ``\lcLETVAR{x}{e}{c}'' is a local variable declaration statement. Here $x$ is
declared and initialized to $e$, and its scope is the command $c$. We require
that $x$ does not occur in $e.$
The statement ``\lcCALL{x}{A.f}{\overline{e}}''  denotes an assignment whose right hand side is
a function call to $A.f$.
``\lcCP{p}{c_1}{c_2}'' 
checks whether the calling app has permission $p$: if it does then $c_1$ is executed,
otherwise $c_2$ is executed. 
This is similar to the \emph{test} construct in {\BN}'s paper. 
But unlike the \emph{test} construct in their language, where \emph{test} can be done on a set of permissions,
we allow testing \emph{only one permission} at a time.
This is the typical practice in Android~\cite{Android-CheckPerm}. And it is not a restriction in theory, since we can use nested permission checks to simulate {\BN}'s \emph{test} behavior.

A function declaration has the following syntax:
\begin{equation*}
F ::= \lpPROG{A.f}{\overline{x}}{c}{{\lpRES}}
\end{equation*}
where $A.f$ is the name of the function, $\bar x$ are function parameters,
``$\textbf{init}~ r={\INITVAL}$ in \{c;~\textbf{return}~\lpRES\}'' is the
function body, where $c$ is a command and {\lpRES} is a local variable 
that holds the return value of the function.  
The variables $\overline{x}$ and $r$ are bound variables with the command $c; \textbf{return}~r$ in their scopes.
We consider only {\em closed functions}, i.e., the variables occurring in $c$
are either introduced by \textbf{letvar} or from the set $\{\overline{x}, r\}$.
To simplify proofs of various properties, we assume that
bound variables are named differently
so there are no naming clashes between them in a system.

\begin{figure}[ht]
$$
\inference[E-VAL]{}{\eta \vdash \leVAL{v}  \leadsto v} \qquad
\inference[E-VAR]{}{\eta \vdash \leVAR{x}  \leadsto \eta(x)}  
$$

$$
\inference[E-OP]{\eta \vdash e_1 \leadsto v_1\quad \eta \vdash e_2 \leadsto v_2}
{\eta \vdash \leOP{e_1}{e_2} \leadsto v_1~op~v_2}
\quad 
\inference[E-ASS]{\eta \vdash e \leadsto v}
{\eta; A; P \vdash \lcASS{x}{e} \leadsto \eta [ x \mapsto v ]}
$$

$$
\inference[E-IF-T]{\eta \vdash e \leadsto v\quad v\neq 0\quad \eta; A; P \vdash c_1 \leadsto \eta'}
{\eta; A; P \vdash \lcIF{e}{c_1}{c_2} \leadsto \eta'}
\quad
\inference[E-IF-F]{\eta \vdash e \leadsto v\quad v = 0\quad \eta;A; P \vdash c_2 \leadsto \eta'}
{\eta;A; P \vdash \lcIF{e}{c_1}{c_2} \leadsto \eta'}
$$

$$
\inference[E-WHILE-T]{
\eta \vdash e \leadsto v\quad v\neq 0\quad 
\eta;A; P \vdash c \leadsto \eta'\\
\quad \eta' ; A ; P \vdash \lcWHILE{e}{c} \leadsto \eta''
}
{\eta;A; P \vdash \lcWHILE{e}{c} \leadsto \eta''}
\quad
\inference[E-WHILE-F]{\eta \vdash e \leadsto v\quad v = 0}
{\eta;A; P \vdash \lcWHILE{e}{c} \leadsto \eta}
$$

\[
\inference[E-SEQ]{\eta;A; P \vdash c_1 \leadsto \eta'
\quad \eta';A; P \vdash c_2 \leadsto \eta''}
{\eta;A;P \vdash \lcSEQ{c_1}{c_2} \leadsto \eta''}
\quad
\inference[E-LETVAR]{\eta \vdash e \leadsto v \quad \eta [ x \mapsto v ]; A ; P \vdash c \leadsto \eta'}
{\eta; A; P \vdash \lcLETVAR{x}{e}{c} \leadsto \eta' - x}
\]

\[
\inference[E-CP-T]{p\in P \quad \eta; A; P \vdash c_1 \leadsto \eta'}
{\eta; A ; P \vdash \lcCP{p}{c_1}{c_2} \leadsto \eta'}
\quad
\inference[E-CP-F]{p\notin P \quad \eta;A;P \vdash c_2 \leadsto \eta'}
{\eta;A;P \vdash \lcCP{p}{c_1}{c_2} \leadsto \eta'}
\]

\[
\inference[E-CALL]{
\FD(B.f) = \lpPROG{B.f}{\overline{y}}{c}{\lpRES} \\
\eta\vdash\overline{e} \leadsto \overline{v} \quad
[\overline{y} \mapsto \overline{v}, r \mapsto 0] ; B; \Theta(A) \vdash c \leadsto \eta'}
{\eta;A;P \vdash \lcCALL{x}{B.f}{\overline{e}} \leadsto \eta[x \mapsto \eta'(r) ]}
\]

\caption{Evaluation rules for expressions and commands, 
given a function definition table $\FD$ and a permission assignment $\Theta.$
}
\label{fig:semantics}
\end{figure}
 
\subsection{Operational Semantics}\label{sec:semantics}
We assume that function definitions in a system are stored in a
table $\FD$ indexed by function names, and the permission set assigned to a certain app is given by a table $\Theta$ indexed by app names.

An {\termEmph{evaluation environment} 
is a finite mapping from variables to values (i.e., integers).
We denote with $EEnv$ the set of evaluation environments.  
Elements of $EEnv$ are ranged over by $\eta$. 
We also use the notation 
$[ x_1 \mapsto v_1, \cdots,  x_n \mapsto v_n]$
to denote an evaluation environment mapping variable $x_i$ to value $v_i$ for $i \in \{1,\dots, n\}$; this will sometimes be abbreviated as $[ \overline x \mapsto \overline v ].$
The domain of $\eta = [ x_1 \mapsto v_1, \cdots, x_n \mapsto v_n]$ (i.e., $\{x_1,\dots,x_n\}$) is written as $dom(\eta)$.
Given two environments $\eta_1$ and $\eta_2$, we define 
$\eta_1\eta_2$ as an environment $\eta$ such that $\eta(x) = \eta_2(x)$ if $x \in dom(\eta_2)$, 
otherwise $\eta(x) =\eta_1(x)$.
For example, $\eta[x \mapsto v]$ maps $x$ to $v$, and $y$ to $\eta(y)$ 
for any $y \in dom(\eta)$ such that $y \not = x.$
Given a mapping $\eta$ and a variable $x$, we write $\eta\!-\!x$ to denote the
mapping by removing $x$ from $dom(\eta)$.

The semantics of expressions and commands are defined via
the evaluation rules in Fig.~\ref{fig:semantics}. 
The semantics of a function will be implicitly defined in function calls.
The evaluation judgment
for expressions has the form $\eta\vdash e\leadsto v$,
which states that expression $e$ evaluates to value $v$ when variables
in $e$ are interpreted in the evaluation environment $\eta.$
We write $\eta \vdash \overline{e} \leadsto \overline{v}$,
where $\overline{e}=e_1,\dots,e_n$ and $\overline{v} = v_1,\dots,v_n$ for some $n$,
to denote a sequence of judgments
$\eta \vdash e_1 \leadsto v_1, \ldots, \eta \vdash e_n \leadsto v_n.$

The evaluation judgment for commands takes the form
$\eta;A;P\vdash c \leadsto \eta'$
where $\eta$ is an evaluation environment before the execution
of the command $c$, and $\eta'$ is the evaluation environment
after the execution of $c$. The permission set $P$ denotes
the permissions of the \emph{caller} of command, and $A$ refers to the app
to which the command $c$ belongs. 

The operational semantics of most commands are straightforward.
We explain the semantics of the \emph{test} primitive and the function call.
Rules (\ruleTagText{E-CP-T}) and (\ruleTagText{E-CP-F}) capture the semantics of the \emph{test} primitive.
These are where the permission $P$ in the evaluation judgement is used.
The semantics of function calls is given by (\ruleTagText{E-CALL}). Notice
that $c$ inside the body of \emph{callee} is executed under
the permission of $A$, i.e., $\Theta(A)$. The permission $P$
in the conclusion of that rule is not used in the premise. That is,
permissions of the caller of app $A$ are not transferred over to
the callee function $B.f$. This reflects the way permissions in Android
are passed on during IPCs~\cite{Android-CheckPerm,Android-Binder-IPC}, and is also a major difference
between our handling of permissions and {\BN}'s, 
where permissions are inherited by successive function calls. 

 \subsection{Security Types}\label{sec:types}

In information flow type systems such as \cite{Volpano:1996}, it is common to adopt a lattice
structure to encode security levels. Security types in this setting are just security
levels. In our case, we generalize the security types to account for
the \emph{dependency of security levels on permissions}. So we shall distinguish
security levels, given by a lattice structure which encodes sensitivity levels
of information, and security types, which are mappings from permissions to security levels. 
We assume the security levels are given by a lattice $\SL$, with a partial order $\leq_{\SL}$.
Security types are defined in the following.

\begin{definition}\label{def:base_type}
A \termEmph{base security type} (or \termEmph{base type}) $t$ is a mapping from $\PPS$ to $\SL$.
We denote with $\Tcal$ the set of base types. 
Given two base types $s,t$, we say $s=t$ \text{iff} $s(P)=t(P)$ $\forall$ $P\in\PPS$ and  
$s\leq_\Tcal t$ \text{iff} $\forall$ $P\in \PPS$, $s(P) \leq_{\SL} t(P)$.
\end{definition}

As we shall see, if a variable is typed by a base type, the sensitivity of
its content may depend on the permissions of the app which writes to the variable.
In contrast, in traditional information flow type systems,
a variable annotated with a security level has a fixed sensitivity level
regardless of the permissions of the app who writes to the variable. 

A security level $l$ can also be treated as a special base type $t_l$ that is a constant function, mapping all permission sets to level $l$ itself. Therefore, we define an \emph{embedding function} $\hat{\cdot}$
from security levels to base types, such that $\hat{l}=t_l$. This means that, given a security level $l$, we have $\hat{l}(P)=l, \forall P\in\PPS$.

Next, we show that the set of base types with the order $\leq_{\Tcal}$ forms a lattice.

\begin{lemma}\label{lem:tleq_po}
	$\leq_\Tcal$ is a partial order relation on $\Tcal$.
\end{lemma}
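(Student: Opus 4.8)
The plan is to verify the three defining properties of a partial order—reflexivity, antisymmetry, and transitivity—for the relation $\leq_\Tcal$ on the set $\Tcal$ of base types. Since Definition~\ref{def:base_type} defines $s \leq_\Tcal t$ pointwise as ``$\forall P \in \PPS,\; s(P) \leq_\SL t(P)$,'' the strategy is to lift each corresponding property of the underlying lattice order $\leq_\SL$ through the universal quantifier over permission sets. The key observation is that each property is quantifier-friendly: a pointwise relation inherits reflexivity, antisymmetry, and transitivity directly from the base relation because the quantifier $\forall P$ commutes with the logical structure of each defining implication.

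First I would prove reflexivity: for any $t \in \Tcal$ and any $P \in \PPS$, we have $t(P) \leq_\SL t(P)$ since $\leq_\SL$ is reflexive (as $\SL$ is a lattice); hence $t \leq_\Tcal t$. Next, for transitivity, suppose $s \leq_\Tcal t$ and $t \leq_\Tcal u$. Fixing an arbitrary $P \in \PPS$, the hypotheses give $s(P) \leq_\SL t(P)$ and $t(P) \leq_\SL u(P)$, so transitivity of $\leq_\SL$ yields $s(P) \leq_\SL u(P)$; since $P$ was arbitrary, $s \leq_\Tcal u$. Finally, for antisymmetry, suppose $s \leq_\Tcal t$ and $t \leq_\Tcal s$. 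For each $P \in \PPS$ we then have both $s(P) \leq_\SL t(P)$ and $t(P) \leq_\SL s(P)$, so antisymmetry of $\leq_\SL$ gives $s(P) = t(P)$. Since this holds for all $P \in \PPS$, the equality criterion in Definition~\ref{def:base_type} (namely $s = t$ iff $s(P) = t(P)$ for all $P$) gives $s = t$.

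Honestly, there is no real obstacle here: the result is a routine instance of the general fact that pointwise orderings on function spaces inherit the partial-order structure of the codomain. The only thing to be careful about is the antisymmetry step, where one must invoke the paper's explicit \emph{definition} of equality on base types (equality of the underlying functions, checked pointwise) rather than taking function equality for granted—though these coincide. I would therefore structure the proof as three short clearly-labelled cases, each a one-line appeal to the corresponding property of $\leq_\SL$ together with discharge of the universal quantifier over $\PPS$, closing with the antisymmetry case as described.
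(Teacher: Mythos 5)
Your proposal is correct and follows essentially the same route as the paper's proof: both verify reflexivity, antisymmetry, and transitivity pointwise by lifting the corresponding properties of $\leq_{\SL}$ through the universal quantifier over $\PPS$, with antisymmetry discharged via the pointwise definition of equality on base types.
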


\begin{definition}\label{def:type-cup-cap}
For $s, t\in\Tcal$, $s\sqcup t$ and $s\sqcap t$ are defined as
\begin{align*}
(s\sqcup t)(P) =  s(P)\sqcup t(P), \forall P \in \PPS\\
(s\sqcap t)(P) =  s(P)\sqcap t(P), \forall P \in \PPS
\end{align*}
\end{definition}

\begin{lemma}\label{lem:lattice}
$(\Tcal, \leq_\Tcal)$ forms a lattice.
\end{lemma}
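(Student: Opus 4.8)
The plan is to show that $(\Tcal, \leq_\Tcal)$ is a lattice by verifying that the operations $\sqcup$ and $\sqcap$ from Definition~\ref{def:type-cup-cap} are genuinely the least upper bound and greatest lower bound, respectively, of any pair $s, t \in \Tcal$. Since Lemma~\ref{lem:tleq_po} already establishes that $\leq_\Tcal$ is a partial order, what remains is purely the existence of binary joins and meets. The key structural fact I would exploit is that base types are \emph{pointwise} maps into the underlying security lattice $\SL$, and both $\leq_\Tcal$ and the operations $\sqcup, \sqcap$ are defined pointwise; hence the lattice properties should lift directly from the lattice $(\SL, \leq_\SL)$ one permission set $P$ at a time.

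First I would verify that $s \sqcup t$ is well-defined as an element of $\Tcal$: since $s(P), t(P) \in \SL$ for every $P \in \PPS$ and $\SL$ is a lattice, $s(P) \sqcup t(P)$ exists in $\SL$, so $(s \sqcup t)(P) = s(P) \sqcup t(P)$ defines a mapping from $\PPS$ to $\SL$, i.e., a base type. Next I would check that $s \sqcup t$ is an upper bound: for each $P$, the definition of join in $\SL$ gives $s(P) \leq_\SL s(P) \sqcup t(P) = (s \sqcup t)(P)$, and since this holds for all $P$, the definition of $\leq_\Tcal$ yields $s \leq_\Tcal s \sqcup t$, and symmetrically $t \leq_\Tcal s \sqcup t$. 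Then for the \emph{least} upper bound property, suppose $u \in \Tcal$ with $s \leq_\Tcal u$ and $t \leq_\Tcal u$; this means $s(P) \leq_\SL u(P)$ and $t(P) \leq_\SL u(P)$ for every $P$, so by the least-upper-bound property of $\sqcup$ in $\SL$ we get $s(P) \sqcup t(P) \leq_\SL u(P)$, i.e., $(s \sqcup t)(P) \leq_\SL u(P)$ for all $P$, which is exactly $s \sqcup t \leq_\Tcal u$.

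The argument for $\sqcap$ is entirely dual: $s \sqcap t$ is well-defined by the same pointwise reasoning, it is a lower bound of $s$ and $t$ because $(s \sqcap t)(P) = s(P) \sqcap t(P) \leq_\SL s(P)$ (and $\leq_\SL t(P)$) for every $P$, and it is the greatest lower bound because any common lower bound $u$ satisfies $u(P) \leq_\SL s(P) \sqcap t(P)$ pointwise by the greatest-lower-bound property in $\SL$. I would present the meet case briefly, citing symmetry with the join case to avoid repeating the bookkeeping.

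I do not anticipate a genuine obstacle here, since the result is a standard instance of the fact that a pointwise product (or function space) of lattices is again a lattice; the entire content is the faithful transfer of the $\SL$-lattice structure through the pointwise definitions. The only points requiring minor care are being explicit that the supremum/infimum in $\SL$ exist for \emph{every} $P \in \PPS$ (guaranteed by $\SL$ being a lattice) and keeping the universal quantifier over $P$ threaded correctly through each implication, so that ``pointwise $\leq_\SL$ for all $P$'' and ``$\leq_\Tcal$'' are interchanged only via Definition~\ref{def:base_type}. These are routine, so the proof will be short.
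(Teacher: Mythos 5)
Your proof is correct and follows essentially the same route as the paper: the paper likewise establishes that $s\sqcup t$ and $s\sqcap t$ are upper/lower bounds (via a small auxiliary lemma immediate from the definitions) and then verifies leastness/greatestness pointwise in $\SL$, treating the meet case by duality. No gaps.
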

 
From now on, we shall drop the subscripts in $\leq_{\SL}$ and $\leq_\Tcal$ when no ambiguity arises.

 \begin{definition}\label{def:fun-type}
   A \termEmph{function type} has the form $\overline{t} \rightarrow t$, where $\overline{t}=(t_1,\ldots,t_m)$, $m\geq 0$ and $t$ are base types.
   The types $\overline t$ are the types for the arguments of the function and $t$ is the return type of the function.
 \end{definition}

In the type system that we shall define later, security types of variables and expressions may be 
altered depending on the execution context. That is, when a variable is used in
a context where a permission check has been performed (either successfully or unsuccessfully),
the type of the variable may be adjusted to take into account the presence or absence
of the checked permission. 
Such an adjustment is called a {\termEmph{promotion} or a {\termEmph{demotion}.

\begin{definition}\label{def:updown}
Given a permission $p$, the \termEmph{promotion} and \termEmph{demotion} of a base type $t$
with respect to $p$ are: 
\begin{align*}
(t\uparrow_{p}) (P) = t(P\cup \{p\}), \forall P\in \PPS
\tag{promotion}
\\
(t\downarrow_{p}) (P) = t(P\setminus\{p\}), \forall P\in \PPS
\tag{demotion}
\end{align*}
The \termEmph{promotion} and \termEmph{demotion} of a function type~$\overline{t}\rightarrow t$,
where $\overline{t} = (t_1,\dots, t_m)$, are respectively: 
\begin{align*}
(\overline{t}\rightarrow t)\uparrow_{p} = \overline{t}\uparrow_{p} \rightarrow t\uparrow_{p},~
\text{where}~\overline{t}\uparrow_{p} = (t_1\uparrow_{p},\ldots, t_m\uparrow_{p}),
\\
(\overline{t}\rightarrow t)\downarrow_{p} = \overline{t}\downarrow_{p}\rightarrow t\downarrow_{p},~
\text{where}~\overline{t}\downarrow_{p} = (t_1\downarrow_{p},\ldots, t_m\downarrow_{p}).
\end{align*}
\end{definition}

\subsection{Security Type System}\label{sec:typing_rules}

We first define a couple of operations on security types and permissions that will
be used later.

\begin{definition}\label{def:projection}
Given $t\in\Tcal$ and $P\in\PPS$, the \termEmph{projection} of $t$ on $P$ is a security type  $\app{t}{P}$ defined as follows:
\begin{equation*}
\app{t}{P}(Q) = t(P),\; \forall Q\in \PPS.
\end{equation*}
Type projection of a list of types
on $P$ is then written as
\[
\app{(t_1,\dots,t_n)}{P} = (\app{t_1}{P}, \dots, \app{t_n}{P}).
\]
\end{definition}
\begin{definition}\label{def:merge}
Given a permission $p$ and types $t_1$ and $t_2$, the {\termEmph{merging}} of 
$t_1$ and $t_2$ along $p$, denoted as $\omerge {p} {t_1} {t_2}$, is: 
\begin{equation*}
(\omerge p {t_1} {t_2}) (P) = 
\begin{cases}
t_1(P) & p\in P \\
t_2(P) & p\not\in P\\
\end{cases}
\quad\forall P\in{\PPS}
\end{equation*}
\end{definition}

A {\em typing environment} is a finite mapping from variables to base types. 
We use the notation  $[x_1 : t_1, \dots, x_n : t_n]$
to enumerate a typing environment with domain $\{x_1,\dots,x_n\}.$
Typing environments are ranged over by $\Gamma.$
Given $\Gamma_1$ and $\Gamma_2$ such that $dom(\Gamma_1) \cap dom(\Gamma_2) = \emptyset$,
we write $\Gamma_1\Gamma_2$ to denote a typing environment that is the (disjoint) union of the mappings
in $\Gamma_1$ and $\Gamma_2$.

\begin{definition}\label{def:tenv-pd}
  Given a typing environment $\Gamma$, its {\em promotion} and {\em demotion} along $p$
  are typing environments $\Gamma\!\uparrow_p$ and $\Gamma\!\downarrow_p$, such that
$(\Gamma\!\uparrow_{p})(x) = \Gamma(x)\!\uparrow_{p}$ and 
$(\Gamma\!\downarrow_{p})(x) = \Gamma(x)\!\downarrow_{p}$ for every $x \in dom(\Gamma).$	
  The projection of $\Gamma$ on $P \in \PPS$ is a typing environment $\app{\Gamma}{P}$ such that  
$(\app{\Gamma}{P})(x) = \app{\Gamma(x)}{P}$ for each $x \in dom(\Gamma).$
\end{definition}

There are three typing judgments in our type system as explained below. All these judgments are implicitly parameterized by
a function type table, $\FT$, which maps all function names to function types, and a mapping $\Theta$ assigning permission sets to apps.

\begin{itemize}
\item Expression typing: $\Gamma \vdash e : t.$ 
This says that under $\Gamma$,
the expression $e$ has a base type at most $t$.

\item Command typing: $\Gamma; A \vdash c : t$.
This means that the command $c$ writes to variables with type at least $t$, when 
executed by app $A$, under the typing environment $\Gamma.$
 
\item Function typing:
The typing judgment takes the form: 
$$\vdash \lpFUN{B.f}{\overline{x}}{c} :  \overline{t} \xrightarrow{} t'$$
where $\overline{x} = (x_1,\dots,x_n)$ and
$\overline{t} = (t_1,\dots,t_n)$ for some $n \geq 0.$
Functions in our setting are 
polymorphic in the permissions of the caller.
Intuitively, this means that each caller of the function above with permission set $P$
``sees'' the function as having type
$\app{\overline{t}}{P} \rightarrow \app{t'}{P}.$
That is, if the function is called from another app with permission $P$,
then it expects input of type at least $\app{\overline{t}}{P}$ and
a return value of type up to $\app{t'}{P}$.

\end{itemize}
The typing rules are given in Fig.~\ref{fig:typing-rules}.
Most of these rules are common to 
information flow type systems~\cite{Volpano:1996,Banerjee:2005ht,Sabelfeld:2003},
except for (\ruleTagText{T-CP}) and (\ruleTagText{T-CALL}), which we explain below.

In Rule (\ruleTagText{T-CP}), to type statement \lcCP{p}{c_1}{c_2},
we need to type $c_1$ in a promoted typing environment, to account for the case
where the permission check for $p$ succeeds, and $c_2$ in a demoted typing environment,
to account for the case where the permission check fails.
The challenge here is then how one combines the types obtained in the two premises
of the rule to obtain the type for the conclusion. One possibility is to force
the type of the two premises and the conclusion to be identical to those in the conditional construct typing, i.e.,
Rule (\ruleTagText{T-IF}). This, as we have seen in Sec.~\ref{sec:intro},
leads to a loss in precision of the type for the test statement. Instead, we consider
a more refined \emph{merged} type $\omerge{p}{t_1}{t_2}$ for the conclusion,
where $t_1$ ($t_2$ resp.) is the type of the left (right resp.) premise.
To understand the intuition behind the merged type, 
consider a scenario where the statement
is executed in a context where permission $p$ is \emph{present}.
Then the permission check succeeds and
the statement $\lcCP{p}{c_1}{c_2}$ is equivalent to $c_1$.
In this case, one would expect that the behaviour of $\lcCP{p}{c_1}{c_2}$ would be
equivalent to that of $c_1$. This is in fact captured by the equation
$
(\omerge p {t_1} {t_2})(P) = t_1(P)
$
for all $P$ such that $p \in P$, which holds by definition. A dual scenario arises when
$p$ is not in the permissions of the execution context.

In Rule (\ruleTagText{T-CALL}), the callee function $B.f$ is assumed to be typed checked already
and its type is given in the $FT$ table. Here the function $B.f$ is called
by $A$ so the type of $B.f$ as seen by $A$ should be
a projection of the type given in $FT(B.f)$ on the permissions of $A$ (given by $\Theta(A)$): 
$\app{\overline{t}}{\Theta(A)} \rightarrow \app{t'}{\Theta(A)}.$
Therefore the arguments for the function call should be typed
as 
$\Gamma \vdash \overline{e} : \app{\overline{t}}{\Theta(A)}$
and the return type (as viewed by $A$) should be dominated
by the type of $x$, i.e., $\app{t'}{\Theta(A)} \leq \Gamma(x)$.

It is essential that in Rule (\ruleTagText{T-CALL}), the arguments $\overline{e}$ and the return value of
the function call are typed according to the \emph{projection of $\overline{t}$} and $t'$ on $\Theta(A)$. If they are instead typed with $\overline{t}$, then there is a potential implicit flow via what we call a ``parameter laundering'' attack. To see why, consider the following alternative formulation of (\ruleTagText{T-CALL}): 
\begin{equation*}
\inference[T-CALL']
{
 \FT(B.f) = \overline{t} \rightarrow t' &
 \Gamma \vdash \overline{e} :\overline{t} &
 t'  \leq \Gamma(x)}
{\Gamma; A \vdash \lcCALL{x}{B.f}{\overline{e}} :
  \Gamma(x)}
\end{equation*}
Now consider the following functions: 
\begin{lstlisting}
A.f(x) {  
	init r in {
	   r := call B.g (x); 
	   return r
	}  
}

B.g(x) {
	init r in {
		test(p) r := 0 else r := x;
		return r
	}
}

C.getsecret() {
	init r in {
		test(p) r := SECRET else r := 0;
		return r
	}
}

M.main() { 
	init r in
	letvar 
	( 
	   r := call A.f(
	);
	return r
}
\end{lstlisting}

Let $\PS=\{p\}$ and let $t$ be the base type $t =  \{\emptyset \mapsto L, \{p\} \mapsto H\}$, 
where $L$ and $H$ are bottom and top levels respectively. Here we assume the value SECRET is a ``high'' value that needs to be protected so
we require that the function C.getsecret to have type $() \rightarrow t$. That is, only apps that have the required permission $p$ may obtain
the secret value. 
Suppose the permissions assigned to the apps are given by:  $\Theta(A) =  \Theta(B) = \emptyset, \Theta(C) = \Theta(M) = \{p\}.$ 
If we were to adopt the modified \ruleTagText{T-CALL'} instead of \ruleTagText{T-CALL}, then we can assign the following 
types to the above functions:
\[
FT := \left\{
\begin{array}{lcl}
A.f & \mapsto & t \rightarrow \hat{L} \\
B.g & \mapsto & t \rightarrow \hat{L}\\
C.getsecret & \mapsto & () \rightarrow t\\
M.main & \mapsto  & () \rightarrow \hat{L}
\end{array}
\right.
\]
Notice that the return type of $M.main$ is $\hat{L}$ despite having a return value that contains SECRET. 
If we were to use \ruleTagText{T-CALL'} in place of \ruleTagText{T-CALL}, the above functions can
be typed as follows: 

\begin{prooftree}
\AxiomC{$FT(B.g) = t \rightarrow \hat{L}$}
\AxiomC{$x : t, r : \hat{L} \vdash x : t$}
\AxiomC{$\hat{L} \leq t$}
\LeftLabel{T-CALL'}
\TrinaryInfC{$x : t, r : \hat{L} ; A \vdash \lcCALL{r}{B.g}{x} : \hat{L}$}
\LeftLabel{T-FUN}
\UnaryInfC{$\vdash \lpPROG{A.f}{x} {\lcCALL{r}{B.g}{x} }{\lpRES} : t \rightarrow \hat{L}$}
\end{prooftree}

\vskip1ex

\begin{prooftree}
\AxiomC{$x : t\uparrow_p,  r : \hat{L}\uparrow_p \vdash \lcASS{r}{0} : \hat{L}$ }
\AxiomC{$x : t\downarrow_p,  r : \hat{L}\downarrow_p \vdash \lcASS{r}{x} : \hat{L} $ }
\LeftLabel{T-CP}
\BinaryInfC{$x : t, r : \hat{L} \vdash \lcCP {p} { \lcASS{r}{0} } {\lcASS{r}{x}} : \omerge{p}{\hat{L}}{\hat{L}} $}
\LeftLabel{T-FUN}
\UnaryInfC{$\vdash \lpPROG{B.g}{x}{ \lcCP {p} { \lcASS{r}{0} } {\lcASS{r}{x}} }{\lpRES} : t \rightarrow \hat{L}$}
\end{prooftree}
Note that $t\uparrow_p = \hat{H}$ and $t\downarrow_p = \hat{L} = \hat{L}\downarrow = \hat{L}\uparrow.$

\vskip1ex

\begin{prooftree}
\AxiomC{$r : t\uparrow_p \vdash \lcASS{r}{\mathrm{SECRET}} : \hat{H}$}
\AxiomC{$r : t\downarrow_p \vdash \lcASS{r}{0} : \hat{L}$}
\LeftLabel{T-CP}
\BinaryInfC{$r : t \vdash \lcCP{p}{\lcASS{r}{\mathrm{SECRET}}}{\lcASS{r}{0}} :  \omerge{p}{\hat{H}}{\hat{L}} $}
\LeftLabel{T-FUN}
\UnaryInfC{$\vdash \lpPROG{C.getsecret}{~}
		{ \lcCP{p}{\lcASS{r}{\mathrm{SECRET}}}{\lcASS{r}{0}}}
		{\lpRES} : () \rightarrow t$}
\end{prooftree}
Note that $\omerge{p}{\hat{H}}{\hat{L}} = t.$

Finally, still assuming \ruleTagText{T-CALL'}, a partial typing derivation for $M.main$ is as follows: 
\begin{prooftree}
\AxiomC{$r : \hat{L} \vdash 0 : t$}
	\AxiomC{$\Gamma ; M \vdash \lcCALL{x_H}{C.getsecret}{~} : \hat{L}$}
	\AxiomC{$\Gamma ; M \vdash \lcCALL{r}{A.f}{x_H} : \hat{L}$ }
\LeftLabel{T-SEQ}
\BinaryInfC{$r : \hat{L}, x_H : t ; M \vdash \lcSEQ {\lcCALL{x_H}{C.getsecret}{~} } {\lcCALL{r}{A.f}{x_H} }  :  \hat{L}$}
\LeftLabel{T-LETVAR}
\BinaryInfC{$r : \hat{L} ~; M \vdash \lcLETVAR{x_H}{0}
		{\lcSEQ {\lcCALL{x_H}{C.getsecret}{~} } {\lcCALL{r}{A.f}{x_H} } } : \hat{L} $}
\LeftLabel{T-FUN}
\UnaryInfC{$
\begin{array}{l}
\vdash M.main(~) \{ \textbf{init} ~ r = 0 ~ \textbf{in} \\
\quad \textbf{letvar} ~ x_H = 0 ~ \textbf{in} \{ \\
\qquad \lcCALL{x_H}{C.getsecret}{~}; \\ 
\qquad \lcCALL{r}{A.f}{x_H} \\
\quad \} \\
\quad \textbf{return}~r 
\quad \}
\end{array}
: () \rightarrow \hat{L}
$}
\end{prooftree}
where $\Gamma = \{ r : \hat{L}, x_H : t\}.$ 
To finish this typing derivation, we need to complete the derivations for the leaves:
\begin{prooftree}
\AxiomC{$FT(C.getsecret) = () \rightarrow t$}
\AxiomC{$\Gamma \vdash () : () $}
\AxiomC{$t \leq \Gamma(x_H) = t $}
\LeftLabel{T-CALL'}
\TrinaryInfC{$\Gamma; M \vdash \lcCALL{x_H}{C.getsecret}{~} : t$}
\LeftLabel{T-SUB${}_c$}
\UnaryInfC{$\Gamma; M \vdash \lcCALL{x_H}{C.getsecret}{~} : \hat{L}$}
\end{prooftree}

\vskip1ex

\begin{prooftree}
\AxiomC{$FT(A.f) = t \rightarrow \hat{L}$}
\AxiomC{$\Gamma \vdash x_H : t$}
\AxiomC{$\hat{L} \leq \Gamma(x_H) = t$}
\LeftLabel{T-CALL'}
\TrinaryInfC{$\Gamma ; M \vdash \lcCALL{r}{A.f}{x_H} : \hat{L}$ }
\end{prooftree}

Since $M.main$ returns a sensitive value, the typing rule \ruleTagText{T-CALL'} is obviously unsound.
With the correct typing rule for function calls, the function $A.f$ cannot be assigned type
$t \rightarrow \hat{L}$, since that would require the instance of T-CALL (i.e., when making the
call to $B.g$) in this case to satisfy
the constraint:
\[
x : t, r : \hat{L} \vdash x : \app{t}{\Theta(A)}
\]
where $\app{t}{\Theta(A)} = \hat{L}$, which is impossible since $t \not \leq \hat{L}.$ What this
means is essentially that in our type system, information received by an app $A$ from the parameters 
cannot be propagated by $A$ to another app $B$, unless $A$ is already authorized to access the
information contained in the parameter. Note that this only restricts the propagation of such parameters
to other apps; the app $A$ can process the information internally without necessarily violating
the typing constraints.

Finally, the reader may check that if we fix the type of $B.g$ to $t \rightarrow \hat{L}$ then
$A.f$ can only be assigned type $\hat{L} \rightarrow \hat{L}.$ In no circumstances can
$M.main$ be typed, since the statement $x_H := C.getsecret()$ forces $x_H$ to have
type $\hat{H}$, and thus cannot be passed to $A.f$ as an argument. 

\begin{figure}

\[
\inference[T-VAR]{}
{\Gamma \vdash \leVAR{x} : \Gamma(x)}
\qquad
\inference[T-OP]
{\Gamma\vdash e_1: t & \Gamma ; A\vdash e_2: t}
{\Gamma\vdash \leOP{e_1}{e_2} : t}
\]

\[
\inference[T-SUB$_{e}$]
{\Gamma\vdash e : s & s \leq t}
{\Gamma\vdash e : t}
\qquad 
\inference[T-SUB$_{c}$]
{\Gamma; A \vdash c : s & t \leq s}
{\Gamma; A\vdash c : t}
\]

\[
\inference[T-ASS]
{\Gamma \vdash e : \Gamma(x)}
{\Gamma ;A \vdash \lcASS{x}{e} : \Gamma(x)}
\qquad
\inference[T-LETVAR]
{\Gamma \vdash e : s &
\Gamma[x:s]; A \vdash c : t }
{\Gamma; A\vdash \lcLETVAR{x}{e}{c} :  t}
\]

\[
\inference[T-IF]
{\Gamma \vdash e : t & \Gamma ; A\vdash c_1 : t & \Gamma; A \vdash c_2 : t }
{\Gamma; A \vdash \lcIF{e}{c_1}{c_2} : t}
\quad
\inference[T-CP]
{\Gamma\uparrow_{p} ; A \vdash c_1 : t_1 &
\Gamma\downarrow_{p} ; A\vdash c_2 : t_2}
{\Gamma; A\vdash \lcCP{p}{c_1}{c_2} : \omerge p {t_1} {t_2}}
\]

\[
\inference[T-WHILE]
{\Gamma \vdash e : t & \Gamma ; A\vdash c : t  }
{\Gamma; A \vdash \lcWHILE{e}{c} : t }
\quad
\inference[T-SEQ]
{\Gamma; A\vdash c_1 : t & \Gamma ; A\vdash c_2 : t }
{\Gamma ; A\vdash \lcSEQ{c_1}{c_2} : t}
\]

\[
\inference[T-CALL]
{\FT(B.f) = \overline{t} \rightarrow t' &
\Gamma \vdash \overline{e} : \app{\overline{t}}{\Theta(A)} & 
\app{t'}{\Theta(A)} \leq \Gamma(x)}
{\Gamma; A \vdash \lcCALL{x}{B.f}{\overline{e}} : 
 \Gamma(x)  }
\]

\[
\inference[T-FUN]
{[\overline{x}:\overline{t}, r : t'] ; B\vdash c : s}
{ \vdash \lpFUN{B.f}{\overline{x}}{c} :  
  \overline{t} \rightarrow t'}
\]
\caption{Typing rules for expressions, commands and functions, given a function-type table $\FT$ and a permission assignment $\Theta$.
}
\label{fig:typing-rules}
\end{figure}

\subsection{Noninterference and Soundness Proof}\label{sec:noninterference}

We now discuss a notion of \termEmph{noninterference} and prove the soundness of our type system
with respect to it.
Firstly we define an \termEmph{indistinguishability} relation between evaluation environments.
Such a definition typically assumes an observer who may observe values of variables at a certain security level.
In the non-dependent setting, the security level of the observer is fixed, say at $l_O$, and valuations
of variables at level $l_O$ or below are required to be identical. In our setting, the security level of
a variable can vary depending on the permissions of the caller app of a particular service, and 
the observer may itself be an app within the system. We do not assume a priori that the policies assigned
to functions and variables to agree with one another, e.g., it may be the case that we have two variables $x : t$ and $y : t'$  
such that $t(P) \not = t'(P)$ where $P$ is the permission set associated with the observer. So to be sound, 
our notion of indistinguishability needs to take into account all the security levels assigned to permission set $P$ by all variables.
In effect, this is equivalent to simply fixing the security level of the observer and ignoring the permission set $P$ of the observer.
This leads to the following definition. 

\begin{definition}\label{def:envequ}
Given two evaluation environments $\eta, \eta'$, 
a typing environment $\Gamma$, a security level $l_O \in \SL$ of the observer,  
{\termEmph{indistinguishability relation}} is defined as follows:
\begin{equation*}
\eta =_{\Gamma}^{l_{O}} \eta' \text{ iff. } 
\forall x\in dom(\Gamma) .\; \big(\Gamma(x) \leq \hat{l}_{O} \Rightarrow \eta(x) = \eta'(x) \big)
\end{equation*}
where 
$\eta(x) = \eta'(x)$ holds iff both sides of the equation are defined and equal,
or both sides are undefined.
\end{definition}

Notice that in Definition~\ref{def:envequ}, since base types are functions from permissions to security level, the security
level $l_O$ needs to be lifted to a base type in the comparison $\Gamma(x) \leq \hat{l}_O.$ 
This ordering on base types implies the ordering on security levels
$\Gamma(x)(P) \leq l_O$ (in the latice $\SL$) for every permission set $P.$
This is equivalent to saying that $l_O$ is an upper bound of
the security levels of $x$ under all possible permission sets: 
$
\sqcup \{\Gamma(x)(P) \mid P \in \PPS \} \leq l_O. 
$
If the base type of each variable assigns the same security level to every permission set (i.e., the security level
is independent of the permissions), then our notion of indistinguishability coincides with the standard definition
for the non-dependent setting.

For the remaining of this paper, we assume that the security level of the observer is fixed to some
value $l_O \in \SL.$

\begin{lemma}\label{lem:ni-eq}
$=_{\Gamma}^{l_{O}}$ is an equivalence relation on $EEnv$.
\end{lemma}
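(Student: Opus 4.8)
The plan is to verify directly the three defining properties of an equivalence relation---reflexivity, symmetry, and transitivity---from Definition~\ref{def:envequ}. The key structural observation is that $=_{\Gamma}^{l_O}$ is assembled from the value-level (partial) equality ``$\eta(x) = \eta'(x)$'' by universally quantifying over $x \in dom(\Gamma)$ and guarding each conjunct by the fixed implication antecedent $\Gamma(x) \leq \hat{l}_O$. Universal quantification and implication with a fixed antecedent both preserve reflexivity, symmetry, and transitivity of the inner relation, so it suffices to show that the value-level equality is itself an equivalence and then lift.

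First I would check the value-level equality. Recall from Definition~\ref{def:envequ} that ``$\eta(x)=\eta'(x)$'' means both $\eta(x)$ and $\eta'(x)$ are defined and equal, or both are undefined. Reflexivity and symmetry of this relation are immediate from its definition. For transitivity, suppose $\eta(x) = \eta'(x)$ and $\eta'(x) = \eta''(x)$. The defined/undefined status of the middle term $\eta'(x)$ is fixed, and each hypothesis forces the corresponding outer term to share that status (and, in the defined case, the same value); hence $\eta(x)$ and $\eta''(x)$ agree, giving $\eta(x) = \eta''(x)$.

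I would then lift these facts to $=_{\Gamma}^{l_O}$. For reflexivity, fix any $\eta \in EEnv$ and any $x \in dom(\Gamma)$ with $\Gamma(x) \leq \hat{l}_O$; then $\eta(x)=\eta(x)$ holds by reflexivity of value equality, so $\eta =_{\Gamma}^{l_O} \eta$. For symmetry, assume $\eta =_{\Gamma}^{l_O} \eta'$ and take any such $x$; the hypothesis gives $\eta(x)=\eta'(x)$, and symmetry of value equality yields $\eta'(x)=\eta(x)$. For transitivity, assume $\eta =_{\Gamma}^{l_O} \eta'$ and $\eta' =_{\Gamma}^{l_O} \eta''$; for each $x$ with $\Gamma(x) \leq \hat{l}_O$ both hypotheses apply, giving $\eta(x)=\eta'(x)$ and $\eta'(x)=\eta''(x)$, whence $\eta(x)=\eta''(x)$ by transitivity of value equality.

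This argument is entirely routine; there is no genuine obstacle. The only point deserving explicit care is the partial-equality convention on undefined values, which I handle in the transitivity step above by pivoting on the fixed defined/undefined status of the middle environment's value. The ordering comparison $\Gamma(x) \leq \hat{l}_O$ enters only as a fixed antecedent and is never itself manipulated, so the lattice structure from Lemma~\ref{lem:lattice} plays no role in this particular lemma.
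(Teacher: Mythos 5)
Your proof is correct and follows essentially the same route as the paper's: reflexivity and symmetry are immediate, and transitivity is handled by a case analysis on the defined/undefined status of the values under the partial-equality convention (the paper pivots on the first environment's value rather than the middle one, which is an inessential difference). No gaps.
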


\begin{lemma}
\label{lem:proj}
If $\eta =_{\Gamma}^{l_{O}} \eta'$
then for each $P\in\PPS$,
$\eta =_{\app{\Gamma}{P}}^{l_O} \eta'.$
\end{lemma}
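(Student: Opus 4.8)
The plan is to prove the statement by directly unfolding the two definitions involved and reducing everything to pointwise comparisons in the lattice $\SL$. First I would fix an arbitrary $P \in \PPS$ and unfold the goal $\eta =_{\app{\Gamma}{P}}^{l_O} \eta'$ using Definition~\ref{def:envequ}. Since $dom(\app{\Gamma}{P}) = dom(\Gamma)$ by Definition~\ref{def:tenv-pd}, it suffices to show, for every $x \in dom(\Gamma)$, that $\app{\Gamma(x)}{P} \leq \hat{l}_O$ implies $\eta(x) = \eta'(x)$. The first simplification is to rewrite this premise: by Definition~\ref{def:projection}, $\app{\Gamma(x)}{P}$ is the constant base type with value $\Gamma(x)(P)$, and $\hat{l}_O$ is the constant base type with value $l_O$, so the base-type inequality $\app{\Gamma(x)}{P} \leq \hat{l}_O$ collapses to the single lattice inequality $\Gamma(x)(P) \leq_{\SL} l_O$.

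Next I would connect this per-$P$ condition to the hypothesis $\eta =_{\Gamma}^{l_O} \eta'$, which (again via Definition~\ref{def:envequ}) supplies $\eta(x) = \eta'(x)$ for every $x$ satisfying $\Gamma(x) \leq \hat{l}_O$. As the text following Definition~\ref{def:envequ} already observes, $\Gamma(x) \leq \hat{l}_O$ is equivalent to $\sqcup\{\Gamma(x)(Q) \mid Q \in \PPS\} \leq_{\SL} l_O$, i.e.\ to the requirement that $\Gamma(x)(Q) \leq_{\SL} l_O$ hold simultaneously for all $Q$. The structural fact that drives the argument is the monotonicity $\Gamma(x)(P) \leq_{\SL} \sqcup\{\Gamma(x)(Q) \mid Q \in \PPS\}$, which relates the value at the single permission set $P$ to the join over all permission sets. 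The remaining work is then to discharge the target implication for each $x$ by feeding in the appropriate instance of the hypothesis.

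The step I expect to be the crux is matching the two \emph{visibility} conditions in the correct direction. The hypothesis constrains exactly those variables whose join over all permission sets lies below $l_O$, whereas the projected relation constrains those variables whose value at the chosen $P$ lies below $l_O$; since the join dominates the value at $P$, the projection can only \emph{enlarge} the set of constrained variables. I would therefore take care to verify that every variable exposed by the projection is in fact already pinned down by the hypothesis, and to make the inclusion between these two variable sets explicit rather than leave it implicit --- this is precisely the point where a naive unfolding could silently reverse the order of the inequality and where I would concentrate the argument. Once the inclusion is established, the conclusion follows by applying the hypothesis to each exposed variable and noting that $P$ was arbitrary.
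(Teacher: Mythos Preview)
Your unfolding is correct up through the point where you identify the two visibility sets: the hypothesis $\eta =_\Gamma^{l_O} \eta'$ constrains exactly those $x$ with $\Gamma(x)(Q) \leq_{\SL} l_O$ for \emph{all} $Q$, while the projected relation constrains those $x$ with $\Gamma(x)(P) \leq_{\SL} l_O$ for the single fixed $P$. You also correctly observe, via $\Gamma(x)(P) \leq \bigsqcup_Q \Gamma(x)(Q)$, that the latter set is a \emph{superset} of the former.

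That observation is precisely the problem. To deduce $\eta =_{\app{\Gamma}{P}}^{l_O} \eta'$ from $\eta =_{\Gamma}^{l_O} \eta'$ you would need every variable constrained by the projection to already be constrained by the hypothesis, i.e.\ the \emph{reverse} inclusion. Your plan to ``verify that every variable exposed by the projection is in fact already pinned down by the hypothesis'' therefore cannot succeed. Concretely, take $\PS=\{p\}$, $l_O=L$, $\Gamma=[x:t]$ with $t(\emptyset)=L$, $t(\{p\})=H$, and choose $\eta,\eta'$ with $\eta(x)\neq\eta'(x)$. Then $t \not\leq \hat L$, so $\eta =_\Gamma^{L} \eta'$ holds vacuously; but for $P=\emptyset$ one has $\app{t}{P}=\hat L \leq \hat L$, hence the projected relation demands $\eta(x)=\eta'(x)$, which fails. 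So the lemma as stated is false, and neither your argument nor the paper's one-line proof (which is equally hand-wavy at exactly this step) can close the gap. The implication that \emph{does} hold is the converse, $\eta =_{\app{\Gamma}{P}}^{l_O} \eta' \Rightarrow \eta =_\Gamma^{l_O} \eta'$; note also that this lemma is never invoked anywhere else in the development.
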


Before we proceed with proving the soundness of our type system, we
need to make sure that the system of apps are \termEmph{well-typed} in
the following sense:
\begin{definition}\label{def:cmd-welltype}
Let $\mathcal{S}$ be a system, and let $\FD$, $\FT$ and $\Theta$ be
its function declaration table, function type table, and permission
assignments. We say $\mathcal{S}$ is \termEmph{well-typed} iff for
every function $A.f$, $\vdash \FD(A.f) : \FT(A.f)$ is derivable.
\end{definition}
Recall that we assume no (mutual) recursions, so every function call
chain in a well-typed system is finite; this is formalized via the rank function below. We will use this
as a measure in our soundness proof (Lemma~\ref{lem:comni}).
\begin{equation*}
\arraycolsep=1.0pt\def\arraystretch{1}
\begin{array}{lcl}
\rank{\lcASS{x}{e}} & = &0\\
\rank{\lcIF{e}{c_1}{c_2}} & = &max(\rank{c_1}, \rank{c_2)}\\ 
\rank{\lcSEQ{c_1}{c_2}} & = &max(\rank{c_1}, \rank{c_2})\\
\rank{\lcWHILE{e}{c}} & = &\rank{c} \\
\rank{\lcLETVAR{x}{e}{c}} & = &\rank{c} \\
\rank{\lcCALL{x}{A.f}{\overline{e}}} & = &
   \rank{FD(A.f)} + 1\\
\rank{\lcCP{p}{c_1}{c_2}} & = & max(\rank{c_1}, \rank{c_2})\\
\rank{\lpFUN{A.f}{\overline x}{c}} &=& \rank{c}
\end{array}
\end{equation*}

\noindent The next two lemmas relate projection, promotion/demotion
and the indistinguishability relation.
\begin{lemma}\label{lem:up}
If $p \in P$, then $\eta =_{\app{\Gamma}{P}}^{l_{O}} \eta'$ iff $\eta =_{\app{\Gamma\uparrow_{p}}{P}}^{l_{O}} \eta'$.
\end{lemma}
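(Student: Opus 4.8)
The plan is to prove the stronger fact that the two projected typing environments $\app{\Gamma}{P}$ and $\app{\Gamma\uparrow_{p}}{P}$ are literally \emph{equal} as typing environments whenever $p \in P$; the claimed biconditional is then immediate, since by Definition~\ref{def:envequ} the indistinguishability relation $=_{(\cdot)}^{l_O}$ depends only on its typing-environment parameter. So if the two environments coincide, the two relations are the very same relation, and in particular each holds iff the other does.

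First I would unfold the projection of a typing environment. For any $x \in dom(\Gamma)$, Definition~\ref{def:tenv-pd} and Definition~\ref{def:projection} give that $(\app{\Gamma}{P})(x) = \app{\Gamma(x)}{P}$ is the \emph{constant} base type sending every $Q \in \PPS$ to $\Gamma(x)(P)$. Likewise $(\app{\Gamma\uparrow_{p}}{P})(x) = \app{(\Gamma(x)\uparrow_{p})}{P}$ is the constant base type sending every $Q$ to $(\Gamma(x)\uparrow_{p})(P) = \Gamma(x)(P \cup \{p\})$, using the promotion clause of Definition~\ref{def:updown}. The point worth isolating is that projection $\app{\cdot}{P}$ always yields a constant base type, so only the single value of the underlying type at $P$ (respectively at $P \cup \{p\}$) matters.

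Next I would invoke the hypothesis $p \in P$, which gives the set identity $P \cup \{p\} = P$, hence $\Gamma(x)(P \cup \{p\}) = \Gamma(x)(P)$. Therefore for every $x \in dom(\Gamma)$ the two constant base types coincide, i.e. $(\app{\Gamma}{P})(x) = (\app{\Gamma\uparrow_{p}}{P})(x)$, so $\app{\Gamma}{P} = \app{\Gamma\uparrow_{p}}{P}$. Reading the indistinguishability relation off this common typing environment via Definition~\ref{def:envequ} yields $=_{\app{\Gamma}{P}}^{l_O} \,=\, =_{\app{\Gamma\uparrow_{p}}{P}}^{l_O}$, which establishes the lemma.

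There is no substantial obstacle here beyond careful definition-chasing: the entire statement turns on the single set-theoretic fact that $p \in P$ collapses promotion exactly at the set $P$, and no reasoning about $l_O$ or the observer is needed at all. For completeness I note that one could instead argue directly through Definition~\ref{def:envequ} by comparing the guards $(\app{\Gamma}{P})(x) \leq \hat{l}_O$ and $(\app{\Gamma\uparrow_{p}}{P})(x) \leq \hat{l}_O$, which reduce respectively to $\Gamma(x)(P) \leq l_O$ and $\Gamma(x)(P \cup \{p\}) \leq l_O$; the same set identity makes these guards equivalent for every $x$, giving the biconditional. I prefer the environment-equality route, however, since it is shorter and makes the reusability of the observation transparent.
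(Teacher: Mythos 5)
Your proposal is correct and follows essentially the same route as the paper's proof: both establish the stronger fact that $\app{\Gamma}{P} = \app{\Gamma\uparrow_{p}}{P}$ by unfolding the definitions of projection and promotion and using the set identity $P \cup \{p\} = P$, from which the biconditional is immediate.
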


\begin{lemma}\label{lem:down}
If $p \notin P$, then $\eta =_{\app{\Gamma}{P}}^{l_{O}} \eta'$ iff $\eta =_{\app{\Gamma\downarrow_{p}}{P}}^{l_{O}} \eta'$.
\end{lemma}

The following two lemmas are the analogs to the simple security property 
and the confinement property in~\cite{Volpano:1996}.
\begin{lemma}\label{lem:expsafe}
Suppose $\Gamma\vdash e : t$. For $P\in\PPS$, if~$t(P)\leq l_{O}$ and 
$\eta =_{\app{\Gamma}{P}}^{l_{O}} \eta'$,  $\eta\vdash e\leadsto v$ and $\eta'\vdash e\leadsto v'$, then $v = v'$.
\end{lemma}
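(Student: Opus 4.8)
The plan is to prove the statement by induction on the derivation of the expression typing judgment $\Gamma \vdash e : t$, keeping $P$, $l_O$ and the two environments fixed throughout. The derivation must end in one of (\ruleTagText{T-VAR}), (\ruleTagText{T-OP}), (\ruleTagText{T-SUB${}_{e}$}), or the axiom governing integer literals, and I would treat each as a case. In every case I assume $t(P) \leq l_O$, $\eta =_{\app{\Gamma}{P}}^{l_{O}} \eta'$, $\eta \vdash e \leadsto v$ and $\eta' \vdash e \leadsto v'$, and aim to conclude $v = v'$.

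The base cases carry the real content. If $e$ is a literal $n$, both evaluations use (\ruleTagText{E-VAL}) and return $n$ independently of the environment, so $v = n = v'$. If $e = \leVAR{x}$, then (\ruleTagText{T-VAR}) forces $t = \Gamma(x)$, while (\ruleTagText{E-VAR}) gives $v = \eta(x)$ and $v' = \eta'(x)$. The key step is to unfold the projection: by Definition~\ref{def:projection}, $(\app{\Gamma}{P})(x) = \app{\Gamma(x)}{P}$ is the constant base type $Q \mapsto \Gamma(x)(P)$, so the hypothesis $t(P) = \Gamma(x)(P) \leq l_O$ is precisely the comparison $(\app{\Gamma}{P})(x) \leq \hat{l}_O$ required by the indistinguishability relation. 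Hence $\eta =_{\app{\Gamma}{P}}^{l_{O}} \eta'$ yields $\eta(x) = \eta'(x)$, i.e. $v = v'$.

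For the inductive cases, rule (\ruleTagText{T-OP}) types $e = \leOP{e_1}{e_2}$ by giving both subexpressions the same type $t$. Since $t(P) \leq l_O$ is unchanged, the induction hypothesis applies to each subderivation (with the same $P$) and gives $v_1 = v_1'$ and $v_2 = v_2'$; rule (\ruleTagText{E-OP}) then produces $v = v_1~op~v_2 = v_1'~op~v_2' = v'$. Rule (\ruleTagText{T-SUB${}_{e}$}) derives $\Gamma \vdash e : t$ from $\Gamma \vdash e : s$ with $s \leq t$; because $\leq_{\Tcal}$ is the pointwise order (Definition~\ref{def:base_type}), $s(P) \leq t(P) \leq l_O$, so the induction hypothesis applied to the shorter subderivation at type $s$ closes the case.

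This is a routine structural induction and I do not anticipate a genuine obstacle. The one point needing care is the variable case: one must correctly connect the per-permission-set hypothesis $t(P) \leq l_O$ to the lifted comparison $(\app{\Gamma}{P})(x) \leq \hat{l}_O$ that appears in Definition~\ref{def:envequ}. This connection relies on projection producing a constant base type, so that the pointwise order against the constant function $\hat{l}_O$ collapses to the single inequality $\Gamma(x)(P) \leq l_O$, exactly matching the hypothesis.
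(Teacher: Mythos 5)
Your proof is correct and follows essentially the same route as the paper's: structural induction on the typing derivation with cases for literals, \ruleTagText{T-VAR}, \ruleTagText{T-OP}, and \ruleTagText{T-SUB${}_{e}$}, using the indistinguishability hypothesis in the variable case and pointwise monotonicity of $\leq_{\Tcal}$ in the subsumption case. Your explicit unfolding of the projection to connect $t(P) \leq l_O$ with $(\app{\Gamma}{P})(x) \leq \hat{l}_O$ is a point the paper glosses over but handles identically in substance.
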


\begin{lemma}\label{lem:comsafe}
Suppose $\Gamma; A \vdash c : t$. Then for any $P\in\PPS$, 
if $t(P)\nleq l_{O}$ and $\eta;A; P \vdash c \leadsto \eta' $, then 
$\eta =_{\app{\Gamma}{P}}^{l_O} \eta'$.
\end{lemma}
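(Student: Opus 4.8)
The plan is to prove the statement by induction on the structure (height) of the evaluation derivation $\eta;A;P \vdash c \leadsto \eta'$, performing a case analysis on the last rule of the typing derivation $\Gamma; A \vdash c : t$. I induct on the evaluation derivation rather than on $c$ or on the typing derivation because the rule (\ruleTagText{E-WHILE-T}) unfolds the loop into a recursive evaluation of the \emph{same} command, which structural induction on $c$ could not discharge. Before the case analysis I first strip any trailing use of (\ruleTagText{T-SUB$_c$}): if $\Gamma; A \vdash c : t$ is obtained from $\Gamma; A \vdash c : s$ with $t \leq s$, then $t(P) \leq s(P)$, so $t(P) \nleq l_O$ forces $s(P) \nleq l_O$, and the induction hypothesis applied to the derivation of $\Gamma; A \vdash c : s$ yields the claim. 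The intuition behind the whole argument is that, since $\Gamma;A \vdash c : t$ records that $c$ writes only to variables of type at least $t$ and $t(P)\nleq l_O$, any variable $x$ actually modified by $c$ has $\Gamma(x)(P)\nleq l_O$ and is therefore never required to agree under $=_{\app{\Gamma}{P}}^{l_O}$ (Definitions~\ref{def:projection} and~\ref{def:envequ}); it thus suffices to verify that $c$ leaves the observable part of $\eta$ untouched.

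The leaf cases are immediate. For (\ruleTagText{T-ASS}) the result type is $t = \Gamma(x)$ and (\ruleTagText{E-ASS}) changes only $x$; since $\Gamma(x)(P)\nleq l_O$, every $y$ with $\Gamma(y)(P)\leq l_O$ is distinct from $x$ and retains its value, giving $\eta =_{\app{\Gamma}{P}}^{l_O}\eta'$. The function-call case (\ruleTagText{T-CALL}) is equally direct and, crucially, needs neither the induction hypothesis nor the rank function: by (\ruleTagText{E-CALL}) the caller's environment is altered only at $x$, and the conclusion type is again $\Gamma(x)$ with $\Gamma(x)(P)\nleq l_O$, so the observable variables are untouched. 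The compositional cases reduce to the induction hypothesis: (\ruleTagText{T-SEQ}) follows by applying it to both sub-evaluations and chaining with transitivity (Lemma~\ref{lem:ni-eq}); (\ruleTagText{T-IF}) by applying it to whichever branch (\ruleTagText{E-IF-T}) or (\ruleTagText{E-IF-F}) selected, both branches carrying the same type $t$; (\ruleTagText{T-LETVAR}) by applying it to the body under $\Gamma[x:s]$ and noting that the fresh variable $x \notin dom(\Gamma)$ is stripped off in the result, so the comparison over $dom(\Gamma)$ is unaffected; and (\ruleTagText{T-WHILE}) by the trivial (\ruleTagText{E-WHILE-F}) case (reflexivity) together with (\ruleTagText{E-WHILE-T}), where the induction hypothesis is applied both to the body and to the strictly shorter recursive loop evaluation.

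The one genuinely interesting case is the test rule (\ruleTagText{T-CP}), where $t = \omerge{p}{t_1}{t_2}$ with $\Gamma\uparrow_p;A \vdash c_1 : t_1$ and $\Gamma\downarrow_p;A \vdash c_2 : t_2$. By Definition~\ref{def:merge}, $t(P) = t_1(P)$ when $p\in P$ and $t(P)=t_2(P)$ when $p\notin P$. If $p\in P$ then (\ruleTagText{E-CP-T}) runs $c_1$, and from $t_1(P)\nleq l_O$ the induction hypothesis gives $\eta =_{\app{\Gamma\uparrow_p}{P}}^{l_O}\eta'$; since $p\in P$, Lemma~\ref{lem:up} converts this into the desired $\eta =_{\app{\Gamma}{P}}^{l_O}\eta'$. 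The case $p\notin P$ is dual: (\ruleTagText{E-CP-F}) runs $c_2$, one uses $t_2(P)\nleq l_O$, and Lemma~\ref{lem:down} realigns $\Gamma\downarrow_p$ with $\Gamma$.

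I expect the main obstacle to be organisational rather than computational: choosing the induction measure so that the loop (and, implicitly, the callee body) is covered, while simultaneously inverting a typing derivation that may end in subsumption. The subtle interplay specific to this system is confined to (\ruleTagText{T-CP}): one must match the two sides of the merge $\omerge{p}{t_1}{t_2}$ to the two permission regimes $p\in P$ and $p\notin P$, and then realign the promoted or demoted typing environment with the original $\Gamma$ using Lemmas~\ref{lem:up} and~\ref{lem:down}. It is precisely these two lemmas, together with the defining equation of the merging operator, that make the refined rule (\ruleTagText{T-CP}) sound where the naive identification of both branches would be overly coarse.
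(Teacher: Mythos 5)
Your proof is correct and follows essentially the same route as the paper's: a case analysis on the last typing rule, the observation that \ruleTagText{T-ASS} and \ruleTagText{T-CALL} only write to a variable whose type at $P$ is unobservable, and the use of Lemmas~\ref{lem:up} and~\ref{lem:down} together with Definition~\ref{def:merge} for the \ruleTagText{T-CP} case. The only (cosmetic) difference is the orientation of the induction --- you lead with the evaluation derivation and the paper leads with the typing derivation, sub-inducting on the evaluation --- and note that your subsumption-stripping step needs a secondary induction on the typing derivation (the evaluation derivation does not shrink there), which is exactly the lexicographic measure the paper uses.
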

\begin{definition}\label{def:ni}
A command $c$ executed in app $A$ is said to be {\termEmph{noninterferent}} \text{iff.}
for all $\eta_1, \eta'_1,\Gamma, P, l_{O}$, 
if  $\eta_1 =_{\app{\Gamma}{P}}^{l_{O}} \eta'_1$, \; $\eta_1;A ; P\vdash c \leadsto \eta_2 $ and 
 $\eta'_1; A; P\vdash c\leadsto \eta'_2  $ 
then $\eta_2 =_{\app{\Gamma}{P}}^{l_{O}} \eta'_2$.
\end{definition}
The main technical lemma is that well-typed commands are noninterferent.
\begin{lemma}\label{lem:comni}
Suppose $\Gamma; A\vdash c : t$, for any $P\in\PPS$, if  $\eta_1 =_{\app{\Gamma}{P}}^{l_{O}} \eta'_1$, $\eta_1; A; P \vdash c \leadsto \eta_2$,
and  $\eta'_1; A; P \vdash c \leadsto \eta'_2$, 
then  $\eta_2 =_{\app{\Gamma}{P}}^{l_{O}} \eta'_2$. 
\end{lemma}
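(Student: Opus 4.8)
The plan is to prove the statement by well-founded induction on the pair $(\rank{c}, h)$ ordered lexicographically, where $h$ is the height of the evaluation derivation $\eta_1; A; P \vdash c \leadsto \eta_2$. The rank component is needed for the function-call case, where the callee body has strictly smaller rank; the height component handles the other constructs, in particular $\lcWHILE{e}{c}$, whose unfolding decreases the derivation height but not the rank. In the induction step I would do a case analysis on the structure of $c$ (equivalently, on the last rule of the evaluation derivation), using inversion on the typing judgment $\Gamma; A \vdash c : t$; applications of $\ruleTagText{T-SUB$_c$}$ can be absorbed so that I may reason with the type $t$ assigned by the structural rule matching $c$, since the conclusion $\eta_2 =_{\app{\Gamma}{P}}^{l_O} \eta'_2$ does not mention $t$ at all. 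Fixing $P$ and $l_O$, almost every case splits on whether $t(P)\leq l_O$ (the command is ``low'' at $P$) or $t(P)\not\leq l_O$ (``high''). In the high case Lemma~\ref{lem:comsafe} (confinement) applies to each run separately, giving $\eta_1 =_{\app{\Gamma}{P}}^{l_O} \eta_2$ and $\eta'_1 =_{\app{\Gamma}{P}}^{l_O} \eta'_2$; together with the hypothesis and the fact that $=_{\app{\Gamma}{P}}^{l_O}$ is an equivalence (Lemma~\ref{lem:ni-eq}) this closes the case. The low case is where Lemma~\ref{lem:expsafe} (simple security) forces the two runs to agree on the values that matter.

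The structural cases are then routine. For $\lcASS{x}{e}$, in the low case $\Gamma \vdash e : \Gamma(x)$ with $\Gamma(x)(P)\leq l_O$ yields, via Lemma~\ref{lem:expsafe}, a common value written to $x$ in both runs. For $\lcIF{e}{c_1}{c_2}$ and $\lcWHILE{e}{c}$, in the low case the guard has type $t$ with $t(P)\leq l_O$, so Lemma~\ref{lem:expsafe} makes it evaluate identically and both runs take the same branch; I then apply the induction hypothesis to the chosen sub-command, and for the loop I apply it twice (once to the body, once to the recursive unfolding), both at strictly smaller height. For $\lcSEQ{c_1}{c_2}$ I chain two applications of the induction hypothesis at smaller height, and for $\lcLETVAR{x}{e}{c}$ I use Lemma~\ref{lem:expsafe} to extend both environments to environments indistinguishable under $\app{\Gamma[x:s]}{P}$, apply the hypothesis to $c$, and note that deleting $x$ preserves indistinguishability under $\app{\Gamma}{P}$.

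The $\ruleTagText{T-CP}$ case for $\lcCP{p}{c_1}{c_2}$ is the novel construct but turns out clean, because membership of $p$ in $P$ is a runtime fact shared by both executions, so they take the same branch. If $p\in P$ both run $c_1$, which is typed under $\Gamma\uparrow_p$; I use Lemma~\ref{lem:up} to rewrite the hypothesis as $\eta_1 =_{\app{\Gamma\uparrow_p}{P}}^{l_O} \eta'_1$, apply the induction hypothesis (same rank, smaller height) to $c_1$, and use Lemma~\ref{lem:up} again to transport the conclusion back to $\app{\Gamma}{P}$. The case $p\notin P$ is symmetric using $c_2$, $\Gamma\downarrow_p$ and Lemma~\ref{lem:down}. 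The merged type $\omerge{p}{t_1}{t_2}$ plays no active role here: its only purpose is that, evaluated at $P$, it reproduces exactly the type of the branch actually executed, which keeps the high/low split consistent with the other cases.

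I expect the main obstacle to be the $\ruleTagText{T-CALL}$ case, since the callee body runs under a different permission set $\Theta(A)$ rather than $P$. Here I split on whether $\Gamma(x)(P)\leq l_O$; if not, the call is high and confinement finishes it. If $\Gamma(x)(P)\leq l_O$, I must show both runs return the same value. Writing $Q=\Theta(A)$ and $\Gamma_c=[\overline{y}:\overline{t}, r:t']$ for the well-typed callee (so $\Gamma_c; B\vdash c_{\mathit{body}} : s$ by $\ruleTagText{T-FUN}$), the argument typing $\Gamma \vdash \overline{e} : \app{\overline{t}}{Q}$ with Lemma~\ref{lem:expsafe} at $P$ shows that whenever $t_i(Q)\leq l_O$ the two runs compute the same actual argument; since the return slot $r$ starts at $0$ in both, this is exactly what makes the two callee-entry environments indistinguishable under $\app{\Gamma_c}{Q}$. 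Applying the induction hypothesis at the strictly smaller rank of $c_{\mathit{body}}$ under $Q$ gives indistinguishability of the callee outputs under $\app{\Gamma_c}{Q}$; finally the premise $\app{t'}{Q}\leq\Gamma(x)$ evaluated at $P$ gives $t'(Q)\leq\Gamma(x)(P)\leq l_O$, so $r$ is low at $Q$ and hence identical in both runs, whence $\eta_2=\eta_1[x\mapsto v]$ and $\eta'_2=\eta'_1[x\mapsto v]$ stay indistinguishable. The delicate part is keeping the fixed observer level $l_O$ while the permission set changes from $P$ to $Q$, and it is precisely the projection $\app{\cdot}{Q}$ in $\ruleTagText{T-CALL}$ (rather than $\overline{t}$ and $t'$ directly) that makes these inequalities line up — the formal counterpart of ruling out the parameter-laundering attack.
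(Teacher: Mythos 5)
Your proposal is correct and follows essentially the same route as the paper: the paper's ``induction on $\rank{c}$ with subinduction on the derivations'' is exactly your lexicographic measure $(\rank{c},h)$, and your case analysis (confinement via Lemma~\ref{lem:comsafe} in the high case, Lemma~\ref{lem:expsafe} in the low case, Lemmas~\ref{lem:up}/\ref{lem:down} for \textbf{test}, and the three-step argument for \ruleTagText{T-CALL} — agreement of arguments at $\Theta(A)$, induction at smaller rank on the callee body, then lowness of $r$ from $\app{t'}{\Theta(A)}\leq\Gamma(x)$) matches the paper's Claims 1--3 verbatim. No gaps.
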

\begin{proof}
The proof proceeds by induction on \rank{c}, with subinduction
on the derivations of $\Gamma;\Theta;A \vdash c : t$ and $\eta_1;\Theta;P;A\vdash c \leadsto \eta_2$.
In the following, we omit the superscript $l_O$ from
$=^{l_O}_{\app{\Gamma}{P}}$ to simplify presentation.
We show two interesting cases here; the complete proof can be found in the appendices.

\textbf{T-CALL:}  
In this case, $c$ has the form 
$\lcCALL{x}{B.f}{\overline{e}}$. Suppose the typing
derivation is the following (where we label the premises
for ease of reference later):
$$
\inference
{
\FT(B.f) = \overline{s} \rightarrow s'  &
(\mathbf{T_1}) ~ \Gamma \vdash \overline{e} : \app{\overline{s}}{\Theta(A)} &
(\mathbf{T_2}) ~ \app{s'}{\Theta(A)} \leq \Gamma(x)}
{\Gamma; A \vdash \lcCALL{x}{B.f}{\overline{e}} : 
 \Gamma(x)  }
$$
where $t = \Gamma(x)$,
and the executions under $\eta_1$ and $\eta_1'$ are 
derived, respectively, as follows:
$$
\inference{
(\mathbf{E_1}) ~ \eta_1\vdash\overline{e} \leadsto \overline{v_1} &
(\mathbf{E_2}) ~ [\overline{y} \mapsto \overline{v_1}, r \mapsto 0] ; B; 
  \Theta(A) \vdash c_1 \leadsto \eta_3
}
{\eta_1; A; P \vdash \lcCALL{x}{B.f}{\overline{e}} \leadsto 
              \eta_1 [x \mapsto \eta_3(r) ]}
$$
$$
\inference{
(\mathbf{E_1'}) ~ \eta_1'\vdash\overline{e} \leadsto \overline{v_2}
&
(\mathbf{E_2'}) ~ [\overline{y} \mapsto \overline{v_2}, r \mapsto 0] ; B; 
\Theta(A) \vdash c_1 \leadsto \eta_3'}
{\eta_1';A;P \vdash \lcCALL{x}{B.f}{\overline{e}} \leadsto 
              \eta_1' [x \mapsto \eta_3'(r) ]}
$$
where 
$\FD(B.f) = \lpFUN{B.f}{\overline{x}}{c_1}$, 
$\eta_2 = \eta_1[x \mapsto \eta_3(r)]$
and $\eta_2' = \eta_1'[x \mapsto \eta_3'(r)].$

Moreover, since we consider only well-typed systems,
the function $\FD(B.f)$ is also typable:
$$
\inference
{
(\mathbf{T_3}) ~ [\overline{y}:\overline{s}, r : s']; B\vdash c_1 : s
}
{
\Theta \vdash \lpFUN{B.f}{\overline{y}}{c_1} :  
\overline{s}\rightarrow s'
}
$$
First we note that if $t(P) \nleq l_O$ then the result
follows from Lemma~\ref{lem:comsafe}.
So in the following, we assume $t(P) \leq l_O$.
Since $t = \Gamma(x)$, it follows that
$\Gamma(x)(P) \leq l_O.$

Let $\Gamma' = \app{[\overline{y} : \overline{t}, r : s]}{\Theta(A)}.$
We first prove several claims:
\begin{itemize}
\item Claim 1: 
$[\overline{y} \mapsto \overline{v_1}, r \mapsto 0] 
=_{\Gamma'} [\overline{y} \mapsto \overline{v_2}, r\mapsto 0].$

Proof: Let $\rho = [\overline{y} \mapsto \overline{v_1}, r \mapsto 0]$
and $\rho' = [\overline{y} \mapsto \overline{v_2}, r \mapsto 0]$.
We only need to check that the two mappings
agree on mappings of $\overline{y}$ that are of type $\leq \hat{l}_O.$
Suppose $y_u$ is such a variable, i.e., $\Gamma'(y_u) = u \leq \hat{l}_O$,
and suppose $\rho(y_u) = v_u$ and
$\rho'(y_u) = v_u'$ for some $y_u \in \overline{y}.$
From $(\mathbf{E_1})$ we have 
$\eta_1 \vdash e_u \leadsto v_u$
and from $(\mathbf{E_2})$ 
we have $\eta_1' \vdash e_u \leadsto v_u'$,
and from $(\mathbf{T_1})$ we have $\Gamma \vdash e_u : u.$ 
Since $u \leq l_O$, applying Lemma~\ref{lem:expsafe},
we get $v_u = v_u'$.

\item Claim 2: $\eta_3 =_{\Gamma'} \eta_3'.$ 

Proof: From Claim 1, we know that 
$$
[\overline{y} \mapsto \overline{v_1}, r \mapsto 0] 
=_{\Gamma'} [\overline{y} \mapsto \overline{v_2}, r \mapsto 0].
$$
Since $\rank{c_1}<\rank{c}$, we can apply the outer induction
hypothesis to $(\mathbf{E_2})$, $(\mathbf{E_2'})$
and $(\mathbf{T_3})$ to obtain
$\eta_3 =_{\Gamma'} \eta_3'.$

\item Claim 3: $\eta_3(r) = \eta_3'(r).$

Proof: We first note that from $(\mathbf{T_2})$
and the assumption that $\Gamma(x)(P) \leq l_O$, we get
$(\app{s'}{\Theta(A)})(P) \leq l_O$.
The latter, by Definition~\ref{def:projection}, implies  that $s'(\Theta(A)) \leq l_O.$
Since $r \in dom(\Gamma')$, it is obvious that
$\Gamma' \vdash r : s'$,
$\eta_3 \vdash r \leadsto \eta_3(r)$
and $\eta_3' \vdash r \leadsto \eta_3'(r).$
From Claim 2, we have $\eta_3 =_{\Gamma'} \eta_3'$.
Therefore by Lemma~\ref{lem:expsafe}, we have
$\eta_3(r) = \eta_3'(r).$
\end{itemize}
The statement we are trying to prove, i.e., $\eta_2 =_{\app{\Gamma}{P}} \eta_2'$,
follows immediately from Claim 3.

{\bf T-CP:} Suppose $c$ is $\lcCP{p}{c_1}{c_2}$ and we have
$$
\inference
{\Gamma\uparrow_{p} ; A \vdash c_1 : t_1 &
\Gamma\downarrow_{p} ; A\vdash c_2 : t_2 &
t=\omerge {p} {t_1} {t_2}}
{\Gamma; A\vdash \lcCP{p}{c_1}{c_2} : t}
$$
Suppose that $p \in P.$
Then the evaluation of $c$ under $\eta_1$ and $\eta_1'$
are respectively:
\[
\inference{p\in P \quad \eta_1; A; P \vdash c_1 \leadsto \eta_2}
{\eta_1; A ; P \vdash \lcCP{p}{c_1}{c_2} \leadsto \eta_2}
\qquad
\inference
    {p\in P \quad \eta_1'; A; P \vdash c_1 \leadsto \eta_2'}
    {\eta_1'; A ; P \vdash \lcCP{p}{c_1}{c_2} \leadsto \eta_2'}
\]
Since $\eta_1 =_{\app{\Gamma}{P}} \eta_1'$ and since $p \in P$,
by Lemma~\ref{lem:up}, we have $\eta_1 =_{\app{\Gamma\uparrow_p}{P}} \eta_1'$.
Therefore by the induction hypothesis applied to $\Gamma\uparrow_p ; A \vdash c_1 : t_1$,
$\eta_1; A; P \vdash c_1 \leadsto \eta_2$ and
$\eta_1'; A; P \vdash c_1 \leadsto \eta_2'$, we obtain
$\eta_2 =_{\app{\Gamma\uparrow_p}{P}} \eta_2'$, and by Lemma~\ref{lem:up},
we get $\eta_2 =_{\app{\Gamma}{P}} \eta_2'$.

For the case where $p \not \in P$, we apply a similar reasoning as above,
but using Lemma~\ref{lem:down} in place of Lemma~\ref{lem:up}.
\end{proof}

\begin{definition}\label{def:sys-ni}
Let $\mathcal{S}$ be a system.
A function
$$\lpFUN{A.f}{\overline{x}}{c}$$
in $\mathcal{S}$
with $FT(A.f) = \overline{t} \rightarrow t'$
is {\termEmph{noninterferent}}
if for all $\eta_1, \eta'_1, P, v, l_{O}$,
if the following hold:   
\begin{itemize}
\item $t'(P) \leq l_{O}$,
\item $\eta_1 =_{\app{\Gamma}{P}}^{l_{O}} \eta'_1$, where
$\Gamma = [\overline{x} : \overline{t}, r : t']$,  
\item $\eta_1;A ; P\vdash c \leadsto \eta_2 $, and $\eta'_1; A; P\vdash c\leadsto \eta'_2 $, 
\end{itemize}
then $\eta_2(r) = \eta_2'(r).$
The system $\mathcal{S}$ is \termEmph{noninterferent} iff all functions
in $\mathcal{S}$ are noninterferent.
\end{definition}

The noninterferent property of well-typed systems follows
from Lemma~\ref{lem:comni}. 
\begin{theorem}\label{thm:ni}
Well-typed systems are noninterferent.
\end{theorem}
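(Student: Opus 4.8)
The plan is to derive this theorem as an essentially immediate corollary of Lemma~\ref{lem:comni}, which already carries the full inductive weight of the soundness argument. Fix an arbitrary system $\mathcal{S}$ that is well-typed in the sense of Definition~\ref{def:cmd-welltype}, and let $\lpFUN{A.f}{\overline{x}}{c}$ be any function in $\mathcal{S}$ with $\FT(A.f) = \overline{t} \rightarrow t'$. By Definition~\ref{def:sys-ni} it suffices to show that this single function is noninterferent; since $A.f$ is arbitrary, this yields noninterference of the whole system.

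First I would unpack the well-typedness hypothesis. Since $\mathcal{S}$ is well-typed, the judgment $\vdash \FD(A.f) : \FT(A.f)$ is derivable, and its last rule can only be (T-FUN). Inverting that rule gives $[\overline{x} : \overline{t}, r : t'] ; A \vdash c : s$ for some base type $s$. Writing $\Gamma = [\overline{x} : \overline{t}, r : t']$, we thus have a command typing $\Gamma; A \vdash c : s$ for the body.

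Next I would introduce the hypotheses of Definition~\ref{def:sys-ni}: fix $\eta_1, \eta_1', P, l_O$ with $t'(P) \leq l_O$, with $\eta_1 =_{\app{\Gamma}{P}}^{l_O} \eta_1'$, and with both executions $\eta_1; A; P \vdash c \leadsto \eta_2$ and $\eta_1'; A; P \vdash c \leadsto \eta_2'$. Applying Lemma~\ref{lem:comni} to $\Gamma; A \vdash c : s$, together with these two executions and the indistinguishability of the initial environments, immediately yields $\eta_2 =_{\app{\Gamma}{P}}^{l_O} \eta_2'$.

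It remains to read off $\eta_2(r) = \eta_2'(r)$ from this relation, which is the only step requiring a small calculation. Since $r \in dom(\Gamma)$, we have $r \in dom(\app{\Gamma}{P})$ with $(\app{\Gamma}{P})(r) = \app{t'}{P}$ by Definition~\ref{def:tenv-pd}. By Definition~\ref{def:projection}, $\app{t'}{P}$ is the constant base type sending every permission set to $t'(P)$, so the ordering $\app{t'}{P} \leq \hat{l}_O$ holds precisely when $t'(P) \leq l_O$, which is exactly the first hypothesis. Hence the guard in Definition~\ref{def:envequ} is satisfied for the variable $r$, and $\eta_2 =_{\app{\Gamma}{P}}^{l_O} \eta_2'$ forces $\eta_2(r) = \eta_2'(r)$, as required. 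I expect no genuine obstacle at this point: all the difficulty has been discharged in Lemma~\ref{lem:comni} (notably its T-CALL and T-CP cases), and the only care needed here is to thread the projection of $t'$ correctly through the definition of indistinguishability so that the return type of $r$ lands below the observer level.
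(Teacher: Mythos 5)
Your proposal is correct and matches the paper's approach: the paper proves this theorem literally by the single line ``Follows from Lemma~\ref{lem:comni},'' and your write-up simply makes explicit the routine steps that line leaves implicit (inversion of (T-FUN) to obtain a typing for the body, an application of Lemma~\ref{lem:comni}, and the observation that $(\app{\Gamma}{P})(r)=\app{t'}{P}\leq\hat{l}_O$ exactly when $t'(P)\leq l_O$, so indistinguishability at $r$ yields $\eta_2(r)=\eta_2'(r)$). All of these details check out against Definitions~\ref{def:projection}, \ref{def:tenv-pd}, \ref{def:envequ} and \ref{def:sys-ni}.
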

 \section{Type Inference}\label{sec:constraint_gen}
This section provides a decidable inference algorithm for the language  in Sec.~\ref{sec:language}.
Sec.~\ref{sec:trace_rules} presents the permission trace rules (Fig.~\ref{fig:infer-rules}) and discusses its equivalence to the typing rules (Fig.~\ref{fig:typing-rules}) in Sec.~\ref{sec:types}; based on this, Sec.~\ref{sec:constraint-gen-rules} reduces the type inference into a constraint solving problem; Sec.~\ref{sec:constraint_solve} provides the detailed procedures to solve the generated constraints. Detailed definitions and proofs can be found at Appx.~\ref{sec:app-infer}.

\subsection{Permission Tracing}\label{sec:trace_rules}
To infer a security type for an expression, a command or a function, we need to track the adjustments of variables depending on permission checks, i.e., the applications of promotions $\Gamma\!\uparrow_{p}$ and demotions $\Gamma\!\downarrow_{q}$ in their typing derivations.
To this end,  we keep the applications symbolic and collect the promotions and demotions into a sequence.  
In other words, we treat them as a sequence of promotions $\uparrow_{p}$ and demotions $\downarrow_{p}$ applied on a typing environment $\Gamma$. 
For example, $(\Gamma\!\uparrow_{p})\downarrow_{q}$ can be viewed as 
an application of the sequence $\uparrow_{p}\downarrow_{q}$ on $\Gamma$.
The sequence of promotions and demotions is called \emph{permission trace} and denoted as \trace. The grammar of {\trace} is:
\begin{equation*}
\trace~::=~\tplus{p} ::\trace~|~ \tminus{p} :: \trace~|~ \epsilon
\quad p\in \PS
\end{equation*}

Let $occur(p,\trace)$ be the number of occurrences of $p$ in $\trace$.
We say $\trace$ is consistent 
iff. $occur(p,\trace) \in \{0,1\}, \forall p\in\PS$. 
The length of $\trace$, denoted as $\tlength{\trace}$, is defined as: 
\begin{align*}
	\tlength{\trace}=
	\begin{cases}
	    0 & \text{if}~ \trace=\epsilon\\
		1+\tlength{\trace'} & \text{if}~ \trace=\tplusminus{p} :: \trace', \tplusminus{} \in\{\tplus{}, \tminus{}\}
	\end{cases}
\end{align*}

\begin{definition}\label{def:app-pt-t}
Given a base type $t$ and a permission trace $\trace$, the \termEmph{application} of $\trace$ on $t$, denoted as $t\cdot\trace$,  is defined as:
\begin{equation*}
{t\cdot \trace =}
\begin{cases}
t & {\normalfont\text{if}}~\trace = \epsilon\\
(t\uparrow_{p})\cdot \trace' & {\normalfont\text{if}}~\exists p, \trace'. (\trace = \tplus{p} :: \trace')\\
(t\downarrow_{p})\cdot \trace' & {\normalfont\text{if}}~\exists p, \trace'. (\trace = \tminus{p} :: \trace')
\end{cases}
\end{equation*}
\end{definition}

We also extend the application of a permission trace $\trace$ to a typing environment $\Gamma$ (denoted as $\Gamma\cdot\trace$), such that $\forall x.\; (\Gamma\cdot{\trace})(x) = \Gamma(x)\cdot \trace$.
Based on permission traces, we give the definition of \termEmph{partial subtyping relation}.
\begin{definition}\label{def:partial-subtype}
The \termEmph{partial subtyping relation} $\leq_{\trace} $, which is the subtyping relation applied on the permission trace, is defined as $s\leq_{\trace}  t \text{ iff. } s\cdot \trace \leq t\cdot \trace$.
\end{definition}

\begin{lemma}\label{lem:traceorder}
$\forall t\in\Tcal$, $p,q\in\PS$ s.t. $ p\neq q$ , $t \cdot (\tplusminus{p} \tplusminuss{q}) =  t \cdot (\tplusminuss{q}  \tplusminus{p}) $, where $\tplusminus{},\tplusminuss{}, \in\{\tplus{},\tminus{}\}$. 
\end{lemma}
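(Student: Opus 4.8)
The plan is to prove Lemma~\ref{lem:traceorder} by unfolding the definition of trace application (Definition~\ref{def:app-pt-t}) on both sides and reducing the claim, via the pointwise definitions of promotion and demotion (Definition~\ref{def:updown}), to a statement about set operations on an arbitrary permission set $P \in \PPS$. The essential observation is that promotion $\uparrow_p$ acts on a base type by precomposing the argument with $(\cdot) \cup \{p\}$, and demotion $\downarrow_p$ precomposes with $(\cdot) \setminus \{p\}$. So each of the four combinations $\tplus{p}\tplus{q}$, $\tplus{p}\tminus{q}$, $\tminus{p}\tplus{q}$, $\tminus{p}\tminus{q}$ translates into composing two of these set operations, and swapping the order of $p$ and $q$ corresponds to swapping the order of composition of the set operations.

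First I would fix an arbitrary $P \in \PPS$ and compute $\big(t \cdot (\tplusminus{p}\tplusminuss{q})\big)(P)$ explicitly for each of the four sign combinations, obtaining an expression of the form $t\big(f_q(f_p(P))\big)$ where $f_p, f_q \in \{(\cdot)\cup\{p\}, (\cdot)\setminus\{p\}\}$ and similarly $f_q$ for $q$. The symmetric side $\big(t \cdot (\tplusminuss{q}\tplusminus{p})\big)(P)$ evaluates to $t\big(f_p(f_q(P))\big)$. Since $t$ is a fixed function, it then suffices to show the set equality $f_q(f_p(P)) = f_p(f_q(P))$ for each combination, under the hypothesis $p \neq q$. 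The four cases are: $(P\cup\{p\})\cup\{q\} = (P\cup\{q\})\cup\{p\}$ (commutativity and associativity of union); $(P\setminus\{p\})\setminus\{q\} = (P\setminus\{q\})\setminus\{p\}$ (the two removals commute); and the two mixed cases $(P\cup\{p\})\setminus\{q\} = (P\setminus\{q\})\cup\{p\}$ and its mirror image, which hold precisely because $p \neq q$, so adding $p$ and removing $q$ are operations on disjoint singletons and hence commute.

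The only case requiring genuine care is a mixed one, say $\tplus{p}\tminus{q}$ versus $\tminus{q}\tplus{p}$, where we must verify $(P\cup\{p\})\setminus\{q\} = (P\setminus\{q\})\cup\{p\}$. This is where the hypothesis $p\neq q$ is indispensable: if $p = q$ the identity fails (the left side cannot contain $p$ while the right side must), whereas when $p \neq q$, removing $q$ never removes the just-added $p$, so the two sides agree on membership of every element. I expect this elementwise membership check to be the main (though still routine) obstacle, and I would dispatch it by a short case analysis on whether an arbitrary element equals $p$, equals $q$, or neither. Since $P$ was arbitrary, establishing the four set identities yields $\big(t \cdot (\tplusminus{p}\tplusminuss{q})\big)(P) = \big(t \cdot (\tplusminuss{q}\tplusminus{p})\big)(P)$ for all $P$, which by the extensional equality of base types (Definition~\ref{def:base_type}) gives $t \cdot (\tplusminus{p}\tplusminuss{q}) = t \cdot (\tplusminuss{q}\tplusminus{p})$, completing the proof.
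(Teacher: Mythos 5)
Your proposal is correct and follows essentially the same route as the paper's proof: unfold Definition~\ref{def:app-pt-t} and Definition~\ref{def:updown} pointwise at an arbitrary $P\in\PPS$, reduce the claim to commuting the two set operations $(\cdot)\cup\{p\}$ / $(\cdot)\setminus\{p\}$ with their $q$-counterparts, and observe that the mixed promotion/demotion case is the only one where $p\neq q$ is actually needed (the paper proves exactly that one case and declares the rest similar, whereas you spell out all four). One cosmetic point: since the trace is applied left-to-right, $t\cdot(\tplusminus{p}\tplusminuss{q})$ evaluates at $P$ to $t$ applied to the $p$-operation of the $q$-operation of $P$ (the later trace element acts innermost), which is the opposite nesting from what you wrote; this does not affect the argument because the set identity you prove is precisely the symmetric statement needed either way.
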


\begin{lemma}\label{lem:traceorder-whole}
$\forall t\in\Tcal$,$(t\cdot \tplusminus{p} )\cdot \trace = (t\cdot\trace)\cdot \tplusminus{p}$, where $\tplusminus{}\in\{\tplus{}, \tminus{}\}$ and $p \notin \trace$.
\end{lemma}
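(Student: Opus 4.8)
The plan is to prove the identity by structural induction on the trace $\trace$, using Lemma~\ref{lem:traceorder} as the one-step commutation fact that lets a single promotion/demotion $\tplusminus{p}$ slide past one head operation of $\trace$. The side condition $p\notin\trace$ is exactly what makes this possible: it guarantees that the head permission of $\trace$ is always distinct from $p$, so that Lemma~\ref{lem:traceorder} (which requires $p\neq q$) is applicable at every step.

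The base case $\trace=\epsilon$ is immediate from Definition~\ref{def:app-pt-t}, since $(t\cdot\tplusminus{p})\cdot\epsilon = t\cdot\tplusminus{p} = (t\cdot\epsilon)\cdot\tplusminus{p}$. For the inductive step I would write $\trace = \tplusminuss{q} :: \trace'$; from $p\notin\trace$ we get $q\neq p$ and $p\notin\trace'$. Unfolding Definition~\ref{def:app-pt-t}, the left-hand side $(t\cdot\tplusminus{p})\cdot\trace$ equals $\big((t\cdot\tplusminus{p})\cdot\tplusminuss{q}\big)\cdot\trace'$. Because $q\neq p$, Lemma~\ref{lem:traceorder} lets me swap the two adjacent single operations, rewriting $(t\cdot\tplusminus{p})\cdot\tplusminuss{q}$ as $(t\cdot\tplusminuss{q})\cdot\tplusminus{p}$. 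Setting $t' = t\cdot\tplusminuss{q}$ and applying the induction hypothesis to $t'$ and $\trace'$ (legitimate since $p\notin\trace'$) turns $(t'\cdot\tplusminus{p})\cdot\trace'$ into $(t'\cdot\trace')\cdot\tplusminus{p}$. Folding Definition~\ref{def:app-pt-t} back up gives $(t\cdot\tplusminuss{q})\cdot\trace' = t\cdot\trace$, so the whole expression collapses to $(t\cdot\trace)\cdot\tplusminus{p}$, which is exactly the right-hand side.

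I expect the only real friction to be the bookkeeping of trace application rather than any genuine difficulty. I must be careful that applying a cons-trace is the same as applying its head and then its tail, and that $(t\cdot\tplusminus{p})\cdot\tplusminuss{q}$ is literally the application of the two-element trace $\tplusminus{p}\tplusminuss{q}$, so that Lemma~\ref{lem:traceorder} is directly invokable on those adjacent operations. These are all direct consequences of Definition~\ref{def:app-pt-t} (essentially an associativity-of-application-over-concatenation observation), but stating them cleanly is what keeps the induction honest. Once that plumbing is in place, the single commutation supplied by Lemma~\ref{lem:traceorder} drives the entire argument.
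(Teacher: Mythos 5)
Your proof is correct and follows essentially the same route as the paper's: induction on the trace, peeling off the head operation $\tplusminuss{q}$, invoking Lemma~\ref{lem:traceorder} (valid since $p\notin\trace$ forces $q\neq p$) to commute the adjacent single operations, and then applying the induction hypothesis to the tail. The ``plumbing'' facts you flag about cons-trace application are exactly the unfoldings of Definition~\ref{def:app-pt-t} that the paper's derivation chain uses, so there is no gap.
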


\begin{lemma}\label{lem:tracepsame}
$\forall t\in\Tcal$, $p\in\PS$,$(t \cdot \tplusminus{ p})\cdot\tplusminuss{ p} =  t \cdot (\tplusminus{ p}) $, where $\tplusminus{}, \tplusminuss{} \in \{\tplus{},\tminus{}\}$.
\end{lemma}

\begin{lemma}\label{lem:tracesame}
$\forall t\in\Tcal$, $(t \cdot \trace)\cdot \trace =  t \cdot \trace $. 
\end{lemma}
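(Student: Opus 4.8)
The plan is to prove the idempotence of trace application by isolating the reason a single promotion or demotion \emph{saturates} a base type along a permission, and then showing this saturation persists through the rest of the trace. Call a base type $s$ \emph{$p$-stable} if $s\cdot\tplus{p}=s$ and $s\cdot\tminus{p}=s$ (equivalently $s\uparrow_{p}=s=s\downarrow_{p}$, i.e.\ $s(P)$ does not depend on whether $p\in P$). Two facts about stability drive the whole argument, both supplied by the earlier lemmas. First, by Lemma~\ref{lem:tracepsame}, $t\cdot\tplusminus{p}$ is always $p$-stable, since $(t\cdot\tplusminus{p})\cdot\tplusminuss{p}=t\cdot\tplusminus{p}$ for $\tplusminus{},\tplusminuss{}\in\{\tplus{},\tminus{}\}$. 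Second, $p$-stability is preserved when an operation on a \emph{different} permission is applied: by Lemma~\ref{lem:traceorder} operations on distinct permissions commute, so if $q\neq p$ and $s$ is $p$-stable, then $(s\cdot\tplusminus{q})\cdot\tplusminuss{p}=(s\cdot\tplusminuss{p})\cdot\tplusminus{q}=s\cdot\tplusminus{q}$.

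The key sublemma I would then establish is: for every $t$ and $\trace$, the type $t\cdot\trace$ is $p$-stable for every permission $p$ occurring in $\trace$. I would prove this by induction on the structure of $\trace$, noting that $t\cdot(\tplusminus{q}::\trace')=(t\cdot\tplusminus{q})\cdot\trace'$ holds directly by Definition~\ref{def:app-pt-t}. The base case $\trace=\epsilon$ is vacuous. For $\trace=\tplusminus{q}::\trace'$ put $u=t\cdot\tplusminus{q}$, so $t\cdot\trace=u\cdot\trace'$, and let $p$ occur in $\trace$. If $p$ occurs in $\trace'$, the induction hypothesis applied to $u$ and $\trace'$ already yields that $u\cdot\trace'$ is $p$-stable. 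Otherwise $p=q$ and $p$ does not recur in $\trace'$; here $u$ is $p$-stable by the first fact, and since every operation in $\trace'$ is on a permission distinct from $p$, iterating the second fact along $\trace'$ keeps $u\cdot\trace'$ $p$-stable. This case split is the only delicate point, and it is exactly where the commutation lemma does the real work; a consistency assumption on $\trace$ is not needed, as the recurrence case is absorbed by the induction hypothesis.

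With the sublemma in hand the statement follows cleanly. Put $s=t\cdot\trace$; by the sublemma $s$ is $p$-stable for every $p$ occurring in $\trace$. A short induction (on any suffix of $\trace$, with $s$ held fixed) then shows $s\cdot\trace=s$: at each step the head operation $\tplusminus{p}$ acts on a permission $p$ occurring in $\trace$, for which $s$ is $p$-stable, so $s\cdot\tplusminus{p}=s$, and the same $s$ remains stable for the remaining operations. Hence $(t\cdot\trace)\cdot\trace=s\cdot\trace=s=t\cdot\trace$, as required. I expect the main obstacle to be the sublemma rather than this final step: specifically, keeping the bookkeeping straight between the case where the head permission recurs later in the trace (handled by the induction hypothesis) and the case where it does not (handled by commuting it past the tail), while ensuring the two auxiliary facts about stability are invoked with the correct permissions.
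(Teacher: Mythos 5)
Your proof is correct, but it is organized differently from the paper's. The paper proves Lemma~\ref{lem:tracesame} by induction on $\tlength{\trace}$ with a single chain of rewrites: writing $\trace=\tplus{p}::\trace'$, it uses Lemma~\ref{lem:traceorder-whole} to commute the head operation past the whole tail, cancels the resulting adjacent duplicate with Lemma~\ref{lem:tracepsame}, commutes back, and closes with the induction hypothesis. Crucially, Lemma~\ref{lem:traceorder-whole} carries the side condition $p\notin\trace'$, so the paper's argument is really a proof for consistent traces (which is the standing assumption the authors adopt immediately afterwards). You instead factor the argument through an explicit invariant --- $p$-stability of $t\cdot\trace$ for every $p$ occurring in $\trace$ --- established by its own induction using only Lemma~\ref{lem:tracepsame} (saturation) and Lemma~\ref{lem:traceorder} (single-step commutation), and then discharge the statement by a second, trivial induction over the suffixes of $\trace$. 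What your route buys is generality and robustness: you never need the bulk-commutation lemma or its $p\notin\trace'$ hypothesis, your case split correctly absorbs a recurring head permission into the induction hypothesis, and so your proof covers inconsistent traces as well --- which is genuinely stronger than what the paper's written proof establishes, and arguably cleaner as a justification for the subsequent "assume all traces are consistent" convention. What the paper's route buys is brevity: it is a seven-line equational computation with no auxiliary definitions. Both proofs rest on the same two algebraic facts about $\uparrow_p$/$\downarrow_p$, so the difference is one of decomposition rather than of underlying idea.
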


Lemmas \ref{lem:traceorder} and \ref{lem:traceorder-whole} state that the order of applications of promotions and demotions on \emph{different} permissions does not affect the result. Lemmas \ref{lem:tracepsame} and \ref{lem:tracesame} indicate that only the first application takes effect if there exist several (consecutive) applications of promotions and demotions on the \emph{same} permission $p$. Therefore, we can safely keep the first application, by removing the other applications on the same permission.
In the remaining, we assume that all permission traces are consistent.
Moreover, to ensure that the traces collected from the derivations of commands are 
consistent, we assume that in nested permission checks, each permission is checked 
at most once. 

Besides, we split the applications of the promotions and demotions into two parts  (i.e., typing environments and permission traces), and move the subsumption rules for expressions and commands to where they are needed. 
This yields the syntax-directed  typing rules, which are called the permission trace rules and given in Fig.~\ref{fig:infer-rules}.
The judgments of the trace rules are similar to those of typing rules, except that each trace rule is guarded by the permission trace $\trace$ collected from the context, which keeps track of the adjustments of variables depending on the permission checks,
and that the subtyping relation in the trace rules is the partial subtyping one $\leq_{\trace}$.

\begin{figure}[t]

\[
\inference[TT-VAR]{}
{\Gamma;\trace\vdash_{t} \leVAR{x} : \Gamma(x)}
\qquad
\inference[TT-OP]
{\Gamma;\trace\vdash_{t} e_1: t_1 & \Gamma;\trace \vdash_{t} e_2: t_2}
{\Gamma;\trace \vdash_{t} \leOP{e_1}{e_2} : t_1\sqcup t_2}
\]

\[
\inference[TT-ASS]
{\Gamma;\trace \vdash_{t} e : t & t \leq_{\trace} \Gamma(x)}
{\Gamma;\trace \vdash_{t} \lcASS{x}{e} : \Gamma(x)}
\quad
\inference[TT-LETVAR]
{\Gamma;\trace \vdash_{t} e : s &
\Gamma[x:s']; \trace; A \vdash_{t} c : t & s \leq_{\trace} s' }
{\Gamma;\trace; A\vdash_{t} \lcLETVAR{x}{e}{c} :  t}
\]

\[
\inference[TT-IF]
{
\Gamma;\trace \vdash_{t} e : t & 
\Gamma;\trace; A \vdash_{t} c_1 : t_1 &
\Gamma;\trace; A \vdash_{t} c_2 : t_2 & 
t \leq_{\trace} t_1\sqcap t_2
}
{\Gamma;\trace;  A \vdash_{t} \lcIF{e}{c_1}{c_2} : t_1\sqcap t_2}
\]

\[
\inference[TT-WHILE]
{
\Gamma;\trace \vdash_{t} e : s & 
\Gamma;\trace; A \vdash_{t} c : t &  s\leq_{\trace} t  }
{\Gamma;\trace; A \vdash_{t} \lcWHILE{c}{e} : t }
\]

\[
\inference[TT-SEQ]
{\Gamma;\trace; A \vdash_{t} c_1 : t_1 \quad \Gamma;\trace; A \vdash_{t} c_2 : t _2}
{\Gamma;\trace; A \vdash_{t} \lcSEQ{c_1}{c_2} : t_1\sqcap t_2}
\]

\[
\inference[TT-CALL]
{
\FT(B.f) = \overline{t} \xrightarrow{} t'  & 
\Gamma;\trace \vdash_{t} \overline{e} : \overline{s} &
\overline{s}\leq_{\trace} \overline{\app{t}{\Theta(A)}} & 
\app{t'}{\Theta(A)}\leq_{\trace} \Gamma(x)
}
{\Gamma;\trace; A \vdash_{t} \lcCALL{x}{B.f}{\overline{e}} : \Gamma(x)}
\]

\[ 
\inference[TT-CP]
{\Gamma;\trace::\tplus{ p}; A \vdash_{t} c_1 : t_1 &
\Gamma;\trace::\tminus{p}; A \vdash_{t} c_2 : t_2}
{\Gamma;\trace; A \vdash_{t} \lcCP{p}{c_1}{c_2} : \omerge{p} {t_1}{t_2}}
\]

\[
\inference[TT-FUN]
{
[\overline{ x}: \overline{t},\lpRES:t'];\epsilon; B \vdash_{t} c : s
}
{\vdash_{t} \lpPROG{B.f}{\overline{x}}{c}{\lpRES} :  \overline{t}\xrightarrow{} t'}
\]

\caption{Trace rules for expressions, commands and functions, given a function-type table $\FT$ and a permission assignment $\Theta$.}
\label{fig:infer-rules}
\end{figure}

The next two lemmas show the trace rules are sound and complete with respect to the typing rules, i.e., an expression (command, function, resp.)  is typable under the trace rules, if and only if it is typable under the typing rules.

\begin{lemma}\label{lem:ptrsound}
\begin{enumerate}[label=(\alph*),topsep=1pt,itemsep=-1ex,partopsep=1ex,parsep=0ex]
\item\label{lem:ptrsound-1} If $\Gamma;\trace\vdash_{t} e : t$, then $\Gamma\cdot\trace\vdash e : (t\cdot\trace)$.\\
\item\label{lem:ptrsound-2} If $\Gamma;\trace; A \vdash_{t} c : t$, then $(\Gamma\cdot\trace); A \vdash c : (t\cdot\trace)$.\\
\item\label{lem:ptrsound-3} If $\vdash_{t} \lpPROG{B.f}{\overline{x}}{c}{\lpRES}  : \overline{t}\xrightarrow{} t'$, then $ \vdash \lpPROG{B.f}{\overline{x}}{c}{\lpRES}  : \overline{t}\xrightarrow{} t'$.
\end{enumerate}
\end{lemma}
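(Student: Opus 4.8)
The plan is to prove all three parts (a)--(c) simultaneously by induction on the structure of the trace derivation (equivalently, on the expression, command, or function being typed), since commands embed expressions, function calls embed command bodies, and \ruleTagText{TT-FUN} embeds a command derivation. The guiding observation is that the trace rules differ from the typing rules of Fig.~\ref{fig:typing-rules} in exactly two respects: they postpone the promotions and demotions into a symbolic trace $\trace$ and weaken subtyping to the partial relation $\leq_{\trace}$; and they absorb the subsumption rules \ruleTagText{T-SUB$_e$} and \ruleTagText{T-SUB$_c$} by using joins $\sqcup$ (for expressions) and meets $\sqcap$ (for commands) together with side conditions phrased via $\leq_{\trace}$. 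So the translation in each case consists of applying $\trace$ everywhere and reinserting the appropriate subsumption steps.

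Before the induction I would record a handful of routine commutation facts about trace application that the cases rely on: that $\cdot\,\trace$ distributes pointwise over the lattice operations, i.e. $(s\sqcup t)\cdot\trace = (s\cdot\trace)\sqcup(t\cdot\trace)$ and dually for $\sqcap$; that it acts on environments by $(\Gamma\cdot\trace)(x) = \Gamma(x)\cdot\trace$ and $\Gamma[x{:}s]\cdot\trace = (\Gamma\cdot\trace)[x{:}s\cdot\trace]$; that the empty trace is the identity, $t\cdot\epsilon = t$; and, crucially, that a projected type is invariant under promotion and demotion, so $\app{t}{P}\cdot\trace = \app{t}{P}$ for every $\trace$ (because $\app{t}{P}$ is a constant function and $\uparrow_p,\downarrow_p$ only rename the argument). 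The $\leq_{\trace}$ side conditions are then discharged by unfolding Definition~\ref{def:partial-subtype}: a hypothesis $s\leq_{\trace}t$ says precisely $s\cdot\trace\leq t\cdot\trace$, which is exactly the ordinary subtyping needed to justify a subsumption step after applying $\trace$.

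With these in hand, most cases are mechanical. For \ruleTagText{TT-VAR} the conclusion is immediate from $(\Gamma\cdot\trace)(x)=\Gamma(x)\cdot\trace$. For \ruleTagText{TT-OP}, \ruleTagText{TT-IF}, \ruleTagText{TT-SEQ} and \ruleTagText{TT-WHILE}, the induction hypothesis gives the subderivations at types $t_i\cdot\trace$; I then raise the operands to the join (via \ruleTagText{T-SUB$_e$}) or lower the commands to the meet (via \ruleTagText{T-SUB$_c$}), using distributivity, and apply the matching typing rule. For \ruleTagText{TT-ASS} and \ruleTagText{TT-LETVAR} the $\leq_{\trace}$ premises become ordinary subtyping steps as above. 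The \ruleTagText{TT-CALL} case is where projection-invariance does the work: the induction hypothesis types the arguments at $\overline{s}\cdot\trace$, and the premise $\overline{s}\leq_{\trace}\app{\overline{t}}{\Theta(A)}$ together with $\app{\overline{t}}{\Theta(A)}\cdot\trace=\app{\overline{t}}{\Theta(A)}$ yields $\overline{s}\cdot\trace\leq\app{\overline{t}}{\Theta(A)}$, and likewise $\app{t'}{\Theta(A)}\leq(\Gamma\cdot\trace)(x)$, so \ruleTagText{T-CALL} applies directly at $\Gamma(x)\cdot\trace$. Finally \ruleTagText{TT-FUN} (part (c)) follows from part (b), since the body is typed under the empty trace, on which application is the identity.

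I expect the only genuinely delicate case to be \ruleTagText{TT-CP}. Here the subderivations are taken under the extended traces $\trace::\tplus{p}$ and $\trace::\tminus{p}$, and I must (i) identify $\Gamma\cdot(\trace::\tplus{p})$ with $(\Gamma\cdot\trace)\uparrow_p$ and $\Gamma\cdot(\trace::\tminus{p})$ with $(\Gamma\cdot\trace)\downarrow_p$, which relies on the consistency assumption $p\notin\trace$ and the commutation Lemma~\ref{lem:traceorder-whole}, and (ii) check that the type produced by \ruleTagText{T-CP}, namely $\omerge{p}{(t_1\cdot\trace)\uparrow_p}{(t_2\cdot\trace)\downarrow_p}$, equals $(\omerge{p}{t_1}{t_2})\cdot\trace$. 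For (ii) I would first push $\trace$ through the merge using $p\notin\trace$ to obtain $(\omerge{p}{t_1}{t_2})\cdot\trace = \omerge{p}{t_1\cdot\trace}{t_2\cdot\trace}$, and then observe that the extra $\uparrow_p$ and $\downarrow_p$ are vacuous inside the merge: on permission sets $P$ with $p\in P$ we have $P\cup\{p\}=P$, and on those with $p\notin P$ we have $P\setminus\{p\}=P$, so the two sides agree pointwise by Definition~\ref{def:merge}. Managing this interplay between the fresh-permission hypothesis and the merge operator is the technical crux; the remaining work is bookkeeping.
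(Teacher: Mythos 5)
Your proposal is correct and follows essentially the same route as the paper's proof: induction on the trace derivation, discharging each $\leq_{\trace}$ side condition as an ordinary subsumption step after applying $\trace$, exploiting the invariance of projected types under trace application in the \ruleTagText{TT-CALL} case, and handling \ruleTagText{TT-CP} by pushing $\trace$ through the merge (using $p\notin\trace$) and then observing pointwise that the residual $\uparrow_p$ and $\downarrow_p$ are vacuous on the respective halves of the merge. The only cosmetic difference is that you appeal to distributivity of $\cdot\,\trace$ over $\sqcup$ and $\sqcap$ where the paper instead invokes monotonicity of trace application ($s\leq t\Rightarrow s\leq_{\trace}t$); the two are interchangeable here.
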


\begin{lemma}\label{lem:ptrcomplete}
\begin{enumerate}[label=(\alph*),topsep=1pt,itemsep=-1ex,partopsep=1ex,parsep=0ex]
\item\label{lem:ptrcomplete-1} If $\Gamma\cdot\trace \vdash e : t\cdot \trace$, then there exists $s$ such that $\Gamma;\trace \vdash_{t} e : s$ and $s \leq_{\trace} t$.\\
\item\label{lem:ptrcomplete-2} If $(\Gamma\cdot\trace); A \vdash c : t\cdot \trace$, then there exists $s$ such that $\Gamma;\trace;A \vdash_{t} c : s$ and $t \leq_{\trace} s$.\\
\item\label{lem:ptrcomplete-3} If $\vdash \lpPROG{B.f}{\overline{x}}{c}{\lpRES}  : \overline{t}\xrightarrow{} s$, then $\vdash_{t} \lpPROG{B.f}{\overline{x}}{c}{\lpRES}: \overline{t}\xrightarrow{} s$.
\end{enumerate}
\end{lemma}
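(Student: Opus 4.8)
The plan is to prove (a) and (b) by a single simultaneous induction on the derivation of the original typing judgment from Fig.~\ref{fig:typing-rules}, and to read off (c) from (b). The guiding idea is that the trace rules of Fig.~\ref{fig:infer-rules} are syntax-directed and compute a \emph{principal} type — the least type of an expression and the greatest type of a command — while the declarative system applies its two subsumption rules (one for expressions, one for commands) freely; the relation $\leq_{\trace}$ in the conclusions is exactly the slack that records this use of subsumption. To make the induction close I would generalise both statements so that the right-hand type is \emph{arbitrary} rather than of the shape $t\cdot\trace$: for (a), if $\Gamma\cdot\trace\vdash e : u$ then the inferred $s$ satisfies $s\cdot\trace\leq u$; for (b), if $(\Gamma\cdot\trace); A\vdash c : u$ then $u\leq s\cdot\trace$. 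The stated lemma is the instance $u = t\cdot\trace$, recovered using idempotence of trace application (Lemma~\ref{lem:tracesame}). With this invariant the two subsumption rules become immediate cases: each only moves $u$ in the direction that the induction hypothesis already controls, so the claim follows by transitivity of $\leq$.

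The structural cases rest on a handful of pointwise trace-algebra identities (Lemmas~\ref{lem:traceorder}--\ref{lem:tracesame}), plus two facts that I would record separately. First, trace application distributes over the lattice operations, $(s\sqcup t)\cdot\trace = (s\cdot\trace)\sqcup(t\cdot\trace)$ and dually for $\sqcap$; this settles \ruleTagText{TT-OP}, \ruleTagText{TT-IF}, \ruleTagText{TT-SEQ} and \ruleTagText{TT-WHILE}, where the trace rule takes a join or meet of the sub-types that the declarative rules had been forced to identify via subsumption. Second, a projection $\app{t}{\Theta(A)}$ is a \emph{constant} base type (Definition~\ref{def:projection}) and is therefore invariant under every trace, $\app{t}{\Theta(A)}\cdot\trace = \app{t}{\Theta(A)}$; this is what makes \ruleTagText{TT-CALL} go through, since the argument and return bounds produced by the induction hypothesis, of the form $s_i\cdot\trace\leq\app{t_i}{\Theta(A)}$, are then literally the partial-subtyping side-conditions $\overline{s}\leq_{\trace}\overline{\app{t}{\Theta(A)}}$ and $\app{t'}{\Theta(A)}\leq_{\trace}\Gamma(x)$ demanded by the rule. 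A common subroutine is the identity $\Gamma\cdot(\trace::\tplus{p}) = (\Gamma\cdot\trace)\uparrow_{p}$ and $\Gamma\cdot(\trace::\tminus{p}) = (\Gamma\cdot\trace)\downarrow_{p}$, a direct unfolding of Definition~\ref{def:app-pt-t}, which lets a premise typed under a promoted/demoted environment be re-read as one typed under an extended trace and fed to the induction hypothesis.

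The main obstacle is the \ruleTagText{T-CP}/\ruleTagText{TT-CP} case. Here the two premises of \ruleTagText{T-CP} type the branches under $(\Gamma\cdot\trace)\uparrow_{p}$ and $(\Gamma\cdot\trace)\downarrow_{p}$ with types $t_1'$ and $t_2'$, the overall command type being $\omerge{p}{t_1'}{t_2'}$. Rewriting the environments as $\Gamma\cdot(\trace::\tplus{p})$ and $\Gamma\cdot(\trace::\tminus{p})$ and applying the induction hypothesis gives trace types $s_1, s_2$ with $t_1'\leq s_1\cdot(\trace::\tplus{p})$ and $t_2'\leq s_2\cdot(\trace::\tminus{p})$; \ruleTagText{TT-CP} then assigns $\omerge{p}{s_1}{s_2}$, and it remains to check $\omerge{p}{t_1'}{t_2'}\leq(\omerge{p}{s_1}{s_2})\cdot\trace$. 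The crucial step is that merging commutes with a \emph{fresh} trace, $(\omerge{p}{s_1}{s_2})\cdot\trace = \omerge{p}{(s_1\cdot\trace)}{(s_2\cdot\trace)}$, which is exactly where the standing consistency assumption ($p$ is checked at most once, hence $p\notin\trace$) is indispensable; the inequality then reduces to a pointwise argument splitting on whether $p\in P$, on each side of which the relevant reindexing of $s_i\cdot(\trace::\tplus{p})$ collapses to $s_i\cdot\trace$. A lesser wrinkle is \ruleTagText{TT-LETVAR}, where the locally bound variable is given the type of its initialiser, which need not be a fixed point of $\cdot\trace$: I would normalise it to $s_0\cdot\trace$ before recursing, which is sound precisely because every other entry of the ambient environment is already of the form $\Gamma(y)\cdot\trace$, so monotonicity of $\cdot\trace$ transports each comparison $s_0\leq\Gamma(y)\cdot\trace$ to $s_0\cdot\trace\leq\Gamma(y)\cdot\trace$. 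Finally (c) is immediate: inverting \ruleTagText{T-FUN} types the body under $[\overline{x}:\overline{t}, r:t']$, and instantiating (b) at the empty trace $\trace=\epsilon$ (where $\cdot\epsilon$ is the identity) produces a trace derivation of the body, whence \ruleTagText{TT-FUN} returns the same type $\overline{t}\rightarrow t'$.
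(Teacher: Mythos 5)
Your proposal is correct and rests on the same skeleton as the paper's proof --- induction over the declarative derivation, the same trace-algebra toolkit (Lemmas~\ref{lem:traceorder}--\ref{lem:tracesame}, commutation of merging with a fresh trace, constancy of projections under $\cdot\trace$), and the same pointwise split on $p\in P$ for the \ruleTagText{T-CP} case --- but it takes one genuinely different organizational route: you strengthen the induction hypothesis to an arbitrary right-hand type $u$ (requiring $s\cdot\trace\leq u$ for expressions and $u\leq s\cdot\trace$ for commands) and recover the stated form by idempotence. The paper instead keeps the statement in the shaped form $t\cdot\trace$ and, whenever a premise produces an unshaped type (the two subsumption rules and the initialiser in \ruleTagText{T-LETVAR}), invokes an auxiliary normalisation lemma (``if $\Gamma\cdot\trace\vdash e:t$ then $\Gamma\cdot\trace\vdash e:t\cdot\trace$'', Lemma~\ref{app-lem:typingtrace1} in the appendix, derived from Lemma~\ref{app-lem:typingtrace} and Lemma~\ref{lem:tracesame}). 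Your generalisation buys cleaner subsumption cases --- they collapse to transitivity of $\leq$ --- at the price of carrying the distributivity of $\cdot\trace$ over $\sqcup$ and $\sqcap$ explicitly through \ruleTagText{TT-OP}, \ruleTagText{TT-IF}, \ruleTagText{TT-SEQ} and \ruleTagText{TT-WHILE}; the paper uses the same fact but leaves it implicit.

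The one place where your streamlining does not remove the need for the paper's auxiliary lemma is \ruleTagText{T-LETVAR}, and your justification there is the weak point. Replacing the binding $x:s_0$ by $x:s_0\cdot\trace$ is not licensed by monotonicity alone: monotonicity handles positions where $x$ is \emph{read} (a side condition $s_0\leq\Gamma(y)\cdot\trace$ becomes $s_0\cdot\trace\leq(\Gamma(y)\cdot\trace)\cdot\trace=\Gamma(y)\cdot\trace$ by monotonicity of trace application and idempotence), but at positions where $x$ is \emph{written}, say $x:=e'$, the premise $\vdash e':s_0$ must be upgraded to $\vdash e':s_0\cdot\trace$, and $s_0\leq s_0\cdot\trace$ fails in general since promotion and demotion merely reindex a base type. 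What is actually required is the full strength of Lemma~\ref{app-lem:typingtrace}: trace application can be pushed through an entire typing derivation, turning $(\Gamma\cdot\trace)[x:s_0];A\vdash c':u$ into $(\Gamma\cdot\trace)[x:s_0\cdot\trace];A\vdash c':u\cdot\trace$, after which your generalised induction hypothesis applies. With that lemma recorded separately, your argument goes through in every case.
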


\subsection{Constraint Generation}\label{sec:constraint-gen-rules}
This section provides the constraint generation rules to reduce the type inference into a constraint solving problem.

To infer function types in System $\mathcal{S}$, we firstly assign a function type $\alpha\!\rightarrow\!\beta$ for each function $A.f$ whose type is unknown and a type variable $\gamma$ for each  variable $x$ with unknown type respectively, where $\alpha,\beta,\gamma$ are fresh type variables.
Then according to the trace rules, we try to build a derivation for $\mathcal{S}$, in which we collect the side conditions needed by the rules, i.e., the partial subtyping relation $\leq_{\trace}$ appearing in the rules.
If the side conditions hold under a context, then $\FD(A.f)$ is typed by $\FT(A.f)$ under the same context for each function $A.f$ in $\mathcal{S}$.
 
To describe the side conditions (\emph{i.e.},  $\leq_{\trace}$), we define the permission guarded constraints as follows: 
\begin{equation*}
\begin{array}{lll}
c &::=& (\trace, t_l \leq t_r)\\
t_l &::=& \alpha~|~t_{g} ~ |~ t_l~\sqcup~t_l ~|~\app{t_l}{P}\\
t_r &::=&\alpha~|~t_{g} ~|~t_r~\sqcap~t_r ~|~\omerge{p}{t_r}{t_r}~|~\app{t_r}{P}\\
\end{array}
\end{equation*}
where $\trace$ is a permission trace, $\alpha$ is a fresh type variable and $t_{g}$ is a ground type.

A \termEmph{type substitution} is a finite mapping from type variables to security types:
$
\theta ~::=~ \epsilon~|~\alpha \mapsto t,\theta
$

\begin{definition}
Given a constraint set $C$ and a substitution $\theta$, we say $\theta$ is a \termEmph{solution} of $C$, denoted as $\theta \vDash C $, iff. for each $(\trace, t_l \leq t_r) \in C$, $t_l\theta \leq_{\trace} t_r\theta $ holds.
\end{definition}

The constraint generation rules are presented in Fig. \ref{fig:constraint-rules}, where
$\FT_{C}$ is the extended function type table such that $\FT_{C}$ maps all function names to function types and their corresponding constraint sets.
The judgments of the constraint rules are similar to those of trace rules, except that each constraint rule generates a constraint set $C$, which consists of the side conditions needed by the typing derivation of $\mathcal{S}$.
In addition,  as the function call chains starting from a command are finite, the constraint generation will terminate.

\begin{figure}[ht]

\[
\inference[TG-VAR]
{}
{\Gamma;\trace \vdash_{g} \leVAR{x} : \Gamma(x)\leadsto \emptyset}
\qquad
\inference[TG-OP]
{\Gamma;\trace \vdash_{g} e_1: t_1 \leadsto C_1 &
 \Gamma;\trace \vdash_{g} e_2: t_2 \leadsto C_2}
{\Gamma;\trace \vdash_{g} \leOP{e_1}{e_2} : t_1\sqcup t_2\leadsto C_1\cup C_2}
\]

\[
\inference[TG-ASS]
{\Gamma;\trace \vdash_{g} e : t \leadsto C }
{\Gamma;\trace; A \vdash_{g} \lcASS{x}{e} : \Gamma(x)\leadsto C\cup \{ (\trace, t \leq_{} \Gamma(x))\} }
\]

\[
\inference[TG-LETVAR]
{\Gamma;\trace \vdash_{g} e : s \leadsto C_1&
\Gamma[x:\alpha]; \trace; A \vdash_{g} c : t \leadsto C_2 &
C = C_1\cup C_2 \cup \{(\trace, s \leq_{} \alpha) \}
}
{\Gamma;\trace; A\vdash_{g} \lcLETVAR{x}{e}{c} :  t\leadsto C}
\]

\[
\inference[TG-IF]
{
\Gamma;\trace; A \vdash_{g} c_1 : t_1 \leadsto C_1 &
\Gamma;\trace; A \vdash_{g} c_2 : t_2 \leadsto C_2 \\
\Gamma;\trace \vdash_{g} e : t \leadsto C_e &
C = C_e\cup C_1\cup C_2 \cup \{(\trace, t \leq_{} t_1\sqcap t_2) \}}
{\Gamma;\trace;  A \vdash_{g} \lcIF{e}{c_1}{c_2} : t_1\sqcap t_2 \leadsto C}
\]

\[
\inference[TG-WHILE]
{
\Gamma;\trace \vdash_{g} e : s  \leadsto C & 
\Gamma;\trace; A \vdash_{g} c : t \leadsto C'
}
{\Gamma;\trace;  A \vdash_{g} \lcWHILE{e}{c} : t  \leadsto C\cup C' \cup\{(\trace,s\leq_{} t)\}}
\]

\[
\inference[TG-SEQ]
{
\Gamma;\trace; A \vdash_{g} c_1 : t_1  \leadsto C_1 &
\Gamma;\trace; A \vdash_{g} c_2 : t_2 \leadsto C_2
}
{\Gamma;\trace; A \vdash_{g} \lcSEQ{c_1}{c_2} : t_1\sqcap t_2 \leadsto C_1\cup C_2}
\]

\[
\inference[TG-CALL]
{
\FT_{C}(B.f) = (\overline{t} \xrightarrow{} t', C_{f})  &
\Gamma;\trace \vdash_{g} \overline{e} : \overline{s} \leadsto \bigcup \overline{C_{e}} \\
C_{a} = \{(\trace, \overline{s}\leq_{} \overline{\app{t}{\Theta(A)}}), (\trace, \app{t'}{\Theta(A)}\leq_{} \Gamma(x))\} \\
C = C_{f} \cup \bigcup\overline{C_{e}}\cup C_{a}
}
{
\Gamma;\trace; A \vdash_{g} \lcCALL{x}{B.f}{\overline{e}} : \Gamma(x)\leadsto C
}
\]

\[
\inference[TG-CP]
{
\Gamma;\trace::\tplus p; A \vdash_{g} c_1 : t_1  \leadsto C_1 &
 \Gamma;\trace::\tminus p; A \vdash_{g} c_2 : t_2  \leadsto C_2
 }
{\Gamma;\trace; A \vdash_{g} \lcCP{p}{c_1}{c_2} : \omerge{p}{t_1}{t_2} \leadsto C_1\cup C_2} 
\]

\[
\inference[TG-FUN]
{
[\overline{x}: \overline{\alpha},\lpRES:\beta]; \epsilon; B \vdash_{g} c : s  \leadsto C
}
{
\vdash_{g} \lpPROG{B.f}{x}{c}{\lpRES} :  \overline{\alpha}\xrightarrow{} \beta  \leadsto  C
}
\]

\caption{Constraint generation rules for expressions, commands and functions, given function type table $FT_C$.}
\label{fig:constraint-rules}
\end{figure}

Next, we prove the constraint rules are sound and complete with respect to the trace rules, i.e., the constraint set generated by the derivation of an expression (command, function, resp.) under the constraint rules is solvable, if and only if an expression (command, function, resp.) is typable under the trace rules.

\begin{lemma}\label{lem:cgrsound}
The following statements hold:
\begin{enumerate}[label={(\alph*)}]
\item\label{lem:cgrsound-1} If $\Gamma;\trace \vdash_{g} e : t\leadsto C$ and $\theta \vDash C$, then $\Gamma\theta;\trace \vdash_{t} e : t\theta $.

\item\label{lem:cgrsound-2} If $\Gamma;\trace; A \vdash_{g} c : t\leadsto C$ and $\theta \vDash C$, 
then  $\Gamma\theta;\trace;A \vdash_{t} c : t\theta.$

\item\label{lem:cgrsound-3} If $ \vdash_{g} \lpPROG{B.f}{x}{c}{\lpRES} :  \overline{\alpha}\rightarrow \beta\leadsto C$ and $\theta \vDash C$, then 
\[
\vdash_{t} \lpPROG{B.f}{x}{c}{\lpRES} : \overline{\theta(\alpha)}\xrightarrow{}\theta(\beta).
\]

\end{enumerate}
\end{lemma}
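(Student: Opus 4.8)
The plan is to prove all three parts simultaneously by structural induction on the derivation of the constraint-generation judgment, exploiting the fact that the constraint rules in Fig.~\ref{fig:constraint-rules} are in one-to-one correspondence with the trace rules in Fig.~\ref{fig:infer-rules}. The key observation that makes the induction go through is that each constraint rule generates exactly those side conditions (of the form $(\trace, t_l \leq t_r)$) that appear as the partial-subtyping premises $\leq_{\trace}$ in the matching trace rule, while passing the \emph{syntactic} type derivation through unchanged. Thus, given $\theta \vDash C$, I must: (i) split $C$ into the subconstraint sets belonging to the subderivations and the locally generated conditions, (ii) observe that $\theta$ still solves each subset, so the induction hypothesis applies to each premise, and (iii) check that the locally generated conditions $(\trace, t_l \leq t_r) \in C$, once we read $\theta \vDash$ as $t_l\theta \leq_{\trace} t_r\theta$, are exactly the $\leq_{\trace}$ side conditions required to fire the corresponding trace rule.

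First I would set up the statement so that parts \ref{lem:cgrsound-1}, \ref{lem:cgrsound-2} and \ref{lem:cgrsound-3} are strengthened into a single induction on the structure of the generating derivation (this is necessary because commands contain expressions and function bodies contain commands, so the three statements are mutually dependent). Then I would proceed rule by rule. The base cases are \ruleTagText{TG-VAR} (no constraints, matched directly by \ruleTagText{TT-VAR}). For the purely compositional cases such as \ruleTagText{TG-OP} and \ruleTagText{TG-SEQ}, the generated set is just $C_1 \cup C_2$ with no new local condition; since $\theta \vDash C_1\cup C_2$ implies $\theta \vDash C_1$ and $\theta \vDash C_2$, the induction hypotheses on both premises assemble directly into the matching trace rule, and the output type is computed identically ($t_1\sqcup t_2$, $t_1\sqcap t_2$, etc.). The cases \ruleTagText{TG-ASS}, \ruleTagText{TG-LETVAR}, \ruleTagText{TG-IF}, \ruleTagText{TG-WHILE} and \ruleTagText{TG-CALL} each add local conditions: here I would isolate those conditions, apply $\theta \vDash$ to turn each $(\trace, t_l \leq t_r)$ into the required $t_l\theta \leq_{\trace} t_r\theta$, and feed these together with the induction hypotheses into the corresponding \ruleTagText{TT-} rule.

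The two cases needing the most care are \ruleTagText{TG-CP} and \ruleTagText{TG-FUN}, and I expect \ruleTagText{TG-CP} to be the main obstacle. In \ruleTagText{TG-CP} the two premises are generated under the \emph{extended} traces $\trace::\tplus{p}$ and $\trace::\tminus{p}$; I must confirm that a single global substitution $\theta$ that solves $C_1 \cup C_2$ solves each piece under its own trace, so that the induction hypotheses yield $\Gamma\theta;\trace::\tplus{p};A \vdash_t c_1\theta : t_1\theta$ and the analogous judgment for $c_2$, which then combine via \ruleTagText{TT-CP} to give the merged type $\omerge{p}{t_1}{t_2}$ (applying $\theta$ commutes with the merge operator since $\theta$ only substitutes for type variables). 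Here I would rely on the consistency assumption on traces (each permission checked at most once in nested checks) to ensure the extended traces remain well-formed. For \ruleTagText{TG-FUN}, the subtlety is that the fresh variables $\overline{\alpha},\beta$ introduced for the argument and return types get instantiated by $\theta$; I would note that $\Gamma\theta = [\overline{x}:\overline{\theta(\alpha)}, r:\theta(\beta)]$ and that the generated function type $\overline{\alpha}\to\beta$ under $\theta$ becomes $\overline{\theta(\alpha)}\to\theta(\beta)$, matching the required conclusion, with the body handled by the induction hypothesis of part~\ref{lem:cgrsound-2}. No deep argument is needed beyond this bookkeeping; the real content is the rule-by-rule verification that constraint solving faithfully reconstructs each $\leq_{\trace}$ side condition.
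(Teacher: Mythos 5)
Your proposal matches the paper's proof: both proceed by a mutual rule-by-rule induction on the constraint-generation derivation, splitting $\theta \vDash C$ across the sub-constraint sets, reading each local condition $(\trace, t_l \leq t_r)$ as the $\leq_{\trace}$ premise of the corresponding trace rule, and observing that $\theta$ commutes with $\sqcup$, $\sqcap$ and $\triangleright_p$ (the paper's \ruleTagText{TG-CP} and \ruleTagText{TG-FUN} cases are handled exactly as you describe). The only nit is that you write ``$c_1\theta$'' in the \ruleTagText{TG-CP} case, but $\theta$ is a type substitution and does not act on commands.
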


\begin{lemma}\label{lem:cgrcomplete}
The following statements hold:
\begin{enumerate}[label={(\alph*)}]

\item\label{lem:cgrcomplete-1} If $\Gamma;\trace \vdash_{t} e : t$, then there exist $\Gamma',t',C,\theta$ s.t. $\Gamma';\trace \vdash_{g} e : t' \leadsto C$,  $\theta \vDash C$, $\Gamma'\theta =\Gamma$ and $t'\theta = t$.

\item\label{lem:cgrcomplete-2} If $\Gamma;\trace; A \vdash_{t} c : t$, then there exist $\Gamma',t',C,\theta$ s.t. $\Gamma';\trace; A \vdash_{g} c : t' \leadsto C$,  $\theta \vDash C$, $\Gamma'\theta =\Gamma$ and $t'\theta = t$.

\item\label{lem:cgrcomplete-3} If $\vdash_{t} \lpPROG{B.f}{\overline{x}}{c}{\lpRES} : \overline{t_p}\xrightarrow{} t_r$, then there exist $\alpha,\beta,C,\theta$ s.t.
$$
\vdash_{g} \lpPROG{B.f}{\overline{x}}{c}{\lpRES}: \overline{\alpha}\xrightarrow{} \beta \leadsto C,
$$ 
$\theta \vDash C$, and $(\overline{\alpha}\xrightarrow{} \beta)\theta = \overline{t_p}\xrightarrow{} t_r$, where $\alpha,\beta$ are fresh type variables.
\end{enumerate}
\end{lemma}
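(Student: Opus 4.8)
The plan is to prove part (a) directly, then part (b) by induction on the structure of the command $c$ (invoking (a) on expression subterms), and finally to obtain part (c) as a short corollary of (b). The whole argument is driven by the observation that the trace rules of Fig.~\ref{fig:infer-rules} and the constraint-generation rules of Fig.~\ref{fig:constraint-rules} are in exact syntactic correspondence: each syntactic construct is handled by a unique trace rule and a unique constraint rule with the same premise structure, the only differences being that every side condition $u \leq_{\trace} w$ \emph{checked} by a trace rule is instead \emph{emitted} as a guarded constraint $(\trace, u \le w)$ by the matching constraint rule, and that the types invented ``on the fly'' by the trace rules (the type $s'$ of a \textbf{letvar}-bound variable, the argument and return types of a function) are replaced by fresh type variables. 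Hence from a trace derivation I will build the isomorphic constraint derivation together with a substitution $\theta$ that instantiates each fresh variable with the concrete type used at the corresponding point of the trace derivation.

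Part (a) is the base of the correspondence: expressions are typed by \ruleTagText{TG-VAR} and \ruleTagText{TG-OP}, neither of which introduces fresh variables nor emits constraints, so I simply take $\Gamma' = \Gamma$, $t' = t$, $C = \emptyset$ and $\theta = \epsilon$. For (b) I proceed case by case on the last trace rule. In each case I apply the induction hypothesis to the immediate subderivations, obtaining constraint derivations $\Gamma_i';\trace_i; A \vdash_{g} c_i : t_i' \leadsto C_i$ and solving substitutions $\theta_i$ with $\Gamma_i'\theta_i = \Gamma_i$ and $t_i'\theta_i = t_i$. Since the fresh variables of distinct subderivations are chosen disjointly, the $\theta_i$ can be combined, and I then extend the result on the variables freshly introduced at the current node --- for instance $\alpha \mapsto s'$ in \ruleTagText{TG-LETVAR} --- using the concrete types of the trace derivation. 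Finally I check that this $\theta$ solves the newly generated constraint: this is exactly where the trace side conditions are consumed. For example, in \ruleTagText{TG-ASS} the emitted constraint is $(\trace, t \le \Gamma(x))$, and $\theta \vDash (\trace, t \le \Gamma(x))$ unfolds to $t\theta \leq_{\trace} \Gamma(x)\theta$, which is precisely the premise $t \leq_{\trace} \Gamma(x)$ of \ruleTagText{TT-ASS} guaranteed by the given trace derivation; the cases \ruleTagText{TG-LETVAR}, \ruleTagText{TG-IF}, \ruleTagText{TG-WHILE} and \ruleTagText{TG-CALL} are analogous. The equalities $\Gamma'\theta = \Gamma$ and $t'\theta = t$ then follow from the inductive equalities and the chosen assignments to the new variables.

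Part (c) follows by applying (b) to the body derivation $[\overline{x}:\overline{t_p}, r:t_r];\epsilon;B \vdash_{t} c : s$ underlying \ruleTagText{TT-FUN}, but choosing the skeleton environment to be $[\overline{x}:\overline{\alpha}, r:\beta]$ with $\overline{\alpha},\beta$ fresh; extending the resulting $\theta$ by $\overline{\alpha}\mapsto\overline{t_p}$ and $\beta\mapsto t_r$ yields $(\overline{\alpha}\rightarrow\beta)\theta = \overline{t_p}\rightarrow t_r$ together with a solution of the collected constraints, matching \ruleTagText{TG-FUN}.

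The main obstacle I anticipate is bookkeeping around freshness and the consistency of substitutions, rather than any hard inequality. To make the combination of sibling subderivations sound --- in \ruleTagText{TG-IF}, \ruleTagText{TG-SEQ} and \ruleTagText{TG-CP}, whose two premises are generated under the \emph{same} skeleton environment --- I will strengthen the induction hypothesis so that the skeleton $\Gamma'$ together with its instantiation on the non-fresh (outer-bound) variables is supplied as an \emph{input}; both premises are then built over this common skeleton, so their solving substitutions agree on shared variables by construction and differ only on pairwise-disjoint fresh variables, making their union well defined. A second delicate point is the \ruleTagText{TG-CALL} case, where the generated set also contains the callee's recorded constraints $C_f$ drawn from $\FT_{C}$: for $\theta$ to solve $C_f$ I rely on the standing assumption that $\FT$ and $\FT_{C}$ are consistent, i.e.\ that $\FT(B.f)$ is the image of the stored type $\overline{t}\rightarrow t'$ under a background solution of $C_f$, and take $\theta$ to agree with that solution on the callee's variables. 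With the fresh variables aligned across the two tables, solving $C_f$ becomes immediate and all remaining constraints are discharged locally as above.
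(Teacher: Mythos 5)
Your proposal is correct and follows essentially the same route as the paper: the paper likewise builds the constraint derivation isomorphic to the trace derivation over a skeleton environment $\Gamma_0 = \{x \mapsto \alpha_x\}$ of fresh variables with base substitution $\theta_0 = \{\alpha_x \mapsto \Gamma(x)\}$, discharges each emitted constraint by matching it against the corresponding $\leq_{\trace}$ side condition of the trace rule, and handles \ruleTagText{TG-CALL} by invoking part (c) on the callee so that $\FT_C(B.f)$ carries a solvable $C_f$ whose solution is merged (on disjoint fresh variables) with the current one. The strengthening of the induction hypothesis you anticipate --- fixing the skeleton and its instantiation as an input so that sibling premises share it --- is precisely the paper's auxiliary statements $(1')$ and $(2')$, so no gap remains.
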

 \subsection{Constraint Solving}\label{sec:constraint_solve}
This section presents a constraint solving algorithm for the constraints generated by the rules in Fig.~\ref{fig:constraint-rules}.
For these constraints, both types appearing on the two sides of subtyping are 
guarded by the same permission trace, which is the one collected from the current context.
While during the solving of these constraints, there may be some constraints on whose two sides the types are guarded by different traces. 
Take the constraint \cRawCond{\trace}{\app{t_l}{P}}{\app{\alpha}{Q}} as an example, $t_l$ is indeed guarded by $P$ while $\alpha$ is guarded by $Q$, where $P$ and $Q$ are different permission sets. 
Therefore,  we use a generalized version of the permission guarded constraint, allowing types on the two sides to be guarded by \emph{different} permission traces:
$$
\cCond{\trace_l}{t_l}{\trace_r}{t_r}
$$
Similarly, a \emph{solution }to a generalized constraint set $C$ is a substitution $\theta$, denoted as $\theta \vDash C $,  such that for each
$\cCond{\trace_l}{t_l}{\trace_r}{t_r} \in C$,
$(t_l\theta\cdot \trace_l) \leq_{} (t_r\theta \cdot \trace_r)$
holds.

It is easy to transfer a permission guarded constraint set $C$ into a generalized constraint set $C'$: by rewriting each \cRawCond{\trace}{t_l}{t_r} as \cCond{\trace}{t_l}{\trace}{t_r}.
Moreover, it is trivial that $\theta \vDash C \Longleftrightarrow \theta \vDash C'$.
Therefore, we will focus on the solving of generalized constraints in the following.

Given a permission set $P$ and a permission trace $\trace$, we say $P$ \termEmph{entails} $\trace$, denoted as $P \vDash \trace$, iff. $\forall \oplus p \in \trace. \; p\in P$ and $\forall \ominus p\in\trace.\; p\notin P$.
A permission trace $\trace$ is \termEmph{satisfiable}, denoted as $\issatisfied{\trace}$, iff. there exists a permission set $P$ such that $p\vDash \trace$. 

The constraint solving consists of the following steps: 
\begin{itemize}
\item \textbf{decompose }the types in constraints into type variables and ground types;
\item \textbf{saturate} the constraint set by the transitivity of the subtyping relation;
\item \textbf{merge} the lower and upper bounds of same variables;
\item \textbf{unify} the constraints to emit a solution.
\end{itemize}

\subsubsection{Decomposition}

The first step to solve the guarded constraints is to decompose the types into the simpler ones, i.e., type variables and ground types, according to their structures.
The decomposing rules are given in Fig. \ref{fig:solving}.
Rules (\ruleTagText{CD-CUP}), (\ruleTagText{CD-CAP}) and (\ruleTagText{CD-SVAR}) are trivial. 
Rule (\ruleTagText{CD-MEGER}) states that two $p$-merged types satisfy the relation if and only if both their $p$-promotions and $p$-demotions satisfy the relation, where $t$ can be viewed as a $p$-merged type $\omerge{p}{t}{t}$. 
The projection of types yields a ``monomorphic type'' such that any successive trace application makes no changes, therefore we have (\ruleTagText{CD-LAPP}) and (\ruleTagText{CD-RAPP}).
Rules (\ruleTagText{CD-SUB$_0$}) and (\ruleTagText{CD-SUB$_1$}) handle the constraints on ground types.

After decomposition, constraints have one of the forms:
\begin{equation*}
\cCond{\trace}{\alpha}{\trace'}{t_g}, \cCond{\trace}{t_g}{\trace'}{\beta}, \cCond{\trace}{\alpha}{\trace'}{\beta}
\end{equation*}

\begin{figure*}

\[
\inference[CD-CUP]
{}
{C\cup\{\cCond{\trace}{t_1\sqcup t_2}{\trace'}{t}\}
\leadsto_{d}
{C\cup\{\cCond{\trace}{t_1}{\trace'}{t}, \cCond{\trace}{t_2}{\trace'}{t}\}}
}
\]

\[
\inference[CD-CAP]
{}
{C\cup\{\cCond{\trace}{t}{\trace}{t_1\sqcap t_2}\} \leadsto_{d}
{C\cup\{\cCond{\trace}{t}{\trace'}{t_1}, \cCond{\trace}{t}{\trace'}{t_2}\}}
}
\]

\[
\inference[CD-LAPP]
{}
{C\cup\{\cCond{\trace}{\app{t}{P}}{\trace'}{t'}\} \leadsto_{d}
{C\cup\{\cCond{\trace_{P}}{t}{\trace'}{t'}\}}
}
\]

\[
\inference[CD-RAPP]
{}
{C\cup\{\cCond{\trace}{t}{\trace'}{\app{\alpha}{P}}\} \leadsto_{d}
{C\cup\{\cCond{\trace}{t}{\trace'_{P}}{\alpha}\}}
}
\]

\[
\inference[CD-SVAR]
{}
{C\cup\{\cCond{\trace}{\alpha}{\trace}{\alpha}\}\leadsto_{d} C}
\]

\[
\inference[CD-SUB$_0$]
{t_g\cdot\trace \leq_{} s_g\cdot\trace'}
{C\cup\{\cCond{\trace}{t_g}{\trace'}{s_g}\} \leadsto_{d} C}
\quad
\inference[CD-SUB$_1$]
{t_g\cdot\trace \nleq_{} s_g \cdot\trace'}
{C\cup\{\cCond{\trace}{t_g}{\trace'}{s_g}\} \leadsto_{d} \bot }
\]

\[
\inference[CD-MERGE]
{C' = \{\cCond{\trace::\oplus p}{t}{\trace'::\oplus p}{t_1}, \cCond{\trace::\ominus p}{t}{\trace'::\ominus p}{t_2}\}}
{C\cup\{\cCond{\trace}{t}{\trace'}{\omerge{p}{t _1}{t_2}}\}\leadsto_{d} C\cup C'}
\]

\[
\inference[CS-LU]
{
\trace'_l = (\trace_l\land \trace_r)  - \trace_r &
\trace'_r =(\trace_l\land \trace_r) - \trace_l \\
\issatisfied{\trace_l \land \trace_r} &
\{(\trace_1\land \trace'_r, t_1\leq \trace_2\land \trace'_l,t_2) \}\leadsto_{d} C'
}
{\{\cCond{\trace_1}{t_1}{\trace_r}{\alpha},\cCond{\trace_l}{\alpha}{\trace_2}{t_2}\} \subseteq C
\leadsto_{s} C \cup C'}
\]

\[
\inference[CM-GLB]{
\phi(I') =
\{\trace \in \dnf{\bigwedge_{i\in I'}\trace_{ir}\wedge\bigwedge_{i\in I\setminus I'}\neg \trace_{ir}}~|~ \issatisfied{\trace}\} &
t_{I',\trace}^{\sqcup}  =  \sqcup_{i\in I'} (t_{i}\cdot (\trace_{il}\wedge (\trace - \trace_{ir})))
}
{
C\cup\{\cCond{\trace_{il}}{t_i}{\trace_{ir}}{\alpha}\}_{i\in I} 
\leadsto_{m1}
C\cup\{\cCond{\epsilon}{t_{I',\trace}^{\sqcup}}{\trace}{\alpha}\}_{I'\subseteq I, \trace \in \phi(I') }
}
\]

\[
\inference[CM-LUB]{
\phi(I') =
\{\trace \in \dnf{\bigwedge_{i\in I'}\trace_{il}\wedge\bigwedge_{i\in I\setminus I'}\neg \trace_{il}}~|~ \issatisfied{\trace}\} &
t_{I',\trace}^{\sqcap} =  \sqcap_{i\in I'} (t_{i}\cdot (\trace_{ir}\wedge (\trace - \trace_{il})))
}
{
C\cup\{\cCond{\trace_{il}}{\alpha}{\trace_{ir}}{t_i}\}_{i\in I} 
\leadsto_{m1}
C\cup\{\cCond{\trace}{\alpha}{\epsilon}{t_{I',\trace}^{\sqcap}}\}_{I'\subseteq I, \trace\in \phi(I')}
}
\]

\[
\inference[CM-BDS]
{
C_{i,j}^{1} = \{(t_i \cdot (\trace -\trace_i) \leq (\trace, \alpha) \leq t_j \cdot (\trace -\trace_j) ~|~ \trace \in \dnf{\trace_i \land \trace_j} \text{ and } \issatisfied{\trace} \} \\
C_{i,j}^{2} = \{(t_i \cdot (\trace -\trace_i) \leq (\trace, \alpha) \leq H ~|~ \trace \in \dnf{\trace_i \land \neg \trace_j} \text{ and } \issatisfied{\trace}\} \\
C_{i,j}^{3} = \{(L \leq (\trace, \alpha) \leq t_j \cdot (\trace -\trace_j) ~|~ \trace \in \dnf{\trace_j \land \neg \trace_j} \text{ and } \issatisfied{\trace} \} 
}
{
C\cup\{\cCond{\epsilon}{t_i}{\trace_i}{\alpha}_{i\in I}, \cCond{\trace_j}{\alpha}{\epsilon}{t_j}_{j\in J} \} \leadsto_{m2} C\cup \bigcup_{ i\in I,j \in J} (C_{i,j}^{1} \cup C_{i,j}^{2} \cup C_{i,j}^{3})
}
\]

\[
\inference[CM-SBD]
{}
{C\cup\{( t_1 \leq (\trace,\alpha) \leq s_1), (t_2\leq (\trace, \alpha)  \leq s_2) \} \leadsto_{m3} 
C\cup\{(t_1\sqcup t_2 \leq (\trace, \alpha) \leq s_1\sqcap s_2)\}}
\]

\caption{Constraint solving rules, including \termEmph{decomposition}~(\textbf{CD-}), \termEmph{saturation}~(\textbf{CS-}) and \termEmph{merging}~(\textbf{CM-}), where $\trace_{P}$ denotes the trace that only $P$ can entail, $-$ denotes set difference, $ t_{\emptyset,\trace} ^{\sqcup} = L$, and $ t_{\emptyset,\trace}^{\sqcap}  = H$. }
\label{fig:solving}
\end{figure*}

\subsubsection{Saturation}
We treat each permission trace $\trace$ as a boolean logic formula on permissions, where $\oplus$ and $\ominus$ denote positive and negative respectively. 
In the remaining we shall use the logic connectives on permission traces freely. 
We also adopt the disjunctive normal form, i.e., a disjunction of conjunctive permissions, and denote it as $\dnf{\cdot}$.
For example, $\dnf{(\oplus p) \land \neg (\oplus q \land \ominus r)}  = (\oplus p \land \ominus q) \lor (\oplus p \land \oplus r) $.

To ensure any lower bound (e.g., \cCond{\trace_1}{t_1}{\trace_r}{\alpha}) of a variable $\alpha$ is ``smaller'' than any of its upper bound (e.g., \cCond{\trace_l}{\alpha}{\trace_2}{t_2}, we need to saturate the constraint set by adding these conditions.
However, since our constraints are guarded by permission traces, we need to consider lower-upper bound relations only when the traces of the variable $\alpha$ can be entailed by the same permission set, i.e., their intersection is satisfiable (i.e., $\issatisfied{\trace_l \land \trace_r} $).
Then we extend the traces of both the lower and upper bound constraints such that the traces of $\alpha$ are the same (\emph{i.e.}, $\trace_l \land \trace_r$), by adding the missing traces (\emph{i.e.}, $\trace_l \land \trace_r - \trace_r$ for lower bound constraint while $\trace_l \land \trace_r - \trace_l$ for the upper one, where $-$ denotes set difference).
The rule is given as (\ruleTagText{CS-LU}) in Fig. \ref{fig:solving}.

Assume that there is an order $<$ on type variables and the smaller variable has a higher priority. 
If two variables $\alpha, \beta$ with $O(\alpha) < O (\beta)$ (the orderings) are in the same constraint $\beta \leq \alpha$, we consider the larger variable $\beta$ is a bound for the lower one $\alpha$, but not vice-versa.
There is a special case of \cCond{\trace}{\alpha}{\trace'}{\alpha}.
In that case, we regroup all the trace of $\alpha$ as $\{\trace_i~|~i \in I\}$ such that $\bigvee_{i\in I}\trace_i  = \epsilon$ and  $\forall i,j\in I. i\neq j \Rightarrow \neg\issatisfied{\trace_i\land\trace_j}$, and rewrite the constraints of $\alpha$ w.r.t. $\{\trace_i~|~i \in I\}$.  Then we treat each $(\trace_i, \alpha)$ as different fresh variables $\alpha_i$.
Therefore, with the ordering, there are no loops like: $\cPair{\trace}{\alpha} \leq \ldots\leq \cPair{\trace'}{\alpha}$. 

\subsubsection{Merging}

Next, we would like to merge the constraints on an identical variable.
As constraints are guarded by permission traces, 
we need to consider the satisfiability of (any subset of) the permission traces of the same variable under any permission set.
The merging rules are presented in Fig. \ref{fig:solving} as well.
Rule (\ruleTagText{CM-GLB}) handles the lower bounds of an identical type variable.
Let us consider the lower bounds $\{\cCond{\trace_i}{t_i}{\trace_{ir}}{\alpha}~|~i\in I\}$ of a type variable $\alpha$, guarded by the possibly different permission traces $\trace_{ir}$.
Assume that only the traces in $I' \subseteq I$ can be entailed by a permission set $P$ simultaneously, that is, the common trace $\dnf{\bigwedge_{i\in I'}\trace_{ir}\wedge\bigwedge_{i\in I\setminus I'}\neg \trace_{ir}}$ is satisfied.
We adjust each constraint in $I'$ by extending the traces with the missing one, which is the common trace minus the trace of $\alpha$.
After that, the variable $\alpha$ in all the constraints of $I'$ are guarded by an identical trace.
Then it is easy for us to deduce a greatest lower bound for $\alpha$, that is, $\sqcup_{i\in I'}t_i\cdot\trace'_i$,  where $\trace'_i$ is the extended trace of $t_i$.
Similar to Rule (\ruleTagText{CM-GLB}), Rule (\ruleTagText{CM-LUB}) handles the upper bounds of an identical type variable, yielding a least upper bound.
Rule (\ruleTagText{CM-BDS}) combines the greatest lower bound and the least upper bound of an identical type variable, which also needs to consider the satisfiability of the possibly different traces.
Finally, Rule (\ruleTagText{CM-SBD}) merges the possible redundant bounds under the same trace.
Note that, the traces of the bounds are $\epsilon$ and thus can be omitted.

After merging, for each type variable $\alpha$, its constraints $\{(t_i\leq(\trace_i, \alpha)\leq s_i)\}_{i\in I} $ (if exists) satisfy that $(1)$ the union of all $\trace_i$ is the full set (\emph{i.e.}, $\bigvee_{i\in I}\trace_i  = \epsilon$) and ($2$) the intersection of different traces is unsatisfiable (\emph{i.e.}, $\forall i,j\in I. i\neq j \Rightarrow \neg\issatisfied{\trace_i\land\trace_j}$).

Given two constraint sets $C_1,C_2$, we say $C_1$ entails $C_2$, denoted as $C_1 \vDash C_2$, iff. for any substitution $\theta$, if $\theta \vDash C_1 $, then $\theta \vDash C_2$.
We proved that the constraint solving rules are sound and complete, i.e., 
the original constraint set entails the converted set (obtained by decomposition, saturation and merging), and vice-versa.

\begin{lemma}\label{lem:consolcorrect}
If $C\leadsto_{r} C'$, then $C\vDash C'$ and $C' \vDash C$, where $r\in \{d,s,m\}$. 
\end{lemma}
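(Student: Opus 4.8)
The plan is to prove Lemma~\ref{lem:consolcorrect} by a case analysis on the rule $r \in \{d, s, m\}$ that was applied, establishing both inclusions $C \vDash C'$ and $C' \vDash C$ for each of the concrete rules in Fig.~\ref{fig:solving}. Since entailment is defined through solutions, I would fix an arbitrary substitution $\theta$ and argue, for each rule, that $\theta$ solves the constraints removed from the left-hand side exactly when it solves the constraints introduced on the right-hand side. Because every rule only rewrites a distinguished sub-collection of constraints while carrying the remainder $C$ unchanged, it suffices to reason locally about the rewritten part; the shared context $C$ contributes identically to both sides and can be ignored.

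First I would dispatch the decomposition rules (the \textbf{CD-} group), which are the most routine. For (\ruleTagText{CD-CUP}) and (\ruleTagText{CD-CAP}) the equivalence is just the defining property of $\sqcup$ and $\sqcap$ in the lattice $(\Tcal, \leq_\Tcal)$ from Lemma~\ref{lem:lattice}: a join is below $t$ iff both joinands are, and dually for meets. For (\ruleTagText{CD-LAPP}) and (\ruleTagText{CD-RAPP}) I would invoke the fact that projection $\app{t}{P}$ yields a constant (``monomorphic'') type on which any further trace application is the identity, so folding $P$ into the trace as $\trace_P$ preserves the value $t_l\theta \cdot \trace_l$ on each side; here Lemma~\ref{lem:tracesame} and the idempotence Lemmas~\ref{lem:tracepsame}/\ref{lem:traceorder-whole} justify that absorbing $P$ into the trace does not change the evaluated type. (\ruleTagText{CD-SVAR}) is the reflexivity of $\leq_\Tcal$, and (\ruleTagText{CD-SUB$_0$})/(\ruleTagText{CD-SUB$_1$}) merely discharge or refute a ground constraint by direct computation, so there is nothing to solve. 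The key case in this group is (\ruleTagText{CD-MERGE}): by Definition~\ref{def:merge}, $\omerge{p}{t_1}{t_2}$ agrees with $t_1$ on permission sets containing $p$ and with $t_2$ on those omitting $p$; splitting the trace into $\trace::\oplus p$ and $\trace::\ominus p$ partitions $\PPS$ accordingly, so the single constraint holds under $\theta$ iff both split constraints do. I would make this precise using the promotion/demotion semantics of Definition~\ref{def:updown}, noting that appending $\oplus p$ evaluates the type only on sets containing $p$, on which the merge collapses to $t_1$, and symmetrically for $\ominus p$.

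The saturation rule (\ruleTagText{CS-LU}) and the merging rules (the \textbf{CM-} group) are the substantive cases and the main obstacle. The difficulty is twofold: the bounds on a variable $\alpha$ are guarded by \emph{different} traces, so one must first align them onto a common refinement before applying transitivity; and the alignment uses the boolean-algebra reading of traces together with the disjunctive-normal-form operator $\dnf{\cdot}$ and the satisfiability predicate $\issatisfied{\cdot}$. For (\ruleTagText{CS-LU}) the plan is to show that, on any permission set $P$ entailing $\trace_l \land \trace_r$, extending the lower bound by $\trace_l \land \trace_r - \trace_r$ and the upper bound by $\trace_l \land \trace_r - \trace_l$ leaves the evaluated values of $t_1, t_2, \alpha\theta$ unchanged (using Lemmas~\ref{lem:traceorder-whole} and~\ref{lem:tracesame} to commute and collapse repeated permissions), so that transitivity $t_1\theta \leq \alpha\theta \leq t_2\theta$ at $P$ forces the derived constraint; conversely the derived constraint is a logical consequence of the two originals and so adds nothing that a solution of $C$ could violate, giving $C \vDash C'$, while $C' \vDash C$ is immediate since $C \subseteq C'$. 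For (\ruleTagText{CM-GLB}) and (\ruleTagText{CM-LUB}) the heart of the argument is that for each subset $I' \subseteq I$ the formula $\bigwedge_{i\in I'}\trace_{ir} \land \bigwedge_{i\in I\setminus I'}\neg\trace_{ir}$ isolates exactly those permission sets on which precisely the bounds indexed by $I'$ are ``active''; on such sets, $\alpha\theta$ must dominate the join $\sqcup_{i\in I'} t_i\theta \cdot (\cdots)$ iff it dominates each $t_i\theta$ individually, which is again the universal property of $\sqcup$. I would argue that the $\dnf$/$\issatisfied$ machinery simply enumerates a finite partition of the satisfiable permission sets, so that the conjunction of the new constraints over all $I'$ and all $\trace \in \phi(I')$ is logically equivalent to the original family of guarded bounds. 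Rules (\ruleTagText{CM-BDS}) and (\ruleTagText{CM-SBD}) then combine aligned lower and upper bounds and merge redundant bounds under a common trace, and their correctness reduces once more to the lattice identities for $\sqcup$ and $\sqcap$ together with transitivity.

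The main obstacle I anticipate is not any single lattice-theoretic step but the bookkeeping needed to show that the trace-algebra rewriting preserves the \emph{evaluated} type on every relevant permission set: one must repeatedly use that appending permissions already present in a trace is idempotent (Lemma~\ref{lem:tracepsame}), that disjoint permissions commute (Lemma~\ref{lem:traceorder}), and that the $\dnf$ decomposition exhausts exactly the satisfiable cases. My strategy to keep this manageable is to prove a single reusable sublemma stating that for any type $u$, trace $\trace$, and permission set $P$ with $P \vDash \trace$, the value $(u\cdot\trace)(P)$ depends only on $P$ and $u$ (not on the order or multiplicity of entries in $\trace$), and then apply it uniformly across (\ruleTagText{CS-LU}) and the \textbf{CM-} rules so that every alignment step is seen to preserve solutions pointwise over $\PPS$.
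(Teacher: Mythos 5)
Your proposal is correct and takes essentially the same route as the paper's own proof: a rule-by-rule case analysis in which the \textbf{CD-} rules reduce to lattice properties of $\sqcup$/$\sqcap$, the definition of projection, and the $p$-splitting of permission sets for the merge operator, while (\ruleTagText{CS-LU}) and the \textbf{CM-} rules are handled by aligning the guarding traces via the disjunctive-normal-form/satisfiability partition and then applying transitivity and the universal properties of join and meet. The paper's proof is simply a terser rendering of this same argument, citing its trace lemmas where you spell out the pointwise reasoning.
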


\subsubsection{Unification}
Now the constraints to be solved are of the form $ \{(t_i\leq(\trace_i, \alpha)\leq s_i)\}_{i\in I}$,  
stating that $\alpha$ can take a type that ranges from $t_i $ to $s_i$ under the permission trace $\trace_i$. 
Let $\alpha_i$ denote such a type, where $\alpha_i$ is a fresh type variable.
So the constraint $(t_i\leq(\trace_i, \alpha)\leq s_i)$ can be rewritten as an equivalent equation $(\trace_i, \alpha)= (t_i \sqcup \alpha_i )\sqcap s_i$. 

We define an equation system $E$ as a set of equations $(\trace, \alpha)= t$, and a solution to $E$ is a substitution $\theta$, denoted as $\theta\vDash E$, s.t. for each $ (\trace, \alpha)= t \in E$, $(\theta(\alpha)\cdot \trace) = t\theta$. 
It is trivial that  a constraint set $C$ can be equivalently transformed into an equation system $E$ s.t. $\theta\vDash C \Longleftrightarrow \theta\vDash E$.

As mentioned above,  the trace set $ \{\trace_i~|~i\in I\}$ of an identical variable is full (\emph{i.e.}, $\bigvee_{i\in I}\trace_i  = \epsilon$) and disjoint (\emph{i.e.}, $\forall i,j\in I. i\neq j \Rightarrow \neg\issatisfied{\trace_i\land\trace_j}$).  
For each permission set $P$, there exists a unique trace $\trace_i$ such that $P\vDash \trace_i$. 
Hence, a type $t$ for $\alpha$ can be constructed:
$$
t(P) = ((t_i \sqcup \alpha_i )\sqcap s_i)(P) \quad P\vDash \trace_i
$$

The algorithm below presents the unification algorithm that solves the equation system $E$.
It is provable that the unification algorithm is sound and complete.

\vspace{10pt}
\begin{algorithmic}
\STATE \textbf{let} toType~$\{\cPair{\trace_i}{t_i}\}_{i\in I} = t$ s.t. $t(P) = t_i$ if $P\vDash \trace_i$ \textbf{in}
\STATE \textbf{let} $subst ~\theta~(\cPair{\trace_j}{\beta} = s_j)~ = (\cPair{\trace_j}{\beta} = s_j\theta)$ \textbf{in}
\STATE \textbf{let~rec} $equ2sub~E $ =
\STATE \quad\textbf{match} $E$ \textbf{with}
\STATE \quad$|~[]\rightarrow []$
\STATE \quad$|~\{\cPair{\trace_i}{\alpha} = t_i\}_{i\in I}:: E'$  \textbf{where} $isMax(O(\alpha))$ $\rightarrow$
\STATE \quad\quad\textbf{let} $t_{\alpha} = $ toType~$\{(\trace_i, t_i)\}_{i\in I}$ \textbf{in}
\STATE \quad\quad\textbf{let} $E'' = List.map~ (subst~[\alpha \mapsto t_{\alpha}])~ E'$ \textbf{in}
\STATE \quad\quad \textbf{let} $\theta' = equ2sub~E'' $ \textbf{in}
\STATE \quad\quad $\theta'[\alpha \mapsto t_{\alpha}]$
\STATE \textbf{in} $equ2sub~E$
\end{algorithmic}
\vspace{10pt}

\begin{lemma}\label{lem:unifysound}
If $unify(E) = \theta$, then $\theta \vDash E$.
\end{lemma}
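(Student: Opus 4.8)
The plan is to prove Lemma~\ref{lem:unifysound} by induction on the
structure of the equation system $E$, following the recursive
structure of the \textsf{equ2sub} algorithm. The crucial invariant I
would maintain is that when the algorithm processes a block of
equations $\{\cPair{\trace_i}{\alpha}=t_i\}_{i\in I}$ for a variable
$\alpha$ that is maximal in the ordering $O$, the right-hand sides
$t_i$ contain no type variables that are themselves constrained by
equations remaining in the tail $E'$. This is exactly what
$isMax(O(\alpha))$ guarantees: a maximal variable can only be bounded
by ground types or by already-resolved fresh variables, so the
construction $t_\alpha = \textsf{toType}~\{(\trace_i,t_i)\}_{i\in I}$
produces a type with no pending dependencies, and the substitution
$[\alpha\mapsto t_\alpha]$ can be safely propagated into $E'$ via
\textsf{subst}.

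First I would establish the \emph{base case}: when $E=[]$ the
algorithm returns the empty substitution, which vacuously satisfies
$E$. For the \emph{inductive step}, I would assume the block
$\{\cPair{\trace_i}{\alpha}=t_i\}_{i\in I}$ is removed and
$t_\alpha$ computed. The key property of \textsf{toType} is that for
every permission set $P$, if $P\vDash\trace_i$ then
$t_\alpha(P)=t_i(P)$; since after merging the traces
$\{\trace_i\mid i\in I\}$ are full ($\bigvee_{i\in I}\trace_i=\epsilon$)
and pairwise disjoint
($\forall i,j\in I.\, i\neq j\Rightarrow\neg\issatisfied{\trace_i\land\trace_j}$),
there is for each $P$ a \emph{unique} $i$ with $P\vDash\trace_i$, so
$t_\alpha$ is well-defined. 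From this I would derive that
$(t_\alpha\cdot\trace_i)=t_i$ for each $i$, i.e.\ $[\alpha\mapsto t_\alpha]$
satisfies the block $\{\cPair{\trace_i}{\alpha}=t_i\}_{i\in I}$ on its
own. I would then apply the induction hypothesis to
$E''=\textsf{List.map}~(\textsf{subst}~[\alpha\mapsto t_\alpha])~E'$ to
obtain $\theta'\vDash E''$, and argue that the returned substitution
$\theta'[\alpha\mapsto t_\alpha]$ satisfies all of $E$.

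The \textbf{main obstacle} is verifying that substituting $t_\alpha$
into the tail and then composing with $\theta'$ does not disturb the
equations already discharged — i.e.\ that $\theta'$ does not reassign
$\alpha$ (which is why $isMax$ and the removal of $\alpha$'s block
matter) and that $\theta'$ applied to the original equations of $E'$
agrees with $\theta'$ applied to the substituted equations $E''$. The
latter requires a substitution-commutation argument: for an equation
$(\cPair{\trace_j}{\beta}=s_j)$ in $E'$, one has
$s_j(\theta'[\alpha\mapsto t_\alpha])=(s_j[\alpha\mapsto t_\alpha])\theta'$
because $\theta'$ leaves $\alpha$ untouched, and the right-hand side is
precisely the equation in $E''$ that $\theta'$ is known to satisfy. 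I
would also need the compatibility of trace application with
substitution, namely $(t\theta)\cdot\trace=(t\cdot\trace)\theta$ for
these guarded constraints, which follows from the pointwise definition
of promotion/demotion and the fact that substitution acts on the
codomain lattice $\SL$ independently of the permission set argument.
Assembling these observations gives
$\theta'[\alpha\mapsto t_\alpha]\vDash E$, completing the induction.
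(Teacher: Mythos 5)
Your proof follows essentially the same route as the paper's: induction on the size of $E$ mirroring the recursion of \textsf{equ2sub}, construction of $t_\alpha$ from the (full, pairwise-disjoint) trace-indexed block for the maximal variable, the induction hypothesis applied to the substituted tail $E''$, and the commutation step $s_j(\theta'[\alpha\mapsto t_\alpha])=(s_j[\alpha\mapsto t_\alpha])\theta'$ justified by $\theta'$ not touching $\alpha$. The proposal is correct and, if anything, slightly more explicit than the paper about why \textsf{toType} is well-defined.
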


\begin{lemma}\label{lem:unifycomplete}
If $\theta \vDash E$, then there exist $\theta'$ and $\theta''$ such that $unify(E) = \theta'$ and $\theta = \theta'\theta''$.
\end{lemma}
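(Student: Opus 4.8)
The plan is to prove completeness by induction on the number of distinct type variables occurring in $E$, mirroring the recursion of $equ2sub$, and to show simultaneously that the substitution $\theta'$ returned by $unify$ is a \emph{most general} solution, i.e.\ every solution $\theta$ factors through it as $\theta = \theta'\theta''$, where I write $s(\theta'\theta'') = (s\theta')\theta''$. The base case $E = []$ is immediate: $unify([]) = []$, so taking $\theta' = []$ and $\theta'' = \theta$ gives $\theta = \theta'\theta''$.

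For the inductive step, suppose $equ2sub$ peels off a maximal variable $\alpha$ (w.r.t.\ the ordering $O$) together with its block $\{\cPair{\trace_i}{\alpha} = r_i\}_{i\in I}$, where each right-hand side is the form $r_i = (t_i\sqcup\alpha_i)\sqcap s_i$ introduced during unification. Because $\alpha$ is maximal and, by the ordering convention, only larger variables may appear as bounds, the $t_i,s_i$ contain no variable other than the fresh $\alpha_i$; moreover, by the invariant established after merging, the traces $\{\trace_i\}_{i\in I}$ are \emph{full} ($\bigvee_{i\in I}\trace_i = \epsilon$) and pairwise disjoint ($\neg\issatisfied{\trace_i\wedge\trace_j}$ for $i\neq j$). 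Hence each $P\in\PPS$ entails a unique $\trace_i$, so toType yields a well-defined $t_\alpha$ with $t_\alpha(P) = r_i(P)$ whenever $P\vDash\trace_i$. The crux is a reconstruction identity: any $\theta\vDash E$ must satisfy $\theta(\alpha) = t_\alpha\theta$. I would establish this pointwise. Fix $P$ and let $i$ be the unique index with $P\vDash\trace_i$. The key algebraic fact, which follows directly from Definition~\ref{def:updown} and the definition of entailment, is that $(u\cdot\trace_i)(P) = u(P)$ for every base type $u$ when $P\vDash\trace_i$: each $\oplus p$ in $\trace_i$ fixes $P$ under the promotion $P\mapsto P\cup\{p\}$ since $p\in P$, and each $\ominus p$ fixes it under the demotion $P\mapsto P\setminus\{p\}$ since $p\notin P$. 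Applying this to $u = \theta(\alpha)$ together with the satisfied equation $\theta(\alpha)\cdot\trace_i = r_i\theta$ gives $\theta(\alpha)(P) = (\theta(\alpha)\cdot\trace_i)(P) = (r_i\theta)(P) = (t_\alpha\theta)(P)$, the last step by the defining property of $t_\alpha$. Since $P$ was arbitrary and the traces are full, $\theta(\alpha) = t_\alpha\theta$ follows.

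With the reconstruction identity in hand the induction closes. Because $\theta(\alpha) = t_\alpha\theta$, applying $\theta$ to $E'' = E'[\alpha\mapsto t_\alpha]$ coincides with applying it to $E'$, so $\theta\vDash E''$; as $E''$ has strictly fewer variables, the induction hypothesis supplies $\theta_1' = equ2sub(E'')$ and a $\theta_1''$ with $\theta = \theta_1'\theta_1''$ on the variables of $E''$. The algorithm returns $\theta' = \theta_1'[\alpha\mapsto t_\alpha]$, and I would take $\theta''$ to extend $\theta_1''$ by mapping each fresh $\alpha_i$ to $\theta(\alpha_i)$ (these values exist since $\theta\vDash E$ forces $\theta$ to interpret the $\alpha_i$). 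Verifying $\theta = \theta'\theta''$ then splits into three cases: on the lower variables of $E''$ it is exactly the induction hypothesis; on each fresh $\alpha_i$ it holds by the construction of $\theta''$; and on $\alpha$ itself it reduces to $\theta(\alpha) = t_\alpha\theta = (t_\alpha\theta_1')\theta_1'' = (\theta'(\alpha))\theta''$, using that $\theta_1'$ fixes the fresh variables and agrees with the factorization on the lower ones.

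I expect the principal obstacle to be the bookkeeping of substitution composition around the fresh variables $\alpha_i$: one must argue that these are genuinely untouched by $\theta_1'$ (they appear only on right-hand sides and never as the head of an equation, so $equ2sub$ never selects them), so that extending $\theta_1''$ to record the original solution's choices on them is consistent, and that the piecewise-defined $t_\alpha$ commutes correctly with both substitution and trace application, i.e.\ $(t_\alpha\theta)(P) = (r_i\theta)(P)$ for $P\vDash\trace_i$. The reconstruction identity $\theta(\alpha) = t_\alpha\theta$ is where the full-and-disjoint invariant on the traces is indispensable, and getting that precisely right---rather than the otherwise routine induction---is the real content of the proof.
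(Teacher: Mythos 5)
Your proof is correct and follows essentially the same route as the paper's: induction that peels off the maximal variable $\alpha$, the reconstruction identity $\theta(\alpha) = t_\alpha\theta$ used both to transfer satisfaction to the substituted system and to close the case for $\alpha$, and a case-split verification of the factorization. The only notable difference is that the paper simply takes $\theta'' = \theta$ itself (so the factorization is $\theta = \theta'\theta$), which sidesteps the bookkeeping you do for the fresh $\alpha_i$; your explicit pointwise derivation of the reconstruction identity from the full-and-disjoint trace invariant supplies a detail the paper asserts with only ``since $\theta \vDash E$''.
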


To conclude, an expression (command, function, resp.) is typable, if and only if it is derivable under the constraint rules with a solvable constraint set by our algorithm.
Therefore, our type inference system is sound and complete.
Moreover, as the function call chains are finite, the constraint generation terminates with a finite constraint set, which can be solved by our algorithm in finite steps. 
Thus, our type inference system terminates.

\begin{theorem}
The type inference system is sound, complete and decidable.
\end{theorem}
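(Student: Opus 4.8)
The plan is to obtain this theorem almost entirely by composing the per-phase soundness and completeness bridges already established, and then to handle decidability by a separate termination argument. The inference pipeline factors into four stages — the trace rules of Fig.~\ref{fig:infer-rules}, the constraint generation rules of Fig.~\ref{fig:constraint-rules}, the solving rewrites $\leadsto_d,\leadsto_s,\leadsto_m$ of Fig.~\ref{fig:solving}, and the unification algorithm — and for each adjacent pair we already have a matching sound/complete lemma. The proof is therefore a chain of implications linking typability under the original typing rules (Fig.~\ref{fig:typing-rules}) to successful termination of the algorithm with a solving substitution.

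For \emph{soundness}, I would start from the algorithm's output. Suppose that for a function $F = \lpPROG{B.f}{\overline{x}}{c}{\lpRES}$ the generation phase derives $\vdash_{g} F : \overline{\alpha}\rightarrow\beta \leadsto C$ and the solver returns a substitution $\theta$. Since every solving rewrite preserves solutions in both directions (Lemma~\ref{lem:consolcorrect}) and the final equation system is equi-solvable with the merged constraint set, $\theta \vDash C$. Lemma~\ref{lem:cgrsound}\ref{lem:cgrsound-3} then yields $\vdash_{t} F : \overline{\theta(\alpha)}\rightarrow\theta(\beta)$, and Lemma~\ref{lem:ptrsound}\ref{lem:ptrsound-3} lifts this to a derivation under the original typing rules. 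The expression and command cases are identical, using parts (a) and (b) of the same lemmas. Hence any type the algorithm returns is a genuine type.

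For \emph{completeness}, I would traverse the same chain in reverse. If $F$ is typable, i.e. $\vdash F : \overline{t_p}\rightarrow t_r$, then Lemma~\ref{lem:ptrcomplete}\ref{lem:ptrcomplete-3} gives $\vdash_{t} F : \overline{t_p}\rightarrow t_r$, and Lemma~\ref{lem:cgrcomplete}\ref{lem:cgrcomplete-3} produces a generated set $C$ with a solution $\theta$ such that $(\overline{\alpha}\rightarrow\beta)\theta = \overline{t_p}\rightarrow t_r$. Thus $C$ is solvable; by Lemma~\ref{lem:consolcorrect} the decomposed, saturated and merged set is solvable too, and Lemma~\ref{lem:unifycomplete} guarantees that unification returns a principal substitution $\theta'$ with $\theta = \theta'\theta''$. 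Together with Lemma~\ref{lem:unifysound} this shows the algorithm terminates with a valid, most-general solution exactly when a typing exists, giving the ``iff'' that underlies both soundness and completeness of the inference system.

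For \emph{decidability} I would argue termination of the two algorithmic phases. Constraint generation recurses on the syntactic structure of commands and on callees; since the system is non-recursive every call chain is finite (formalized through the rank function), so generation halts with a finite constraint set. For solving I would exhibit a termination measure per rewrite family: decomposition strictly reduces the structural size of the $\sqcup$, $\sqcap$, merging and projection operators; and unification recurses on the well-founded variable ordering $<$, eliminating one maximal variable per step. I expect the \textbf{main obstacle} to be the termination of saturation: a priori, closing the subtyping relation under transitivity could spawn unboundedly many guarded constraints, so the argument must rely on the finiteness of $\PPS$ (only finitely many satisfiable traces up to DNF-equivalence, since $\PS$ is finite) together with the variable ordering that forbids cycles of the form $\cPair{\trace}{\alpha}\leq\cdots\leq\cPair{\trace'}{\alpha}$, ensuring saturation reaches a fixed point after finitely many additions. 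Once termination is secured, the soundness and completeness chains above give the stated equivalence, so the inference system is sound, complete and decidable.
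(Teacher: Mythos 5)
Your proposal is correct and follows essentially the same route as the paper: the paper's proof is exactly the composition of Lemmas~\ref{lem:ptrsound}/\ref{lem:ptrcomplete}, \ref{lem:cgrsound}/\ref{lem:cgrcomplete}, \ref{lem:consolcorrect}, and \ref{lem:unifysound}/\ref{lem:unifycomplete}, with decidability argued from the finiteness of function call chains and of the constraint-solving steps. Your extra attention to the termination of saturation (via finiteness of $\PS$ and the variable ordering precluding cycles) is a point the paper glosses over rather than a departure from its argument.
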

 \section{Examples}\label{sec:cases}

This section presents two examples. 
The first one is to demonstrate how our inference system is performed and 
the second one is the motivating example in Sec.~\ref{sec:intro}. The prototype implementation of our inference tool and additional results can be found at~\url{http://bit.ly/2f3yVL0}.

\subsection{Illustrative Example}
Consider the following function:
\begin{lstlisting}
f() {
   init r=0 in                 
     test(p) r = info(p);      
     else r = 0;               
     test(q) r = r + info(q);  
     else r = r+ 0;            
   return r            
}
\end{lstlisting}
Suppose that the type variable assigned to $\lpRES$ is $\alpha$ and thus the type for $f$ is $()\rightarrow \alpha$. 
Let us apply the constraint rules in Fig.~\ref{fig:constraint-rules} on each command, yielding the constraint set: 
$
\{ 
(\oplus p, l_p \leq \alpha),
(\ominus p, L \leq \alpha),
(\oplus q, \alpha \sqcup l_q \leq \alpha), 
(\ominus q, \alpha \sqcup L \leq \alpha)
\}
$.
By transforming each constraint into the generalized form, we obtain the following set
\begin{gather*}
\{
(\oplus p,  l_p \leq \oplus p, \alpha),
(\ominus p,  L \leq \ominus p, \alpha),\\
(\oplus q, \alpha \sqcup l_q \leq \oplus q,\alpha), 
(\ominus q, \alpha \sqcup L \leq \ominus q, \alpha)
\}
\end{gather*}
Next, we perform the constraint solving algorithm on the set above as follows.
\begin{description}
\item[Decomposition]
	The last two constraints are not in their simple forms therefore Rule (\ruleTagText{CD-CUP}) is applied, yielding: 
\begin{gather*}
\{
(\oplus p,  l_p \leq \oplus p, \alpha),
(\ominus p, L \leq \ominus p, \alpha),\\
(\oplus q, l_q \leq \oplus q,\alpha), 
(\ominus q, L \leq \ominus q, \alpha)
 \}
\end{gather*}
\item [Saturation]
    Since there is only one type variable $\alpha$ and there are only the lower bounds for $\alpha$, the constraint set remains unchanged after saturation.
    \item[Merging]
    There are several lower bounds for $\alpha$ under different traces. Hence, we need to consider the satisfiability of the combinations of permission traces. There are four possible cases: (1) $\oplus p\oplus q$; (2) $\oplus p\ominus q$; (3) $\ominus p\oplus q$; (4) $\ominus p\ominus q$. For each case, only the constraints that are satisfied with it are considered. For example, since $\ominus p$ and $\ominus q$ are not compatible to $\oplus p \oplus q$, only $(\oplus p,  l_p \leq \oplus p, \alpha)$ and $(\oplus q, l_q \leq \oplus q,\alpha)$ are considered. The merging result is:
\begin{gather*}
\{
(l_{p}\sqcup l_{q} \leq (\oplus p \oplus q,  \alpha) ),
(l_{p} \leq (\oplus p \ominus q,\alpha)),\\
( l_{q} \leq (\ominus p \oplus q,\alpha)), 
(L \leq (\ominus p\ominus q, \alpha)) 
\}
\end{gather*}
\item[Unification]
    By picking the least upper bound, we obtain a feasible type $t$ for $\alpha$:
\begin{equation*}
 t = \{
(\oplus p \oplus q, l_{p} \sqcup l_{q}),
(\oplus p \ominus q,  l_{p} ),
(\ominus p \oplus q,  l_{q} ), 
(\ominus p\ominus q, L ) \}
\end{equation*} 
Focusing on permissions $p$ and $q$, $t$ can be rewritten as
\begin{equation*}
t=\big\{\emptyset \rightarrow \emptyset,
\{p\}\rightarrow \{p\},
\{q\}\rightarrow \{q\},
\{p,q\}\rightarrow \{p,q\}
\big\}
\end{equation*}
Therefore, the type of $f$ is $()\rightarrow t$.
\end{description}

\subsection{Retrospection on the Motivating Example}
Recall the function \texttt{getInfo} in Sec.~\ref{sec:intro}~\footnote{Note that the program syntax is sightly different from the one in Sec.~\ref{sec:language} but can be adapted into the latter easily. The empty string, like the integer literal 0 in the language, is considered $L$. Our prototype implementation does support primitive data types like integer, float, and string.}.
Assume the  type of {\lpRES} is $\alpha$, and thus the type for \texttt{getInfo} is $()\rightarrow\alpha$.
According to the constraint rules, we obtain the following constraint set  
$
\{(\oplus p\oplus q, l_1 \leq \alpha),
(\oplus p\ominus q, L  \leq \alpha),
(\ominus p\oplus q, l_1 \sqcup l_2 \leq \alpha), 
(\ominus p \ominus q, L \leq \alpha)
\}
$, where $l_1$ and $l_2$ are the security levels for $loc$ and $id$ respectively, and $l_1$ and $l_2$  are incomparable.
Following constraint solving steps, we get the inferred type
$
t=\big\{
(\oplus p \oplus q, l_1 ),
(\oplus p \ominus q, L ),
(\ominus p \oplus q, H ), 
(\ominus p\ominus q, L )
\big\}
$.
 \section{Related work} 
There is a large body of work on language-based information flow security. 
We shall discuss here only closely related work.

We have discussed the work by Banerjee and Naumann~\cite{Banerjee:2005ht} (in Sec.~\ref{sec:intro}) from which we base the design of our type system on. Our work differs significantly from their in three major aspects. The fundamental change is the merging type constructor that extends the type system, which are crucial to make our type system more precise than BN. Another difference with {\BN} type system is in the permission model. In Android apps, permissions are not associated with Java classes, so there is no need to take into account stack inspection in the type system.
Permissions are enforced only when doing remote procedure calls
and are only aware of the permissions of its immediate caller (app $B$). Our typing rule (\ruleTagText{T-CALL}) captures this feature. We also provide a decidable type inference algorithm, which was not given in prior relevant work. This is necessary since otherwise it is impractical to annotate each type explicitly.

Value-dependent type information flow system provides a general treatment of security types that may depend on other program variables~\cite{Toby2016csf,Dependent_SIFUM_Type_Systems-AFP,Dependent_SIFUM_Refinement-AFP,Murray:2015jm,Li:2016jb,Zhang:2015bc,Li:2015vw,Lourenco:2015:DIF,Lourenco13,Swamy13,Nanevski11,SwamyCC10,Zheng:2007cd,Tse07}. 
For example, Murray et al~\cite{Toby2016csf,Dependent_SIFUM_Type_Systems-AFP,Dependent_SIFUM_Refinement-AFP,Murray:2015jm} provided a value-dependent timing-sensitive noninterference for concurrent shared memory programs under compositional refinement.
We think, however, the closest to our work is perhaps the dependent type system of Louren\c{c}o and Caires~\cite{Lourenco:2015:DIF}.
In their type system, security labels may be indexed by data structures, which can be used to encode the dependency of the security labels on other values in the system. 
It may be possible to encode our notion of security types as a dependent type in their setting, by treating permission sets explicitly as an additional parameter to
a function or a service, and to specify security levels of the output of the function as a type dependent on that parameter.
Currently it is not yet clear to us how one could give a general construction of the index types in their type system that would correspond to our security types, and
how the merge operator would translate to their dependent type constructors, among other things. Such a correspondence would be of interests since it may help
to clarify some problems we encountered, such as the parameter laundering attack, to see whether they can be reduced to some well-understood
problems in the literature of noninterference type systems. We leave the exact correspondence to the future work.

Recent research on information flow has also been conducted to deal 
with mobile security issues~\cite{Ernst:2014,Chin:2011wa,Nadkarni:2016tf,Lortz:2014ku,Gunadi:2015,Chaudhuri:2009ii,Fuchs2010}, especially on Android platform. SCandroid~\cite{Chaudhuri:2009ii,Fuchs2010} is a tool targeting for automated security certification of Android apps, which focuses on typing communication between applications. Unlike our work, they do not consider implicit flows, and do not take into account access control in their type system. Ernst et al~\cite{Ernst:2014} proposed a verification model, SPARTA, for use in app stores to guarantee that apps are free of malicious information flows. Their approach, however, requires the collaboration between software vendor and app store auditor and the additional modification of Android permission model to fit for their Information Flow Type-checker (IFT); soundness proof is also absent. Our work is done in the context of providing information flow security ``certificate'' for Android applications, following the Proof-Carrying-Code (PCC)
architecture by Necula and Lee~\cite{necula:pcc:1996} and does not require extra changes.
 noninterference
\section{Conclusion and Future Work}\label{sec:conclusion}

We provided a lightweight type system featuring Android permission model for enforcing secure information flow in an
imperative language and proved its soundness with respect to
noninterference. The established system has a novel property that its
security type can express arbitrary relation between permissions and
security level, and therefore may be
applied in a wider range of security policies such as those involving
aggregation problems. We also developed a type inference algorithm by
reducing it to a constraint solving problem and applying an efficient
algorithm to solve the latter.

We next discuss briefly several directions for future work. 

\paragraph{Global variables}
The language presented in this article does not allow global variables. 
In our ongoing work, we have shown that global variables can be integrated seamlessly
in our framework. 
The introduction of global variables presents a potential side channel through which information
could leak, via what we call ``global variable laundering''.  
That is, information may flow through global variables when they are \emph{written} and \emph{read} by apps 
with different permission contexts. 
Our countermeasure is to forbid global variables to have polymorphic types (i.e., each of the types is of the form~$\hat{l}$ given a security level $l$), and we introduce another ``global variable security type'' in typing rules for commands to track the function body effect. 
We are presently working on the type inference on the extended type system. We also plan to extend our type system to include more language features such as heaps, object-oriented extension, and exceptions.

\paragraph{Runtime permission granting}
If the permissions granting for an app are non-deterministic, such as the case when the permission granting is subject to the user's approval, then enforcing a non-monotonic policies becomes problematic. In Android, certain permissions are classified as dangerous permissions, and are further partitioned into permission groups. 
These permissions are not automatically given to apps, but are instead subject to user's approval at runtime when the permissions are exercised. However, an app can only request for permission it has explicitly declared in the manifest file, so to this extent, we can statically determine whether a permission request is definitely not going to be granted (because it is absent from the manifest), and whether it can {\em potentially} be granted. So it is still possible to enforce monotonic policies, under the assumption that all permissions in the manifest file are always granted.  One possible way to accommodate limited non-monotonic policies would be to separate {\em dynamic permissions} (i.e., those that require runtime approval) and {\em static permissions}  (any permissions declared in the manifest file that are not dynamic 
permissions), and to allow policies to be non-monotonic on static permissions only. We leave the detailed treatment of this refinement to future work.

\paragraph{Bytecode typing} Our eventual goal is to translate source code typing into Dalvik bytecode typing, following a similar approach done by Gilles Barthe et al~\cite{Barthe:2005ju,Barthe:2006jh,Barthe:2007fxa} from Java source to JVM bytecode. The key idea that we describe in the paper, i.e., precise characterizations of security of IPC channels that depends on permission checks, can be applied to richer type systems such as such as that used in the Cassandra project~\cite{Lortz:2014ku} or Gunadi et. al. type system~\cite{Gunadi:2015}. We envision our implementation can piggyback on, say, Cassandra system to improve the coverage of typable applications.
 


\clearpage
\appendix

\section{Soundness of Type System}\label{sec:app-soundness}

\begin{definition}\label{app-def:base_type}
A \termEmph{base security type} (or \termEmph{base type}) $t$ is a mapping from $\PPS$ to $\SL$.
We denote with $\Tcal$ the set of base types. 
Given two base types $s,t$, we say $s=t$ \text{iff} $s(P)=t(P)$ $\forall$ $P\in\PPS$ and  
$s\leq_\Tcal t$ \text{iff} $\forall$ $P\in \PPS$, $s(P) \leq_{\SL} t(P)$.
\end{definition}

\begin{lemma}\label{app-lem:tleq_po}
	$\leq_\Tcal$ is a partial order relation on $\Tcal$.
\end{lemma}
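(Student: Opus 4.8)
The plan is to reduce the three defining properties of a partial order on $\Tcal$ to the corresponding properties of $\leq_{\SL}$, exploiting the fact that $\leq_\Tcal$ is defined \emph{pointwise}: by Definition~\ref{app-def:base_type}, $s \leq_\Tcal t$ holds exactly when $s(P) \leq_{\SL} t(P)$ for every $P \in \PPS$. Since $\SL$ is assumed to be a lattice, its underlying relation $\leq_{\SL}$ is already a partial order, so reflexivity, antisymmetry, and transitivity of $\leq_\Tcal$ should each follow by quantifying the corresponding property of $\leq_{\SL}$ over all $P \in \PPS$.

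Concretely, I would verify the three conditions in turn. For \textbf{reflexivity}, fix $t \in \Tcal$; for each $P$ we have $t(P) \leq_{\SL} t(P)$ by reflexivity of $\leq_{\SL}$, hence $t \leq_\Tcal t$. For \textbf{transitivity}, assume $s \leq_\Tcal t$ and $t \leq_\Tcal u$; then for every $P$ we get $s(P) \leq_{\SL} t(P)$ and $t(P) \leq_{\SL} u(P)$, so $s(P) \leq_{\SL} u(P)$ by transitivity of $\leq_{\SL}$, giving $s \leq_\Tcal u$. For \textbf{antisymmetry}, assume $s \leq_\Tcal t$ and $t \leq_\Tcal s$; then for every $P$ both $s(P) \leq_{\SL} t(P)$ and $t(P) \leq_{\SL} s(P)$ hold, so antisymmetry of $\leq_{\SL}$ yields $s(P) = t(P)$ for all $P$.

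The only point requiring slight care is the last step of antisymmetry: concluding $s = t$ as \emph{base types} rather than merely $s(P) = t(P)$ for each $P$. This is immediate, however, because Definition~\ref{app-def:base_type} defines equality of base types itself pointwise ($s = t$ iff $s(P) = t(P)$ for all $P \in \PPS$), so the pointwise equalities assemble directly into $s = t$. I do not anticipate any genuine obstacle here; the entire argument is a routine lifting of a pointwise-defined order, and the main thing to keep straight is simply that both the order $\leq_\Tcal$ and the equality on $\Tcal$ are defined coordinatewise over $\PPS$, so that the lattice axioms of $\SL$ transfer verbatim.
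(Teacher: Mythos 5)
Your proof is correct and follows essentially the same route as the paper's: both verify reflexivity, antisymmetry, and transitivity of $\leq_\Tcal$ pointwise by reducing each to the corresponding property of $\leq_{\SL}$, using the fact that equality of base types is also defined pointwise. No gaps.
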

\begin{proof}
\begin{ProofEnumDesc}[style=standard]
	\item[\textbf{Reflectivity}] $\forall P, t(P)\leq_{\SL} t(P)$ therefore $t\leq_\Tcal t$.
	\item[\textbf{Antisymmetry}] $t\leq_\Tcal s\wedge s\leq_\Tcal t\iff \forall P, t(P)\leq_{\SL} s(P) 
\wedge s(P) \leq_{\SL} t(P)$ therefore $\forall P, t(P)=s(P)$, which means that $s=t$.
	\item[\textbf{Transitivity}] if $r\leq_\Tcal s$ and $s\leq_\Tcal t$, $\forall P, r(P) \leq_{\SL} s(P)$ and $s(P)\leq_{\SL} t(P)$ therefore $r(P) \leq_{\SL} t(P)$, which means that $r \leq_\Tcal t$.
\end{ProofEnumDesc}
\end{proof}

\begin{definition}\label{app-def:type-cup-cap}
For $s, t\in\Tcal$, $s\sqcup t$ and $s\sqcap t$ are defined as
\begin{align*}
(s\sqcup t)(P) =  s(P)\sqcup t(P), \forall P \in \PPS\\
(s\sqcap t)(P) =  s(P)\sqcap t(P), \forall P \in \PPS
\end{align*}
\end{definition}

\begin{lemma}\label{app-lem:bound}
Given two base types $s$ and $t$, it follows that
\begin{enumerate}[label*=(\alph*)]
\item $s \leq_\Tcal s\sqcup t$ and $t\leq_\Tcal s\sqcup t$.
\item $s\sqcap t \leq_\Tcal s$ and $s\sqcap t \leq_\Tcal t$.
\end{enumerate}
\end{lemma}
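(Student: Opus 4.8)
The plan is to reduce each of the four inequalities directly to the pointwise definitions, exploiting the fact that $\SL$ is assumed to be a lattice. Recall from Definition~\ref{app-def:base_type} that $s \leq_\Tcal t$ holds exactly when $s(P) \leq_{\SL} t(P)$ for every $P \in \PPS$, and from Definition~\ref{app-def:type-cup-cap} that $(s \sqcup t)(P) = s(P) \sqcup t(P)$ and $(s \sqcap t)(P) = s(P) \sqcap t(P)$, again for every $P$. So every claim about base types is really just a claim about $\SL$, uniformly quantified over all permission sets.

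For part (a), I would fix an arbitrary $P \in \PPS$ and observe that, since $\SL$ is a lattice, $s(P) \sqcup t(P)$ is by definition the least upper bound of $s(P)$ and $t(P)$ in $\SL$; in particular $s(P) \leq_{\SL} s(P) \sqcup t(P)$. By the definition of $\sqcup$ on base types this is precisely $s(P) \leq_{\SL} (s \sqcup t)(P)$. Since $P$ was arbitrary, the definition of $\leq_\Tcal$ yields $s \leq_\Tcal s \sqcup t$. The symmetric argument starting from $t(P) \leq_{\SL} s(P) \sqcup t(P)$ gives $t \leq_\Tcal s \sqcup t$.

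Part (b) is the order-dual of part (a): for each $P$, the greatest-lower-bound property of $\sqcap$ in $\SL$ gives $s(P) \sqcap t(P) \leq_{\SL} s(P)$ and $s(P) \sqcap t(P) \leq_{\SL} t(P)$, which unfold to $(s \sqcap t)(P) \leq_{\SL} s(P)$ and $(s \sqcap t)(P) \leq_{\SL} t(P)$ respectively, whence $s \sqcap t \leq_\Tcal s$ and $s \sqcap t \leq_\Tcal t$.

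I do not anticipate any genuine obstacle here: the result is an immediate lifting of the lattice bounds of $\SL$ through the pointwise definitions, and the only thing to be careful about is keeping the two orderings $\leq_{\SL}$ and $\leq_\Tcal$ notationally distinct and invoking the upper/lower-bound property of $\SL$ (rather than any not-yet-established property of $\Tcal$). This lemma is in fact exactly the content needed to support Lemma~\ref{app-lem:tleq_po} and the subsequent claim that $(\Tcal, \leq_\Tcal)$ is a lattice.
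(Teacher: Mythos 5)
Your proof is correct and matches the paper's approach exactly: the paper dismisses this lemma as ``Immediately from Definition~\ref{app-def:base_type}'', and your argument is simply the pointwise unfolding of that observation, lifting the upper/lower-bound properties of $\sqcup$ and $\sqcap$ in $\SL$ through Definition~\ref{app-def:type-cup-cap} for each fixed $P \in \PPS$. No gaps; your version is just more explicit than the paper's one-line proof.
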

 \begin{proof}
Immediately from Definition~\ref{app-def:base_type}.
 \end{proof}

\begin{lemma}\label{app-lem:lattice}
$(\Tcal, \leq_\Tcal)$ forms a lattice.
\end{lemma}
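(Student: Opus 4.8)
The plan is to exhibit explicit binary joins and meets and verify their universal properties, leveraging the facts that $\SL$ is itself a lattice and that both $\leq_\Tcal$ and the operations $\sqcup, \sqcap$ on $\Tcal$ are defined pointwise over $\PPS$. Since Lemma~\ref{app-lem:tleq_po} already establishes that $\leq_\Tcal$ is a partial order, it remains only to show that every pair of base types $s, t \in \Tcal$ admits a least upper bound and a greatest lower bound, for which the natural candidates are $s \sqcup t$ and $s \sqcap t$ from Definition~\ref{app-def:type-cup-cap}.

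First I would argue that $s \sqcup t$ is the join of $s$ and $t$. By Lemma~\ref{app-lem:bound}(a), $s \sqcup t$ is already an upper bound of both $s$ and $t$. To see that it is the least such, let $u \in \Tcal$ be any upper bound, so $s \leq_\Tcal u$ and $t \leq_\Tcal u$. Unfolding $\leq_\Tcal$ (Definition~\ref{app-def:base_type}) yields $s(P) \leq_{\SL} u(P)$ and $t(P) \leq_{\SL} u(P)$ for every $P \in \PPS$. Because $\SL$ is a lattice, $s(P) \sqcup t(P)$ is the \emph{least} upper bound of $s(P)$ and $t(P)$ in $\SL$, whence $s(P) \sqcup t(P) \leq_{\SL} u(P)$. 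By the pointwise definition of $\sqcup$ we then have $(s \sqcup t)(P) = s(P) \sqcup t(P) \leq_{\SL} u(P)$ for all $P$, i.e. $s \sqcup t \leq_\Tcal u$. Hence $s \sqcup t$ is indeed the least upper bound.

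The meet case is entirely dual: using Lemma~\ref{app-lem:bound}(b) for the lower-bound direction and the fact that $s(P) \sqcap t(P)$ is the greatest lower bound of $s(P)$ and $t(P)$ in $\SL$, I would take an arbitrary lower bound $w$ with $w \leq_\Tcal s$ and $w \leq_\Tcal t$ and conclude $w(P) \leq_{\SL} s(P) \sqcap t(P) = (s \sqcap t)(P)$ for every $P$, so $w \leq_\Tcal s \sqcap t$. This shows $s \sqcap t$ is the greatest lower bound.

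I do not anticipate a genuine obstacle here; the argument is a routine verification that the lattice structure of $\SL$ lifts pointwise to $\Tcal$. The only point requiring care is to invoke the lattice property of $\SL$ in its full strength --- that $\sqcup$ and $\sqcap$ in $\SL$ denote the \emph{actual} join and meet (least upper and greatest lower bounds), not merely some bounds --- since it is exactly the optimality of the bounds in $\SL$ that transfers to optimality in $\Tcal$ through the pointwise order.
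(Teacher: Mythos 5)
Your proposal is correct and follows essentially the same route as the paper's own proof: it uses Lemma~\ref{app-lem:bound} to get that $s\sqcup t$ and $s\sqcap t$ are bounds, then transfers the least-upper-bound/greatest-lower-bound property pointwise from $\SL$ to $\Tcal$. Your write-up is in fact slightly more careful than the paper's, since it explicitly notes that the optimality of the bounds in $\SL$ is the key fact being lifted.
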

 \begin{proof}
 $\forall s,t\in\PPS$, according to Lemma \ref{app-lem:bound}, $s \sqcup t$ is their upper bound. Suppose $r$ is another upper bound of them, i.e., $s\leq_\Tcal r$ and $t\leq_\Tcal r$, which means $\forall P\in\PPS, (s\sqcup t)(P)=s(P)\sqcup t(P)\leq_{\SL} r(P)$, so $s\sqcup t\leq r$. Therefore $s\sqcup t$ is the least upper bound of $\left\{s,t\right\}$. Similarly, $s\sqcap t$ is $s$ and $t$'s greatest lower bound. This makes $(\Tcal, \leq_\Tcal)$ a lattice.
 \end{proof}

\begin{definition}\label{app-def:fun-type}
   A \termEmph{function type} has the form $\overline{t} \rightarrow t$, where $\overline{t}=(t_1,\ldots,t_m)$, $m\geq 0$ and $t$ are base types.
   The types $\overline t$ are the types for the arguments of the function and the base type $t$ is the return type of the function.
 \end{definition}
 
 \begin{definition}\label{app-def:updown}
Given a permission $p$, the \termEmph{promotion} and \termEmph{demotion} of a base type $t$
with respect to $p$ are defined as follows:
\begin{align*}
(t\uparrow_{p}) (P) = t(P\cup \{p\}), \forall P\in \PPS
\tag{promotion}
\\
(t\downarrow_{p}) (P) = t(P\setminus\{p\}), \forall P\in \PPS
\tag{demotion}
\end{align*}
The \termEmph{promotion} and \termEmph{demotion} of a function type $\overline{t}\rightarrow t$, where $\overline{t} = (t_1,\dots, t_m)$, are respectively 
\begin{align*}
(\overline{t}\rightarrow t)\uparrow_{p} = \overline{t\uparrow_{p}}\rightarrow t\uparrow_{p},~
where~\overline{t\uparrow_{p}}=(t_1\uparrow_{p},\ldots, t_m\uparrow_{p}),
\\
(\overline{t}\rightarrow t)\downarrow_{p} = \overline{t\downarrow_{p}}\rightarrow t\downarrow_{p},~
where~\overline{t\downarrow_{p}}=(t_1\downarrow_{p},\ldots, t_m\downarrow_{p}).
\end{align*}
\end{definition}

\begin{lemma}\label{app-lem:promote-demote}
Given $P\in\PPS$ and $p\in\PS$,
\begin{enumerate}[label*={(\alph*)}]
  \item\label{app-lem:promote-demote-1} If $p \in P$, then $(t\uparrow_{p})(P) = t(P)$.
	\item\label{app-lem:promote-demote-2} If $p \notin P$, then $(t\downarrow_{p})(P) = t(P)$.
\end{enumerate}
\end{lemma}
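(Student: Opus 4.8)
The plan is to prove each part by directly unfolding Definition~\ref{app-def:updown} and then invoking an elementary set-theoretic identity about the interaction between $p$ and $P$. The whole content of the lemma reduces to the two observations that $p \in P$ is equivalent to $P \cup \{p\} = P$, and that $p \notin P$ is equivalent to $P \setminus \{p\} = P$; once these are in hand the argument is a one-line substitution in each case.

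For part~\ref{app-lem:promote-demote-1}, I would start from the defining equation of promotion, namely $(t\uparrow_{p})(P) = t(P \cup \{p\})$. Under the hypothesis $p \in P$ we have $P \cup \{p\} = P$, so the right-hand side is simply $t(P)$, giving $(t\uparrow_{p})(P) = t(P)$ as required. For part~\ref{app-lem:promote-demote-2}, I would symmetrically start from $(t\downarrow_{p})(P) = t(P \setminus \{p\})$; under the hypothesis $p \notin P$ we have $P \setminus \{p\} = P$, and hence $(t\downarrow_{p})(P) = t(P)$.

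There is no genuine obstacle here: the statement is a direct corollary of the definitions of promotion and demotion, with the only nontrivial ingredient being the trivial set identities above. The lemma records, in effect, that promoting along $p$ leaves a permission set that \emph{already contains} $p$ unaffected, and dually that demoting along $p$ leaves a permission set that \emph{already excludes} $p$ unaffected. This behaviour is exactly what is needed later when relating promotion/demotion to the projection of types on a fixed permission set, so the value of the lemma lies not in its difficulty but in isolating this fixed-point property for reuse in the soundness argument.
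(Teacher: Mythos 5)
Your proof is correct and follows exactly the same route as the paper's: unfold Definition~\ref{app-def:updown} and apply the set identities $P \cup \{p\} = P$ (when $p \in P$) and $P \setminus \{p\} = P$ (when $p \notin P$). Nothing to add.
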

 \begin{proof}
If $p\in P$, $P\cup\{p\}=P$, therefore $(t\uparrow_{p}) (P) = t(P\cup \{p\})=t(P)$; if $p\notin P$, $P\setminus\{p\}=P$, therefore $(t\downarrow_{p}) (P) = t(P\setminus\{p\})=t(P)$.
 \end{proof}

\begin{lemma}\label{app-lem:pd-order}
If $s\leq t$, then $s\uparrow_{p} \leq t\uparrow_{p}$ and $s\downarrow_{p} \leq t\downarrow_{p}$.
\end{lemma}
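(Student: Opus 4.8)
The plan is to unfold the pointwise definition of $\leq_\Tcal$ and observe that both promotion and demotion act merely by reindexing the argument of the base type, so monotonicity transfers immediately. First I would assume $s \leq t$, which by Definition~\ref{app-def:base_type} means $s(Q) \leq_{\SL} t(Q)$ for every $Q \in \PPS$. The two goals are the pointwise inequalities $(s\uparrow_{p})(P) \leq_{\SL} (t\uparrow_{p})(P)$ and $(s\downarrow_{p})(P) \leq_{\SL} (t\downarrow_{p})(P)$ for all $P \in \PPS$, since these are exactly what $s\uparrow_{p} \leq t\uparrow_{p}$ and $s\downarrow_{p} \leq t\downarrow_{p}$ abbreviate.

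For the promotion case, I would fix an arbitrary $P \in \PPS$ and use Definition~\ref{app-def:updown} to rewrite the two sides as $s(P \cup \{p\})$ and $t(P \cup \{p\})$. Since $P \cup \{p\}$ is itself an element of $\PPS$, instantiating the hypothesis at $Q := P \cup \{p\}$ yields precisely $s(P \cup \{p\}) \leq_{\SL} t(P \cup \{p\})$, and as $P$ was arbitrary this gives $s\uparrow_{p} \leq t\uparrow_{p}$. The demotion case is entirely symmetric: rewrite the two sides as $s(P \setminus \{p\})$ and $t(P \setminus \{p\})$ and instantiate the hypothesis at $Q := P \setminus \{p\}$, which again lies in $\PPS$.

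There is no genuine obstacle here. The only fact being exploited is that promotion and demotion relocate evaluation to another permission set that still lies in $\PPS$, so the universally quantified pointwise ordering carries over verbatim. In effect the lemma is just the observation that precomposing a dominated family of lattice elements with a fixed reindexing function on $\PPS$ preserves the domination.
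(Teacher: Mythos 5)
Your proof is correct and follows essentially the same route as the paper's: both unfold $\leq_\Tcal$ pointwise and instantiate the hypothesis at $P\cup\{p\}$ (resp. $P\setminus\{p\}$) using Definitions~\ref{app-def:base_type} and~\ref{app-def:updown}. Your write-up is simply a more explicit version of the paper's one-line argument.
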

 \begin{proof}
For $P\in\PS$, since $s\leq t$, $s(P\cup\{p\})\leq t(P\cup\{p\})$ and $s(P\setminus\{p\})\leq t(P\setminus\{p\})$, according to Definition~\ref{def:base_type} and Definition~\ref{def:updown}, the conclusion follows.
 \end{proof}
 
\begin{definition}\label{app-def:projection}
Given $t\in\Tcal$ and $P\in\PPS$, the \termEmph{projection} of $t$ on $P$ is a security type  $\app{t}{P}$ defined as follows:
\begin{equation*}
\app{t}{P}(Q) = t(P),\; \forall Q\in \PPS.
\end{equation*}
This definiton of projection is extended to the projection of a list of types
on $P$ as follows:
$$
\app{(t_1,\dots,t_n)}{P} = (\app{t_1}{P}, \dots, \app{t_n}{P}).
$$
\end{definition}
 
 \begin{definition}\label{app-def:merge}
Given a permission $p$ and types $t_1$ and $t_2$, we define the {\termEmph{merging}} of 
$t_1$ and $t_2$ along $p$ as $\omerge {p} {t_1} {t_2}$, which is: 
\begin{equation*}
(\omerge p {t_1} {t_2}) (P) = 
\begin{cases}
t_1(P) & p\in P \\
t_2(P) & p\not\in P\\
\end{cases}
\quad\forall P\in{\PPS}
\end{equation*}
\end{definition}
 
 \begin{definition}\label{app-def:tenv-pd}
For all $x \in dom(\Gamma)$, the environment \termEmph{promotion} and \termEmph{demotion} are respectively
$(\Gamma\uparrow_{p}) (x) = \Gamma(x)\uparrow_{p}$ and 
$(\Gamma\downarrow_{p}) (x) = \Gamma(x)\downarrow_{p}$.	
The projection of $\Gamma$ on $P \in \PPS$ is defined as
$(\app{\Gamma}{P})(x) = \app{\Gamma(x)}{P}$ for each $x \in dom(\Gamma).$
\end{definition}
 
 \begin{definition}\label{app-def:envequ}
Given two evaluation environments $\eta, \eta'$, 
a typing environment $\Gamma$, a security level $l_O \in \SL$ of the observer,  
we define an {\termEmph{indistinguishability relation}} as follows:
\begin{equation*}
\eta =_{\Gamma}^{l_{O}} \eta' \text{ iff. } 
\forall x\in dom(\Gamma) .\; (\Gamma(x) \leq \hat{l}_{O} \Rightarrow \eta(x) = \eta'(x))
\end{equation*}
where 
$\eta(x) = \eta'(x)$ holds iff both sides of the equation are defined and their values are identical,
or both sides are undefined.
\end{definition}
 
\begin{lemma}\label{app-lem:ni-eq}
$=_{\Gamma}^{l_{O}}$ is an equivalence relation on $EEnv$.
\end{lemma}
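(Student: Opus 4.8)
The plan is to verify directly the three defining properties of an equivalence relation — reflexivity, symmetry, and transitivity — for the indistinguishability relation $=_{\Gamma}^{l_O}$. The crucial observation is that this relation is built entirely on top of the pointwise equality $\eta(x)=\eta'(x)$, which by Definition~\ref{app-def:envequ} holds exactly when $\eta(x)$ and $\eta'(x)$ are both defined and equal, or are both undefined. This lifted equality is itself an equivalence relation on the domain of values extended by an ``undefined'' element, so each property of $=_{\Gamma}^{l_O}$ will follow by transporting the corresponding property of the lifted equality through the universally quantified guard $\Gamma(x)\leq\hat{l}_O$, which is identical across all environments and merely selects which variables must agree.

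For reflexivity, I would fix an arbitrary $\eta$ and any $x\in dom(\Gamma)$ with $\Gamma(x)\leq\hat{l}_O$; then $\eta(x)=\eta(x)$ holds trivially, since either $x\in dom(\eta)$ and both sides denote the same value, or $x\notin dom(\eta)$ and both sides are undefined. For symmetry, I would assume $\eta=_{\Gamma}^{l_O}\eta'$ and fix $x$ satisfying the guard; the hypothesis gives $\eta(x)=\eta'(x)$, and symmetry of the lifted equality yields $\eta'(x)=\eta(x)$, so $\eta'=_{\Gamma}^{l_O}\eta$. For transitivity, assuming $\eta=_{\Gamma}^{l_O}\eta'$ and $\eta'=_{\Gamma}^{l_O}\eta''$, for each $x$ passing the guard I would chain $\eta(x)=\eta'(x)$ and $\eta'(x)=\eta''(x)$ to obtain $\eta(x)=\eta''(x)$, whence $\eta=_{\Gamma}^{l_O}\eta''$.

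There is no genuine obstacle here; the single point requiring care is the partiality of the environments, namely that $\eta(x)=\eta'(x)$ must be read as equality in the value domain augmented with an undefined element rather than as ordinary equality of integers. Once one records that equality on this augmented domain is reflexive, symmetric, and transitive, all three cases above are immediate, and the guard plays no role beyond restricting attention to the variables whose type is below $\hat{l}_O$.
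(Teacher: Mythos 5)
Your proof is correct and follows essentially the same route as the paper's: both verify reflexivity, symmetry, and transitivity directly from Definition~\ref{app-def:envequ}, with the only delicate point being the partiality of environments, which the paper handles by an explicit case split on whether $\eta_1(x)$ is defined and which you handle equivalently by observing that equality on the value domain augmented with an undefined element is itself an equivalence relation.
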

 \begin{proof}
\begin{ProofEnumDesc}[style=standard]
	\item [Reflexivity] Obviously $\eta=_{\Gamma}^{l_O}\eta$.
	\item [Symmetry] Since $\forall x\in dom(\Gamma). (\Gamma(x)\leq\hat{l}_O\Rightarrow\eta'=_{\Gamma}^{l_O}\eta)$.
	\item [Transitivity] If $\eta_1=_{\Gamma}^{l_O}\eta_2$ and $\eta_2=_{\Gamma}^{l_O}\eta_3$, for a given $x\in dom(\Gamma)$, when $\Gamma(x)\leq\hat{l_O}$, we have $\eta_1(x)=\eta_2(x)$ and $\eta_2(x)=\eta_3(x)$. 
	\begin{enumerate}[label={(\arabic*)}]
	\item If $\eta_1(x)\neq\bot$, then $\eta_2(x)\neq\bot$ by the first equation, which in return requires $\eta_3(x)\neq\bot$ by the second equation; by transitivity $\eta_1(x)=\eta_3(x)$($\neq\bot$).
	\item If $\eta_1(x)=\bot$, the first equation requires that $\eta_2(x)=\bot$, which makes $\eta_3(x)=\bot$, therefore both $\eta_1(x)=\eta_3(x)$($=\bot$).
	\end{enumerate}
	Therefore $\eta_1(x)=\eta_3(x)$.
\end{ProofEnumDesc}
 \end{proof}
 
 \begin{lemma}\label{app-lem:proj}
If $\eta =_{\Gamma}^{l_{O}} \eta'$
then for each $P\in\PPS$,
$\eta =_{\app{\Gamma}{P}}^{l_O} \eta'.$
\end{lemma}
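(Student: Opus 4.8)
The plan is to unfold the two definitions involved---the indistinguishability relation (Definition~\ref{app-def:envequ}) and the projection of a typing environment (Definition~\ref{app-def:tenv-pd}, built on Definition~\ref{app-def:projection})---and to reduce the goal to a purely pointwise statement about security levels. First I would fix an arbitrary $P \in \PPS$ and assume $\eta =_{\Gamma}^{l_{O}} \eta'$. Since $(\app{\Gamma}{P})(x) = \app{\Gamma(x)}{P}$ for every $x \in dom(\Gamma)$, the projected environment $\app{\Gamma}{P}$ has the same domain as $\Gamma$, so it suffices to show, for each $x \in dom(\Gamma)$, that $(\app{\Gamma}{P})(x) \leq_{\Tcal} \hat{l}_{O}$ implies $\eta(x) = \eta'(x)$.

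Next I would evaluate the side condition. By Definition~\ref{app-def:projection}, $\app{\Gamma(x)}{P}$ is the constant base type whose value at every permission set is $\Gamma(x)(P)$, so $(\app{\Gamma}{P})(x) \leq_{\Tcal} \hat{l}_{O}$ holds exactly when $\Gamma(x)(P) \leq_{\SL} l_{O}$. On the other side, the hypothesis $\eta =_{\Gamma}^{l_{O}} \eta'$ supplies $\eta(x)=\eta'(x)$ precisely for those $x$ with $\Gamma(x) \leq_{\Tcal} \hat{l}_{O}$, that is, with $\Gamma(x)(Q) \leq_{\SL} l_{O}$ for \emph{every} $Q \in \PPS$ (equivalently $\bigsqcup_{Q\in\PPS}\Gamma(x)(Q) \leq_{\SL} l_{O}$). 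The natural tool connecting the two is the monotonicity of projection with respect to $\hat{l}_{O}$, which is immediate from Definition~\ref{app-def:projection}: if $s \leq_{\Tcal} t$ then $\app{s}{P} \leq_{\Tcal} \app{t}{P}$, and in particular $\Gamma(x) \leq_{\Tcal} \hat{l}_{O}$ implies $\app{\Gamma(x)}{P} \leq_{\Tcal} \hat{l}_{O}$.

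The step I expect to be the main obstacle is the transfer of the low-variable condition across the projection. The set of variables deemed low by $\app{\Gamma}{P}$ is cut out by the single constraint $\Gamma(x)(P) \leq_{\SL} l_{O}$, whereas the hypothesis controls only those variables that are low \emph{uniformly} over all permission sets; the difference between the two is exactly a universal quantifier over $\PPS$. Monotonicity of projection gives the implication $\Gamma(x)\leq_{\Tcal}\hat{l}_{O}\Rightarrow \app{\Gamma(x)}{P}\leq_{\Tcal}\hat{l}_{O}$, so the containment between the two low-variable sets runs from the uniform set into the $P$-local one; the delicate point is therefore to confirm that this containment is oriented so that the agreement guaranteed by the hypothesis actually covers every variable the conclusion requires. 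This bookkeeping over permission sets---rather than any lattice-theoretic difficulty---is where I expect the real work, and any hidden assumption about the orientation of the projection, to reside.
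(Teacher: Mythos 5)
Your setup is right, and you have put your finger on exactly the right place, but the proposal stops short of a proof precisely where the argument cannot be completed by the route you sketch. Unfolding the definitions as you do, the conclusion requires $\eta(x)=\eta'(x)$ for every $x$ with $\app{\Gamma(x)}{P} \leq_{\Tcal} \hat{l}_{O}$, i.e.\ with $\Gamma(x)(P) \leq_{\SL} l_{O}$, while the hypothesis supplies $\eta(x)=\eta'(x)$ only for those $x$ with $\Gamma(x)\leq_{\Tcal}\hat{l}_{O}$, i.e.\ with $\Gamma(x)(Q)\leq_{\SL} l_{O}$ for \emph{every} $Q\in\PPS$. The monotonicity fact you invoke, $\Gamma(x)\leq_{\Tcal}\hat{l}_{O} \Rightarrow \app{\Gamma(x)}{P}\leq_{\Tcal}\hat{l}_{O}$, therefore yields the containment $\{x \mid \Gamma(x)\leq\hat{l}_{O}\} \subseteq \{x \mid \app{\Gamma(x)}{P}\leq\hat{l}_{O}\}$, which is the \emph{wrong} orientation: the conclusion's set of low variables is the larger one, and the hypothesis gives you agreement only on the smaller one. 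To close the gap you would need the converse implication $\Gamma(x)(P)\leq_{\SL} l_{O} \Rightarrow \forall Q.\,\Gamma(x)(Q)\leq_{\SL} l_{O}$, which fails for any genuinely permission-dependent type (take $t$ with $t(\emptyset)=L$ and $t(\{p\})=H$, with $l_{O}=L$ and $P=\emptyset$: the hypothesis then says nothing about $x$, yet the conclusion demands $\eta(x)=\eta'(x)$). So the ``bookkeeping'' you defer is not mere bookkeeping; it is the step at which this strategy breaks down, and no lattice-theoretic lemma in the paper supplies the missing direction.

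For what it is worth, the paper's own proof of this lemma is a two-line unfolding of the definitions (``$\app{\Gamma}{P}(x)=\app{\Gamma(x)}{P}=t(P)$, from the definition \ldots\ the conclusion holds'') and does not confront the orientation issue either, so you have located a genuine subtlety rather than missed an available trick. But as submitted your proposal is a plan whose central step is left unresolved, and that step does not go through in the direction required, so it cannot be accepted as a proof of the statement.
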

 \begin{proof}
$\forall x\in dom(\Gamma)$, we need to prove that when $\app{\Gamma}{P}(x)\leq l_O$ then $\eta(x)=\eta'(x)$. But $\app{\Gamma}{P}(x)=\app{\Gamma(x)}{P}=t(P)$, from the definition of $\eta(x)=\eta'(x)$, the conclusion holds.
 \end{proof}

 \begin{definition}\label{app-def:cmd-welltype}
Let $\mathcal{S}$ be a system, and let $\FD$, $\FT$ and $\Theta$ be its function declaration table, function type table, and permission assignments. We say $\mathcal{S}$ is \termEmph{well-typed} iff for every function $A.f$, $\vdash \FD(A.f) : \FT(A.f)$ is derivable. 
\end{definition}

\begin{lemma}\label{app-lem:up}
If $p \in P$, then $\eta =_{\app{\Gamma}{P}}^{l_{O}} \eta'$ iff $\eta =_{\app{\Gamma\uparrow_{p}}{P}}^{l_{O}} \eta'$.
\end{lemma}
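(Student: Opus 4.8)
The plan is to reduce the biconditional to a single observation: under the hypothesis $p \in P$, the two typing environments $\app{\Gamma}{P}$ and $\app{\Gamma\uparrow_{p}}{P}$ are literally the same function on variables. Since the indistinguishability relation of Definition~\ref{app-def:envequ} depends on a typing environment only through the types it assigns, once I establish this equality the two relations $=_{\app{\Gamma}{P}}^{l_{O}}$ and $=_{\app{\Gamma\uparrow_{p}}{P}}^{l_{O}}$ are identical, and the ``iff'' holds trivially (indeed as an equality of relations, not merely a mutual implication).

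First I would fix an arbitrary $x \in dom(\Gamma)$ and compute both sides pointwise, unfolding Definitions~\ref{app-def:tenv-pd}, \ref{app-def:projection}, and~\ref{app-def:updown}. On the left, $(\app{\Gamma}{P})(x) = \app{\Gamma(x)}{P}$, which by Definition~\ref{app-def:projection} is the constant base type sending every $Q \in \PPS$ to $\Gamma(x)(P)$. On the right, $(\app{\Gamma\uparrow_{p}}{P})(x) = \app{(\Gamma(x)\uparrow_{p})}{P}$, which is the constant base type sending every $Q$ to $(\Gamma(x)\uparrow_{p})(P) = \Gamma(x)(P\cup\{p\})$.

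The key step is then to collapse $\Gamma(x)(P\cup\{p\})$ to $\Gamma(x)(P)$. This is exactly Lemma~\ref{app-lem:promote-demote}\ref{app-lem:promote-demote-1}: since $p \in P$ we have $P \cup \{p\} = P$, so $(\Gamma(x)\uparrow_{p})(P) = \Gamma(x)(P)$. Hence for every $Q$ the two constant base types agree, giving $\app{\Gamma(x)}{P} = \app{(\Gamma(x)\uparrow_{p})}{P}$; as $x$ was arbitrary, $\app{\Gamma}{P} = \app{\Gamma\uparrow_{p}}{P}$. Substituting one environment for the other in Definition~\ref{app-def:envequ} yields the claimed biconditional immediately.

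There is no real obstacle here beyond bookkeeping with the definitions; the only place any content enters is the use of $p \in P$ through Lemma~\ref{app-lem:promote-demote}\ref{app-lem:promote-demote-1}, and this same structure will be reused for the dual statement (Lemma~\ref{app-def:projection}'s demotion analogue) with $p \notin P$.
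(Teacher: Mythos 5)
Your proposal is correct and follows essentially the same route as the paper's proof: both establish the pointwise equality $\app{\Gamma}{P} = \app{\Gamma\uparrow_{p}}{P}$ by unfolding the definitions of projection and promotion and using $P\cup\{p\}=P$, so the two indistinguishability relations coincide. The only cosmetic difference is that you route the key step through Lemma~\ref{app-lem:promote-demote}\ref{app-lem:promote-demote-1} whereas the paper inlines the same observation directly.
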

\begin{proof}
  We first note that $dom(\app{\Gamma}{P}) = dom(\app{\Gamma\uparrow_{p}}{P})$ since both
  promotion and projection do not change the domain of a typing environment.
  We then show below that $\app{\Gamma}{P} = \app{\Gamma\uparrow_p}{P}$, from which the
  lemma follows immediately.
  Given any $x \in dom(\app{\Gamma\uparrow_{p}}{P})$,
  for any $Q \in \PPS$, we have
  $$
  \begin{array}{ll}
    \app{\Gamma\uparrow_p}{P}(x)(Q) & \\
    = (\app{\Gamma\uparrow_p(x)}{P})(Q) & \mbox{ by Def.\ref{app-def:tenv-pd} }\\
    = (\Gamma\uparrow_p(x))(P) & \mbox{ by Def. \ref{app-def:projection} }\\
    = \Gamma(x)(P \cup \{p\}) & \mbox { by Def. \ref{app-def:updown} } \\
    = \Gamma(x)(P) & \mbox{ by assumption $p \in P$ }\\
    = (\app{\Gamma(x)}{P})(Q) & \mbox{ by Def. \ref{app-def:projection} }\\
    = \app{\Gamma}{P}(x)(Q) & \mbox{ by Def. \ref{app-def:tenv-pd} }
  \end{array}
  $$
  Since this holds for arbitrary $Q$,it follows that $\app{\Gamma\uparrow_p}{P} = \app{\Gamma}{P}.$	
\end{proof}

\begin{lemma}\label{app-lem:down}
If $p \notin P$, then $\eta =_{\app{\Gamma}{P}}^{l_{O}} \eta' \Longleftrightarrow \eta =_{\app{\Gamma\downarrow_{p}}{P}}^{l_{O}} \eta'$.
\end{lemma}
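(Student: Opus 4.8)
The plan is to mirror the proof of Lemma~\ref{app-lem:up} verbatim, dualizing promotion to demotion. The core observation is that projection onto $P$ and demotion along $p$ both leave the domain of a typing environment unchanged, so $dom(\app{\Gamma}{P}) = dom(\app{\Gamma\downarrow_{p}}{P})$; hence it suffices to prove the stronger equality $\app{\Gamma}{P} = \app{\Gamma\downarrow_{p}}{P}$, after which the biconditional follows immediately, since the indistinguishability relation $=^{l_O}_{(\cdot)}$ depends only on its typing-environment subscript.

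To establish $\app{\Gamma}{P} = \app{\Gamma\downarrow_{p}}{P}$, I would fix an arbitrary $x \in dom(\Gamma)$ and $Q \in \PPS$ and compute the chain of equalities
$$
\app{\Gamma\downarrow_p}{P}(x)(Q)
= (\app{\Gamma\downarrow_p(x)}{P})(Q)
= (\Gamma\downarrow_p(x))(P)
= \Gamma(x)(P \setminus \{p\})
= \Gamma(x)(P),
$$
appealing to Definition~\ref{app-def:tenv-pd}, Definition~\ref{app-def:projection}, Definition~\ref{app-def:updown}, and the hypothesis $p \notin P$ respectively. The final equality is the crucial point of the dualization: where the promotion proof used $P \cup \{p\} = P$ because $p \in P$, here one uses $P \setminus \{p\} = P$ because $p \notin P$. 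Folding the projection and environment-projection definitions back in gives $\Gamma(x)(P) = (\app{\Gamma(x)}{P})(Q) = \app{\Gamma}{P}(x)(Q)$. Since $Q$ was arbitrary, $\app{\Gamma\downarrow_p}{P}(x) = \app{\Gamma}{P}(x)$ for every $x \in dom(\Gamma)$, yielding the desired equality of typing environments.

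There is essentially no hard part here: the argument is a routine unfolding of definitions whose only substantive ingredient is the elementary set identity $P \setminus \{p\} = P$ whenever $p \notin P$. I expect the proof to be a line-for-line transcription of the Lemma~\ref{app-lem:up} argument with $\uparrow$ replaced by $\downarrow$, $\cup$ by $\setminus$, and the side condition $p \in P$ replaced by $p \notin P$.
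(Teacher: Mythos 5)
Your proof is correct and matches the paper's intended argument exactly: the paper proves this lemma simply by saying it is ``similar to the proof of Lemma~\ref{app-lem:up}'', and your dualization --- replacing $P \cup \{p\} = P$ (for $p \in P$) with $P \setminus \{p\} = P$ (for $p \notin P$) to show $\app{\Gamma\downarrow_p}{P} = \app{\Gamma}{P}$ --- is precisely that argument spelled out.
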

\begin{proof}
Similar to the proof of Lemma~\ref{app-lem:up}. 
\end{proof}
 
\begin{lemma}\label{app-lem:expsafe}
Suppose $\Gamma\vdash e : t$. For $P\in\PPS$, if~$t(P)\leq l_{O}$ and 
$\eta =_{\app{\Gamma}{P}}^{l_{O}} \eta'$,  $\eta\vdash e\leadsto v$ and $\eta'\vdash e\leadsto v'$, then $v = v'$.
\end{lemma}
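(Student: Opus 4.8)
The plan is to prove the statement by induction on the derivation of $\Gamma \vdash e : t$, so that the subsumption rule (\ruleTagText{T-SUB$_{e}$}) is absorbed into the induction and the three grammar cases $e = n$, $e = \leVAR{x}$, and $e = \leOP{e_1}{e_2}$ can be handled through their minimal typings. Before the case analysis, I would first record a convenient reformulation of the hypothesis $\eta =_{\app{\Gamma}{P}}^{l_O} \eta'$. By Definition~\ref{app-def:projection} we have $\app{\Gamma(x)}{P}(Q) = \Gamma(x)(P)$ for every $Q \in \PPS$, so the guard $\app{\Gamma}{P}(x) \leq \hat{l}_O$ appearing in Definition~\ref{app-def:envequ} collapses to the single inequality $\Gamma(x)(P) \leq l_O$ in the lattice $\SL$. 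Hence $\eta =_{\app{\Gamma}{P}}^{l_O} \eta'$ is equivalent to: for every $x \in dom(\Gamma)$, if $\Gamma(x)(P) \leq l_O$ then $\eta(x) = \eta'(x)$. This is the only property of the indistinguishability relation that the proof needs.

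For the base cases: when $e = n$ is a literal, both evaluations use (\ruleTagText{E-VAL}) and yield $v = v' = n$. When $e = \leVAR{x}$ is typed by (\ruleTagText{T-VAR}), its type is $t = \Gamma(x)$, and the evaluations use (\ruleTagText{E-VAR}) to give $v = \eta(x)$ and $v' = \eta'(x)$; the assumption $t(P) = \Gamma(x)(P) \leq l_O$ together with the reformulated indistinguishability yields $\eta(x) = \eta'(x)$, that is $v = v'$. For the inductive step $e = \leOP{e_1}{e_2}$ typed by (\ruleTagText{T-OP}), both premises carry the same type $t$ with $t(P) \leq l_O$; the evaluations decompose via (\ruleTagText{E-OP}) so that $v = v_1~op~v_2$ and $v' = v_1'~op~v_2'$, where $v_i, v_i'$ are the values of $e_i$ under $\eta, \eta'$. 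Applying the induction hypothesis to the subderivations for $e_1$ and $e_2$ gives $v_1 = v_1'$ and $v_2 = v_2'$, whence $v = v'$.

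The only case requiring attention is the subsumption rule (\ruleTagText{T-SUB$_{e}$}), where $\Gamma \vdash e : t$ is obtained from $\Gamma \vdash e : s$ with $s \leq t$. Here I would observe that $s \leq t$ implies $s(P) \leq t(P) \leq l_O$ by Definition~\ref{app-def:base_type}, so the induction hypothesis applies to the immediate subderivation $\Gamma \vdash e : s$ under the weaker bound $s(P) \leq l_O$ and delivers $v = v'$ at once. Running the induction on the typing derivation rather than on the syntax of $e$ is exactly what makes this step a one-line appeal to the monotonicity of projection and the order on base types, which is why I do not expect a genuine obstacle. The main point to get right is simply threading the observer bound $\leq l_O$ correctly through subsumption and through the variable case, and the reformulation of $=_{\app{\Gamma}{P}}^{l_O}$ established in the first paragraph makes both routine.
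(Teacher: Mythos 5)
Your proposal is correct and follows essentially the same route as the paper's proof: induction on the typing derivation, with the variable case discharged by the indistinguishability hypothesis, the operator case by the induction hypothesis on both operands, and subsumption by the monotonicity $s(P)\leq t(P)\leq l_O$. Your explicit reformulation of $\eta =_{\app{\Gamma}{P}}^{l_{O}} \eta'$ via Definition~\ref{app-def:projection} and your treatment of the literal case are harmless additions that the paper leaves implicit.
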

\begin{proof}
Consider any $P$ which satisfies $t(P)\leq l_{O}$ and $\eta =_{\app{\Gamma}{P}}^{l_{O}} \eta'$. 
The proof proceeds by induction on the derivation 
of $\Gamma;A \vdash e : t$.
\begin{ProofEnumDesc}
\item[T-VAR] We have $\Gamma \vdash \leVAR{x} :\Gamma(x) = t$.
Since $t(P) \leq l_{O}$ and $\eta =_{\app{\Gamma}{P}}^{l_{O}} \eta'$, 
it is deducible that $v = \eta(x) = \eta'(x) = v'$.
\item[T-OP] We have
\begin{equation*}
\inference
{\Gamma \vdash e_1 : t & \Gamma \vdash e_2 : t}
{\Gamma \vdash \leOP{e_1}{e_2} : t}
\end{equation*} 
and 
\begin{equation*}
\inference{\eta \vdash e_i \leadsto v_i}{\eta\vdash \leOP{e_1}{e_2}\leadsto v_1~op~v_2} ~
\inference{\eta' \vdash e_i \leadsto v'_i}{\eta'\vdash \leOP{e_1}{e_2}\leadsto v'_1~op~v'_2}
\end{equation*}  
By induction on $e_i$, we can get $v_i = v'_i$. Therefore $v = v'$.
\item[T-SUB$_e$] we have
\begin{equation*}
\inference
{\Gamma\vdash e : s & s \leq t}
{\Gamma\vdash e : t}
\end{equation*}
since $s(P)\leq t(P)$ and $t(P)\leq l_{O} $, then $s(P)\leq l_{O}$ as well, 
thus the result follows by induction on $\Gamma\vdash e : s$.
\end{ProofEnumDesc}
\end{proof}

\begin{lemma}\label{app-lem:comsafe}
Suppose $\Gamma; A \vdash c : t$. Then for any $P\in\PPS$, 
if $t(P)\nleq l_{O}$ and $\eta;A; P \vdash c \leadsto \eta' $, then 
$\eta =_{\app{\Gamma}{P}} \eta'$.
\end{lemma}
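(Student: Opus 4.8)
The plan is to prove the statement by induction on the derivation of the evaluation judgment $\eta;A;P\vdash c \leadsto \eta'$, performing a case analysis on the last evaluation rule and inverting the typing judgment $\Gamma;A\vdash c:t$ in each case. The guiding intuition is that a command typed with a $t$ satisfying $t(P)\nleq l_O$ can only assign to variables whose level at $P$ is itself not below $l_O$; hence every variable the observer can see (those $x$ with $\app{\Gamma}{P}(x)\leq\hat{l}_O$, i.e. $\Gamma(x)(P)\leq l_O$) keeps its value, which yields $\eta =_{\app{\Gamma}{P}} \eta'$ directly from Definition~\ref{app-def:envequ}. Before the case analysis I would record one monotonicity fact to dispatch the subsumption rule \ruleTagText{T-SUB$_c$}: if $\Gamma;A\vdash c:t$ is derived from $\Gamma;A\vdash c:s$ with $t\leq s$, then $t(P)\leq s(P)$, so $t(P)\nleq l_O$ forces $s(P)\nleq l_O$. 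This lets me strip any trailing \ruleTagText{T-SUB$_c$} applications and assume the typing ends with the structural rule matching $c$, its type still violating $\cdot(P)\leq l_O$.

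The routine cases are \ruleTagText{E-ASS} and \ruleTagText{E-CALL}: in both, $\eta'$ differs from $\eta$ only at the assigned variable $x$, and inverting \ruleTagText{T-ASS} (resp. \ruleTagText{T-CALL}) gives $t\leq\Gamma(x)$, whence $\Gamma(x)(P)\nleq l_O$, so $x$ is invisible and $\eta =_{\app{\Gamma}{P}} \eta'$. I want to stress that, unlike the noninterference argument of Lemma~\ref{lem:comni}, the function-call case needs no induction on the rank: the callee runs in its own store and influences the caller only through the single variable $x$, so confinement follows purely from what is written, not from analysing the callee's body. Consequently a plain induction on the evaluation derivation suffices throughout.

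The compositional cases \ruleTagText{E-SEQ}, \ruleTagText{E-IF-T}/\ruleTagText{E-IF-F}, \ruleTagText{E-WHILE-T}/\ruleTagText{E-WHILE-F} and \ruleTagText{E-LETVAR} follow by applying the induction hypothesis to the sub-evaluations and using that $=_{\app{\Gamma}{P}}$ is an equivalence relation (Lemma~\ref{app-lem:ni-eq}). Inverting \ruleTagText{T-IF}, \ruleTagText{T-SEQ} and \ruleTagText{T-WHILE} gives premise types equal to the conclusion type, so each executed subcommand again carries a type violating $\cdot(P)\leq l_O$; reflexivity settles the empty \textbf{while} case and transitivity chains the two sub-evaluations in the sequential, iterating \textbf{while}, and so on. For \ruleTagText{E-LETVAR} I would apply the hypothesis under $\Gamma[x:s]$ and then observe that, since the declared $x$ is fresh (hence $x\notin dom(\Gamma)$) and removed at the end, agreement on the visible variables of $\Gamma[x:s]$ restricts to agreement on the visible variables of $\Gamma$, using that command execution preserves the domain.

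The most delicate case is the permission test \ruleTagText{E-CP-T}/\ruleTagText{E-CP-F}, where the merging operator and Lemmas~\ref{app-lem:up}--\ref{app-lem:down} enter. Inverting \ruleTagText{T-CP} gives $t\leq\omerge{p}{t_1}{t_2}$ with $\Gamma\!\uparrow_p;A\vdash c_1:t_1$ and $\Gamma\!\downarrow_p;A\vdash c_2:t_2$. When $p\in P$, Definition~\ref{app-def:merge} gives $(\omerge{p}{t_1}{t_2})(P)=t_1(P)$, so $t_1(P)\nleq l_O$; applying the hypothesis to $c_1$ yields $\eta =_{\app{\Gamma\uparrow_p}{P}} \eta'$, and Lemma~\ref{app-lem:up} (using $p\in P$) converts this to $\eta =_{\app{\Gamma}{P}} \eta'$. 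The case $p\notin P$ is symmetric, using the $t_2$ branch and Lemma~\ref{app-lem:down}. I expect this test case, together with the careful threading of subsumption through every case, to be the only genuine obstacle; the rest is domain bookkeeping and the elementary monotonicity observation noted above.
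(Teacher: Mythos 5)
Your proof is correct and follows essentially the same route as the paper's: the same case analysis, the same use of Definition~\ref{app-def:merge} together with Lemmas~\ref{app-lem:up} and~\ref{app-lem:down} for the \textbf{test} case, and the same observation that assignment and call only write to a variable whose level at $P$ is not below $l_O$. The only (inessential) difference is that you take the evaluation derivation as the primary induction and strip \ruleTagText{T-SUB$_c$} via the monotonicity of $t\mapsto t(P)$, whereas the paper inducts on the typing derivation with a subinduction on the evaluation derivation; both orderings discharge the iterating \textbf{while} case correctly.
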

\begin{proof}
By induction on the derivation of $\Gamma; A \vdash c : t$, with subinduction 
on the derivation of $\eta;A; P \vdash c \leadsto \eta'.$

\begin{ProofEnumDesc}
\item[T-ASS] 
In this case $t = \Gamma(x)$ and 
the typing derivation has the form: 
\begin{equation*}
\inference
{\Gamma \vdash e : \Gamma(x)}
{\Gamma; A  \vdash \lcASS{x}{e} : \Gamma(x)}
\end{equation*} 
and the evaluation under $\eta$ takes the form: 
$$
\inference
{\eta \vdash v}
{\eta; A; P \vdash \lcASS{x}{e} \leadsto \eta[x \mapsto v] }
$$
That is, $\eta' = \eta[x \mapsto v].$
So $\eta$ and $\eta'$ differ possibly 
only in the mapping of $x$. 
Since $\Gamma(x)(P) = t(P) \nleq l_{O}$, 
that is $\app{\Gamma}{P}(x) \nleq \hat{l}_{O}$,
the difference in the valuation of $x$ is not observable
at level $l_O.$
It then follows from Definition~\ref{app-def:envequ} 
that $\eta =_{\app{\Gamma}{P}}^{l_{O}} \eta' $.

\item[T-CALL] In this case
the command $c$ has the form
$\lcCALL{x}{B.f}{\overline{e}}$
and the typing derivation
takes the form:
$$
\setpremisesspace{7pt}
\setpremisesend{3pt}
\inference
{
\FT(B.f) = \overline{s} \rightarrow {s'} &
\Gamma \vdash \overline{e} : \app{\overline{s}}{\Theta(A)} & 
\app{s'}{\Theta(A)} \leq \Gamma(x)}
{\Gamma; A \vdash \lcCALL{x}{B.f}{\overline{e}} : \Gamma(x) }
$$
and we have that 
$t = \Gamma(x).$ 
The evaluation under $\eta$ is derived as follows:
$$
\inference{
\FD(B.f) = \lpFUN{B.f}{\overline{y}}{c_1} \\
\eta\vdash\overline{e} \leadsto \overline{v} \quad 
[\overline{y} \mapsto \overline{v}, r \mapsto 0] ; B; \Theta(A) \vdash c_1 
\leadsto \eta_1
}
{\eta;A;P \vdash \lcCALL{x}{B.f}{\overline{e}} \leadsto 
\eta [x \mapsto \eta_1(r) ]}
$$
Since $t(P) \not \leq l_O$ and $\Gamma(x) = t$, we have 
$\Gamma(x)(P) \not \leq l_O$
and therefore $\Gamma(x) \not \leq l_O$ 
and
$$\eta =^{l_O}_{\app{\Gamma}{P}} \eta[x \mapsto v'] = \eta'.$$

\item[T-IF]  
This follows straightforwardly from the induction hypothesis.

\item[T-WHILE]
We look at the case where the condition of the while loop
evaluates to true, otherwise it is trivial. 
In this case the typing derivation is 
$$
\inference
{\Gamma \vdash e : t & \Gamma ; A\vdash c : t  }
{\Gamma; A \vdash \lcWHILE{e}{c} : t }
$$
and the evaluation derivation is
$$
\inference{
\eta \vdash e \leadsto v\quad v\neq 0\quad 
\eta;A; P \vdash c \leadsto \eta_1\\
\quad \eta_1 ; A ; P \vdash \lcWHILE{e}{c} \leadsto \eta'
}
{\eta;A; P \vdash \lcWHILE{e}{c} \leadsto \eta'}
$$
Applying the induction hypothesis (on typing derivation)
and the inner induction hypothesis (on the evaluation derivation)
we get $\eta =^{l_O}_{\app{\Gamma}{P}} \eta_1$ and $\eta_1 =^{l_O}_{\app{\Gamma}{P}} \eta'$; by transitivity of $=^{l_O}_{\app{\Gamma}{P}}$ we get $\eta =^{l_O}_{\app{\Gamma}{P}} \eta'.$

\item[T-SEQ] This case follows from the induction hypothesis
and transitivity of the indistinguishability relation.

\item[T-LETVAR] This follows from the induction hypothesis
and the fact that we can choose fresh variables for local
variables, and that the local variables are not visible
outside the scope of letvar.

\item[T-CP] 
We have:
$$
\inference
{\Gamma\uparrow_{p} ; A \vdash c_1 : t_1 &
\Gamma\downarrow_{p} ; A\vdash c_2 : t_2 &
t = \omerge{p}{t_1}{t_2}}
{\Gamma; A\vdash \lcCP{p}{c_1}{c_2} : t}
$$
There are two possible derivations for the evaluation.
In one case, we have 
$$
\inference{p\in P \quad \eta; A; P \vdash c_1 \leadsto \eta'}
{\eta; A ; P \vdash \lcCP{p}{c_1}{c_2} \leadsto \eta'}
$$
Since $t(P) \nleq l_{O}$ and $p \in P$, by Definiton~\ref{app-def:merge},
we have $t_1(P) \nleq l_{O}$.
By induction hypothesis, we have $\eta =_{\app{\Gamma\uparrow_{p}}{P}}^{l_{O}} \eta'$,
by Lemma \ref{app-lem:up}, we have $\eta =_{\app{\Gamma}{P}}^{l_{O}} \eta'$. 

The case where $p \notin P$ can be handled similarly,
making use of Lemma \ref{app-lem:down}. 

\item[T-SUB$_c$] Straightforward by induction.
\end{ProofEnumDesc}
\end{proof}

\begin{definition}\label{app-def:ni}
A command $c$ executed in app $A$ is said to be {\termEmph{noninterferent}} \text{iff.}
for all $\eta_1, \eta'_1,\Gamma, P, l_{O}$, 
if 
\begin{itemize}
\item $\eta_1 =_{\app{\Gamma}{P}}^{l_{O}} \eta'_1$, 
\item $\eta_1;A ; P\vdash c \leadsto \eta_2 $ and 
\item $\eta'_1; A; P\vdash c\leadsto \eta'_2  $, 
\end{itemize}
then $\eta_2 =_{\app{\Gamma}{P}}^{l_{O}} \eta'_2$.
\end{definition}

\begin{lemma}\label{app-lem:comni}
Suppose $\Gamma; A\vdash c : t$, for any $P\in\PPS$, if  $\eta_1 =_{\app{\Gamma}{P}}^{l_{O}} \eta'_1$, $\eta_1; A; P \vdash c \leadsto \eta_2$,
and  $\eta'_1; A; P \vdash c \leadsto \eta'_2$, 
then  $\eta_2 =_{\app{\Gamma}{P}}^{l_{O}} \eta'_2$. 
\end{lemma}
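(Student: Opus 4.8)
The plan is to prove the statement by a double induction: an outer induction on $\rank{c}$ and an inner sub-induction on the typing derivation of $\Gamma; A \vdash c : t$ together with the evaluation derivations $\eta_1; A; P \vdash c \leadsto \eta_2$ and $\eta'_1; A; P \vdash c \leadsto \eta'_2$. The rank measure is what makes the argument well-founded across function calls, since the body of any callee has strictly smaller rank than a call to it; for the remaining first-order constructs the inner induction on the structure of the derivations suffices. Throughout, the three workhorses will be the simple-security property (Lemma~\ref{app-lem:expsafe}), the confinement property (Lemma~\ref{app-lem:comsafe}), and the promotion/demotion invariance of indistinguishability (Lemmas~\ref{app-lem:up} and~\ref{app-lem:down}); transitivity and symmetry come from Lemma~\ref{app-lem:ni-eq}. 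As is standard, I will fix $P$ and suppress the superscript $l_O$.

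First I would dispatch the non-branching cases. For $\lcASS{x}{e}$ (rule T-ASS) we have $t = \Gamma(x)$ and both runs leave every variable except $x$ untouched; if $\app{\Gamma(x)}{P} \leq \hat{l}_O$ then Lemma~\ref{app-lem:expsafe} forces the two assigned values to coincide, and otherwise the change to $x$ is unobservable, so indistinguishability is preserved either way. The cases $\lcSEQ{c_1}{c_2}$ (T-SEQ) and $\lcLETVAR{x}{e}{c}$ (T-LETVAR) follow by threading the inner induction hypothesis through the two intermediate environments, relying in the letvar case on the freshness of the bound variable so that introducing and then discarding $x$ does not disturb the ambient environment. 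Subsumption (T-SUB$_c$) is immediate from the induction hypothesis applied to the premise.

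The genuinely delicate cases are the conditionals. For $\lcIF{e}{c_1}{c_2}$ (T-IF) I would split on whether $t(P) \leq l_O$. If it is, Lemma~\ref{app-lem:expsafe} guarantees the guard evaluates identically under $\eta_1$ and $\eta'_1$, so both runs enter the same branch and the induction hypothesis closes the case. If instead $t(P) \nleq l_O$, the two runs may diverge, but then Lemma~\ref{app-lem:comsafe} applied to the command yields $\eta_1 =_{\app{\Gamma}{P}} \eta_2$ and $\eta'_1 =_{\app{\Gamma}{P}} \eta'_2$, whence $\eta_2 =_{\app{\Gamma}{P}} \eta'_2$ by transitivity and symmetry. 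The loop $\lcWHILE{e}{c}$ (T-WHILE) is handled the same way; the high-guard subcase again reduces to Lemma~\ref{app-lem:comsafe} applied to the whole loop, while the low-guard subcase requires the inner induction on the evaluation derivation to follow the two runs in lockstep through their equally many iterations. For $\lcCP{p}{c_1}{c_2}$ (T-CP) the situation is easier than T-IF: since both runs share the same caller permission set $P$, the runtime test on $p$ resolves identically, so both take the same branch; I then rewrite indistinguishability through the promoted environment $\Gamma\uparrow_p$ (when $p \in P$, via Lemma~\ref{app-lem:up}) or the demoted environment $\Gamma\downarrow_p$ (when $p \notin P$, via Lemma~\ref{app-lem:down}) before and after applying the induction hypothesis to the relevant branch.

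The main obstacle is the remote call $\lcCALL{x}{B.f}{\overline{e}}$ (T-CALL), and this is precisely why the rank induction is needed. If $t(P) = \Gamma(x)(P) \nleq l_O$ the result is immediate from Lemma~\ref{app-lem:comsafe}, so the interesting regime is $\Gamma(x)(P) \leq l_O$. Here I would first transport indistinguishability of the callers' environments to the callee's initial parameter environment: for each parameter whose projected type lies below $\hat{l}_O$, the corresponding argument expression is typed below $l_O$ under $\Gamma$, so Lemma~\ref{app-lem:expsafe} forces the two argument values to agree. Because $\rank{c_1} < \rank{c}$ for the body $c_1$ of $B.f$, I may invoke the outer induction hypothesis on the body to conclude that the two post-execution environments of the callee remain indistinguishable, and a final appeal to Lemma~\ref{app-lem:expsafe} on the return variable $r$ (whose projected type is $\leq l_O$ by the typing side condition involving $\app{t'}{\Theta(A)}$) shows the two returned values coincide, which is exactly what re-establishes $\eta_2 =_{\app{\Gamma}{P}} \eta'_2$. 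The care required is in matching the projected argument and return types against $\Theta(A)$ so that the hypotheses of the simple-security lemma are met, but no new idea beyond the three lemmas and the rank measure is involved.
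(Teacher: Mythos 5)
Your proposal is correct and follows essentially the same route as the paper's proof: an outer induction on $\rank{c}$ with an inner induction on the typing and evaluation derivations, dispatching the high-type cases via Lemma~\ref{app-lem:comsafe}, the guards via Lemma~\ref{app-lem:expsafe}, the \textbf{test} case via Lemmas~\ref{app-lem:up} and~\ref{app-lem:down}, and the call case via the same three steps (indistinguishability of the callee's initial parameter environment, outer induction hypothesis on the body, and Lemma~\ref{app-lem:expsafe} on the return variable $r$). No gaps.
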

\begin{proof}
The proof proceeds by induction on \rank{c}, with subinduction
on the derivations of $\Gamma;\Theta;A \vdash c : t$ and $\eta_1;\Theta;P;A\vdash c \leadsto \eta_2$.
In the following, we shall omit the superscript $l_O$ from
$=^{l_O}_{\app{\Gamma}{P}}$ to simplify presentation.

\begin{ProofEnumDesc}
\item[T-ASS] In this case, $c \equiv x := e$ and the typing derivation takes the form:
$$
\inference
{\Gamma \vdash e : \Gamma(x)}
{\Gamma ; A \vdash \lcASS{x}{e} : \Gamma(x)}
$$
where $t = \Gamma(x)$, and suppose the two executions
of $c$ are derived as follows: 
$$
\inference{\eta_1 \vdash e \leadsto v}
{\eta_1; A; P \vdash \lcASS{x}{e} \leadsto \eta_1 [ x \mapsto v_1 ]}
$$
$$
\inference{\eta_2 \vdash e \leadsto v'}
{\eta_1'; A; P \vdash \lcASS{x}{e} \leadsto \eta_1' [ x \mapsto v_2 ]}
$$
where $\eta_2 = \eta_1[x \mapsto v]$
and $\eta_2' = \eta_1'[x \mapsto v'].$
Note that if $\Gamma(x) \not \leq l_O$ then 
$\eta_2 =_{\app{\Gamma}{P}} \eta_2'$ holds trivially
by Definition~\ref{app-def:ni}. 
So let us assume $\Gamma(x) \leq l_O$. Then applying Lemma~\ref{app-lem:expsafe}
to 
$\eta_1 \vdash e \leadsto v$
and $\eta_1' \vdash e \leadsto v'$
we get
$v = v'$, so it then follows that  $\eta_2 =_{\app{\Gamma}{P}} \eta_2'.$
\item[T-IF] In this case $c \equiv \lcIF{e}{c_1}{c_2}$ and we have
$$
\inference
{\Gamma \vdash e : t & \Gamma ; A\vdash c_1 : t & \Gamma; A \vdash c_2 : t }
{\Gamma; A \vdash \lcIF{e}{c_1}{c_2} : t}
$$
If $t(P)\not\leq l_{O}$, then the lemma follows easily from 
Lemma~\ref{app-lem:comsafe}. So we assume $t(P) \leq l_O.$

The evaluation derivation under $\eta_1$
takes either one of the following forms:
$$
\inference{\eta_1 \vdash e \leadsto v\quad v\neq 0\quad \eta_1; A; P \vdash c_1 \leadsto \eta_2}
{\eta_1; A; P \vdash \lcIF{e}{c_1}{c_2} \leadsto \eta_2}
$$
$$
\inference{\eta_1 \vdash e \leadsto v\quad v = 0\quad \eta_1;A; P \vdash c_2 \leadsto \eta_2}
{\eta;A; P \vdash \lcIF{e}{c_1}{c_2} \leadsto \eta_2}
$$
We consider here only the case where $v\not = 0$; the case with $v=0$
can be dealt with similarly.
We first need to show that the evaluation of $c$ under $\eta_1'$
would take the same if-branch. That is, suppose $\eta_1' \vdash e \leadsto v'$.
Since $t(P) \leq l_O$, we can apply Lemma~\ref{app-lem:expsafe} to conclude
that $v=v' \not = 0$, hence the evaluation of $c$ under $\eta_1'$ takes the
form:
$$
\inference{\eta_1' \vdash e \leadsto v'\quad v'\neq 0\quad \eta_1'; A; P \vdash c_1 \leadsto \eta_2'}
{\eta_1'; A; P \vdash \lcIF{e}{c_1}{c_2} \leadsto \eta_2'}
$$
The lemma then follows straightforwardly from the induction hypothesis.

\item[T-WHILE] $c\equiv\lcWHILE{e}{c_b}$ and we have
$$
\inference
{\Gamma \vdash e : t & \Gamma ; A\vdash c : t  }
{\Gamma; A \vdash \lcWHILE{e}{c} : t }
$$
If $t(P) \nleq l_{O}$, the conclusion holds by
Lemma~\ref{app-lem:comsafe}. Otherwise, since
$\eta_1=_{\app{\Gamma}{P}}^{l_{O}}\eta_1'$, by
Lemma~\ref{app-lem:expsafe}, if
$\eta_1 \vdash e \leadsto v$ and $\eta_1' \vdash e \leadsto v'$
then $v=v'$.
If both are $0$ then the conclusion holds according to
(\ruleTagText{E-WHILE-F}). Otherwise, we have
$$
\inference{
\eta_1 \vdash e \leadsto v\quad v\neq 0\quad 
\eta_1;A; P \vdash c_b \leadsto \eta_3\\
\quad \eta_3 ; A ; P \vdash \lcWHILE{e}{c_b} \leadsto \eta_2
}
{\eta;A; P \vdash \lcWHILE{e}{c_b} \leadsto \eta_2}
$$

$$
\inference{
\eta_1' \vdash e \leadsto v\quad v'\neq 0\quad 
\eta_1';A; P \vdash c_b \leadsto \eta_3'\\
\quad \eta_3' ; A ; P \vdash \lcWHILE{e}{c_b} \leadsto \eta_2'
}
{\eta;A; P \vdash \lcWHILE{e}{c_b} \leadsto \eta_2'}
$$
Applying the induction hypothesis to $\Gamma ; A\vdash c : t$,
$\eta_1;A; P \vdash c_b \leadsto \eta_3$ and
$\eta_1';A; P \vdash c_b \leadsto \eta_3'$, we obtain
$\eta_3 =_{\app{\Gamma}{P}} \eta_3'.$ Then applying the inner
induction hypothesis to $\eta_3 ; A ; P \vdash \lcWHILE{e}{c_b} \leadsto \eta_2$
and $\eta_3' ; A ; P \vdash \lcWHILE{e}{c_b} \leadsto \eta_2'$,
we obtain $\eta_2 =_{\app{\Gamma} P} \eta_2'.$

\item[T-SEQ] In this case we have
  $c\equiv\lcSEQ{c_1}{c_2}$ and $\Gamma;A\vdash c:t$. If $t(P)\not\leq l_{O}$, it is a direct conclusion from Lemma \ref{app-lem:comsafe}; otherwise it holds by induction on $c_1$ and $c_2$.

\item[T-LETVAR]
  In this case we have $c\equiv\lcLETVAR{x}{e}{c_b}.$
  If $t(P) \nleq l_O$ then the lemma follows from Lemma \ref{app-lem:comsafe}.
  Otherwise, this case follows from the induction hypothesis and the fact
  that the mapping for the local variable $x$ is removed
  in $\eta_2$ and $\eta_2'.$
\item[T-CALL] 
In this case, $c$ has the form 
$\lcCALL{x}{B.f}{\overline{e}}$. Suppose the typing
derivation is the following (where we label the premises
for ease of reference later):
$$
\inference
{
\FT(B.f) = \overline{s} \rightarrow s' \\
(\mathbf{T_1}) ~ \Gamma \vdash \overline{e} : \app{\overline{s}}{\Theta(A)} &
(\mathbf{T_2}) ~ \app{s'}{\Theta(A)} \leq \Gamma(x)}
{\Gamma; A \vdash \lcCALL{x}{B.f}{\overline{e}} : 
 \Gamma(x)  }
$$
where $t = \Gamma(x)$,
and the executions under $\eta_1$ and $\eta_1'$ are 
derived, respectively, as follows:
$$
\inference{
(\mathbf{E_1}) ~ \eta_1\vdash\overline{e} \leadsto \overline{v_1} \\
(\mathbf{E_2}) ~ [\overline{y} \mapsto \overline{v_1}, r \mapsto 0] ; B; 
  \Theta(A) \vdash c_1 \leadsto \eta_3
}
{\eta_1; A; P \vdash \lcCALL{x}{B.f}{\overline{e}} \leadsto 
              \eta_1 [x \mapsto \eta_3(r) ]}
$$
and
$$
\inference{
(\mathbf{E_1'}) ~ \eta_1'\vdash\overline{e} \leadsto \overline{v_2}
\\
(\mathbf{E_2'}) ~ [\overline{y} \mapsto \overline{v_2}, r \mapsto 0] ; B; 
\Theta(A) \vdash c_1 \leadsto \eta_3'}
{\eta_1';A;P \vdash \lcCALL{x}{B.f}{\overline{e}} \leadsto 
              \eta_1' [x \mapsto \eta_3'(r) ]}
$$
where 
$\FD(B.f) = \lpFUN{B.f}{\overline{x}}{c_1}$, 
$\eta_2 = \eta_1[x \mapsto \eta_3(r)]$
and $\eta_2' = \eta_1'[x \mapsto \eta_3'(r)].$

Moreover, since we consider only well-typed systems,
the function $\FD(B.f)$ is also typable:
$$
\inference
    {
      (\mathbf{T_3}) ~ [\overline{y}:\overline{s}, r : s']; B\vdash c_1 : s
    }
    {
      \Theta \vdash \lpFUN{B.f}{\overline{y}}{c_1} :  
      \overline{s}\rightarrow s'
    }
$$
First we note that if $t(P) \nleq l_O$ then the result
follows from Lemma~\ref{app-lem:comsafe}.
So in the following, we assume $t(P) \leq l_O$.
Since $t = \Gamma(x)$, it follows that
$\Gamma(x)(P) \leq l_O.$

Let $\Gamma' = \app{[\overline{y} : \overline{t}, r : s]}{\Theta(A)}.$
We first prove several claims:
\begin{itemize}
\item Claim 1: 
$[\overline{y} \mapsto \overline{v_1}, r \mapsto 0] 
=_{\Gamma'} [\overline{y} \mapsto \overline{v_2}, r\mapsto 0].$

Proof: Let $\rho = [\overline{y} \mapsto \overline{v_1}, r \mapsto 0]$
and $\rho' = [\overline{y} \mapsto \overline{v_2}, r \mapsto 0]$.
We need only to check that the two mapping
agrees on mappings of $\overline{y}$ that are of type $\leq l_O.$
Suppose $\Gamma'(y_u) = u \leq l_O$
and suppose $\rho(y_u) = v_u$ and
$\rho'(y_u) = v_u'$ for some $y_u \in \overline{y}.$
From $(\mathbf{E_1})$ we have 
$\eta_1 \vdash e_u \leadsto v_u$
and from $(\mathbf{E_2})$ 
we have $\eta_1' \vdash e_u \leadsto v_u'$,
and from $(\mathbf{T_1})$ we have $\Gamma \vdash e_u : u.$ 
Since $u \leq l_O$, applying Lemma~\ref{app-lem:expsafe},
we get $v_u = v_u'$.

\item Claim 2: $\eta_3 =_{\Gamma'} \eta_3'.$ 

Proof: From Claim 1, we know that 
$$[\overline{y} \mapsto \overline{v_1}, r \mapsto 0] 
=_{\Gamma'} [\overline{y} \mapsto \overline{v_2}, r \mapsto 0].$$
Since $\rank{c_1}<\rank{c}$, we can apply the outer induction
hypothesis to $(\mathbf{E_2})$, $(\mathbf{E_2'})$
and $(\mathbf{T_3})$ to obtain
$\eta_3 =_{\Gamma'} \eta_3'.$

\item Claim 3: $\eta_3(r) = \eta_3'(r).$

Proof: We first note that from $(\mathbf{T_2})$
and the assumption that $\Gamma(x)(P) \leq l_O$, we get
$(\app{s'}{\Theta(A)})(P) \leq l_O$.
The latter, by Definition~\ref{app-def:projection}, implies  that $s'(\Theta(A)) \leq l_O.$
Since $r \in dom(\Gamma')$, it is obvious that
$\Gamma' \vdash r : s'$,
$\eta_3 \vdash r \leadsto \eta_3(r)$
and $\eta_3' \vdash r \leadsto \eta_3'(r).$
From Claim 2 above, we have $\eta_3 =_{\Gamma'} \eta_3'$.
Therefore by Lemma~\ref{app-lem:expsafe}, we have
$\eta_3(r) = \eta_3'(r).$
\end{itemize}
The statement we are trying to prove, i.e., $\eta_2 =_{\app{\Gamma}{P}} \eta_2'$,
follows immediately from Claim 3 above.
\item[T-CP] $c\equiv\lcCP{p}{c_1}{c_2}$ and we have
$$
\inference
{\Gamma\uparrow_{p} ; A \vdash c_1 : t_1 &
\Gamma\downarrow_{p} ; A\vdash c_2 : t_2 &
t=\omerge {p} {t_1} {t_2}}
{\Gamma; A\vdash \lcCP{p}{c_1}{c_2} : t}
$$
We need to consider two cases, one where $p \in P$
and the other where $p \not \in P$.

Assume that $p \in P.$
Then the evaluation of $c$ under $\eta_1$ and $\eta_1'$
are respectively:
$$
\inference{p\in P \quad \eta_1; A; P \vdash c_1 \leadsto \eta_2}
{\eta_1; A ; P \vdash \lcCP{p}{c_1}{c_2} \leadsto \eta_2}
$$
and
$$
\inference
    {p\in P \quad \eta_1'; A; P \vdash c_1 \leadsto \eta_2'}
    {\eta_1'; A ; P \vdash \lcCP{p}{c_1}{c_2} \leadsto \eta_2'}
$$
Since $\eta_1 =_{\app{\Gamma}{P}} \eta_1'$ and since $p \in P$,
by Lemma~\ref{app-lem:up}, we have $\eta_1 =_{\app{\Gamma\uparrow_p}{P}} \eta_1'$.
Therefore by the induction hypothesis applied to $\Gamma\uparrow_p ; A \vdash c_1 : t_1$,
$\eta_1; A; P \vdash c_1 \leadsto \eta_2$ and
$\eta_1'; A; P \vdash c_1 \leadsto \eta_2'$, we obtain
$\eta_2 =_{\app{\Gamma\uparrow_p}{P}} \eta_2'$, and by Lemma~\ref{app-lem:up},
we get $\eta_2 =_{\app{\Gamma}{P}} \eta_2'$.

For the case where $p \not \in P$, we apply a similar reasoning as above,
but using Lemma~\ref{app-lem:down} in place of Lemma~\ref{app-lem:up}.
\end{ProofEnumDesc}
\end{proof}

\begin{definition}\label{app-def:sys-ni}
Let $\mathcal{S}$ be a system.
A function
$$\lpFUN{A.f}{\overline{x}}{c}$$
in $\mathcal{S}$
with $FT(A.f) = \overline{t} \rightarrow t'$ 
is {\termEmph{noninterferent}}
if for all $\eta_1, \eta'_1, P, v, l_{O}$,
if the following hold:   
\begin{itemize}
\item $t(P) \leq l_{O}$,
\item $\eta_1 =_{\app{\Gamma}{P}}^{l_{O}} \eta'_1$, where
$\Gamma = [\overline{x} : \overline{t}, r : t']$,  
\item $\eta_1;A ; P\vdash c \leadsto \eta_2 $, and
\item $\eta'_1; A; P\vdash c\leadsto \eta'_2 $, 
\end{itemize}
then $\eta_2(r) = \eta_2'(r).$
The system $\mathcal{S}$ is noninterferent iff all functions
in $\mathcal{S}$ are noninterferent.
\end{definition}

\begin{theorem}\label{app-thm:ni}
Well-typed systems are noninterferent.
\end{theorem}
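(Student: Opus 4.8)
The plan is to derive the theorem as a direct corollary of the main technical Lemma~\ref{app-lem:comni}, combined with an inversion on the function typing rule \ruleTagText{T-FUN}. By Definition~\ref{app-def:sys-ni}, a system $\mathcal{S}$ is noninterferent exactly when every function in it is noninterferent, so it suffices to fix an arbitrary function $\lpFUN{A.f}{\overline{x}}{c}$ of $\mathcal{S}$ with $\FT(A.f) = \overline{t} \rightarrow t'$ and verify the per-function condition of Definition~\ref{app-def:sys-ni}.

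First I would exploit well-typedness. Since $\mathcal{S}$ is well-typed, Definition~\ref{app-def:cmd-welltype} gives that $\vdash \FD(A.f) : \overline{t} \rightarrow t'$ is derivable. As \ruleTagText{T-FUN} is the only rule whose conclusion is a function judgment, inversion on this derivation yields its premise $\Gamma ; A \vdash c : s$ for $\Gamma = [\overline{x}:\overline{t}, r:t']$ and some base type $s$. I then assume the hypotheses of the function-noninterference condition, namely $t'(P) \leq_{\SL} l_O$, $\eta_1 =^{l_O}_{\app{\Gamma}{P}} \eta'_1$, $\eta_1; A; P \vdash c \leadsto \eta_2$, and $\eta'_1; A; P \vdash c \leadsto \eta'_2$, and feed the command typing $\Gamma; A \vdash c : s$ together with these executions into Lemma~\ref{app-lem:comni}. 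This immediately produces $\eta_2 =^{l_O}_{\app{\Gamma}{P}} \eta'_2$, with no use yet of the constraint on $t'(P)$.

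It then remains to extract the required equality $\eta_2(r) = \eta'_2(r)$ from this indistinguishability. I would unfold Definition~\ref{app-def:envequ} at the return variable $r \in dom(\app{\Gamma}{P})$. Here $\app{\Gamma}{P}(r) = \app{\Gamma(r)}{P} = \app{t'}{P}$, and by Definition~\ref{app-def:projection} this base type sends every $Q$ to $t'(P)$, i.e.\ $\app{t'}{P} = \hat{t'(P)}$. The hypothesis $t'(P) \leq_{\SL} l_O$ then gives $\app{t'}{P}(Q) = t'(P) \leq_{\SL} l_O = \hat{l}_O(Q)$ for every $Q$, that is $\app{t'}{P} \leq_{\Tcal} \hat{l}_O$. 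Consequently the guard of Definition~\ref{app-def:envequ} fires at $r$ and delivers $\eta_2(r) = \eta'_2(r)$, completing the per-function obligation.

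I do not expect a genuine obstacle at this level: all the real difficulty resides inside Lemma~\ref{app-lem:comni} (in particular its \ruleTagText{T-CALL} and \ruleTagText{T-CP} cases), which I may assume. The only step requiring a little care is the final one, where one must observe that projecting the return type $t'$ at $r$ collapses it to the constant base type $\hat{t'(P)}$, so that the security-level comparison $t'(P) \leq_{\SL} l_O$ lifts to the base-type comparison $\app{t'}{P} \leq_{\Tcal} \hat{l}_O$ that the indistinguishability guard demands.
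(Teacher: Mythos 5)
Your proposal is correct and matches the paper's proof, which is literally just ``Follows from Lemma~\ref{app-lem:comni}''; you have merely filled in the routine details that the paper leaves implicit (inversion on \ruleTagText{T-FUN} to obtain the command typing of the body, and the observation that $\app{t'}{P}$ is the constant type $\widehat{t'(P)}$, so $t'(P)\leq_{\SL} l_O$ makes the indistinguishability guard fire at $r$). No gap.
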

\begin{proof}
Follows from Lemma~\ref{app-lem:comni}.
\end{proof}

 \section{Type Inference}\label{sec:app-infer}

\begin{definition}\label{app-def:app-pt-t}
Given a base type $t$ and a permission trace $\trace$, the \termEmph{application} of $\trace$ on $t$, denoted as $t\cdot\trace$,  is defined as follows:
\begin{equation*}
{t\cdot \trace =}
\begin{cases}
t & {\normalfont\text{if}}~\trace = \epsilon\\
(t\uparrow_{p})\cdot \trace' & {\normalfont\text{if}}~\exists p, \trace'. (\trace = \tplus{p} :: \trace')\\
(t\downarrow_{p})\cdot \trace' & {\normalfont\text{if}}~\exists p, \trace'. (\trace = \tminus{p} :: \trace')
\end{cases}
\end{equation*}
\end{definition}

\begin{definition}\label{app-def:partial-subtype}
The \termEmph{partial subtyping relation} $\leq_{\trace} $, which is the subtyping relation applied on the permission trace, is defined as $s\leq_{\trace}  t \text{ iff. } s\cdot \trace \leq t\cdot \trace$.
\end{definition}

\begin{lemma}\label{app-lem:monotrace}
$\forall s, t \in \Tcal$, $s\leq t \implies s\leq_{\trace} t$.
\end{lemma}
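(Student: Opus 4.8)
The plan is to prove the statement by induction on the structure (equivalently, the length $\tlength{\trace}$) of the permission trace $\trace$, keeping the base types $s,t$ universally quantified so that the induction hypothesis can be reapplied to promoted and demoted types. Unfolding Definition~\ref{app-def:partial-subtype}, the goal $s \leq_{\trace} t$ is exactly $s \cdot \trace \leq t \cdot \trace$, so it suffices to show that applying any trace to both sides preserves the base-type order $\leq_{\Tcal}$. The single engine driving the argument is Lemma~\ref{app-lem:pd-order}, which states that promotion and demotion along a fixed permission are monotone with respect to $\leq_{\Tcal}$.

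For the base case $\trace = \epsilon$, Definition~\ref{app-def:app-pt-t} gives $s \cdot \epsilon = s$ and $t \cdot \epsilon = t$, so the hypothesis $s \leq t$ is already the conclusion. For the inductive step, suppose the claim holds for all pairs of base types under every trace of length $n$, and let $\trace$ have length $n+1$. If $\trace = \tplus{p} :: \trace'$, then by Definition~\ref{app-def:app-pt-t} we have $s \cdot \trace = (s\!\uparrow_{p}) \cdot \trace'$ and $t \cdot \trace = (t\!\uparrow_{p}) \cdot \trace'$. From $s \leq t$ and Lemma~\ref{app-lem:pd-order} we obtain $s\!\uparrow_{p} \leq t\!\uparrow_{p}$; applying the induction hypothesis to this pair under $\trace'$ yields $(s\!\uparrow_{p})\cdot\trace' \leq (t\!\uparrow_{p})\cdot\trace'$, which is the desired $s \cdot \trace \leq t \cdot \trace$. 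The case $\trace = \tminus{p} :: \trace'$ is identical, using the demotion half of Lemma~\ref{app-lem:pd-order} to get $s\!\downarrow_{p} \leq t\!\downarrow_{p}$.

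There is no real obstacle here: the only point requiring care is that the induction must be set up over all $s,t$ simultaneously, since the inductive call is made on the promoted (respectively demoted) pair $(s\!\uparrow_{p}, t\!\uparrow_{p})$ rather than on the original $(s,t)$. Once the quantifiers are arranged in this order, the lemma is simply a lift of the pointwise monotonicity of promotion and demotion along the entire trace. Note also that consistency of $\trace$ is irrelevant for this direction, since Lemma~\ref{app-lem:pd-order} applies to each individual promotion or demotion regardless of how many times a permission is repeated.
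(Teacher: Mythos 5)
Your proof is correct and follows essentially the same route as the paper's: induction on $\tlength{\trace}$, with the base case $\trace=\epsilon$ immediate and the inductive step obtained by applying Lemma~\ref{app-lem:pd-order} to get $s\!\uparrow_{p}\leq t\!\uparrow_{p}$ (resp.\ $s\!\downarrow_{p}\leq t\!\downarrow_{p}$) and then invoking the induction hypothesis on the promoted/demoted pair under $\trace'$. Your explicit remark about keeping $s,t$ universally quantified in the induction is a point the paper leaves implicit, but the argument is the same.
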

\begin{proof}
By induction on $\tlength{\trace}$.\\
\begin{itemize}
\item $\trace=\epsilon$. 
Since $s\cdot\trace=s$ and $t\cdot\trace=t$, the conclusion holds trivially.
\item 
$\trace=\tplus{p} :: \trace'$ or $\trace=\tminus{p} :: \trace'$. Assume it is the former case, the latter is similar. 
By the hypothesis and Lemma \ref{app-lem:pd-order}, $s\uparrow_{p}\leq t\uparrow_{p}$.
Then by induction, $s\uparrow_{p}\cdot\trace'\leq t\uparrow_{p}\cdot\trace'$, 
that is, $s\cdot\trace\leq t\cdot\trace$.
\end{itemize}
\end{proof}

\begin{lemma}\label{app-lem:traceorder}
$\forall t\in\Tcal$, $p,q\in\PS$ s.t. $ p\neq q$ , $t \cdot (\tplusminus{p} \tplusminuss{q}) =  t \cdot (\tplusminuss{q}  \tplusminus{p}) $, where $\tplusminus{},\tplusminuss{}, \in\{\tplus{},\tminus{}\}$. 
\end{lemma}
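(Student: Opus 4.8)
The plan is to prove equality of the two base types by showing that they agree pointwise, i.e. that $\big(t\cdot(\tplusminus{p}\tplusminuss{q})\big)(P) = \big(t\cdot(\tplusminuss{q}\tplusminus{p})\big)(P)$ for every $P\in\PPS$, and then conclude via the definition of base-type equality (Def.~\ref{app-def:base_type}). First I would fix an arbitrary $P$ and unfold both sides using the definition of trace application (Def.~\ref{app-def:app-pt-t}): since $\tplusminus{p}\tplusminuss{q}$ is the two-element trace $\tplusminus{p}::\tplusminuss{q}::\epsilon$, the left-hand side reduces to applying a single promotion/demotion along $p$ followed by a single promotion/demotion along $q$, and symmetrically the right-hand side reverses the order.

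Next I would split into four cases according to whether each of $\tplusminus{}$ and $\tplusminuss{}$ is $\tplus{}$ or $\tminus{}$, and in each case apply Def.~\ref{app-def:updown} to rewrite the composite as $t$ evaluated at a set built from $P$, $p$, $q$ by adjoining or deleting. Concretely: when both are promotions, $\big((t\uparrow_p)\uparrow_q\big)(P)=t(P\cup\{p,q\})$ and likewise for the other order, so the two coincide by associativity and commutativity of $\cup$; when both are demotions, both orders yield $t(P\setminus\{p,q\})$. In these two ``pure'' cases the hypothesis $p\neq q$ is not even needed.

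The only case using $p\neq q$ is the mixed one, say promote along $p$ then demote along $q$ versus the reverse. Here I must show $\big((t\uparrow_p)\downarrow_q\big)(P)=t\big((P\setminus\{q\})\cup\{p\}\big)$ equals $\big((t\downarrow_q)\uparrow_p\big)(P)=t\big((P\cup\{p\})\setminus\{q\}\big)$, which comes down to the elementary set identity $(P\setminus\{q\})\cup\{p\}=(P\cup\{p\})\setminus\{q\}$. This holds precisely because $p\neq q$ guarantees that removing $q$ does not delete the freshly added $p$ (with $p=q$ the two sides would differ), and the mirror subcase is identical. Having shown that both sides evaluate $t$ at the same argument for every $P$, the two base types are equal.

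I expect no genuine difficulty: the lemma is simply a commutation fact for the actions of promotion and demotion on base types, and the sole substantive observation is the mixed-case set identity guarded by $p\neq q$, which is routine. The main ``obstacle'' is merely the bookkeeping of keeping the four cases and the order of the two operations straight.
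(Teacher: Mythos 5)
Your proposal is correct and follows essentially the same route as the paper's proof: both reduce the claim to pointwise evaluation via Definitions~\ref{app-def:app-pt-t} and~\ref{app-def:updown} and then to elementary set identities, with the mixed promotion/demotion case being the only one that uses $p\neq q$ (the paper writes out just that case, via $(P\setminus\{p\})\cup\{q\}=(P\cup\{q\})\setminus(\{p\}\setminus\{q\})$, and declares the rest similar). Your version is marginally more explicit in enumerating all four cases, but the argument is the same.
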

\begin{proof}
We only prove $t\cdot(\tminus{p} \tplus{ q})=t\cdot (\tplus{ q} \tminus{ p})$, the other cases are similar.
Consider any $P\in\PPS$,
\begin{equation*}
\begin{array}{lcl}
	 t\cdot(\tminus{p} \tplus{ q})(P) 
	 &=&((t\downarrow_{p})\uparrow_{q})(P)\\
	 &=&(t\downarrow_{p}(P\cup\{q\}))\\
	 &=&t((P\cup\{q\})\setminus\{p\})\\
\end{array}
\end{equation*}
and
\begin{equation*}
\begin{array}{lcl}
t\cdot(\tplus{ q} \tminus{ p})(P)
	&=&((t\uparrow_{q})\downarrow_{p})(P)\\
	&=&t((P\setminus\{p\})\cup \{q\})\\
	&=&t((P\cup\{q\})\setminus(\{p\}\setminus\{q\}))\\
	&=&t((P\cup\{q\})\setminus\{p\})\\
\end{array}
\end{equation*}
Therefore, $t\cdot(\tminus{ p} \tplus{ q})=t\cdot(\tplus{ q} \tminus{ p})$.
\end{proof}

\begin{lemma}\label{app-lem:traceorder-whole}
$\forall t\in\Tcal$,$(t\cdot \tplusminus{p} )\cdot \trace = (t\cdot\trace)\cdot \tplusminus{p}$, $\tplusminus{}\in\{\tplus{}, \tminus{}\}$ and $p \notin \trace$.
\end{lemma}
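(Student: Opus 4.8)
The plan is to proceed by induction on the length $\tlength{\trace}$ of the trace, using the single-step commutation result Lemma~\ref{app-lem:traceorder} as the engine that pushes $\tplusminus{p}$ past each element of $\trace$ one at a time. The hypothesis $p \notin \trace$ is exactly what guarantees that every element of $\trace$ concerns a permission distinct from $p$, so that Lemma~\ref{app-lem:traceorder} (which requires distinct permissions) applies at each stage of the commutation.

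For the base case $\trace = \epsilon$, both $(t\cdot\tplusminus{p})\cdot\epsilon$ and $(t\cdot\epsilon)\cdot\tplusminus{p}$ reduce to $t\cdot\tplusminus{p}$ by Definition~\ref{app-def:app-pt-t}, so the equality is immediate. For the inductive step I would write $\trace = \tplusminuss{q} :: \trace'$; since $p \notin \trace$ we have both $p \neq q$ and $p \notin \trace'$. Then I would compute, using the definition of trace application to peel off the head $\tplusminuss{q}$, the commutation Lemma~\ref{app-lem:traceorder} (valid because $p \neq q$) to swap the two adjacent single steps $\tplusminus{p}$ and $\tplusminuss{q}$, and finally the induction hypothesis on $\trace'$ (valid because $p \notin \trace'$) to push $\tplusminus{p}$ past the remainder of the trace:
\begin{align*}
(t\cdot\tplusminus{p})\cdot\trace
&= ((t\cdot\tplusminus{p})\cdot\tplusminuss{q})\cdot\trace' \\
&= ((t\cdot\tplusminuss{q})\cdot\tplusminus{p})\cdot\trace' \\
&= ((t\cdot\tplusminuss{q})\cdot\trace')\cdot\tplusminus{p} \\
&= (t\cdot\trace)\cdot\tplusminus{p}.
\end{align*}
The first and last equalities are by Definition~\ref{app-def:app-pt-t}, the second by Lemma~\ref{app-lem:traceorder}, and the third by the induction hypothesis applied to the base type $t\cdot\tplusminuss{q}$.

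The argument is essentially routine and presents no genuine mathematical difficulty; the key insight (that single steps on distinct permissions commute) has already been isolated in Lemma~\ref{app-lem:traceorder}. The only point requiring care is the bookkeeping of the side conditions: one must verify that the head element $\tplusminuss{q}$ satisfies $q \neq p$ and that the tail $\trace'$ still satisfies $p \notin \trace'$, both of which follow directly from $p \notin \trace$. A second minor subtlety is that the induction hypothesis must be invoked not on $t$ itself but on the shifted base type $t\cdot\tplusminuss{q}$; since the statement is universally quantified over all $t\in\Tcal$, this is unproblematic, but one should state the induction hypothesis with the universal quantifier over the base type made explicit so that this instantiation is clearly licensed.
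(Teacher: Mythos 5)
Your proof is correct and follows essentially the same route as the paper's: induction on $\tlength{\trace}$, peeling off the head of the trace, commuting the single step past it via Lemma~\ref{app-lem:traceorder}, and applying the induction hypothesis (to the shifted base type) on the tail. The only cosmetic difference is that the paper treats $\tlength{\trace}=1$ as an extra base case, whereas you absorb it into the inductive step; your explicit bookkeeping of the side conditions $q\neq p$ and $p\notin\trace'$ is a welcome addition.
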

\begin{proof}
By induction on $\tlength{\trace}$. The conclusion holds when $\tlength{\trace}=0$ and $\tlength{\trace}=1$ by Lemma~\ref{app-lem:traceorder}. Suppose $\tlength{\trace}>1$, there exists $\trace'$ and $q$ such that $\trace=\tplusminuss{q} :: \trace'$ where $\tplusminuss{}\in\{\tplus{}, \tminus{}\}$. 
\begin{equation*}
\begin{array}{lcll}
&&(t\cdot\tplusminus{p})\cdot \trace &\\
&=& ((t\cdot \tplusminus{ p})  \cdot \tplusminuss{ q}) \cdot \trace'&\text{(by Definition~\ref{app-def:app-pt-t})}\\
              &=& (t\cdot(\tplusminus{ p}\tplusminuss{ q} )) \cdot \trace' &\text{(by Definition~\ref{app-def:app-pt-t})}\\
              &=& (t\cdot(\tplusminuss{ q} \tplusminus{ p})) \cdot \trace' &\text{(by Lemma \ref{app-lem:traceorder})}\\
              &=& ((t\cdot\tplusminuss{ q}) \cdot \tplusminus{ p}) \cdot \trace' &\text{(by Definition~\ref{app-def:app-pt-t})}\\
              &=& ((t\cdot\tplusminuss{ q}) \cdot \trace') \cdot \tplusminus{ p}  & \text{(induction hypothesis)}\\
              &=& (t\cdot (\tplusminuss{ q} :: \trace')) \cdot \tplusminus{ p} &\text{(by Definition~\ref{app-def:app-pt-t})}\\
              &=& (t\cdot \trace) \cdot \tplusminus{p} & \text{}
\end{array}
\end{equation*}
\end{proof}

\begin{lemma}\label{app-lem:tracepsame}
$\forall t\in\Tcal$, $p\in\PS$,$(t \cdot \tplusminus{ p})\cdot\tplusminuss{ p} =  t \cdot (\tplusminus{ p}) $, where $\tplusminus{}, \tplusminuss{} \in \{\tplus{},\tminus{}\}$.
\end{lemma}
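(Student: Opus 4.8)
The plan is to unfold the definitions of promotion and demotion (Definition~\ref{app-def:updown}) and verify the claimed equality pointwise at an arbitrary permission set $P \in \PPS$. Since each of $\tplusminus{}$ and $\tplusminuss{}$ ranges over $\{\tplus{},\tminus{}\}$, there are four cases; recalling from Definition~\ref{app-def:app-pt-t} that $t\cdot\tplus{p} = t\uparrow_p$ and $t\cdot\tminus{p} = t\downarrow_p$, these amount to checking $(t\uparrow_p)\uparrow_p$, $(t\uparrow_p)\downarrow_p$, $(t\downarrow_p)\uparrow_p$ and $(t\downarrow_p)\downarrow_p$ against $t\uparrow_p$ or $t\downarrow_p$ respectively.

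First I would dispatch the two matching cases. Evaluating $(t\uparrow_p)\uparrow_p$ at $P$ gives $t((P\cup\{p\})\cup\{p\}) = t(P\cup\{p\}) = (t\uparrow_p)(P)$, and symmetrically $(t\downarrow_p)\downarrow_p$ at $P$ gives $t((P\setminus\{p\})\setminus\{p\}) = t(P\setminus\{p\}) = (t\downarrow_p)(P)$; both reduce to the idempotence of $\cup\{p\}$ and of $\setminus\{p\}$.

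The only mildly interesting cases are the mixed ones. For $(t\uparrow_p)\downarrow_p$ I would compute $((t\uparrow_p)\downarrow_p)(P) = (t\uparrow_p)(P\setminus\{p\}) = t((P\setminus\{p\})\cup\{p\}) = t(P\cup\{p\}) = (t\uparrow_p)(P)$, using the set identity $(P\setminus\{p\})\cup\{p\} = P\cup\{p\}$; dually, $(t\downarrow_p)\uparrow_p$ at $P$ equals $t((P\cup\{p\})\setminus\{p\}) = t(P\setminus\{p\}) = (t\downarrow_p)(P)$, via $(P\cup\{p\})\setminus\{p\} = P\setminus\{p\}$. In every case the value at $P$ coincides with that of $t\cdot\tplusminus{p}$, so the two base types are equal by Definition~\ref{app-def:base_type}.

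There is essentially no hard step: the lemma rests on the observation that a single application of $\uparrow_p$ (resp.\ $\downarrow_p$) already normalises the permission-set argument with respect to membership of $p$, so any further adjustment on the same $p$ is absorbed. The only care required is the bookkeeping across the four cases and the correct use of the elementary set identities; no auxiliary result beyond the two definitions is needed.
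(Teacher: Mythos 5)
Your proof is correct and follows essentially the same route as the paper's: a four-way case analysis on $\tplusminus{}$ and $\tplusminuss{}$, evaluating each composite pointwise at an arbitrary $P\in\PPS$ and reducing to the set identities $(P\cup\{p\})\cup\{p\}=P\cup\{p\}$, $(P\setminus\{p\})\cup\{p\}=P\cup\{p\}$, $(P\cup\{p\})\setminus\{p\}=P\setminus\{p\}$ and $(P\setminus\{p\})\setminus\{p\}=P\setminus\{p\}$. No gaps; the argument matches the paper's Lemma~\ref{app-lem:tracepsame} proof step for step.
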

\begin{proof}
By case analysis. 
\begin{itemize}
\item $((t\cdot \tplus{ p})\cdot \tplus{ p})(P)=t(P\cup\{p\}\cup\{p\})  = t(P\cup\{p\}) =t\cdot(\tplus{ p})$ for each $P\in\PPS$.
\item $((t\cdot \tplus{ p})\cdot \tminus{ p})(P)=t((P\setminus\{p\})\cup\{p\})  = t(P\cup\{p\}) =t\cdot(\tplus{ p})$ for each $P\in\PPS$.
\item $((t\cdot \tminus{ p})\cdot \tplus{ p})(P) = t((P\cup\{p\})\setminus\{p\})=t(P\setminus\{p\})=t\cdot(\tminus{ p})$ for each $P\in\PPS$.
\item $((t\cdot \tminus{ p})\cdot \tminus{ p})(P) = t((P\setminus\{p\})\setminus\{p\})=t(P\setminus\{p\})=t\cdot(\tminus{ p})$ for each $P\in\PPS$.
\end{itemize}
\end{proof}

\begin{lemma}\label{app-lem:tracesame}
$\forall t\in\Tcal$, $(t \cdot \trace)\cdot \trace =  t \cdot \trace $. 
\end{lemma}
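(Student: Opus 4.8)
The plan is to proceed by induction on the length $\tlength{\trace}$ of the permission trace, relying on the standing assumption (stated just before this lemma) that all traces are \emph{consistent}, i.e., $occur(p,\trace)\in\{0,1\}$ for every $p$, so that no permission occurs more than once in $\trace$. This consistency condition is exactly what makes the earlier commutation and collapse lemmas applicable at each step, so I would flag it explicitly as a hypothesis carried through the induction.

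In the base case $\trace=\epsilon$, Definition~\ref{app-def:app-pt-t} gives $t\cdot\epsilon = t$, so both sides equal $t$ and there is nothing to prove. For the inductive step I would write $\trace = \tplusminus{p}::\trace'$ with $\tplusminus{}\in\{\tplus{},\tminus{}\}$; consistency of $\trace$ guarantees $p\notin\trace'$. Setting $s = t\cdot\tplusminus{p}$, so that $t\cdot\trace = s\cdot\trace'$, I would first unfold the outer application using Definition~\ref{app-def:app-pt-t} to obtain $(t\cdot\trace)\cdot\trace = \big((s\cdot\trace')\cdot\tplusminus{p}\big)\cdot\trace'$.

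The computation then threads three facts together. Since $p\notin\trace'$, Lemma~\ref{app-lem:traceorder-whole} lets me commute the single $\tplusminus{p}$ past the trailing $\trace'$, giving $(s\cdot\trace')\cdot\tplusminus{p} = (s\cdot\tplusminus{p})\cdot\trace'$. Next, Lemma~\ref{app-lem:tracepsame} collapses the resulting repeated application on the \emph{same} permission $p$, since $s\cdot\tplusminus{p} = (t\cdot\tplusminus{p})\cdot\tplusminus{p} = t\cdot\tplusminus{p} = s$. Substituting back, the whole expression reduces to $(s\cdot\trace')\cdot\trace'$. Finally, applying the induction hypothesis to the strictly shorter trace $\trace'$, instantiated at the base type $s$, yields $(s\cdot\trace')\cdot\trace' = s\cdot\trace'$, and since $s\cdot\trace' = (t\cdot\tplusminus{p})\cdot\trace' = t\cdot\trace$, the step is complete.

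I expect the only delicate point — the ``main obstacle'' in an otherwise routine induction — to be the bookkeeping that keeps $p\notin\trace'$ available, because Lemma~\ref{app-lem:traceorder-whole} fails without it; this is where the consistency assumption does the real work. Everything else is a mechanical chaining of Definition~\ref{app-def:app-pt-t}, Lemma~\ref{app-lem:traceorder-whole}, Lemma~\ref{app-lem:tracepsame}, and the induction hypothesis, with no genuine case analysis on whether $\tplusminus{}$ is a promotion or a demotion, since all the cited results are stated uniformly over $\tplusminus{}\in\{\tplus{},\tminus{}\}$.
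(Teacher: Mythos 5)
Your proof is correct and follows essentially the same route as the paper's: induction on $\tlength{\trace}$, commuting the head operation $\tplusminus{p}$ past the tail $\trace'$ via Lemma~\ref{app-lem:traceorder-whole}, collapsing the duplicated application on $p$ via Lemma~\ref{app-lem:tracepsame}, and closing with the induction hypothesis. Your explicit remark that consistency of $\trace$ is what supplies the side condition $p\notin\trace'$ needed by Lemma~\ref{app-lem:traceorder-whole} is a point the paper's proof leaves implicit, but the argument is otherwise the same.
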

\begin{proof}
By induction on $\tlength{\trace}$. The conclusion holds trivially for $\tlength{\trace}=0$ and for $\tlength{\trace} = 1$ by Lemma \ref{app-lem:tracepsame} . When $\tlength{\trace}>1$, without loss of generality, assume $\trace=\tplus {p} :: \trace'$.
\begin{equation*}
\begin{array}{lcll}
&&t\cdot\trace\cdot\trace&\\
&=&(t\cdot (\tplus{ p} \cdot \trace')) \cdot (\tplus{ p}::\trace') &\\
&=&(t\cdot (\trace'\cdot\tplus{ p}))\cdot (\tplus{ p}::\trace') & \text{(by Lemma~\ref{app-lem:traceorder-whole})}\\
&=&(((t\cdot\trace')\cdot \tplus{ p})\cdot\tplus{ p})\cdot\trace' &\text{(by Definition~\ref{app-def:app-pt-t})}\\
&=&((t\cdot\trace')\cdot \tplus{ p})\cdot\trace' &\text{(by Lemma~\ref{app-lem:tracepsame})}\\
&=&((t\cdot \tplus{ p})\cdot\trace')\cdot\trace' &\text{(by Lemma~\ref{app-lem:traceorder-whole})}\\
&=&(t\cdot \tplus{ p}) \cdot \trace' & \text{(induction hypothesis)}\\
&=&t\cdot (\tplus{ p}::\trace') & \text{(by Definition~\ref{app-def:app-pt-t})}\\
&=&t\cdot\trace
\end{array}
\end{equation*}
\end{proof}

\begin{lemma}\label{app-lem:tracemerge}
$\forall s,t\in\Tcal.\; \forall p\in\PS.  (\omerge{p}{s}{t})\cdot \trace = \omerge{p}{(s\cdot\trace)}{(t\cdot\trace)} $, where $p\notin \trace$.
\end{lemma}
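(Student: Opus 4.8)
The plan is to prove the identity by induction on the trace length $\tlength{\trace}$, exploiting the fact that $p\notin\trace$ forces every step of $\trace$ to be a promotion or demotion along some permission $q\neq p$, and such an operation cannot alter whether $p$ belongs to a given permission set. This is precisely the property that allows the merge along $p$ to commute past the trace application, and it is where the hypothesis $p\notin\trace$ is essential.

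First I would dispatch the base case $\trace=\epsilon$, where both sides collapse to $\omerge{p}{s}{t}$ by Definition~\ref{app-def:app-pt-t}. The crux of the argument is the single-operation case: for any $q\neq p$ I claim $(\omerge{p}{s}{t})\cdot\tplusminus{q}=\omerge{p}{(s\cdot\tplusminus{q})}{(t\cdot\tplusminus{q})}$, where $\tplusminus{}\in\{\tplus{},\tminus{}\}$. To verify this I would evaluate both sides at an arbitrary $P\in\PPS$. For the promotion case, Definition~\ref{app-def:updown} gives that the left side equals $(\omerge{p}{s}{t})(P\cup\{q\})$; since $q\neq p$ we have $p\in P\cup\{q\}$ iff $p\in P$, so a case split on $p\in P$ selects the same branch ($s$ or $t$) on both sides, each reducing to $s(P\cup\{q\})$ or $t(P\cup\{q\})$ respectively, which matches the right side by Definition~\ref{app-def:merge}. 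The demotion case is identical with $P\setminus\{q\}$ in place of $P\cup\{q\}$, again using $p\in P\setminus\{q\}$ iff $p\in P$ when $q\neq p$.

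For the inductive step I would write $\trace=\tplusminus{q}::\trace'$, where $p\notin\trace$ yields both $q\neq p$ and $p\notin\trace'$. Peeling off the head via Definition~\ref{app-def:app-pt-t} gives $(\omerge{p}{s}{t})\cdot\trace=((\omerge{p}{s}{t})\cdot\tplusminus{q})\cdot\trace'$; applying the single-operation case rewrites this as $(\omerge{p}{(s\cdot\tplusminus{q})}{(t\cdot\tplusminus{q})})\cdot\trace'$; then the induction hypothesis on $\trace'$, instantiated at the types $s\cdot\tplusminus{q}$ and $t\cdot\tplusminus{q}$, yields $\omerge{p}{((s\cdot\tplusminus{q})\cdot\trace')}{((t\cdot\tplusminus{q})\cdot\trace')}$, which folds back to $\omerge{p}{(s\cdot\trace)}{(t\cdot\trace)}$ by Definition~\ref{app-def:app-pt-t}.

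The argument is essentially routine and presents no serious obstacle; the only point demanding care is the single-operation case, namely recognizing that promotion and demotion along a permission $q\neq p$ preserve $p$-membership of every permission set. Without the hypothesis $p\notin\trace$ this would fail—an operation touching $p$ itself would move a permission set across the merge boundary—so the whole proof hinges on tracking that side condition through each step of the induction.
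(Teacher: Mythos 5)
Your proof is correct and follows essentially the same route as the paper's: induction on $\tlength{\trace}$, with the key pointwise observation that a promotion or demotion along $q\neq p$ does not change whether $p$ belongs to a permission set, so the merge selects the same branch on both sides. The only (cosmetic) difference is that you peel the first operation off the trace and apply the induction hypothesis at the shifted types $s\cdot\tplusminus{q}$, $t\cdot\tplusminus{q}$, whereas the paper peels the last operation and applies the hypothesis to the prefix at the original $s,t$; both are sound.
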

\begin{proof}
By induction on $\tlength{\trace}$.
\begin{ProofEnumDesc}
\item[$\tlength{\trace} = 0$:] Trivially.
\item[$\tlength{\trace} > 0$:] In this case we have $\trace = \trace'::\tplus{q}$ or $\trace'::\tminus{q}$, where $\tlength{\trace'} \geq 0$, $p\notin \trace' $, $q\neq p$. We only prove $\trace'::\tplus{q}$, the other case is similar. Consider any $P$, we have
$$
\begin{array}{rlll}
&&((\omerge{p}{s}{t})\cdot (\trace'::\tplus{q}))(P)&\\ 
&= & ((\omerge{p}{s}{t})\cdot\trace')(P\cup\{q\}) & \\
 &= & (\omerge{p}{(s\cdot\trace')}{(t\cdot\trace')})(P\cup\{q\})~~~~~(\textrm{By induction}) & \\
& = & \left \{ \begin{array}{ll} (s\cdot\trace')(P\cup \{q\})  & p\in P \\ (t\cdot\trace')(P\cup \{q\})  & p\notin P \end{array} \right.&\\
& = & \left \{ \begin{array}{ll} (s\cdot(\trace'::\tplus{q}))(P)  & p\in P \\ (t\cdot(\trace'::\tplus{q}))(P)  & p\notin P \end{array} \right.&\\
&= & (\omerge{p}{(s\cdot(\trace'::\tplus{q}))}{(t\cdot(\trace'::\tplus{q}))})(P) &\\ 
\end{array}
$$ 
\end{ProofEnumDesc}
\end{proof}

\begin{lemma}\label{app-lem:tracesub}
$\forall s,t\in\Tcal.\; \forall p\in\PS.$ $s\leq t\Leftrightarrow  s\cdot\tplus{ p}\leq t\cdot\tplus{ p}$ and $s\cdot\tminus{ p}\leq t\cdot\tminus{ p}$. 
\end{lemma}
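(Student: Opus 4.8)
The plan is to unfold the trace applications into promotions and demotions and then split the biconditional into its two directions. By Definition~\ref{app-def:app-pt-t}, we have $s\cdot\tplus{p} = s\uparrow_{p}$ and $s\cdot\tminus{p} = s\downarrow_{p}$ (and likewise for $t$), so the claim is equivalent to: $s \leq t$ iff $s\uparrow_{p} \leq t\uparrow_{p}$ and $s\downarrow_{p} \leq t\downarrow_{p}$.

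For the forward direction ($\Rightarrow$), I would simply invoke Lemma~\ref{app-lem:pd-order}, which states precisely that $s \leq t$ implies $s\uparrow_{p} \leq t\uparrow_{p}$ and $s\downarrow_{p} \leq t\downarrow_{p}$. Nothing further is needed here.

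For the backward direction ($\Leftarrow$), I would assume $s\uparrow_{p} \leq t\uparrow_{p}$ and $s\downarrow_{p} \leq t\downarrow_{p}$, and show $s(P) \leq_{\SL} t(P)$ for every $P \in \PPS$. The key idea is a case analysis on whether $p \in P$: these two cases partition $\PPS$, and in each case exactly one of the two hypotheses collapses to the desired inequality. If $p \in P$, then Lemma~\ref{app-lem:promote-demote}\ref{app-lem:promote-demote-1} gives $(s\uparrow_{p})(P) = s(P)$ and $(t\uparrow_{p})(P) = t(P)$, so the assumption $s\uparrow_{p} \leq t\uparrow_{p}$ instantiated at $P$ yields $s(P) \leq_{\SL} t(P)$. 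Dually, if $p \notin P$, then Lemma~\ref{app-lem:promote-demote}\ref{app-lem:promote-demote-2} gives $(s\downarrow_{p})(P) = s(P)$ and $(t\downarrow_{p})(P) = t(P)$, so $s\downarrow_{p} \leq t\downarrow_{p}$ at $P$ yields $s(P) \leq_{\SL} t(P)$. Since $P$ was arbitrary, $s \leq t$ follows by Definition~\ref{app-def:base_type}.

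I do not anticipate a genuine obstacle: the result is essentially a bookkeeping consequence of the two auxiliary lemmas. The only point requiring a moment's care is recognizing that the promotion condition ($p \in P$) and the demotion condition ($p \notin P$) together cover all of $\PPS$, so that the two hypotheses jointly constrain $s(P)$ and $t(P)$ at every permission set; this exhaustiveness is exactly what makes the converse hold even though neither promotion nor demotion alone determines the type at all permission sets.
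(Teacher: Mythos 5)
Your proof is correct and follows essentially the same route as the paper's: the backward direction is the identical case split on $p \in P$ versus $p \notin P$ using Lemma~\ref{app-lem:promote-demote}, and for the forward direction you cite Lemma~\ref{app-lem:pd-order} directly where the paper cites Lemma~\ref{app-lem:monotrace} instantiated at the one-element traces $\tplus{p}$ and $\tminus{p}$ --- the same fact, since that lemma is itself built on Lemma~\ref{app-lem:pd-order}.
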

\begin{proof}
($\Rightarrow$) by applying Lemma \ref{app-lem:monotrace} with $\trace=\tplus{ p}$ and $\trace=\tminus{p}$ respectively.\\
($\Leftarrow$) $\forall P\in\PPS$,
\begin{enumerate}[label=(\arabic*)]
	\item If $p\in P$, by Lemma~\ref{app-lem:promote-demote}\ref{app-lem:promote-demote-1}, $s(P)=(s\uparrow_{p})(P)=(s\cdot\tplus{ p})(P)$ and $t(P)=(t\uparrow_{p})(P)=(t\cdot\tplus{ p})(P)$, since $s\cdot\tplus{ p}\leq t\cdot\tplus{ p}$, then $s(P)\leq t(P)$.
	\item If $p\not\in P$, by Lemma~\ref{app-lem:promote-demote}\ref{app-lem:promote-demote-2}, $s(P)=(s\downarrow_{p})(P)=(s\cdot\tminus{ p})(P)$ and $t(P)=(t\downarrow_{p})(P)=(t\cdot\tminus{ p})(P)$, since $s\cdot\tminus{ p}\leq t\cdot\tminus{p}$, then $s(P)\leq t(P)$.
\end{enumerate}
This indicates that $s\leq t$.
\end{proof}

\begin{lemma}\label{app-lem:ptrsound}
Let $\trace$ be the permission trace collected from the context of $e$ or $c$. 
\begin{enumerate}[label=(\alph*),topsep=1pt,itemsep=-1ex,partopsep=1ex,parsep=0ex]
\item\label{app-lem:ptrsound-1} If $\Gamma;\trace\vdash_{t} e : t$, then $\Gamma\cdot\trace\vdash e : (t\cdot\trace)$.\\
\item\label{app-lem:ptrsound-2} If $\Gamma;\trace; A \vdash_{t} c : t$, then $(\Gamma\cdot\trace); A \vdash c : (t\cdot\trace)$.\\
\item\label{app-lem:ptrsound-3} If $\vdash_{t} \lpPROG{B.f}{\overline{x}}{c}{\lpRES}  : \overline{t}\xrightarrow{} t'$, then $ \vdash \lpPROG{B.f}{\overline{x}}{c}{\lpRES}  : \overline{t}\xrightarrow{} t'$.
\end{enumerate}
\end{lemma}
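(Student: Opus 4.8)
The plan is to prove the three statements simultaneously by induction on the derivation of the trace judgment, handling (a) for expressions first, using it inside (b) for commands, and finally obtaining (c) from (b) by instantiating the trace to $\epsilon$. The recurring pattern is that each trace rule assigning a join/meet type (\textbf{TT-OP}, \textbf{TT-IF}, \textbf{TT-WHILE}, \textbf{TT-SEQ}) or carrying a partial-subtyping side condition must be matched against an original rule of Fig.~\ref{fig:typing-rules} that forces its premises and conclusion to share a single type; the gap is closed by the subsumption rules (\ruleTagText{T-SUB${}_e$} covariantly for expressions, \ruleTagText{T-SUB${}_c$} contravariantly for commands). Two standing facts make this uniform: first, $s \leq_{\trace} t$ unfolds by Definition~\ref{app-def:partial-subtype} to $s\cdot\trace \leq t\cdot\trace$, which is exactly the $\leq$-hypothesis the original rule needs after $\cdot\trace$ is applied; second, trace application is monotone (Lemma~\ref{app-lem:monotrace}), so the lattice bounds of Lemma~\ref{app-lem:bound} survive under $\cdot\trace$, e.g. $(t_1 \sqcap t_2)\cdot\trace \leq t_i\cdot\trace$ and $t_i\cdot\trace \leq (t_1\sqcup t_2)\cdot\trace$.

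With these, the routine cases are mechanical. For \textbf{TT-VAR} the goal is immediate from the pointwise identity $(\Gamma\cdot\trace)(x) = \Gamma(x)\cdot\trace$ together with \ruleTagText{T-VAR}. For \textbf{TT-OP} the induction hypothesis yields $\Gamma\cdot\trace \vdash e_i : t_i\cdot\trace$; lifting both operands to $(t_1\sqcup t_2)\cdot\trace$ via \ruleTagText{T-SUB${}_e$} and applying \ruleTagText{T-OP} closes the case. \textbf{TT-ASS}, \textbf{TT-IF}, \textbf{TT-WHILE} and \textbf{TT-SEQ} are analogous: translate each $\leq_{\trace}$ side condition into a $\leq$ after $\cdot\trace$, subsume the guard expression up and the branch/body commands down to the common type, and apply the corresponding original rule. \textbf{TT-LETVAR} additionally re-types $e$ at the declared type $s'\cdot\trace$ using $s \leq_{\trace} s'$, so that the original \ruleTagText{T-LETVAR} (which binds $x$ at the type of $e$) applies in the environment $(\Gamma\cdot\trace)[x : s'\cdot\trace] = \Gamma[x:s']\cdot\trace$.

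The two genuinely delicate cases are \textbf{TT-CALL} and \textbf{TT-CP}. For TT-CALL I will first record the auxiliary observation that projection is invariant under trace application, $\app{t}{P}\cdot\trace = \app{t}{P}$, which follows by a short induction on $\tlength{\trace}$ from Definitions~\ref{app-def:projection} and~\ref{app-def:app-pt-t}, since a promotion or demotion merely reindexes the argument of a constant function and leaves it unchanged. Consequently the side conditions $\overline{s} \leq_{\trace} \overline{\app{t}{\Theta(A)}}$ and $\app{t'}{\Theta(A)} \leq_{\trace} \Gamma(x)$ become, after $\cdot\trace$, exactly $s_i\cdot\trace \leq \app{t_i}{\Theta(A)}$ and $\app{t'}{\Theta(A)} \leq (\Gamma\cdot\trace)(x)$; subsuming each argument up to $\app{t_i}{\Theta(A)}$ then lets \ruleTagText{T-CALL} fire with conclusion $(\Gamma\cdot\trace)(x) = \Gamma(x)\cdot\trace$. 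For TT-CP, the consistency assumption gives $p \notin \trace$, so the extended traces satisfy $\Gamma\cdot(\trace\,{::}\,\tplus{p}) = (\Gamma\cdot\trace)\!\uparrow_p$ and $t_1\cdot(\trace\,{::}\,\tplus{p}) = (t_1\cdot\trace)\!\uparrow_p$ (and dually for $\tminus{p}$) by Lemma~\ref{app-lem:traceorder-whole}. The induction hypothesis then produces precisely the two premises of \ruleTagText{T-CP} over the environment $\Gamma\cdot\trace$, yielding the type $\omerge{p}{(t_1\cdot\trace)\!\uparrow_p}{(t_2\cdot\trace)\!\downarrow_p}$. The remaining point is to identify this with the required $(\omerge{p}{t_1}{t_2})\cdot\trace$: by Lemma~\ref{app-lem:tracemerge} the latter equals $\omerge{p}{(t_1\cdot\trace)}{(t_2\cdot\trace)}$, and Lemma~\ref{app-lem:promote-demote} shows that $\uparrow_p$ and $\downarrow_p$ are inert on exactly the index sets ($p\in P$ and $p\notin P$ respectively) that the merge selects, so the two merged types coincide. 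Finally (c) follows by instantiating (b) at $\trace = \epsilon$, where $\cdot\epsilon$ is the identity, and applying \ruleTagText{T-FUN}. I expect TT-CP to be the main obstacle, as it is the only case where the merging operator, the promotion/demotion built into \ruleTagText{T-CP}, and the bookkeeping of the extended permission trace must all be reconciled at once; the invariance of projection under traces needed for TT-CALL is the secondary subtlety.
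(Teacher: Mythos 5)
Your proposal is correct and follows essentially the same route as the paper's proof: mutual induction on the trace-rule derivation, closing each $\leq_{\trace}$ side condition with the subsumption rules after applying $\cdot\trace$, using the invariance of projection under trace application for \ruleTagText{TT-CALL} (which the paper uses silently and you sensibly make explicit), and reconciling \ruleTagText{TT-CP} via $p\notin\trace$, Lemma~\ref{app-lem:tracemerge} and Lemma~\ref{app-lem:promote-demote}. No gaps.
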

\begin{proof}
The proof of {\ref{app-lem:ptrsound-1}}, by induction on $\Gamma;\trace \vdash_{t} e: t$.
\begin{ProofEnumDesc}
\item[TT-VAR] Trivially.
\item[TT-OP]  In this case we have $e \cong \leOP{e_1}{e_2}$ and the following derivation
\begin{equation*}
\inference
{\Gamma;\trace \vdash_{t} e_1: t_1 & \Gamma;\trace\vdash_{t} e_2: t_2}
{\Gamma;\trace \vdash_{t} \leOP{e_1}{e_2} : t_1\sqcup t_2}
\end{equation*}
By induction on $e_i$, we can get $\Gamma\cdot\trace \vdash e_i : (t_i\cdot\trace)$. 
By Lemma \ref{app-lem:monotrace} and  $t_i \leq t_1\sqcup t_2$, we have $t_i\cdot \trace \leq (t_1\sqcup t_2)\cdot\trace $.  
By subsumption, we have $\Gamma\cdot\trace \vdash e_i : ((t_1\sqcup t_2)\cdot\trace) $.
Finally, by applying Rule {(T-OP)}, we get $\Gamma\cdot\trace \vdash  \leOP{e_1}{e_2} :((t_1\sqcup t_2)\cdot\trace)  $.
\end{ProofEnumDesc}
The proof of {\ref{app-lem:ptrsound-2}}:
\begin{ProofEnumDesc}

\item[TT-ASS] In this case we have $c \cong \lcASS{x}{e}$ and the following derivation
\begin{equation*}
\inference
{(\mathbf{T_1}) ~ \Gamma;\trace \vdash_{t} e : t & (\mathbf{T_2}) ~ t \leq_{\trace} \Gamma(x)}
{\Gamma;\trace; A \vdash_{t} \lcASS{x}{e} : \Gamma(x)}
\end{equation*}
By \ref{app-lem:ptrsound-1} on $(\mathbf{T_1})$,  $\Gamma\cdot\trace \vdash e : (t\cdot \trace)$.
From $(\mathbf{T_2})$, we get $ t\cdot \trace \leq (\Gamma\cdot\trace)(x)$.
So by subsumption, $\Gamma\cdot\trace \vdash e : (\Gamma\cdot\trace)(x)$.
Finally, by Rule {(T-ASS)}, the result follows.

\item[TT-LETVAR] 
In this case we have $c \cong \lcLETVAR{x}{e}{c'}$ and the following derivation
\begin{equation*}
\inference
{(\mathbf{T_1}) ~\Gamma;\trace \vdash_{t} e : s &
(\mathbf{T_2}) ~s \leq_{\trace} s'  \\
(\mathbf{T_3}) ~\Gamma[x:s']; \trace; A \vdash_{t} c' : t }
{\Gamma;\trace; A\vdash_{t} \lcLETVAR{x}{e}{c'} :  t}
\end{equation*}
By \ref{app-lem:ptrsound-1} on $(\mathbf{T_1})$,  $\Gamma\cdot\trace \vdash e : (s\cdot \trace)$.
From $(\mathbf{T_2})$, we get $ s\cdot \trace \leq s'\cdot\trace$.
So by subsumption, $\Gamma\cdot\trace \vdash e : s'\cdot\trace$.
By induction on $(\mathbf{T_3})$,  we have $(\Gamma[x:s']\cdot \trace); A \vdash c' : t \cdot \trace$, that is $(\Gamma\cdot \trace)[x:s'\cdot \trace]; A \vdash c' : t\cdot \trace $.
Finally, by Rule {(T-LETVAR)}, the result follows.

\item[TT-IF] By induction and Rules {(T-SUB$_{c}$)}, {(T-IF)}.
\item[TT-WHILE] By induction and Rules {(T-SUB$_{c}$)}, {(T-WHILE)}.
\item[TT-SEQ] By induction and Rules {(T-SUB$_{c}$)}, {(T-SEQ)}.

\item[TT-CALL] In this case we have $c \cong \lcCALL{x}{B.f}{\overline{e}}$ and the following derivation
\begin{equation*}
\inference
{
(\mathbf{T_1})~\FT(B.f) = \overline{t} \xrightarrow{} t'  & 
(\mathbf{T_2})~\Gamma;\trace \vdash_{t} \overline{e} : \overline{s}\\
(\mathbf{T_3})~\overline{s}\leq_{\trace} \overline{\app{t}{\Theta(A)}} & 
(\mathbf{T_4})~\app{t'}{\Theta(A)}\leq_{\trace} \Gamma(x)
}
{\Gamma;\trace; A \vdash_{t} \lcCALL{x}{B.f}{\overline{e}} : \Gamma(x)}
\end{equation*}
By \ref{app-lem:ptrsound-3} on $(\mathbf{T_1})$, $\vdash B.f : \overline{t} \xrightarrow{} t'$.
By \ref{app-lem:ptrsound-1} on $(\mathbf{T_2})$ , $\Gamma\cdot\trace \vdash \overline{e} : \overline{s\cdot\trace}$.
From $(\mathbf{T_3})$, we get
$\overline{s\cdot \trace}\leq_{} \overline{(app(t,\Theta(A))\cdot \trace)} = \overline{app(t,\Theta(A))} $.
So by subsumption, $\Gamma\cdot\trace \vdash \overline{e} : \overline{app(t,\Theta(A))}$.
Similarly, from $(\mathbf{T_4})$, we get $app(t', \Theta(A))\leq_{} (\Gamma\cdot\trace)(x) $. Then by Rule {(T-ASS)}, $(\Gamma\cdot\trace); A \vdash_{} \lcCALL{x}{B.f}{\overline{e}} : (\Gamma\cdot\trace)(x) $.

\item[TT-CP] In this case we have $c\cong \lcCP{p}{c_1}{c_2}$ and the following derivation
\begin{equation*}
\inference
{\Gamma;\trace::\tplus p;  A \vdash_{t} c_1 : t_1 &
\Gamma;\trace::\tminus p; A \vdash_{t} c_2 : t_2}
{\Gamma;\trace; A \vdash_{t} \lcCP{p}{c_1}{c_2} : \omerge{p}{t_1}{t_2}}
\end{equation*}
By induction on $c_i$, we have 
\begin{gather*}
(\Gamma\cdot(\trace :: \tplus {p}));  A \vdash c_1 : (t_1\cdot(\trace::\tplus{p}) )\\
(\Gamma\cdot (\trace:: \tminus {p})); A \vdash c_2 : (t_2\cdot(\trace::\tminus{ p}))
\end{gather*}
which is equivalent to
\begin{gather*}
(\Gamma\cdot\trace)\uparrow_{p};  A \vdash c_1 : (t_1\cdot\trace)\uparrow_{p}\\
(\Gamma\cdot \trace)\downarrow_{p}; A \vdash c_2 : (t_2\cdot \trace)\downarrow_{p}
\end{gather*}
Let $t'$ be $\omerge{p}{(t_1\cdot\trace)\uparrow_{p}}{(t_2\cdot\trace)\downarrow_{p}}$.
By Rule {(T-CP)}, we have  
$$
(\Gamma\cdot\trace);  A \vdash \lcCP{p}{c_1}{c_2} : t'.
$$
Since $\trace$ is collected from the context of $c$ and there are no nested checks of $p$, we have $p\notin \trace$.
Let's consider any $P$.
If $p \in P$, then
$$
\begin{array}{rlll}
t'(P) &=&  (t_1\cdot\trace)\uparrow_{p}(P) & (p\in P)\\
&=& (t_1\cdot\trace)(P) & (\textrm{Lemma \ref{app-lem:promote-demote}})\\
&=&(\omerge{p}{(t_1\cdot \trace)}{(t_2\cdot \trace)})(P) &(p\in P)\\
&=&((\omerge{p}{t_1}{t_2})\cdot \trace)(P) &(\textrm{Lemma \ref{app-lem:tracemerge}})\\
\end{array} 
$$ 
Similarly, if $p \notin P$, $t'(P) =((\omerge{p}{t_1}{t_2})\cdot \trace)(P)$.
Therefore,  $t' = (\omerge{p}{t_1}{t_2})\cdot \trace$ and thus the result follows.

\end{ProofEnumDesc}
The proof of \ref{app-lem:ptrsound-3} :\\
Clearly, we have
\begin{equation*}
\inference
{
[\overline{x}: \overline{t},\lpRES:t'];\epsilon; B \vdash_{t} c : s & 
}
{\vdash_{t} \lpPROG{B.f}{\overline{x}}{c}{\lpRES} :  \overline{t}\xrightarrow{} t'}
\end{equation*}
Applying \ref{app-lem:ptrsound-2} on $c$, $[\overline{x}: \overline{t}, \lpRES:t'] ; B \vdash c : s$.
Finally, by Rule {(T-FUN)}, the result follows.
\end{proof}

\begin{lemma}\label{app-lem:typingtrace}
Let $\trace$ be the permission trace collected from the context of $e$ or $c$. 
\begin{enumerate}[label=(\alph*),topsep=1pt,itemsep=-1ex,partopsep=1ex,parsep=1ex]
\item\label{app-lem:typingtrace-1} If $\Gamma \vdash e : t$, then $(\Gamma\cdot\trace) \vdash e : (t\cdot\trace) $ 
\item\label{app-lem:typingtrace-2} If $\Gamma;A \vdash c : t$, then $(\Gamma\cdot\trace); A \vdash c : (t\cdot\trace) $.
\end{enumerate}
\end{lemma}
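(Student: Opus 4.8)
The plan is to prove both parts simultaneously by a mutual structural induction on the typing derivations $\Gamma \vdash e : t$ (for part~\ref{app-lem:typingtrace-1}) and $\Gamma; A \vdash c : t$ (for part~\ref{app-lem:typingtrace-2}), in each case rebuilding \emph{the same} rule instance over the trace-applied environment $\Gamma\cdot\trace$ while shifting the reported type by $\cdot\,\trace$. Four earlier facts will carry most of the weight: (i) environment application is pointwise, so $(\Gamma\cdot\trace)(x) = \Gamma(x)\cdot\trace$; (ii) monotonicity of trace application, Lemma~\ref{app-lem:monotrace}, which turns every side condition $s \leq t$ into $s\cdot\trace \leq t\cdot\trace$; (iii) the observation that a projection $\app{s}{P}$ is a constant function and is therefore fixed by every trace, i.e. $(\app{s}{P})\cdot\trace = \app{s}{P}$ (immediate from Def.~\ref{app-def:projection} and Def.~\ref{app-def:updown} by a trivial induction on $\tlength{\trace}$); and (iv) distributivity of trace application over merging, Lemma~\ref{app-lem:tracemerge}.

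For the syntax-directed cases the reconstruction is routine. In (\ruleTagText{T-VAR}) the reported type is $\Gamma(x)$, and since $(\Gamma\cdot\trace)(x) = \Gamma(x)\cdot\trace$ by (i), the rule applies directly over $\Gamma\cdot\trace$. In (\ruleTagText{T-OP}), (\ruleTagText{T-IF}), (\ruleTagText{T-WHILE}) and (\ruleTagText{T-SEQ}) the premises and the conclusion all carry the same type, so the induction hypotheses supply the trace-shifted premises and the same rule reassembles the conclusion. For (\ruleTagText{T-ASS}) and (\ruleTagText{T-LETVAR}) I would apply part~\ref{app-lem:typingtrace-1} to the expression premise, use the identity $(\Gamma[x:s])\cdot\trace = (\Gamma\cdot\trace)[x : s\cdot\trace]$ for the local binding, and close with the same rule. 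The two subsumption rules (\ruleTagText{T-SUB$_{e}$}) and (\ruleTagText{T-SUB$_{c}$}) follow from the induction hypothesis together with monotonicity (ii).

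Two cases carry the real content. In (\ruleTagText{T-CALL}) the argument and return types are the projections $\app{\overline{s}}{\Theta(A)}$ and $\app{s'}{\Theta(A)}$; by (iii) these are unchanged by $\cdot\,\trace$, so the induction hypothesis already yields $(\Gamma\cdot\trace) \vdash \overline{e} : \app{\overline{s}}{\Theta(A)}$, while the return side condition $\app{s'}{\Theta(A)} \leq \Gamma(x)$ survives as $\app{s'}{\Theta(A)} \leq (\Gamma\cdot\trace)(x)$ by (ii) and (iii), after which (\ruleTagText{T-CALL}) closes the case. The main obstacle is (\ruleTagText{T-CP}): its premises type $c_1$ under $\Gamma\!\uparrow_p$ and $c_2$ under $\Gamma\!\downarrow_p$ and the conclusion reports $\omerge{p}{t_1}{t_2}$. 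To reapply (\ruleTagText{T-CP}) over $\Gamma\cdot\trace$ I must commute the single promotion (resp.\ demotion) past $\trace$, i.e.\ establish $(\Gamma\!\uparrow_p)\cdot\trace = (\Gamma\cdot\trace)\!\uparrow_p$ and its dual for $\downarrow_p$, which is exactly Lemma~\ref{app-lem:traceorder-whole} under the hypothesis $p \notin \trace$; the same side condition lets me push the trace through the merge via Lemma~\ref{app-lem:tracemerge}, yielding $\omerge{p}{(t_1\cdot\trace)}{(t_2\cdot\trace)} = (\omerge{p}{t_1}{t_2})\cdot\trace$. Both rewrites hinge on $p \notin \trace$, which is guaranteed by the paper's standing assumption that each permission is tested at most once along any nested check, so the trace collected at the context of a test on $p$ cannot already mention $p$. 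Confirming that this assumption indeed licenses the two commutations is the delicate point; the remaining manipulations are mechanical.
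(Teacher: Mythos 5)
Your proposal is correct and follows essentially the same route as the paper's proof: a rule-by-rule induction on the typing derivation, using Lemma~\ref{app-lem:monotrace} for the subsumption and side conditions, the fixed-point property of projections for (\ruleTagText{T-CALL}), and Lemmas~\ref{app-lem:traceorder-whole} and~\ref{app-lem:tracemerge} (both licensed by $p \notin \trace$ from the no-nested-rechecks assumption) for (\ruleTagText{T-CP}). You have correctly isolated the one delicate point, which the paper handles in exactly the way you describe.
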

\begin{proof}
The proof of \ref{app-lem:ptrcomplete-1} : by induction on $\Gamma \vdash e : t$.
\begin{ProofEnumDesc}
\item[T-VAR] Trivially.

\item[T-OP]  In this case we have $e \cong \leOP{e_1}{e_2}$ and the following derivation
\begin{equation*}
\inference
{\Gamma \vdash e_1 : t & \Gamma \vdash e_2 : t}
{\Gamma \vdash \leOP{e_1}{e_2} : t}
\end{equation*}  
By induction on $e_i$, $(\Gamma\cdot\trace) \vdash e_i : (t\cdot\trace)$.
By Rule {(T-OP)}, we have $(\Gamma\cdot\trace)\vdash_{t} \leOP{e_1}{e_2} : (t\cdot\trace)$.

\item[T-SUB$_e$] In this case we have the following derivation
\begin{equation*}
\inference
{(\mathbf{T1})~\Gamma \vdash e : s & (\mathbf{T2})~s \leq t}
{\Gamma \vdash e : t}
\end{equation*}
By induction on $(\mathbf{T1})$, $(\Gamma\cdot\trace) \vdash_{t} e : (s\cdot\trace)$. 
From $(\mathbf{T2})$ and Lemma \ref{app-lem:monotrace}, we get $s\cdot\trace\leq t\cdot\trace $. So by subsumption, the result follows.
\end{ProofEnumDesc}

The proof of \ref{app-lem:typingtrace-2}: by induction on $\Gamma; A \vdash c: t$.
\begin{ProofEnumDesc}
\item[T-ASS] In this case we have $c \cong \lcASS{x}{e}$ and the following derivation
\begin{equation*}
\inference
{\Gamma \vdash e : \Gamma(x)}
{\Gamma; A \vdash \lcASS{x}{e} : \Gamma(x)}
\end{equation*}
By \ref{app-lem:typingtrace-1} on $e$,  $(\Gamma\cdot\trace) \vdash e : (\Gamma\cdot\trace)(x)$. 
Then by Rule {(T-ASS)},  the result follows.

\item[T-LETVAR] 
In this case we have $c \cong \lcLETVAR{x}{e}{c'}$ and the following derivation
\begin{equation*}
\inference
{\Gamma \vdash e : s &
\Gamma[x:s]; A \vdash c' : t }
{\Gamma; A\vdash \lcLETVAR{x}{e}{c'} :  t}
\end{equation*}
By \ref{app-lem:typingtrace-1} on $e$, $(\Gamma\cdot\trace) \vdash e :  (s\cdot\trace)$.
By induction on $c$, $((\Gamma[x:s])\cdot\trace);A \vdash c' :  (t\cdot\trace)$,
that is $(\Gamma\cdot\trace)[x:s\cdot\trace];A \vdash c' :  (t\cdot\trace)$.
Then by Rule {(T-LETVAR)}, the result follows.

\item[T-IF] By induction and Rule {(T-IF)}.
\item[T-WHILE] By induction and Rule {(T-WHILE)}.
\item[T-SEQ] By induction and Rule {(T-SEQ)}.

\item[T-CALL] in this case we have $c \cong \lcCALL{x}{B.f}{\overline{e}}$ and the following derivation
\begin{equation*}
\inference
{
\FT(B.f) = \overline{t} \xrightarrow{} t'  \\
(\mathbf{T1})~\Gamma \vdash \overline{e} : \overline{\app{t}{\Theta(A)}} &
(\mathbf{T2})~\app{t'}{\Theta(A)}\leq \Gamma(x)
}
{\Gamma; A \vdash \lcCALL{x}{B.f}{\overline{e}} : \Gamma(x)}
\end{equation*}
By \ref{app-lem:typingtrace-1} on $(\mathbf{T1})$, $(\Gamma\cdot\trace)\vdash \overline{e} : \overline{\app{t}{\Theta(A)}\cdot\trace}$, that is, $(\Gamma\cdot\trace)\vdash \overline{e} : \overline{\app{t}{\Theta(A)}}$.
From $(\mathbf{T2})$ and Lemma \ref{app-lem:monotrace}, $\app{t'}{\Theta(A)}\cdot\trace \leq (\Gamma\cdot\trace)(x)$, that is, $\app{t'}{\Theta(A)} \leq (\Gamma\cdot\trace)(x)$.
Then by Rule {(T-CALL)}, $(\Gamma\cdot\trace);A\vdash \lcCALL{x}{B.f}{\overline{e}}  : (\Gamma\cdot\trace)(x)$.

\item[T-CP] In this case we have $c\cong \lcCP{p}{c_1}{c_2}$ and the following derivation
\begin{equation*}
\inference
{\Gamma\uparrow_{p};  A \vdash c_1 : t_1 &
\Gamma\downarrow_{p}; A \vdash c_2 : t_2}
{\Gamma; A \vdash \lcCP{p}{c_1}{c_2} : \omerge{p}{t_1}{t_2}}
\end{equation*}
By induction on $c_1$ and $c_2$, we have
\begin{equation*}
((\Gamma\uparrow_{p})\cdot \trace); A \vdash c_1 : t_1\cdot\trace
\quad 
((\Gamma\downarrow_{p})\cdot \trace); A \vdash c_2 : t_2\cdot\trace
\end{equation*}
Since $\trace$ is collected from the context of $c$ and there are no nested checks of $p$, we have $p \notin \trace$.
By Lemma \ref{app-lem:traceorder-whole}, 
$(\Gamma\uparrow_{p})\cdot \trace= (\Gamma\cdot \trace)\uparrow_{p}$
and $(\Gamma\downarrow_{p})\cdot \trace= (\Gamma\cdot \trace)\downarrow_{p}$.
Then by Rule {(T-CP)}, we have 
$$
(\Gamma\cdot \trace); A \vdash \lcCP{p}{c_1}{c_2} : \omerge{p}{(t_1\cdot \trace)}{(t_2\cdot \trace)}.
$$
Finally, by Lemma \ref{app-lem:tracemerge}, $(\omerge{p}{t_1}{t_2})\cdot \trace = \omerge{p}{(t_1\cdot \trace)}{(t_2\cdot \trace)} $, and thus the result follows.

\item[T-SUB$_c$] 
In this case we have the following derivation
\begin{equation*}
\inference
{(\mathbf{T1})~\Gamma;A \vdash c : s & (\mathbf{T2})~t \leq s}
{\Gamma;A  \vdash c : t}	
\end{equation*}
By induction on $(\mathbf{T1})$,  $(\Gamma\cdot\trace);A \vdash c : (s\cdot\trace)$.
From $(\mathbf{T2})$ and by Lemma \ref{app-lem:monotrace}, $t\cdot\trace\leq  s\cdot\trace$.
Then by subsumption, the result follows.
\end{ProofEnumDesc}
\end{proof}

\begin{lemma}\label{app-lem:typingtrace1}
Let $\trace$ be the permission trace collected from the context of $e$ or $c$. 
\begin{enumerate}[label=(\alph*),topsep=1pt,itemsep=-1ex,partopsep=1ex,parsep=1ex]
\item\label{app-lem:typingtrace1-1} If $\Gamma\cdot\trace \vdash e : t$, then $(\Gamma\cdot\trace) \vdash e : (t\cdot\trace) $ 
\item\label{app-lem:typingtrace1-2} If $(\Gamma\cdot\trace);A \vdash c : t$, then $(\Gamma\cdot\trace); A \vdash c : (t\cdot\trace) $.
\end{enumerate}
\end{lemma}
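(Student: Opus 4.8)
The plan is to derive this lemma as an immediate corollary of the already-established Lemma~\ref{app-lem:typingtrace}, exploiting the idempotence of trace application. The key observation is that the typing environment appearing in the hypothesis is not an arbitrary $\Gamma$ but one of the special form $\Gamma\cdot\trace$, and such environments are fixed points of the operation $(\cdot)\cdot\trace$. Once this is noticed, the statement reduces to a one-line application of Lemma~\ref{app-lem:typingtrace}.

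First I would record the environment-level idempotence of $\cdot\trace$. For every typing environment $\Gamma$ and every $x \in dom(\Gamma)$, unfolding the pointwise definition $(\Gamma\cdot\trace)(x) = \Gamma(x)\cdot\trace$ gives
\[
((\Gamma\cdot\trace)\cdot\trace)(x) = ((\Gamma\cdot\trace)(x))\cdot\trace = (\Gamma(x)\cdot\trace)\cdot\trace = \Gamma(x)\cdot\trace = (\Gamma\cdot\trace)(x),
\]
where the third equality is Lemma~\ref{app-lem:tracesame} applied to the base type $\Gamma(x)$. Hence $(\Gamma\cdot\trace)\cdot\trace = \Gamma\cdot\trace$.

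For part~\ref{app-lem:typingtrace1-1}, I would write $\Gamma' = \Gamma\cdot\trace$, so that the hypothesis reads $\Gamma'\vdash e : t$. Applying Lemma~\ref{app-lem:typingtrace}\ref{app-lem:typingtrace-1} to this derivation, with the same trace $\trace$ collected from the context of $e$, yields $(\Gamma'\cdot\trace)\vdash e : (t\cdot\trace)$. By the idempotence just established, $\Gamma'\cdot\trace = (\Gamma\cdot\trace)\cdot\trace = \Gamma\cdot\trace = \Gamma'$, so this is exactly $(\Gamma\cdot\trace)\vdash e : (t\cdot\trace)$, as required. Part~\ref{app-lem:typingtrace1-2} follows identically, invoking Lemma~\ref{app-lem:typingtrace}\ref{app-lem:typingtrace-2} in place of its expression counterpart.

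I do not anticipate a genuine obstacle: the content of the lemma is entirely carried by Lemma~\ref{app-lem:typingtrace} together with the idempotence of $\cdot\trace$. The only point demanding slight care is that the side condition — that $\trace$ is precisely the trace collected from the context of $e$ (resp.\ $c$) — refers to the same trace in both lemmas, so that consistency of $\trace$ (in particular $p\notin\trace$ in the \ruleTagText{T-CP} case underlying Lemma~\ref{app-lem:typingtrace}) is inherited rather than reproved. Since the syntactic context of $e$ and $c$ is unchanged by replacing $\Gamma$ with $\Gamma\cdot\trace$, the same $\trace$ applies throughout and this condition carries over without extra work.
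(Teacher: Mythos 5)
Your proposal is correct and follows exactly the paper's own (one-line) proof, which invokes Lemma~\ref{app-lem:typingtrace} together with Lemma~\ref{app-lem:tracesame}; you have simply spelled out the idempotence step $(\Gamma\cdot\trace)\cdot\trace = \Gamma\cdot\trace$ that the paper leaves implicit. No gaps.
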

\begin{proof}
By Lemma \ref{app-lem:typingtrace} and Lemma \ref{app-lem:tracesame}.
\end{proof}

\begin{lemma}\label{app-lem:ptrcomplete}
Let $\trace$ be the permission trace collected from the context of $e$ or $c$.
\begin{enumerate}[label=(\alph*),topsep=1pt,itemsep=-1ex,partopsep=1ex,parsep=0ex]
\item\label{app-lem:ptrcomplete-1} If $\Gamma\cdot\trace \vdash e : t\cdot \trace$, then there exists $s$ such that $\Gamma;\trace \vdash_{t} e : s$ and $s \leq_{\trace} t$.\\
\item\label{app-lem:ptrcomplete-2} If $(\Gamma\cdot\trace); A \vdash c : t\cdot \trace$, then there exists $s$ such that $\Gamma;\trace;A \vdash_{t} c : s$ and $t \leq_{\trace} s$.\\
\item\label{app-lem:ptrcomplete-3} If $\vdash \lpPROG{B.f}{\overline{x}}{c}{\lpRES}  : \overline{t}\xrightarrow{} s$, then $\vdash_{t} \lpPROG{B.f}{\overline{x}}{c}{\lpRES}: \overline{t}\xrightarrow{} s$.
\end{enumerate}
\end{lemma}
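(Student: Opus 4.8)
The plan is to prove a slightly more general statement by structural induction on the expression $e$ (for \ref{app-lem:ptrcomplete-1}) and on the command $c$ (for \ref{app-lem:ptrcomplete-2}), from which the three claims follow. The generalization replaces the trace-stable target $t\cdot\trace$ by an \emph{arbitrary} base type $\tau$; concretely I would prove: (a$'$) if $\Gamma\cdot\trace\vdash e:\tau$ then there is $s$ with $\Gamma;\trace\vdash_{t}e:s$ and $s\cdot\trace\leq\tau$; and (b$'$) if $(\Gamma\cdot\trace);A\vdash c:\tau$ then there is $s$ with $\Gamma;\trace;A\vdash_{t}c:s$ and $\tau\cdot\trace\leq s\cdot\trace$. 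Specializing $\tau:=t\cdot\trace$ and using idempotence $(t\cdot\trace)\cdot\trace=t\cdot\trace$ (Lemma~\ref{app-lem:tracesame}) recovers \ref{app-lem:ptrcomplete-1} and \ref{app-lem:ptrcomplete-2}; part \ref{app-lem:ptrcomplete-3} then follows by applying (b$'$) with the empty trace $\epsilon$ to the function body (whose typing environment is already of the form $[\overline{x}:\overline{t},\lpRES:s]\cdot\epsilon$) and closing with \ruleTagText{TT-FUN}. The asymmetry of the bounds ($s\cdot\trace\leq\tau$ for expressions versus $\tau\cdot\trace\leq s\cdot\trace$ for commands) mirrors the fact that expression types are upper bounds whereas command types are lower bounds, and that the trace rules compute the tightest one of each.

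In each inductive step I would first invert the given derivation, absorbing any trailing uses of the subsumption rules so that the head typing rule is exposed together with a subtyping side condition; the premises then concern structurally smaller subterms in the \emph{same} environment and with the \emph{same} trace, except for the two binding/branching rules discussed below. The routine cases (\ruleTagText{TT-VAR}, \ruleTagText{TT-OP}, \ruleTagText{TT-ASS}, \ruleTagText{TT-IF}, \ruleTagText{TT-WHILE}, \ruleTagText{TT-SEQ}) are then discharged by applying the induction hypotheses to the subterms and recombining with the corresponding trace rule, checking its $\leq_{\trace}$ side condition. This needs only three easy facts about trace application: that $\cdot\trace$ is monotone (Lemma~\ref{app-lem:monotrace}), idempotent (Lemma~\ref{app-lem:tracesame}), and distributes over $\sqcup$ and $\sqcap$ (immediate from the pointwise definitions of promotion and demotion). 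For \ruleTagText{TT-CALL} I additionally use that a projection is invariant under any trace, i.e. $\app{t_i}{\Theta(A)}\cdot\trace=\app{t_i}{\Theta(A)}$, so that the bound $s_i\cdot\trace\leq\app{t_i}{\Theta(A)}$ produced by the hypothesis is exactly the premise $s_i\leq_{\trace}\app{t_i}{\Theta(A)}$ demanded by the rule.

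The two delicate cases are \ruleTagText{T-LETVAR} and \ruleTagText{T-CP}, and making them fit the induction is the crux. For $\lcLETVAR{x}{e}{c}$, after inversion the body is typed in $(\Gamma\cdot\trace)[x:s_0]$, which is \emph{not} a priori of the form $(\cdot)\cdot\trace$, since the local type $s_0$ need not be trace-stable; this is precisely why the auxiliary Lemma~\ref{app-lem:typingtrace} was established beforehand. Applying it with trace $\trace$ to the body derivation yields a typing in $((\Gamma\cdot\trace)[x:s_0])\cdot\trace=(\Gamma[x:s_0\cdot\trace])\cdot\trace$ (using idempotence), to which the induction hypothesis now applies; choosing the local type $s':=s_0\cdot\trace$ makes the \ruleTagText{TT-LETVAR} side condition $s_e\leq_{\trace}s'$ hold by monotonicity of $\cdot\trace$. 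For $\lcCP{p}{c_1}{c_2}$, inversion gives premises in the promoted and demoted environments, which by Lemma~\ref{app-lem:traceorder-whole} equal $\Gamma\cdot(\trace::\tplus{p})$ and $\Gamma\cdot(\trace::\tminus{p})$; here I rely on the standing assumption that each permission is tested at most once in nested checks, so $p\notin\trace$. The induction hypotheses then deliver bounds guarded by the extended traces $\trace::\tplus{p}$ and $\trace::\tminus{p}$, which I combine through the merge operator: after rewriting $(\omerge{p}{s_1}{s_2})\cdot\trace=\omerge{p}{s_1\cdot\trace}{s_2\cdot\trace}$ via Lemma~\ref{app-lem:tracemerge}, the required inequality $\tau\cdot\trace\leq(\omerge{p}{s_1}{s_2})\cdot\trace$ reduces, by a pointwise case split on whether $p\in P$, to the two hypothesis bounds (using Lemma~\ref{app-lem:promote-demote} to collapse $\uparrow_p$/$\downarrow_p$ on the relevant permission sets). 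The main obstacle is thus the bookkeeping of the interaction between the merge operator, promotion/demotion, and the growing trace; selecting the generalized conclusion $\tau\cdot\trace\leq s\cdot\trace$ rather than $\tau\leq s\cdot\trace$ is exactly what keeps both this case and the \textbf{letvar} case inductively sound.
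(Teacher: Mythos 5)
Your proposal is correct and, at the level of the individual cases, does the same work as the paper's proof: the same key lemmas carry the hard cases (Lemma~\ref{app-lem:typingtrace} to re-stabilize the environment in the \textbf{letvar} case, the identity $(\Gamma\cdot\trace)\uparrow_p=\Gamma\cdot(\trace::\oplus p)$ together with Lemmas~\ref{app-lem:tracemerge} and~\ref{app-lem:promote-demote} for \textbf{test}, and trace-invariance of projections for calls). The genuine difference is in how the induction is set up. The paper inducts on the typing derivation, keeps the target type in the syntactic form $t\cdot\trace$ throughout, and therefore must treat \ruleTagText{T-SUB$_e$} and \ruleTagText{T-SUB$_c$} as explicit cases, each discharged by first invoking Lemma~\ref{app-lem:typingtrace1} to coerce the subderivation's type back into the form $u\cdot\trace$ before the induction hypothesis can be applied. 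You instead do structural induction on the term with a strengthened hypothesis (arbitrary target $\tau$, conclusion $s\cdot\trace\leq\tau$ for expressions and $\tau\cdot\trace\leq s\cdot\trace$ for commands), so subsumption is absorbed by transitivity and Lemma~\ref{app-lem:monotrace}. This buys two things: it removes the dependence on Lemma~\ref{app-lem:typingtrace1} in the subsumption cases, and it sidesteps the slight awkwardness in the paper's scheme of applying the induction hypothesis to a derivation manufactured by an auxiliary lemma rather than to a strict subderivation (structural induction on the term renders this harmless, since only the size of the command matters). The cost is only that you must verify the asymmetric bounds propagate through each rule, which you have done correctly --- in particular your choice of $\tau\cdot\trace\leq s\cdot\trace$ rather than $\tau\leq s\cdot\trace$ is indeed what makes the \textbf{letvar} case, where the target becomes $\tau\cdot\trace$ after applying Lemma~\ref{app-lem:typingtrace}, go through.
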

\begin{proof}
The proof of \ref{app-lem:ptrcomplete-1} : by induction on $\Gamma\cdot\trace \vdash e : t\cdot \trace $.
\begin{ProofEnumDesc}
\item[T-VAR] Trivially with $s = \Gamma(x)$.

\item[T-OP]  In this case we have $e \cong \leOP{e_1}{e_2}$ and the following derivation
\begin{equation*}
\inference
{\Gamma\cdot\trace\vdash e_1 : t\cdot\trace & \Gamma\cdot\trace \vdash e_2 : t\cdot\trace}
{\Gamma\cdot\trace\vdash \leOP{e_1}{e_2} : t\cdot\trace}
\end{equation*}  
By induction on $e_i$, there exists $s_i$ such that $\Gamma; \trace \vdash_{t} e_i : s_i$ and $s_i \leq_{\trace} t$.
By Rule {(TT-OP)}, we have $\Gamma; \trace\vdash_{t} \leOP{e_1}{e_2} : s_1 \sqcup s_2$.
Moreover, it is clear that $s_1 \sqcup s_2 \leq_{\trace} t$. Therefore, the result follows.

\item[T-SUB$_e$] In this case we have the following derivation
\begin{equation*}
\inference
{(\mathbf{T1})~\Gamma\cdot\trace \vdash e : s & (\mathbf{T2})~s \leq t\cdot\trace}
{\Gamma\cdot\trace \vdash e : t\cdot\trace}
\end{equation*}
Applying Lemma \ref{app-lem:typingtrace1} on $(\mathbf{T1})$, $\Gamma\cdot\trace \vdash e : s\cdot \trace$. 
Then by induction, there exists $s'$ such that $\Gamma;\trace \vdash_{t} e : s'$ and $s' \leq_{\trace} s$.
From $(\mathbf{T2})$ and Lemma \ref{app-lem:tracesame}, we can get $s'\leq_{\trace} t\cdot\trace \leq_{\trace} t $. Thus the result follows.
\end{ProofEnumDesc}
The proof of \ref{app-lem:ptrcomplete-2}: by induction on $\Gamma\cdot\trace; A \vdash c : t\cdot \trace $. 
\begin{ProofEnumDesc}
\item[T-ASS] In this case we have $c \cong \lcASS{x}{e}$ and the following derivation
\begin{equation*}
\inference
{(\Gamma\cdot\trace) \vdash e : (\Gamma\cdot\trace)(x)}
{(\Gamma\cdot\trace);A \vdash \lcASS{x}{e} : (\Gamma\cdot\trace)(x)}
\end{equation*}
By \ref{app-lem:ptrcomplete-1} on $e$,  there exists $s$ such that
$\Gamma; \trace \vdash_{t} e : s$ and $s \leq_{\trace} \Gamma(x)$.
Then by Rule {(TT-ASS)},  $\Gamma;\trace; A \vdash_{t} \lcASS{x}{e} :\Gamma(x)$.

\item[T-LETVAR] 
In this case we have $c \cong \lcLETVAR{x}{e}{c'}$ and the following derivation
\begin{equation*}
\inference
{(\mathbf{T_1}) ~(\Gamma\cdot\trace) \vdash e : s &
(\mathbf{T_2}) ~(\Gamma\cdot\trace)[x:s]; A \vdash c' : t\cdot\trace }
{(\Gamma\cdot\trace); A\vdash \lcLETVAR{x}{e}{c'} :  t\cdot\trace}
\end{equation*}
Applying Lemma \ref{app-lem:typingtrace1} on $(\mathbf{T_1})$, $(\Gamma\cdot\trace) \vdash e : s\cdot\trace$.
Then by \ref{app-lem:ptrcomplete-1},  there exists $s'$ such that $\Gamma;\trace \vdash_{t} e : s'$ and $s' \leq_{\trace} s$.
Applying Lemma \ref{app-lem:typingtrace} on $(\mathbf{T_2})$, $((\Gamma\cdot\trace)[x:s])\cdot\trace; A \vdash c' : (t\cdot\trace)\cdot\trace$.
And by Lemma \ref{app-lem:tracesame}, we have $(\Gamma[x:s]\cdot\trace); A \vdash c' : (t\cdot\trace)$.
Then by induction, there exists $t'$ such that $\Gamma[x:s]; \trace; A \vdash_{t} c' : t' $ and $t \leq_{\trace} t'$.
Finally, by Rule {(TT-LETVAR)}, we have $\Gamma; \trace; A \vdash_{t} \lcLETVAR{x}{e}{c'} : t' $. 

\item[T-IF] By induction and Rule {(TT-IF)}.
\item[T-WHILE] By induction and Rule {(TT-WHILE)}.
\item[T-SEQ] By induction and Rule {(TT-SEQ)}.

\item[T-CALL] In this case we have $c \cong \lcCALL{x}{B.f}{\overline{e}}$ and the following derivation
\begin{equation*}
\inference
{
\FT(B.f) = \overline{t_p} \xrightarrow{} t_r  &
(\mathbf{T1})~\Gamma\cdot \trace \vdash \overline{e} : \overline{\app{t_p}{\Theta(A)}} \\
(\mathbf{T2})~\app{t_r}{\Theta(A)}\leq (\Gamma\cdot\trace)(x)
}
{\Gamma\cdot \trace; A \vdash \lcCALL{x}{B.f}{\overline{e}} : \Gamma(x)\cdot \trace}
\end{equation*}
From $(\mathbf{T1})$, we have $\Gamma\cdot \trace \vdash_{t} \overline{e} : \overline{(\app{t_p}{\Theta(A)}\cdot\trace)}$.
Then by \ref{app-lem:ptrcomplete-1}, there exists $\overline{t_e}$ such that $\Gamma;\trace \vdash_{t} \overline{e} : \overline{t_e}$ and $\overline{t_e}\leq_{\trace}  \overline{\app{t_p}{\Theta(A)}}$.
From $(\mathbf{T2})$, we have $\app{t_r}{\Theta(A)}\leq_{\trace} \Gamma(x)$.
Then by Rule {(TT-CALL)}, $\Gamma;\trace; A \vdash \lcCALL{x}{B.f}{\overline{e}}: \Gamma(x)$.

\item[T-CP] In this case we have $c\cong \lcCP{p}{c_1}{c_2}$ and the following derivatioin
\begin{equation*}
\inference
{(\Gamma\cdot\trace)\uparrow_{p};  A \vdash c_1 : t_1 &
(\Gamma\cdot\trace)\downarrow_{p}; A \vdash c_2 : t_2}
{(\Gamma\cdot\trace);A\vdash \lcCP{p}{c_1}{c_2} : t\cdot \trace}
\end{equation*}
where $t\cdot \trace = \omerge{p}{t_1}{t_2}$.\\

Clearly, 
$(\Gamma\cdot \trace)\uparrow_{p}= \Gamma\cdot (\trace::\tplus{p})$
and $(\Gamma\cdot \trace)\downarrow_{p}= \Gamma\cdot (\trace::\tminus{p})$.
Applying Lemma \ref{app-lem:typingtrace1} on $c_1$ and $c_2$ with $\trace::\tplus  {p}$ and $ \trace::\tminus{p}$ respectively, we have 
\begin{equation*}
\begin{array}{l}
(\Gamma\cdot(\trace::\tplus  {p})); A \vdash c_1 : t_1\cdot(\trace::\tplus  {p})\\
(\Gamma\cdot (\trace::\tminus {p})); A \vdash c_2 : t_2\cdot(\trace::\tminus  {p})
\end{array}
\end{equation*}
By induction, there exist $s_1$ and $s_2$ such that
\begin{equation*}
\begin{array}{l}
\Gamma; (\trace :: \tplus {p});  A \vdash_{t} c_1 : s_1\quad
t_1 \leq_{\trace::\tplus p} s_1 \\
\Gamma; (\trace:: \tminus {p});  A \vdash_{t} c_2 : s_2\quad
t_2 \leq_{\trace::\tminus p} s_2
\end{array}
\end{equation*}
Then by Rule {(TT-CP)}, we have
$$
\Gamma;\trace;A\vdash_{t} \lcCP{p}{c_1}{c_2} : \omerge{p}{s_1}{s_2}.
$$
The remaining is to prove $t\cdot\trace \leq (\omerge{p}{s_1}{s_2})\cdot\trace$.
Since $\trace$ is collected from the context of $c$ and there are no nested checks of $p$, we have $p \notin \trace$.
Consider any $P$. If $p \in P$, then
$$
\begin{array}{rlll}
&&((\omerge{p}{s_1}{s_2})\cdot\trace)(P) &\\
&=&(\omerge{p}{(s_1\cdot\trace)}{(s_2\cdot\trace)})(P) &(\textrm{Lemma \ref{app-lem:tracemerge}})\\
&=&(s_1\cdot\trace)(P) & (p\in P)\\
&=&(s_1\cdot(\trace::\tplus{p}))(P) & (\textrm{Lemma \ref{app-lem:promote-demote}})\\
&\geq&(t_1\cdot(\trace::\tplus{p}))(P) & (t_1 \leq_{\trace::\tplus{p}} s_1)\\
& = & (t_1\cdot\trace)(P) &(\textrm{Lemma \ref{app-lem:promote-demote}})\\
& = & (\omerge{p}{(t_1\cdot\trace)}{(t_2\cdot\trace)})(P)&(p\in P)\\
& = & ((\omerge{p}{t_1}{t_2})\cdot\trace)(P)& (\textrm{Lemma \ref{app-lem:tracemerge}})\\
& = & ((t\cdot\trace)\cdot\trace)(P)&\\
& = & (t\cdot\trace)(P)& (\textrm{Lemma \ref{app-lem:tracesame}})\\
\end{array}
$$
Similarly, if $p\notin P$, we also have $(t \cdot \trace)(P)\leq ((\omerge{p}{s_1}{s_2})\cdot\trace)(P)$.
Therefore, the result follows.

\item[T-SUB$_c$] In this case we have the following derivation
\begin{equation*}
\inference
{(\mathbf{T1})~\Gamma\cdot\trace; A\vdash c : s & (\mathbf{T2})~ t\cdot\trace \leq s}
{\Gamma\cdot\trace; A\vdash c : t\cdot \trace}	
\end{equation*}
Applying Lemma \ref{app-lem:typingtrace1} on $(\mathbf{T1})$, $\Gamma\cdot\trace; A\vdash c : s\cdot\trace$.
Then by induction, there exists $s'$ such that $\Gamma;\trace; A\vdash_{t} c : s'$ and $s \leq_{\trace} s' $. 
From ($\mathbf{T2}$) and by Lemma \ref{app-lem:monotrace}, we can get $t\leq_{\trace} s\leq_{\trace} s'$.  Thus the result follows.

\end{ProofEnumDesc}
The proof of \ref{app-lem:ptrcomplete-3}: \\
Clearly, we have
\begin{equation*}
\inference
{
[\overline{x}: \overline{t}, \lpRES: t']; B \vdash c : s 
}
{\vdash \lpPROG{B.f}{\overline{x}}{c}{\lpRES} :  \overline{t}\xrightarrow{} t'}
\end{equation*}
By \ref{app-lem:ptrcomplete-2} on $c$, there exists $t_b$ such that $[\overline{x}: \overline{t}, \lpRES: t']; B \vdash_{t} c : t_b$ and $s \leq_{} t_b $.
By Rule {(TT-FUN)}, 
$$\vdash_{t} \lpPROG{B.f}{\overline{x}}{c}{\lpRES} :  \overline{t}\xrightarrow{} t'$$
\end{proof}

\begin{definition}
Given a constraint set $C$ and a substitution $\theta$, we say $\theta$ is a \termEmph{solution} of $C$, denoted as $\theta \vDash C $, iff. for each $(\trace, L \leq R) \in C$, $L\theta \leq_{\trace} R\theta $ holds.
\end{definition}

\begin{lemma}\label{app-lem:cgrsound}
\begin{enumerate}[label={(\alph*)}]
\item\label{app-lem:cgrsound-1} If $\Gamma;\trace \vdash_{g} e : t\leadsto C$ and $\theta \vDash C$, then $\Gamma\theta;\trace \vdash_{t} e : t\theta $.
\item\label{app-lem:cgrsound-2} If $\Gamma;\trace; A \vdash_{g} c : t\leadsto C$ and $\theta \vDash C$, then $\Gamma\theta;\trace;A \vdash_{t} c : t\theta $.
\item\label{app-lem:cgrsound-3} If $ \vdash_{g} \lpPROG{B.f}{x}{c}{\lpRES} :  \overline{\alpha}\xrightarrow{} \beta  \leadsto  C$ and $\theta \vDash C$, then $\vdash_{t} \lpPROG{B.f}{x}{c}{\lpRES} : \overline{\theta(\alpha)}\xrightarrow{}\theta(\beta)$.
\end{enumerate}
\end{lemma}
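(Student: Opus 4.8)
The plan is to prove the three statements \ref{app-lem:cgrsound-1}, \ref{app-lem:cgrsound-2} and \ref{app-lem:cgrsound-3} simultaneously by a mutual structural induction on the derivation of the constraint generation judgment. The three parts are layered rather than genuinely circular: part~\ref{app-lem:cgrsound-1} (expressions) appeals only to itself, part~\ref{app-lem:cgrsound-2} (commands) appeals to \ref{app-lem:cgrsound-1} and to itself, and part~\ref{app-lem:cgrsound-3} (functions) appeals to \ref{app-lem:cgrsound-2}. Crucially, the rule \ruleTagText{TG-CALL} looks the callee type up in the table rather than re-derives it, so part~\ref{app-lem:cgrsound-2} never invokes part~\ref{app-lem:cgrsound-3} and the induction stays well founded. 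Throughout, I read the trace rules under the function-type table $\FT$ obtained from the extended table $\FT_{C}$ by applying $\theta$ to every stored function type, so that a lookup $\FT_{C}(B.f) = (\overline{t}\rightarrow t', C_f)$ corresponds to $\FT(B.f) = \overline{t\theta}\rightarrow t'\theta$ in the trace derivation.

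Before the induction I would record a few routine commutation facts that let $\theta$ pass through the type formers appearing in the grammar of constraints: $(t_1\sqcup t_2)\theta = t_1\theta\sqcup t_2\theta$, $(t_1\sqcap t_2)\theta = t_1\theta\sqcap t_2\theta$, $(\omerge{p}{t_1}{t_2})\theta = \omerge{p}{t_1\theta}{t_2\theta}$, and $\app{t}{P}\theta = \app{t\theta}{P}$, each immediate from the pointwise definitions. I also use that $\theta \vDash C_1\cup C_2$ iff $\theta\vDash C_1$ and $\theta\vDash C_2$, and that satisfying a side-condition $(\trace, L\leq R)$ means exactly $L\theta\leq_{\trace} R\theta$. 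With these in hand the easy cases are mechanical: for \ruleTagText{TG-VAR} and \ruleTagText{TG-OP} one applies the induction hypothesis to the subexpressions and rebuilds with \ruleTagText{TT-VAR}/\ruleTagText{TT-OP}, using distributivity of $\theta$ over $\sqcup$; for \ruleTagText{TG-ASS}, \ruleTagText{TG-LETVAR}, \ruleTagText{TG-IF}, \ruleTagText{TG-WHILE} and \ruleTagText{TG-SEQ} one splits $\theta$ over the union of the generated constraint sets, invokes the induction hypotheses, and discharges the emitted side conditions as the partial-subtyping premises required by the matching trace rule. The \ruleTagText{TG-CP} case is similar but also uses the extended traces: the premises return typings under $\trace::\tplus{p}$ and $\trace::\tminus{p}$, and the merge commutation fact turns $\omerge{p}{t_1\theta}{t_2\theta}$ into $(\omerge{p}{t_1}{t_2})\theta$ so that \ruleTagText{TT-CP} closes the case. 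Part~\ref{app-lem:cgrsound-3} is then a single application of \ruleTagText{TT-FUN} to the body typing supplied by \ref{app-lem:cgrsound-2} on the environment $[\overline{x}:\overline{\alpha}, \lpRES:\beta]\theta = [\overline{x}:\overline{\theta(\alpha)}, \lpRES:\theta(\beta)]$.

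I expect the \ruleTagText{TG-CALL} case to be the main obstacle, since it is the only place where projection, the function-type table, and substitution interact at once. Here the induction hypothesis \ref{app-lem:cgrsound-1} applied to the arguments gives $\Gamma\theta;\trace\vdash_{t}\overline{e}:\overline{s\theta}$, and the two side-condition families in $C_a$ become, after using $\app{t}{P}\theta = \app{t\theta}{P}$, the facts $\overline{s\theta}\leq_{\trace}\overline{\app{t\theta}{\Theta(A)}}$ and $\app{t'\theta}{\Theta(A)}\leq_{\trace}(\Gamma\theta)(x)$. These are precisely the premises of \ruleTagText{TT-CALL} with the looked-up type $\FT(B.f)=\overline{t\theta}\rightarrow t'\theta$, so the rule yields $\Gamma\theta;\trace;A\vdash_{t}\lcCALL{x}{B.f}{\overline{e}}:(\Gamma\theta)(x)=\Gamma(x)\theta$ as required. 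The stored set $C_f\subseteq C$ is also satisfied by $\theta$, which is exactly what part~\ref{app-lem:cgrsound-3} needs for $\FT(B.f)$ to be a legitimate type of $B.f$, keeping the table consistent. The only delicate point is the bookkeeping that projection commutes with substitution in both the argument and return positions; once that is verified the case closes cleanly.
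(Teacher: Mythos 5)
Your proposal is correct and, for every case except function calls, it is essentially the paper's proof: the same induction on the constraint-generation derivation, the same commutation of $\theta$ with $\sqcup$, $\sqcap$, $\triangleright_p$ and projection, and the same discharge of the emitted side conditions as the $\leq_{\trace}$ premises of the matching trace rule. The one genuine divergence is \ruleTagText{TG-CALL}. The paper's proof of part (b) there \emph{does} invoke part (c) on the callee $B.f$ — it re-derives $\vdash_{t} \FD(B.f) : \overline{t\theta}\rightarrow t'\theta$ from $\theta\vDash C_f$ and reads that off as $\FT(B.f)=\overline{t\theta}\rightarrow t'\theta$ — so its mutual induction is not purely structural: the appeal to (c) is on a different function's derivation and is justified only by the global no-recursion assumption (finite call chains, as in the rank measure used for Lemma~\ref{lem:comni}), which the paper leaves implicit here. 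You instead fix $\FT:=\FT_C\theta$ up front so that the \ruleTagText{TT-CALL} premise $\FT(B.f)=\overline{t\theta}\rightarrow t'\theta$ holds by definition of the table, which keeps the induction strictly structural and pushes the legitimacy of the table entries (that each is actually derivable for its function) into a separate, global application of part (c) using $\theta\vDash C_f$. Both routes are sound; yours buys a cleaner well-foundedness story at the cost of making $\FT$ depend on $\theta$ and of having to argue table consistency separately, while the paper's bundles that consistency into the induction but silently relies on the acyclicity of the call graph. Either way you should state explicitly which $\FT$ the trace judgment in the conclusion is taken relative to, since the lemma leaves it implicit.
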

\begin{proof}
The proof of \ref{app-lem:cgrsound-1}: by induction on the derivation of $\Gamma;\trace \vdash_{g} e : t\leadsto C$. 
\begin{ProofEnumDesc}
\item[TG-VAR:] In this case we have the following derivation
\begin{equation*}
\inference{}
{\Gamma;\trace\vdash_{g} x : \Gamma(x)\leadsto \emptyset}
\end{equation*}
Clearly, for any $\theta$, $\theta \vDash \emptyset$. 
By Rule {(TT-VAR)}, we have
\begin{equation*}
\inference{}
{\Gamma\theta;\trace; A \vdash_{t} \leVAR{x} : \Gamma\theta(x)} 
\end{equation*}

\item[TG-OP] In this case we have $e \cong \leOP{e_1}{e_2}$ and the following derivation
\begin{equation*}
\inference
{\Gamma;\trace \vdash_{g} e_1: t_1 \leadsto C_1 &
 \Gamma;\trace \vdash_{g} e_2: t_2 \leadsto C_2}
{\Gamma;\trace \vdash_{g} \leOP{e_1}{e_2} : t_1\sqcup t_2\leadsto C_1\cup C_2}
\end{equation*}
Since $\theta \vDash C_1\cup C_2$, $\theta\vDash C_1$ and $\theta\vDash C_2$.
Then by induction on $e_i$, $\Gamma\theta;\trace \vdash_{t} e_i: t_i\theta$.
So by Rule {(TT-OP)}, we have $\Gamma\theta;\trace \vdash_{t} \leOP{e_1}{e_2}: t_1\theta\sqcup t_2\theta $.
Clearly, $t_1\theta\sqcup t_2\theta = (t_1\sqcup t_2)\theta$.
\end{ProofEnumDesc}
The proof of \ref{app-lem:cgrsound-2}:
\begin{ProofEnumDesc}
\item[TG-ASS] In this case we have $c \cong \lcASS{x}{e}$ and the following derivation 
\begin{equation*}
\inference
{\Gamma;\trace \vdash_{g} e : t \leadsto C_{e} }
{\Gamma;\trace; A \vdash_{g} \lcASS{x}{e} : \Gamma(x)\leadsto C_{e}\cup \{ (\trace, t \leq_{} \Gamma(x))\} }
\end{equation*}
Since $\theta\vDash C_{e}\cup \{ (\trace, t \leq_{} \Gamma(x))\}$, 
$\theta\vDash C_{e}$ and $t\theta \leq_{\trace} \Gamma(x)\theta $.
By \ref{app-lem:cgrsound-1} on $e$, $\Gamma\theta;\trace \vdash_{t} e : t \theta$.
By Rule {(TT-ASS)}, $\Gamma\theta;\trace; A \vdash_{t} x := e : \Gamma\theta(x)$.

\item[TG-LETVAR] In this case we have $c \cong  \lcLETVAR{x}{e}{c'} $ and the following derivation 
\begin{equation*}
\inference
{\Gamma;\trace \vdash_{g} e : s \leadsto C_1&
\Gamma[x:\alpha]; \trace; A \vdash_{g} c' : t \leadsto C_2 \\
C = C_1\cup C_2 \cup \{(\trace, s \leq_{} \alpha) \}
}
{\Gamma;\trace; A\vdash_{g} \lcLETVAR{x}{e}{c} :  t\leadsto C}
\end{equation*}
Since $\theta \vDash C$, $\theta \vDash C_1$, $\theta\vDash C_2$, and $s\theta \leq_{\trace} \theta(\alpha)$.
By \ref{app-lem:cgrsound-1} on $e$, $\Gamma\theta;\trace \vdash_{t} e : s \theta $.
By induction on $c'$,  $\Gamma\theta[x:\theta(\alpha)]; \trace; A \vdash_{t} c' : t\theta$.
Finally, by Rule {(TT-LETVAR)}, $\Gamma\theta; \trace; A \vdash_{t} \lcLETVAR{x}{e}{c}: t\theta $.

\item[TG-CALL] In this case we have $c \cong  \lcCALL{x}{B.f}{\overline{e}} $ and the following derivation 
\begin{equation*}
\inference
{
\FT_{C}(B.f) = (\overline{t} \xrightarrow{} t', C_{f})  \\
\Gamma;\trace \vdash_{g} \overline{e} : \overline{s} \leadsto \bigcup \overline{C_{e}} \\
C_{a} = \{(\trace, \overline{s}\leq_{} \overline{\app{t}{\Theta(A)}}), (\trace, \app{t'}{\Theta(A)}\leq_{} \Gamma(x))\} \\
C = C_{f} \cup \bigcup\overline{C_{e}}\cup C_{a}
}
{
\Gamma;\trace; A \vdash_{g} \lcCALL{x}{B.f}{\overline{e}} : \Gamma(x)\leadsto C
}
\end{equation*}
Since $\theta \vDash C$, then $\theta \vDash C_{f}$, $\theta\vDash \overline{C_{e}}$,
$\overline{s\theta}\leq_{\trace} \overline{\app{t\theta}{\Theta(A)}}$, 
and $\app{t'\theta}{\Theta(A)}\leq_{\trace} \Gamma\theta(x))$.
By \ref{app-lem:cgrsound-3} on $B.f$,  we have
$$
\vdash_{t} \lpPROG{B.f}{x}{c}{\lpRES} :  \overline{t\theta}\xrightarrow{} t'\theta
$$
that is, $\FT(B.f) =  \overline{t\theta}\xrightarrow{} t'\theta$.
By \ref{app-lem:cgrsound-1} on $\overline{e}$, 
$$\Gamma\theta;\trace; A \vdash_{t} \overline{e} : \overline{s\theta}.$$
Finally, by Rule  {(TT-CALL)}, 
$$
\Gamma\theta;\trace; A \vdash_{t} \lcCALL{x}{B.f}{\overline{e}} : \Gamma\theta(x)
$$

\item[TG-CP] In this case we have $c \cong \lcCP{p}{c_1}{c_2}$ and the following derivation 
\begin{equation*}
\inference
{
\Gamma;\trace::\tplus p; A \vdash_{g} c_1 : t_1  \leadsto C_1 \\
 \Gamma;\trace::\tminus p; A \vdash_{g} c_2 : t_2  \leadsto C_2
 }
{\Gamma;\trace; A \vdash_{g} \lcCP{p}{c_1}{c_2} : \omerge{p}{t_1}{t_2} \leadsto C_1\cup C_2}
\end{equation*}
Since $\theta \vDash C_1 \cup C_2$, then $\theta\vDash C_1$ and $\theta \vDash C_2$.
By induction on $c_1$ and $c_2$, we get 
$$
\begin{array}{l}
\Gamma\theta;(\trace::\tplus p);  A \vdash_{t} c_1 : t_1\theta\\
\Gamma\theta;(\trace::\tminus p); A \vdash_{t} c_2 : t_2\theta  
\end{array}
$$
By Rule {(TT-CP)}, we have
$$
\Gamma\theta;\trace; A \vdash_{t} \lcCP{p}{c_1}{c_2} : \omerge{p}{t_1\theta}{t_2\theta} 
$$
Moreover, it is clear that $(\omerge{p}{t_1}{t_2})\theta = \omerge{p}{(t_1\theta)}{(t_2\theta)}$.

\item[others] By induction.
\end{ProofEnumDesc}
The proof of \ref{app-lem:cgrsound-3}:\\
\begin{equation*}
\inference
{
[\overline{x}: \overline{\alpha},\lpRES:\beta]; \epsilon; B \vdash_{g} c : s  \leadsto C \\
}
{
\vdash_{g} \lpPROG{B.f}{x}{c}{\lpRES} :  \overline{\alpha}\xrightarrow{} \beta  \leadsto  C
}
\end{equation*}
Since $\theta \vDash C$,  
by \ref{app-lem:cgrsound-2} on $c$, we have
$$
[\overline{ x}: \overline{\theta(\alpha)},\lpRES:\theta(\beta)];\epsilon; B \vdash_{t} c : s\theta 
$$
Finally, by Rule {(TT-FUN)}, we get 
$$
\vdash_{t} \lpPROG{B.f}{\overline{x}}{c}{\lpRES} : \overline{\theta(\alpha)}\xrightarrow{} \theta(\beta) 
$$
\end{proof}

\begin{lemma}\label{app-lem:cgrcomplete}
\begin{enumerate}[label={(\alph*)}]
\item\label{app-lem:cgrcomplete-1} If $\Gamma;\trace \vdash_{t} e : t$, then there exist $\Gamma',t',C,\theta$ such that $\Gamma';\trace \vdash_{g} e : t' \leadsto C$,  $\theta \vDash C$, $\Gamma'\theta =\Gamma$ and $t'\theta = t$.
\item\label{app-lem:cgrcomplete-2} If $\Gamma;\trace; A \vdash_{t} c : t$, then there exist $\Gamma',t',C,\theta$ such that $\Gamma';\trace; A \vdash_{g} c : t' \leadsto C$,  $\theta \vDash C$, $\Gamma'\theta =\Gamma$ and $t'\theta = t$.
\item\label{app-lem:cgrcomplete-3} If $\vdash_{t} \lpPROG{B.f}{\overline{x}}{c}{\lpRES} : \overline{t_p}\xrightarrow{} t_r$, then there exist $\alpha,\beta,C,\theta$ such that 
$$
\vdash_{g} \lpPROG{B.f}{\overline{x}}{c}{\lpRES}: \overline{\alpha}\xrightarrow{} \beta \leadsto C,
$$ 
$\theta \vDash C$, and $(\overline{\alpha}\xrightarrow{} \beta)\theta = \overline{t_p}\xrightarrow{} t_r$, where $\alpha,\beta$ are fresh type variables.
\end{enumerate}
\end{lemma}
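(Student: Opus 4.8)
The plan is to prove the three statements simultaneously by structural induction on the trace-rule derivations of Fig.~\ref{fig:infer-rules}, exploiting the fact that the constraint-generation rules of Fig.~\ref{fig:constraint-rules} are a one-to-one ``decoration'' of the trace rules: they share identical premises and conclusions except that (i) the \textbf{letvar}-bound locals (in \ruleTagText{TG-LETVAR}) and the parameters and return variable (in \ruleTagText{TG-FUN}) are assigned \emph{fresh type variables} rather than concrete base types, and (ii) every partial-subtyping side condition $\cdot\leq_{\trace}\cdot$ that a trace rule checks on the spot is instead \emph{accumulated} into the emitted constraint set $C$. This is exactly the dual of the soundness direction of Lemma~\ref{app-lem:cgrsound}, so the witness derivation is built by mirroring that argument in reverse. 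The witness substitution $\theta$ is read off from the given trace derivation, sending each fresh variable to the concrete type that decorates the corresponding position there; the obligation $\theta\vDash C$ then reduces to checking that, under $\theta$, each accumulated constraint collapses back to a $\leq_{\trace}$ side condition already validated by the trace derivation.

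Statement \ref{app-lem:cgrcomplete-1} is essentially immediate: the expression rules \ruleTagText{TG-VAR} and \ruleTagText{TG-OP} introduce no fresh variables and emit no constraints, and (since neither system has an expression subsumption rule) the inferred type is determined syntactically by variable lookup and $\sqcup$, both of which commute with type substitution; hence for any template $\Gamma'$ with $\Gamma'\theta=\Gamma$ the generated type $t'$ satisfies $t'\theta=t$, and one may take $C=\emptyset$. The substance is \ref{app-lem:cgrcomplete-2}, which I would prove in a form slightly strengthened so the induction composes: for \emph{every} template environment $\Gamma'$ and substitution $\theta_0$ with $\Gamma'\theta_0=\Gamma$, the syntax-directed constraint derivation over $\Gamma'$ admits an extension $\theta\supseteq\theta_0$ on the freshly introduced variables with $\theta\vDash C$ and $t'\theta=t$. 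The case analysis follows the rules: for \ruleTagText{TG-ASS}, \ruleTagText{TG-IF} and \ruleTagText{TG-WHILE} the single emitted constraint becomes, after applying $\theta$, precisely the side condition of the matching trace rule; for \ruleTagText{TG-SEQ} and \ruleTagText{TG-CP} no new constraint is emitted and the conclusion type matches once one observes (a routine pointwise check, as already used implicitly in Lemma~\ref{app-lem:cgrsound}) that substitution distributes over $\sqcap$ and over the merge, i.e.\ $(\omerge{p}{t_1}{t_2})\theta=\omerge{p}{t_1\theta}{t_2\theta}$; and for \ruleTagText{TG-LETVAR} one extends $\theta$ by sending the fresh local $\alpha$ to the concrete type $s'$ that \ruleTagText{TT-LETVAR} assigns, after which $(\trace, s\leq\alpha)$ becomes exactly $s\leq_{\trace}s'$.

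Statement \ref{app-lem:cgrcomplete-3} then follows by applying the strengthened \ref{app-lem:cgrcomplete-2} to the function body under the template $[\overline{x}:\overline{\alpha},\lpRES:\beta]$ with fresh $\overline{\alpha},\beta$, taking $\theta_0=\{\overline{\alpha}\mapsto\overline{t_p},\;\beta\mapsto t_r\}$; the resulting $\theta$ gives $(\overline{\alpha}\rightarrow\beta)\theta=\overline{t_p}\rightarrow t_r$ as required. This use of a prescribed non-ground template is exactly why the template-quantified form of \ref{app-lem:cgrcomplete-2} is needed rather than the bare existential statement: the bare form would permit instantiating the body environment to concrete types, which \ruleTagText{TG-FUN} forbids.

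I expect the main obstacle to be the function-call case \ruleTagText{TG-CALL}. Whereas \ruleTagText{TT-CALL} consults a \emph{concrete} callee type $\FT(B.f)=\overline{t}\rightarrow t'$, the constraint rule consults $\FT_{C}(B.f)=(\overline{t}\rightarrow t',C_f)$ and folds the callee's \emph{own} constraint set $C_f$ into $C$. To discharge $\theta\vDash C_f$ I would rely on the standing invariant that the table $\FT$ driving the trace derivation is a solution of $\FT_{C}$ (for each $B.f$ there is a substitution realizing $\FT_{C}(B.f)$'s type as $\FT(B.f)$ and satisfying $C_f$), and then invoke freshness of type variables — the callee's variables and those occurring in $C_f$ are disjoint from everything in the current derivation — to glue that callee substitution onto $\theta$ without conflict. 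The remaining obligations $C_a$ reduce, under $\theta$, to the two side conditions $\overline{s}\leq_{\trace}\overline{\app{t}{\Theta(A)}}$ and $\app{t'}{\Theta(A)}\leq_{\trace}\Gamma(x)$ of \ruleTagText{TT-CALL}, using that $\bigcup\overline{C_e}=\emptyset$. Keeping this fresh-variable bookkeeping consistent across the nested constraint sets, so that a single $\theta$ simultaneously solves the local constraints, the per-local \textbf{letvar} constraints, and every imported $C_f$, is the one place where the argument requires genuine care.
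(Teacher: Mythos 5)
Your proposal is correct and follows essentially the same route as the paper's proof: a canonical template environment of fresh type variables with the substitution read off from the trace derivation, a strengthened form of part (b) that extends a prescribed base substitution on freshly introduced variables (needed exactly so that \ruleTagText{TG-FUN}'s non-ground template can be imposed in part (c)), the call case discharged by the mutual induction with part (c) on the callee, and disjointness of fresh variables to glue the local, \textbf{letvar}, branch, and imported callee substitutions into one solution. The only cosmetic difference is that the paper fixes the all-fresh-variables template $\Gamma_0$ and splits the substitution as $\theta_0\cup\theta$ with disjoint domains rather than quantifying over arbitrary templates, but the instantiations you actually need coincide with the paper's.
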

\begin{proof}
The proof of \ref{app-lem:cgrcomplete-1}: 
Let $\Gamma_{0} = \{x \mapsto \alpha_{x} ~|~ x \in dom(\Gamma) \} $ 
and $\theta_{0} = \{\alpha_{x}\mapsto \Gamma(x)~|~x\in dom(\Gamma)\} $, where $\alpha_x$s are fresh type variables.
Clearly,  we have $\Gamma_{0}\theta_{0} = \Gamma$.
The remaining is to prove that
\begin{equation*}
\exists C, t'.~\Gamma_{0};\trace \vdash_{g} e : t' \leadsto C,~ \theta_{0}\vDash C  \textrm{ and } t'\theta_{0} = t \tag{1'}
\end{equation*}
\begin{ProofEnumDesc}
\item[TT-VAR] In this case we have $e\cong \leVAR{x}$ and the following derivation
\begin{equation*}
\inference{}
{\Gamma;\trace \vdash_{t} \leVAR{x} : \Gamma(x)} 
\end{equation*}
By Rule {(TG-VAR)}, we have
\begin{equation*}
\inference{}
{\Gamma_{0};\trace \vdash_{g} \leVAR{x} : \Gamma_{0}(x)\leadsto \emptyset}
\end{equation*}
Clearly, $\theta_{0} \vDash \emptyset$ and $\Gamma_{0}\theta_{0}(x) = \Gamma(x)$.

\item[TT-OP] In this case we have $e\cong \leOP{e_1}{e_2}$ and the following derivation 
\begin{equation*}
\inference
{\Gamma;\trace \vdash_{t} e_1: t_1 & \Gamma;\trace \vdash_{t} e_2: t_2}
{\Gamma;\trace \vdash_{t} \leOP{e_1}{e_2} : t_1\sqcup t_2}
\end{equation*}
By induction on $e_i$, there exist $C_i, t'_i$ such that
\begin{equation*}
\Gamma_{0};\trace \vdash_{g} e_i: t'_i\leadsto C_i\quad\theta_{0} \vDash C_i\quad t'_i\theta_{0} = t_i.
\end{equation*}
By Rule {(TG-OP)}, we have 
\begin{equation*}
\Gamma_{0};\trace \vdash_{g} \leOP{e_1}{e_2} : t'_1\sqcup t'_2 \leadsto C_1\cup C_2.
\end{equation*}
Moreover, it is clear that $\theta_{0} \vDash C_1\cup C_2$ and $(t'_1\sqcup t'_2)\theta_{0} = t_1\sqcup t_2 $.
\end{ProofEnumDesc}
The proof of \ref{app-lem:cgrcomplete-2}.
Let $\Gamma_{0} = \{x \mapsto \alpha_{x} ~|~ x \in dom(\Gamma) \} $ 
and $\theta_{0} = \{\alpha_{x}\mapsto \Gamma(x)~|~x\in dom(\Gamma)\} $, where $\alpha_x$s are fresh type variables.
Clearly,  we have $\Gamma_{0}\theta_{0} = \Gamma$.
Assume that different parameters of different functions have different names.
Let $V_1 ~\sharp~ V_2 $ denote $V_1\cap V_2 = \emptyset$. 
In the remaining we prove the following statement:
\begin{gather*}
\exists C, t',\theta.~\Gamma_{0};\trace; A \vdash_{g} c : t' \leadsto C,~ dom(\theta_{0})~\sharp~ dom(\theta),\\
\theta_{0}\cup\theta \vDash C,  \textrm{ and } t'(\theta_{0}\cup\theta) = t \tag{2'}
\end{gather*}
\begin{ProofEnumDesc}
\item[TT-ASS] In this case we have $c\cong  \lcASS{x}{e}$ and the following derivation
\begin{equation*}
\inference
{\Gamma;\trace \vdash_{t} e : t \quad   t \leq_{\trace} \Gamma(x) }
{\Gamma;\trace; A \vdash_{t} \lcASS{x}{e} : \Gamma(x)}
\end{equation*}
By $(1')$ on $e$, there exist $C, t'$ such that $\Gamma_{0};\trace \vdash_{g} e : t' \leadsto C$, 
$\theta_{0} \vDash C$ and $t'\theta_{0} = t$.
By Rule {(TG-ASS)}, we get 
\begin{equation*}
\Gamma_{0};\trace; A \vdash_{g} \lcASS{x}{e} : \Gamma_{0}(x) \leadsto C \cup \{(\trace, t'\leq \Gamma_{0}(x)) \}.
\end{equation*} 
Take $\theta = \emptyset $.
Since $t'\theta_{0} = t \leq_{\trace} \Gamma(x) = \Gamma_{0}\theta_{0}(x)$, 
then $\theta_{0} \vDash \{ (\trace, t' \leq_{} \Gamma_{0}(x)) \} $, 
and thus $\theta_{0} \vDash C \cup  \{ (\trace, t' \leq_{} \Gamma_{0}(x)) \} $.

\item[TT-LETVAR] In this case we have $c\cong \lcLETVAR{x}{e}{c'}$ and the following derivation
\begin{equation*}
\inference
{\Gamma;\trace \vdash_{t} e : s &
\Gamma[x:s']; \trace; A \vdash_{t} c' : t & s \leq_{\trace} s' }
{\Gamma;\trace; A\vdash_{t} \lcLETVAR{x}{e}{c'} :  t}
\end{equation*}
By $(1')$ on $e$, there exist $C_{e}, s_{e}$ such that $\Gamma_{0};\trace \vdash_{g} e : s_{e} \leadsto C_{e}$, 
$\theta_{0} \vDash C_{e}$ and $s_{e}\theta_{0} = s$.\\
Let $\Gamma'_{0} = \Gamma_{0}[x:\alpha_x]$ and $\theta'_{0} = \theta_0\cup\{\alpha_x \mapsto s'\}$, where $\alpha_x$ is fresh.
By induction on $c'$, there exist $C', t',\theta$ such that $\Gamma'_{0};\trace; A \vdash_{g} c' : t' \leadsto C'$,  $dom(\theta'_{0})~\sharp~ dom(\theta)$,  $\theta'_{0}\cup\theta \vDash C'$,  and $t'(\theta'_{0}\cup\theta) = t$. \\
By Rule (TG-LETVAR), we have
$$
\Gamma_0;\trace; A\vdash_{g} \lcLETVAR{x}{e}{c} :  t'\leadsto C
$$
where $C = C_{e} \cup C' \cup \{ (\trace, s_{e} \leq_{} \alpha_x)\}$.
Let $\theta' = \theta\cup \{\alpha_x\mapsto s'\}$. It is clear that $dom(\theta_{0})~\sharp~ dom(\theta')$
and $\theta_{0}\cup\theta' = \theta'_{0}\cup\theta$.
From the construction of $\theta'$ (\emph{i.e.}, the proof of $(2')$ and $(3')$), the type variables in $dom(\theta')$ are collected from the types of the functions and (local) variables, which are fresh.
So we also have $\theta_{0}\cup\theta'\vDash C_{e}$ and $\Gamma_{0}(\theta_{0}\cup\theta') =\Gamma_{0}\theta_{0}=\Gamma $.
Moreover,  $s_{e}(\theta_{0}\cup\theta')=s_{e}\theta_{0}=s\leq_{\trace} s' = \alpha_x(\theta_{0}\cup\theta')$.
Therefore, $\theta_{0}\cup\theta'\vDash C$.

\item[TT-CALL] In this case we have $c\cong \lcCALL{x}{B.f}{\overline{e}}$ and the following derivation 
\begin{equation*}
\inference
{
\FT(B.f) = \overline{t_p} \xrightarrow{} t_r  & 
\Gamma;\trace \vdash_{t} \overline{e} : \overline{s}\\
\overline{s}\leq_{\trace} \overline{\app{t_p}{\Theta(A)}} & 
\app{t_r}{\Theta(A)}\leq_{\trace} \Gamma(x)
}
{\Gamma;\trace; A \vdash_{t} \lcCALL{x}{B.f}{\overline{e}} : \Gamma(x)} 
\end{equation*}
By \ref{app-lem:cgrcomplete-3} on B.f, there exist $\alpha,\beta, C_{f}, \theta_{f}$ such that
$\vdash_{g}\lpPROG{B.f}{\overline{x}}{c}{\lpRES}:  \overline{\alpha}\xrightarrow{} \beta \leadsto C_{f}$,
$\theta_f \vDash C_f $ and $(\overline{\alpha}\xrightarrow{} \beta)\theta_{f} = \overline{t_p}\xrightarrow{} t_r$. So we have $\FT_C(B.f) =(\overline{\alpha}\xrightarrow{} \beta, C_{f}) $.
Since $\alpha_{x}$s are fresh, we can safely get $dom(\theta_{0}) ~\sharp~ dom(\theta_f)$.
Therefore, $\theta_{0}\cup\theta_f \vDash C_{f}$ and $(\overline{\alpha}\xrightarrow{} \beta)(\theta_{0}\cup\theta_{f}) = \overline{t_p}\xrightarrow{} t_r$.\\
By $(1')$ on $\overline{e}$, there exist $\overline{s'}, \overline{C_{e}}$ such that 
$\Gamma_{0};\trace; A \vdash_{g} \overline{e} : \overline{s'}\leadsto \overline{C_e}$, $\theta_{0} \vDash \overline{C_e} $, and $\overline{s'(\theta_{0})} = \overline{s}$.\\
From the construction of $\theta_f$ (\emph{i.e.}, the proof of $(2')$ and $(3')$), the type variables in $dom(\theta_f)$ are collected from the types of functions and (local) variables, which are fresh. 
So we also have $\theta_{0}\cup \theta_f \vDash \overline{C}$,  $\overline{s'(\theta_{0}\cup\theta_f)} = \overline{s}$ and $\Gamma_{0}(\theta_{0}\cup \theta_f) = \Gamma$. \\
By Rule {(TG-CALL)}, we have 
\begin{equation*}
\Gamma_{0};\trace; A \vdash_{g} \lcCALL{x}{B.f}{\overline{e}} : \Gamma_{0}(x) \leadsto C
\end{equation*}
where $C = C_{f} \cup \bigcup\overline{C} \cup C'$
and $C'= \{(\trace, \overline{s'}\leq \overline{\app{\alpha}{\Theta(A)}}), (\trace, \app{\beta}{\Theta(A)}) \leq \Gamma_{0}(x)) \}$.
Since $dom(\theta_f)$ are fresh, we have
\begin{equation*}
\overline{s'(\theta_{0}\cup \theta_f)} = \overline{s} \leq_{\trace} \overline{\app{t_p}{\Theta(A)}} = \overline{\app{\alpha(\theta_{0}\cup \theta_f)}{\Theta(A)}}
\end{equation*}
and
\begin{equation*}
\app{\beta(\theta_{0}\cup \theta_f)}{\Theta(A)} =  \app{t_r}{\Theta(A)} 
\leq_{\trace} \Gamma(x) = \Gamma_{0}(\theta_{0}\cup \theta_f)(x)
\end{equation*}
So $(\theta_{0}\cup \theta_f) \vDash C'$, and thus $(\theta_{0}\cup \theta_f)\vDash C$.
Moreover, we also have
$(\Gamma_{0}(x))(\theta_0\cup\theta_f)
= \Gamma(x)$.
Thus the result follows.

\item[TT-CP] In this case we have $c\cong \lcCP{p}{c_1}{c_2}$ and the following derivation
\begin{equation*}
\inference
{\Gamma;\trace::\tplus p; A \vdash_{t} c_1 : t_1  &
\Gamma;\trace::\tminus p; A \vdash_{t} c_2 : t_2}
{\Gamma;\trace; A \vdash_{t} \lcCP{p}{c_1}{c_2} : \omerge{p}{t_1}{t_2}}
\end{equation*}
By induction on $c_i$, we have
\begin{equation*}
\begin{array}{l}
\exists C_1, s_1,\theta_1.~\Gamma_{0};\trace::\tplus p; A \vdash_{g} c_1 : s_1 \leadsto C_1,~ \theta_{0}\cup\theta_1 \vDash C_1,\\
\quad\quad dom(\theta_{0})\sharp dom(\theta_1)  \textrm{ and } s_1(\theta_{0}\cup\theta_{1}) = t_1\\
\exists C_2, s_2,\theta_2.~\Gamma_{0};\trace::\tminus p; A \vdash_{g} c_2 : s_2 \leadsto C_2,~ \theta_{0}\cup\theta_2 \vDash C_2, \\
\quad\quad dom(\theta_{0})\sharp dom(\theta_2)  \textrm{ and } s_2(\theta_{0}\cup\theta_{2}) = t_2\\
\end{array}
\end{equation*}
By Rule {(TG-CP)}, we get 
\begin{equation*}
\Gamma_{0};\trace; A \vdash_{g} \lcCP{p}{c_1}{c_2} : \omerge{p}{s_1}{s_2}\leadsto C_1\cup C_2
\end{equation*}
From the construction of $\theta_1$ and $\theta_2$ (\emph{i.e.}, the proof of $(2')$ and $(3')$), the type variables in $dom(\theta_1)$ and $dom(\theta_2)$ are collected from the types of functions and (local) variables, which are fresh. Moreover, if $\theta_1 \cap \theta_2 \neq \emptyset$, that is, they share some functions, then $\theta_1(\alpha_x) = \theta_2(\alpha_x) $ for all $\alpha_x \in dom(\theta_1) \cap dom(\theta_2)$.
So we can safely get $(\theta_{0}\cup\theta_1\cup\theta_2) \vDash C_i$ and  $s_i(\theta_{0}\cup\theta_{1}\cup\theta_{2}) = t_i$, and $\Gamma_{0}(\theta_{0}\cup\theta_{1}\cup\theta_{2}) =\Gamma$.
Therefore, $(\theta_{0}\cup\theta_1\cup\theta_2) \vDash C_1\cup C_2$ and  
$(\omerge{p}{s_1}{s_2})(\theta_{0}\cup\theta_{1}\cup\theta_{2}) = \omerge{p}{t_1}{t_2}$.

\item[others] By induction.

\end{ProofEnumDesc}

The proof of \ref{app-lem:cgrcomplete-3}: \\
\begin{equation*}
\inference
{[\overline{x}: \overline{t_p},\lpRES:t_r];\epsilon; B \vdash_{t} c : s }
{\vdash_{t} \lpPROG{B.f}{\overline{x}}{c}{\lpRES} :  \overline{t_p}\xrightarrow{} t_r}
\end{equation*}
Let $\Gamma_{0} = \{\overline{x} \mapsto \overline{\alpha}, r\mapsto \beta \} $ 
and $\theta_{0} = \{\overline{\alpha}\mapsto \overline{t_p}, \beta\mapsto t_r\} $,
where $\overline{\alpha},\beta$ are fresh.
By $(2')$ on $c$, we have
\begin{equation*}
\begin{array}{l}
\exists t, C_f, \theta .~ \Gamma_{0};\epsilon; B \vdash_{g} c : t\leadsto C_f,~ (\theta_{0}\cup\theta')\vDash C_f,\\
\quad\quad dom(\theta_{0}) ~\sharp~ dom(\theta) \textrm{ and } t(\theta_{0}\cup\theta') = s.
\end{array}
\end{equation*}
By Rule {(TG-FUN)}, we have
\begin{gather*}
\vdash_{g} \lpPROG{B.f}{\overline{x}}{c}{\lpRES} :  \overline{\alpha}\xrightarrow{} \beta\leadsto C_f
\end{gather*}
Finally, it is clear that $(\overline{\alpha}\xrightarrow{} \beta) (\theta_{0}\cup\theta' ) = \overline{t_p}\xrightarrow{} t_r$.
\end{proof}
 \begin{lemma}\label{app-lem:tracesub1}
Given two types $s,t$ and a permission trace $\trace$, then
$s\leq t\Longleftrightarrow s\cdot\trace \leq t\cdot\trace$ and $\forall \trace' \in dnf(\neg\trace).\; s\cdot \trace' \leq t\cdot\trace'$. 
\end{lemma}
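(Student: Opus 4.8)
The plan is to reduce both trace-indexed inequalities to pointwise inequalities over permission sets, using the entailment relation $P\vDash\trace$. The first step is to record a closed form for trace application. Writing $\trace^{+}=\{p\mid \tplus{p}\in\trace\}$ and $\trace^{-}=\{p\mid \tminus{p}\in\trace\}$ for the positive and negative parts of a consistent trace, I claim that $(s\cdot\trace)(P)=s\big((P\cup\trace^{+})\setminus\trace^{-}\big)$ for every $P\in\PPS$. This follows by a routine induction on $\tlength{\trace}$ from Definition~\ref{app-def:app-pt-t} together with the definitions of promotion and demotion; consistency guarantees $\trace^{+}\cap\trace^{-}=\emptyset$ and that the individual steps commute (cf.\ Lemma~\ref{app-lem:traceorder}), so the bookkeeping of the two sets is unambiguous.

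From this closed form I extract the key lemma: for any consistent trace $\trace$, $s\cdot\trace\leq t\cdot\trace$ holds if and only if $s(Q)\leq t(Q)$ for every $Q$ with $Q\vDash\trace$. The crux is that the map $P\mapsto(P\cup\trace^{+})\setminus\trace^{-}$ has image exactly $\{Q\mid Q\vDash\trace\}$: every output contains $\trace^{+}$ and avoids $\trace^{-}$ (using $\trace^{+}\cap\trace^{-}=\emptyset$), hence entails $\trace$; and conversely any entailing $Q$ is its own image, since $\trace^{+}\subseteq Q$ and $Q\cap\trace^{-}=\emptyset$ give $(Q\cup\trace^{+})\setminus\trace^{-}=Q$. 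Thus quantifying $(s\cdot\trace)(P)\leq(t\cdot\trace)(P)$ over all $P$ is the same as quantifying $s(Q)\leq t(Q)$ over all $Q\vDash\trace$.

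With the key lemma in hand, the theorem follows by partitioning $\PPS$. Viewing $\trace$ as a conjunction of literals, every $Q\in\PPS$ either entails $\trace$ or violates it, and $Q$ violates $\trace$ exactly when $Q$ entails some clause $\trace'\in\dnf{\neg\trace}$, by soundness and completeness of the disjunctive normal form. Hence $s\leq t$, which unfolds to $\forall Q.\,s(Q)\leq t(Q)$, splits into the conjunction of ``$s(Q)\leq t(Q)$ for all $Q\vDash\trace$'' and ``$s(Q)\leq t(Q)$ for all $Q$ entailing some $\trace'\in\dnf{\neg\trace}$''. Applying the key lemma to $\trace$ rewrites the first conjunct as $s\cdot\trace\leq t\cdot\trace$, and applying it to each $\trace'$ rewrites the second as $\forall\trace'\in\dnf{\neg\trace}.\,s\cdot\trace'\leq t\cdot\trace'$; this yields both directions of the equivalence at once.

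The main obstacle is the key lemma, specifically identifying the image of the substitution map and carrying the $\trace^{+}/\trace^{-}$ bookkeeping through the induction; once surjectivity onto the entailing sets is established, the remainder is a clean case split justified by the DNF. A minor point worth checking is that the disjuncts of $\dnf{\neg\trace}$ are themselves consistent traces so that the key lemma applies to each $\trace'$; this holds because negating a single conjunction of literals yields single-literal disjuncts, which are trivially consistent.
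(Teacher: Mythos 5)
Your proof is correct, and it is essentially the fully worked-out version of the argument the paper only gestures at: the paper proves the single-literal case (Lemma~\ref{app-lem:tracesub}) by a pointwise case split on whether $p\in P$, and your closed form $(s\cdot\trace)(P)=s\big((P\cup\trace^{+})\setminus\trace^{-}\big)$ together with the image characterization of the entailing sets is precisely that case split generalized to an arbitrary consistent trace, with the cover of $\PPS$ by $\trace$ and the disjuncts of $\dnf{\neg\trace}$ replacing the binary split. Your observation that each disjunct of $\dnf{\neg\trace}$ is a single literal (hence a consistent trace) correctly discharges the only point where the generalization could have gone wrong.
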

\begin{proof}
Generalized from Lemma \ref{app-lem:tracesub}.
\end{proof}

\begin{lemma}\label{app-lem:consolcorrect}
If $C\leadsto_{r} C'$, then $C\vDash C'$ and $C' \vDash C$, where $r\in \{d,s,m\}$. 
\end{lemma}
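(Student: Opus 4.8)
The plan is to prove something slightly stronger than the stated conjunction, namely that each single rewrite step preserves the \emph{set} of solutions, i.e.\ $\{\theta \mid \theta \vDash C\} = \{\theta \mid \theta \vDash C'\}$; this is exactly $C \vDash C'$ together with $C' \vDash C$. Every rule in Fig.~\ref{fig:solving} has the shape $C_0 \cup M \leadsto_r C_0 \cup M'$, where $M$ is the finite set of matched constraints and the context $C_0$ is left untouched, and since $\theta \vDash (C_0 \cup X) \iff (\theta \vDash C_0 \text{ and } \theta \vDash X)$, it suffices to show $\theta \vDash M \iff \theta \vDash M'$ for each rule. First I would therefore reduce the claim to this local statement, and then argue by a case analysis on the rule applied, grouping the cases by phase: decomposition ($d$), saturation ($s$), and merging ($m$).

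For the decomposition rules the reasoning is local and lattice-theoretic. Rules (\ruleTagText{CD-CUP}) and (\ruleTagText{CD-CAP}) follow from the characterisation of $\sqcup$ and $\sqcap$ as least upper and greatest lower bound (Lemma~\ref{app-lem:bound}), read pointwise after the traces are applied: $(t_1\sqcup t_2)\theta\cdot\trace \leq t'\theta\cdot\trace'$ holds iff $t_i\theta\cdot\trace \leq t'\theta\cdot\trace'$ holds for each $i$. For (\ruleTagText{CD-LAPP}) and (\ruleTagText{CD-RAPP}) the key observation is that a projected type is ``monomorphic'': computing directly from Definitions~\ref{app-def:projection} and~\ref{app-def:updown} one gets, for any $\trace$, both $\app{t}{P}\cdot\trace = \app{t}{P}$ and $t\cdot\trace_P = \app{t}{P}$, where $\trace_P$ is the unique trace entailed only by $P$; these two identities make the guarded constraints before and after rewriting denote the same inequality. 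Rule (\ruleTagText{CD-SVAR}) is reflexivity of $\leq$; Rules (\ruleTagText{CD-SUB$_0$}) and (\ruleTagText{CD-SUB$_1$}) are immediate since ground types contain no variables, so the side condition decides satisfiability outright; and Rule (\ruleTagText{CD-MERGE}) is handled by Lemma~\ref{app-lem:tracesub1} (to split an inequality along $\oplus p$ and $\ominus p$) together with Lemma~\ref{app-lem:tracemerge} and the identities $(\omerge{p}{t_1}{t_2})\uparrow_p = t_1\uparrow_p$ and $(\omerge{p}{t_1}{t_2})\downarrow_p = t_2\downarrow_p$, valid because $p \notin \trace,\trace'$.

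For saturation (\ruleTagText{CS-LU}) the direction $C' \vDash C$ is trivial, since $C \subseteq C'$. For $C \vDash C'$ I would show that the single constraint the rule adds is a logical consequence of the matched lower and upper bounds of $\alpha$: whenever $\issatisfied{\trace_l \wedge \trace_r}$, extending the lower bound by $(\trace_l\wedge\trace_r) - \trace_r$ and the upper bound by $(\trace_l\wedge\trace_r)-\trace_l$ makes both speak about $\alpha$ under the common trace $\trace_l\wedge\trace_r$, so transitivity of $\leq$ on this overlap yields the new inequality; the satisfiability premise guarantees the overlap is non-empty and hence that the extension is well defined.

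The merging rules are the main obstacle, and I expect to carry them out by a per-permission-set (pointwise) analysis. Fixing a substitution $\theta$ and a permission set $P$, a guarded constraint with trace $\trace$ is active at $P$ exactly when $P \vDash \trace$, and its content is the inequality obtained by evaluating both sides at $P$. For (\ruleTagText{CM-GLB}) I would argue that the family $\{\phi(I')\}_{I'\subseteq I}$ of $\dnf{\cdot}$ traces partitions $\PPS$ (every $P$ entails exactly one conjunction $\bigwedge_{i\in I'}\trace_{ir}\wedge\bigwedge_{i\in I\setminus I'}\neg\trace_{ir}$), that $\issatisfied{\cdot}$ discards precisely the empty blocks, and that on each block the aggregated bound $\sqcup_{i\in I'}(t_i\cdot(\trace_{il}\wedge(\trace-\trace_{ir})))$ evaluates pointwise to the join of the active lower bounds; the least-upper-bound property of $\sqcup$ then gives equivalence with the original conjunction. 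Rule (\ruleTagText{CM-LUB}) is dual via greatest lower bounds, (\ruleTagText{CM-BDS}) combines the two after the same block analysis, and (\ruleTagText{CM-SBD}) is again Lemma~\ref{app-lem:bound} under a common trace. The delicate points I anticipate are verifying that the trace differences $\trace-\trace_{ir}$ align the guards so that the pointwise join is taken over exactly the intended $P$, and keeping the satisfiability side conditions synchronised with the DNF blocks; these combinatorial bookkeeping steps are what the pointwise reformulation is designed to tame.
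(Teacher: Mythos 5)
Your proposal is correct and follows essentially the same route as the paper's proof: a per-rule case analysis in which the untouched context is factored out, the decomposition cases are discharged by the lattice properties of $\sqcup,\sqcap$, the definition of projection, and the trace-splitting lemma for the merge operator, (\ruleTagText{CS-LU}) is handled by aligning both bounds on the common trace $\trace_l\wedge\trace_r$ and applying transitivity, and the merging rules by the DNF partition of $\PPS$ together with the least-upper-bound/greatest-lower-bound characterisation. The paper's write-up is merely terser (many cases are dismissed as ``Trivial'' or ``By Lemma~\ref{app-lem:tracesub}''), so your version fills in the same argument in more detail rather than diverging from it.
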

\begin{proof} By case analysis.
\begin{ProofEnumDesc}
\item[CD-CUP,~CD-CAP] Trivial.
\item[CD-MERGE]
By Lemma \ref{app-lem:tracesub}. 
\item[CD-LAPP,~CD-RAPP] By definition of projection.
\item[CD-SVAR,~CD-SUB$_0$,~CD-SUB$_1$] Trivial.
\item[CS-LU]
It is clear that $C\cup C' \vDash C$. 
For the other direction, we only need to prove
$\{(\trace_1, t_1 \leq \trace_r, \alpha ),(\trace_l, \alpha \leq \trace_2, t_2)\} \vDash 
\{(\trace_1\land\trace'_r, t_1\leq \trace_2\land\trace'_l, t_2) \}$, where $\trace'_l = (\trace_l\land \trace_r)  - \trace_r$, $\trace'_r =(\trace_l\land \trace_r) - \trace_l$ and $\issatisfied{\trace_l\land\trace_r}$.
Assume that $\theta \vDash  \{(\trace_1, t_1 \leq \trace_r, \alpha ),(\trace_l, \alpha \leq \trace_2, t_2)\} $, that is, 
\begin{equation*}
 (t_1\theta)\cdot\trace_{1} \leq \theta(\alpha)\cdot\trace_{l}
\quad\quad
 \theta(\alpha)\cdot\trace_{r} \leq (t_2\theta)\cdot\trace_{2} 
\end{equation*}
Since $\issatisfied{\trace_l\land\trace_r}$, by Lemmas \ref{app-lem:monotrace} and \ref{app-lem:traceorder}, we have
\begin{equation*}
 (t_1\theta)\cdot\trace_{1} \land\trace'_r \leq \theta(\alpha)\cdot\trace_{l}\land\trace_r
\quad
 \theta(\alpha)\cdot\trace_{l}\land\trace_r \leq (t_2\theta)\cdot\trace_{2} \land\trace'_l
\end{equation*}
which deduces
\begin{equation*}
(t_1\theta)\cdot\trace_{1} \land\trace'_r \leq (t_2\theta)\cdot\trace_{2} \land\trace'_l
\end{equation*}
that is, $\theta \vDash \{(\trace_1\land\trace'_r, t_1\leq \trace_2\land\trace'_l, t_2) \}$.

\item[CM-GLP]
\begin{align*}
&\theta \vDash \{(\trace_i, t_i \leq \trace_{ir}, \alpha)\}_{i\in I} \\
\Longleftrightarrow & \forall i \in I.\;  (~ (t_i\theta)\cdot \trace_i \leq \theta(\alpha) \cdot \trace_{ir}  ~) \\
\Longleftrightarrow & \forall I' \subseteq I.\; \forall \trace \in \phi(I'). \; \forall i \in I'.\; \\
&((t_i\theta)\cdot(\trace_i \land (\trace -\trace_{ir})) \leq \theta(\alpha)\cdot \trace )~\text{(Lemma \ref{app-lem:tracesub1})} \\
\Longleftrightarrow & \forall I' \subseteq I.\; \forall \trace \in \phi(I'). \;  \\
&\quad (~ \sqcup_{i \in I'} (t_i\theta)\cdot(\trace_i \land (\trace -\trace_{ir})) \leq \theta(\alpha)\cdot \trace ~) \\
\Longleftrightarrow &\theta \vDash \{(\epsilon,  t_{I',\trace}^{\sqcup}  \leq \trace, \alpha)\}_{I'\subseteq I, \trace \in \phi(I') }\\
\end{align*}
\item[CM-LUB] Similar to (CM-GLP).
\item[CM-BDS] Similar to (CM-GLP).
\item[CM-SBD] Trivial.
\end{ProofEnumDesc}
\end{proof}

\begin{lemma}\label{app-lem:unifysound}
If $unify(E) = \theta$, then $\theta \vDash E$.
\end{lemma}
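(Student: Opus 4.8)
The plan is to prove $unify(E)=\theta \Rightarrow \theta\vDash E$ by induction on the recursion of the procedure $\mathit{equ2sub}$ that implements $unify$. This recursion is well founded: the variable ordering $O$ groups the equations of $E$ by their left variable, and each call eliminates the (unique) maximal variable, so that — thanks to the loop-freeness guaranteed at the end of the saturation step — the process terminates. The goal is to show, for $\theta=unify(E)$, that $(\theta(\alpha)\cdot\trace)=t\theta$ holds for every equation $(\trace,\alpha)=t$ of $E$. The base case $E=[\,]$ is immediate since $unify([\,])=[\,]$ and there is nothing to satisfy.

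For the inductive step, let $\alpha$ be the maximal variable selected by the algorithm, let $\{(\trace_i,\alpha)=t_i\}_{i\in I}$ be all equations of $E$ whose left variable is $\alpha$, and let $E'$ be the remaining equations. Set $t_\alpha=\mathit{toType}(\{(\trace_i,t_i)\}_{i\in I})$, $E''=E'[\alpha\mapsto t_\alpha]$, $\theta'=unify(E'')$, and $\theta=\theta'[\alpha\mapsto t_\alpha]$; by the induction hypothesis $\theta'\vDash E''$. I would split the goal into the $\alpha$-equations and the equations of $E'$. The case of $E'$ is routine: since $\alpha$ is maximal it does not occur as the left variable of any equation of $E'$, so $\theta(\beta)=\theta'(\beta)$ for every such $\beta$; and since the variables introduced by $t_\alpha$ (the fresh $\alpha_i$ from the bound-to-equation rewriting) are fresh, $\theta'$ leaves $t_\alpha$ fixed and a standard substitution-composition identity gives $s\theta=(s[\alpha\mapsto t_\alpha])\theta'$ for every right-hand side $s$ of $E'$. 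Combined with $\theta'\vDash E''$ this yields $\theta\vDash E'$.

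The heart of the argument is the $\alpha$-equations, where I would exploit the post-merging invariant that the trace set $\{\trace_i\}_{i\in I}$ is \emph{full} ($\bigvee_{i\in I}\trace_i=\epsilon$) and \emph{disjoint} ($\neg\issatisfied{\trace_i\land\trace_j}$ for $i\neq j$). Fix $i$ and an arbitrary $P$. Applying $\trace_i$ to $P$ produces a permission set that entails $\trace_i$, so by fullness and disjointness it is matched by $\trace_i$ and by no other $\trace_j$; hence the defining clause of $\mathit{toType}$ gives $\theta(\alpha)(\trace_i(P))=t_\alpha(\trace_i(P))=(t_i\theta)(\trace_i(P))$. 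Using the idempotence of trace application (Lemma~\ref{app-lem:tracesame}) and its commutation properties (Lemmas~\ref{app-lem:traceorder} and~\ref{app-lem:traceorder-whole}), together with the fact that the merged bounds are stable under their own guarding trace, i.e. $(t_i\theta)\cdot\trace_i=t_i\theta$, this rewrites $(\theta(\alpha)\cdot\trace_i)(P)$ to $(t_i\theta)(P)$, which is exactly the required equation.

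The step I expect to be the main obstacle is precisely this last reduction. The subtlety is that $\theta(\alpha)\cdot\trace_i$ evaluated at $P$ reads $t_\alpha$ at the shifted set $\trace_i(P)$, which returns the value of $t_i$ at $\trace_i(P)$ rather than at $P$; equality with $(t_i\theta)(P)$ for \emph{all} $P$ therefore hinges on the bounds produced by the merging phase being invariant under their guarding trace. I would isolate this trace-stability as a separate invariant and verify that it is preserved by each merging rule — \ruleTagText{CM-GLB}, \ruleTagText{CM-LUB}, \ruleTagText{CM-BDS} and \ruleTagText{CM-SBD} — using the absorption lemmas (Lemmas~\ref{app-lem:tracepsame}--\ref{app-lem:tracesame}) and the distribution of trace application over $\sqcup$ and $\sqcap$, and then invoke it here rather than re-deriving it inside the unification argument.
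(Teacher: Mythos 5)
Your proof follows the same route as the paper's: induction on the recursion of $equ2sub$ (equivalently, on $|E|$), splitting the goal into the $\alpha$-equations for the maximal variable and the residual system $E'$, and discharging the $E'$ part via the substitution-composition identity $s\theta = (s[\alpha\mapsto t_\alpha])\theta'$ together with the induction hypothesis $\theta'\vDash E''$, exactly as in the paper. The one place you go beyond the paper is the $\alpha$-equations, which the paper settles with ``by construction, it is clear that $\theta\vDash\{\alpha\mapsto t_\alpha\}$'': your observation that $(\theta(\alpha)\cdot\trace_i)(P)$ reads $t_\alpha$ at the shifted set $\trace_i(P)$, so that equality with $(t_i\theta)(P)$ requires the merged right-hand sides to be invariant under their guarding traces (i.e.\ $(t_i\theta)\cdot\trace_i = t_i\theta$), is a genuine refinement -- the left-hand side $\theta(\alpha)\cdot\trace_i$ is automatically $\trace_i$-stable by Lemma~\ref{app-lem:tracesame}, so this stability of the bounds is in fact necessary for the equation to hold at all -- and isolating it as a post-merging invariant to be checked against \ruleTagText{CM-GLB}, \ruleTagText{CM-LUB}, \ruleTagText{CM-BDS} and \ruleTagText{CM-SBD} is the right way to close the step the paper leaves implicit.
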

\begin{proof}
By induction on $|E|$.
\begin{ProofEnumDesc}
\item[$|E| = 0$] Trivial.
\item[$|E| > 0$] In this case we have $E = \{(\trace_i,\alpha) = t_i\}_{i\in I}::E'$, where $O(\alpha)$ is the greatest. Let $t_\alpha$ be the type constructed  from $(\trace_i,t_i)$ and $E''$ be the equation set obtained by replacing in $E'$ every occurrence of $\alpha$ by $t_{\alpha}$.
Assume $unify(E'') = \theta'$, then $\theta = \theta'\cup\{\alpha\mapsto t_\alpha\}$.
By induction, we have $\theta' \vDash E''$.
By construction, it is clear that $\theta \vDash \{\alpha\mapsto t_\alpha\}$.
Let's consider the constraints $\{(\trace_i,\beta) = s_i\}_{i\in I} \in E'$ of any other variable $\beta$. Then we have $\{(\trace_i,\beta) = s_i\{\alpha\mapsto t_\alpha\}\}_{i\in I} \in E''$.
$$
\begin{array}{rlll}
\theta(\beta)\cdot\trace_i &=&\theta'(\beta)\cdot\trace_i &(\text{Apply } \theta) \\
&=&(s_i\{\alpha\mapsto t_\alpha\})\theta' & (\theta'\vDash E'')\\
& = &s_i(\theta' \cup \{\alpha\mapsto t_\alpha\})& \\
\end{array}
$$
\end{ProofEnumDesc}
\end{proof}

\begin{lemma}\label{app-lem:unifycomplete}
If $\theta \vDash E$, then there exist $\theta'$ and $\theta''$ such that $unify(E) = \theta'$ and $\theta = \theta'\theta''$.
\end{lemma}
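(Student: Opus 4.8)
The plan is to prove Lemma~\ref{app-lem:unifycomplete} by induction on $|E|$, the measure used in Lemma~\ref{app-lem:unifysound}, establishing at the same time that $unify(E)$ is defined and that the substitution it returns is most general. Since $E$ is in solved form (for each variable the guarding traces are full and pairwise disjoint), \emph{toType} always succeeds and each recursive call strictly decreases $|E|$, so $unify$ is total; the content of the lemma is therefore the factorization $\theta = \theta'\theta''$. For the base case $|E| = 0$ the algorithm returns the empty substitution $\theta' = [\,]$, and taking $\theta'' = \theta$ gives $\theta = \theta'\theta''$ immediately.

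For the inductive step I would write $E = \{(\trace_i,\alpha) = t_i\}_{i\in I} :: E'$, where $\alpha$ is the maximal-priority variable selected by $isMax$, so that $E'$ has no equation with left-hand side $\alpha$; set $t_\alpha$ to the output of \emph{toType} on $\{(\trace_i,t_i)\}_{i\in I}$ and $E'' = E'[\alpha\mapsto t_\alpha]$, so the recursive call computes $\theta_0' = unify(E'')$ and returns $\theta' = \theta_0'[\alpha\mapsto t_\alpha]$. The crux is the claim $t_\alpha\theta = \theta(\alpha)$. To prove it, note that the traces $\{\trace_i\}_{i\in I}$ partition $\PPS$, so each $P$ entails exactly one $\trace_i$; from $\theta \vDash E$ we have $\theta(\alpha)\cdot\trace_i = t_i\theta$, and evaluating at such a $P$ together with the identity $(s\cdot\trace_i)(P) = s(P)$ for $P\vDash\trace_i$ (a routine induction on $\tlength{\trace_i}$ generalizing Lemma~\ref{app-lem:promote-demote}) yields $\theta(\alpha)(P) = (t_i\theta)(P) = (t_\alpha\theta)(P)$, the last equality being the defining property of \emph{toType}. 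As $P$ is arbitrary, $t_\alpha\theta = \theta(\alpha)$.

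The claim feeds the induction as follows. Because $[\alpha\mapsto t_\alpha]\theta$ sends $\alpha$ to $t_\alpha\theta = \theta(\alpha)$ and agrees with $\theta$ elsewhere, it coincides with $\theta$ on every variable; hence for each equation $(\trace_j,\beta) = s_j$ of $E'$ (with $\beta\neq\alpha$) we get $(s_j[\alpha\mapsto t_\alpha])\theta = s_j\theta = \theta(\beta)\cdot\trace_j$, using $\theta\vDash E'\subseteq E$, so $\theta\vDash E''$. The induction hypothesis then supplies $\theta_0' = unify(E'')$ and a $\theta_0''$ with $\theta = \theta_0'\theta_0''$. Taking $\theta' = \theta_0'[\alpha\mapsto t_\alpha] = unify(E)$ and $\theta'' = \theta_0''$, I would check $\theta = \theta'\theta''$ variablewise: for $\gamma\neq\alpha$ both sides reduce to $\theta_0'(\gamma)\theta_0'' = \theta(\gamma)$, and for $\alpha$ one has $\theta'(\alpha)\theta'' = t_\alpha\theta_0''$, which equals $\theta(\alpha) = t_\alpha\theta = t_\alpha\theta_0'\theta_0''$ provided $\theta_0'$ fixes $t_\alpha$.

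The main obstacle is exactly this last proviso, $t_\alpha\theta_0' = t_\alpha$, i.e.\ controlling the free variables of $t_\alpha$ against $dom(\theta_0')$. The bounds defining $t_\alpha$ contain only ground types and the fresh variables $\alpha_i$ introduced when each constraint $t_i \leq (\trace_i,\alpha) \leq s_i$ is rewritten as $(\trace_i,\alpha) = (t_i\sqcup\alpha_i)\sqcap s_i$; since $\alpha$ is of maximal priority, no other (larger) variable can occur in its bounds, and the $\alpha_i$ never appear on the left of an equation, hence lie outside $dom(\theta_0')$. Making this freshness/priority invariant precise --- the same bookkeeping left implicit in the proof of Lemma~\ref{app-lem:unifysound} --- is where the real care is needed; the entailed-point trace identity and the remaining substitution manipulations are routine.
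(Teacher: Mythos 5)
Your proof is correct and follows essentially the same route as the paper's: induction on $|E|$, peeling off the maximal-priority variable $\alpha$, the key pointwise claim $t_\alpha\theta = \theta(\alpha)$ (which you actually justify in more detail than the paper does), and substitution of $t_\alpha$ into the tail before invoking the induction hypothesis. The only divergence is that the paper takes $\theta'' = \theta$ itself --- in effect proving the stronger invariant $\theta = unify(E)\,\theta$ --- so the $\alpha$-case of the final variablewise check reduces directly to $t_\alpha\theta = \theta(\alpha)$, and the freshness proviso $t_\alpha\theta_0' = t_\alpha$ that you identify as the main remaining obstacle simply never arises.
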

\begin{proof}
Conclusion holds trivially when $|E| = 0$]. \\
When $|E| > 0$, we have $E = \{(\trace_i,\alpha) = t_i\}_{i\in I}::E'$, where $O(\alpha)$ is the greatest. 
Let $t_\alpha$ be the type constructed  from $(\trace_i,t_i)$ and $E_0$ be the equation set obtained by replacing in $E'$ every occurrence of $\alpha$ by $t_{\alpha}$.
Since $\sigma \vDash E$, we also have $\sigma\vDash E'$ and thus $\sigma\vDash E_0$.
By induction on $E_0$, there exist $\theta'_0$ such that $unify(E_0) = \theta'_0$ and $\theta = \theta'_0\theta$.
According to unify(), we get $unify(E) = \theta'_0\cup\{\alpha\mapsto t_\alpha\} =\theta'$.
For any $\beta\notin dom(\theta')$, clearly $\beta(\theta'\theta) = \beta\theta$.
For $\alpha$, $\alpha(\theta'\theta) = (t_\alpha)\theta = \alpha\theta$, since $\theta\vDash E$.
While for any other variable $\beta \in dom(\theta')$, we have
$$
\beta(\theta'\theta) = \beta((\theta'_0\cup\{\alpha\mapsto t_\alpha\})\theta)
= \beta(\theta'_0\theta)=\beta\theta
$$
\end{proof}

\end{document}